\documentclass[letterpaper,twocolumn,10pt]{article}
\usepackage{usenix}
\usepackage[compact]{titlesec}
\usepackage[numbers]{natbib}
\usepackage[margin=0.75in]{geometry}
\usepackage{enumitem}
\usepackage{mdframed}
\usepackage{morefloats}

\providecommand{\Description}[2][]{}

\newenvironment{ethics}{\section*{Ethical Considerations}
  \phantomsection\addcontentsline{toc}{section}{Ethical Considerations}}{}
\newenvironment{openscience}{\section*{Open Science}
  \phantomsection\addcontentsline{toc}{section}{Open Science}}{}
\newenvironment{ai}{\section*{AI Use}
  \phantomsection\addcontentsline{toc}{section}{AI Use}}{}

\usepackage{graphicx}

\usepackage{amssymb}
\usepackage{amsmath}
\usepackage{amsthm}
\usepackage{verbatim}
\usepackage{algorithm}
\usepackage{algpseudocode}
\usepackage{subcaption}
\usepackage{csquotes}
\usepackage{hyperref}
\usepackage{xfp}
\usepackage[export]{adjustbox}
\usepackage{placeins}
\usepackage{cuted}
\usepackage{mathtools}

\usepackage{tabularx}
\usepackage{booktabs}
\usepackage{xurl}
\newcommand{\obf}[1]{\rule{#1}{0.8em}}

\usepackage{microtype} 
\usepackage{tikz}
\usetikzlibrary{external}
\usepackage{array}
\usepackage{booktabs}
\usepackage{xfp} 
\usepackage{graphicx}

\newcommand{\1}{(1)}
\newcommand{\2}{(2)}
\newcommand{\3}{(3)}
\newcommand{\4}{(4)}
\newcommand{\5}{(5)}

\newcommand{\thead}[1]{\makebox[1.1cm][c]{\parbox{1.1cm}{\centering\scriptsize #1}}}

\newcommand{\diagcell}[2]{%
\begin{tikzpicture}[baseline={([yshift=-0.5ex]current bounding box.center)}]
    \def\w{1.1}  
    \def\h{0.75} 
    \useasboundingbox (0,0) rectangle (\w,\h);
    \clip (0,0) rectangle (\w,\h);

    \fill[red, opacity=#1] (0,0) -- (\w,0) -- (0,\h) -- cycle;
    \fill[blue, opacity={0.5*(1-#2)}] (\w,\h) -- (\w,0) -- (0,\h) -- cycle;

    \node[inner sep=0pt] at ({\w*0.25},{\h*0.25}) {\scriptsize \textcolor{black}{#1}};
    \node[inner sep=0pt] at ({\w*0.75},{\h*0.75}) {\scriptsize \textcolor{black}{#2}};
\end{tikzpicture}%
}

\newcommand{\failDiagCell}[2]{%
\begin{tikzpicture}[baseline={([yshift=-0.5ex]current bounding box.center)}]
    \def\w{1.1}  
    \def\h{0.75} 
    \useasboundingbox (0,0) rectangle (\w,\h);
    \clip (0,0) rectangle (\w,\h);
    \draw[red, line width=1pt] (0.02,0.02) rectangle ({\w-0.02},{\h-0.02});

    \fill[red, opacity=#1] (0,0) -- (\w,0) -- (0,\h) -- cycle;
    \fill[blue, opacity={0.5*(1-#2)}] (\w,\h) -- (\w,0) -- (0,\h) -- cycle;

    \node[inner sep=0pt] at ({\w*0.25},{\h*0.25}) {\scriptsize \textcolor{black}{#1}};
    \node[inner sep=0pt] at ({\w*0.75},{\h*0.75}) {\scriptsize \textcolor{black}{#2}};
\end{tikzpicture}%
}

\newcommand{\diagcellNA}{%
\begin{tikzpicture}[baseline={([yshift=-0.5ex]current bounding box.center)}]
    \def\w{1.1}  
    \def\h{0.75} 
    \useasboundingbox (0,0) rectangle (\w,\h);
    \fill[white] (0,0) rectangle (\w,\h);
\end{tikzpicture}%
}

\newcommand{\paretoDiagCell}[2]{%
\begin{tikzpicture}[baseline={([yshift=-0.5ex]current bounding box.center)}]
    \def\w{1.1}  
    \def\h{0.75} 
    \useasboundingbox (0,0) rectangle (\w,\h);
    \clip (0,0) rectangle (\w,\h);

    \fill[red, opacity=#1] (0,0) -- (\w,0) -- (0,\h) -- cycle;
    \fill[blue, opacity={0.5*(1-#2)}] (\w,\h) -- (\w,0) -- (0,\h) -- cycle;

    \node[inner sep=0pt] at ({\w*0.25},{\h*0.25}) {\scriptsize \textcolor{black}{#1}};
    \node[inner sep=0pt] at ({\w*0.75},{\h*0.75}) {\scriptsize \textcolor{black}{#2}};
    
   \draw[black, line width=1pt] (0.02,0.02) rectangle (\w-0.02,\h-0.02);
\end{tikzpicture}%
}

\newcommand{\diagcellLabeled}[3]{%
\begin{tikzpicture}[baseline={([yshift=-0.5ex]current bounding box.center)}]
    \def\w{1.2}  
    \def\h{0.85} 
    \useasboundingbox (0,0) rectangle (\w,\h);
    \clip (0,0) rectangle (\w,\h);

    \fill[red, opacity=#1] (0,0) -- (\w,0) -- (0,\h) -- cycle;
    \fill[blue, opacity={0.5*(1-#2)}] (\w,\h) -- (\w,0) -- (0,\h) -- cycle;

    \node[anchor=north, inner sep=1pt, text depth=0pt] at ({\w/2}, \h) {\tiny #3};

    \node[inner sep=0pt] at ({\w*0.25},{\h*0.25}) {\scriptsize \textcolor{black}{#1}};
    \node[inner sep=0pt] at ({\w*0.75},{\h*0.50}) {\scriptsize \textcolor{black}{#2}};
\end{tikzpicture}%
}

\newcommand{\paretoDiagcellLabeled}[3]{%
\begin{tikzpicture}[baseline={([yshift=-0.5ex]current bounding box.center)}]
    \def\w{1.2}  
    \def\h{0.85} 
    \useasboundingbox (0,0) rectangle (\w,\h);
    \clip (0,0) rectangle (\w,\h);
    \draw[black, line width=1pt] (0.04,0.04) rectangle (\w,\h);

    \fill[red, opacity=#1] (0,0) -- (\w,0) -- (0,\h) -- cycle;
    \fill[blue, opacity={0.5*(1-#2)}] (\w,\h) -- (\w,0) -- (0,\h) -- cycle;

    \node[anchor=north, inner sep=1pt, text depth=0pt] at ({\w/2}, \h) {\tiny #3};

    \node[inner sep=0pt] at ({\w*0.25},{\h*0.25}) {\scriptsize \textcolor{black}{#1}};
    \node[inner sep=0pt] at ({\w*0.75},{\h*0.50}) {\scriptsize \textcolor{black}{#2}};
\end{tikzpicture}%
}

\usepackage{multirow}
\usepackage{booktabs}
\usepackage{pgfplots}
\pgfplotsset{compat=1.18}
\usepgfplotslibrary{groupplots}

\definecolor{oiBlue}{RGB}{0,114,178}
\definecolor{oiGreen}{RGB}{0,158,115}
\definecolor{oiVermillion}{RGB}{213,94,0}
\definecolor{oiPurple}{RGB}{204,121,167}

\pgfplotsset{
privacyplot/.style={
width=\linewidth,
height=6cm,
xlabel={Model Complexity},
ylabel={Privacy Resolution},
xmin=0.5,
ymin=0,
ymax=0.4,
grid=both,
major grid style={gray!25},
minor grid style={gray!10},
tick style={draw=none},
xtick=\empty,
legend style={at={(0.98,0.98)},anchor=north east},
line width=1pt,
mark size=1.6pt,
nodes near coords,
point meta=explicit symbolic,
every node near coord/.append style={
font=\tiny,
inner sep=1pt
}
}
}

\newtheorem{theorem}{Theorem}

\theoremstyle{definition}
\newtheorem{definition}{Definition}

\newtheorem{example}{Example}

\usepackage{xcolor}

\newtheorem{numquote}{Concepts}
\newenvironment{concept}[1][]
{
  \begin{numquote}[#1]\normalfont\itshape
}
{
  \end{numquote}
}

\newtheorem{numprompt}{Prompt}
\newenvironment{prompt}[1][]
{
  \begin{numprompt}[#1]\normalfont\itshape
}
{
  \end{numprompt}
}

\mdfdefinestyle{arxivstatement}{
  linewidth=2pt,linecolor=gray!30,
  topline=false,bottomline=false,rightline=false,
  innerleftmargin=6pt,innerrightmargin=0pt,
  innertopmargin=0pt,innerbottommargin=0pt,
  skipabove=1pt,skipbelow=1pt
}
\surroundwithmdframed[style=arxivstatement]{theorem}
\surroundwithmdframed[style=arxivstatement]{lemma}
\surroundwithmdframed[style=arxivstatement]{definition}
\surroundwithmdframed[style=arxivstatement]{example}

\hypersetup{
  pdftitle={Contrastive Privacy: A Semantic Approach to Measuring Privacy of AI-Based Sanitization},
  pdfauthor={George Bissias, Eugene Bagdasarian, Brian Neil Levine},
  pdfkeywords={privacy, sanitization}
}
\date{}

\begin{document}

\title{\Large\bfseries Contrastive Privacy: A Semantic Approach to\\
Measuring Privacy of AI-Based Sanitization}

\author{
  {\rmfamily\upshape George Bissias} \quad
  {\rmfamily\upshape Eugene Bagdasarian} \quad
  {\rmfamily\upshape Brian Neil Levine}\vspace{0.1cm} \\
  University of Massachusetts Amherst \\
  {\small \{gbiss, eugene, brian\}@cs.umass.edu}
}

\maketitle

\begin{abstract}
AI-based sanitization can remove concepts from images and text, but privacy evaluation remains largely ad hoc. We propose contrastive privacy, a formal definition that yields a quantitative test with a semantic interpretation. Under formal assumptions, we derive a conditional sufficiency result for a class of sanitized renderings (i.e., media files). We operationalize the definition using imperfect semantic-distance models such as CLIP. The test compares sanitized renderings under audit with both the original and sanitized versions of reference renderings known to contain privacy-relevant properties; if the rendering under audit is semantically closer to the unsanitized reference, then the former might leak private information even after sanitization. Importantly, the test is able to conditionally audit an abstract privacy target without enumerating its constituent properties or requiring per-item ground-truth labels; results remain relative to the chosen models and data.

We evaluate 34 image-sanitization configurations, 15 social-media text models and one entity recognizer, and four off-the-shelf PII sanitizers on Enron emails. The tests detect residual semantic associations in every setting, including all four PII tools even when they sanitize every candidate their detectors return. Two further studies examine sensitivity to the semantic-distance model. For a synthetic identity, fine-tuning reveals target associations missed by the base model; after broader sanitization, the adapted test detects none. Across 94 matched image collections sanitized to conceal Leonardo DiCaprio's identity, three models yield broadly correlated assessments but sometimes disagree on which sanitizations appear most private. These findings support using multiple models and show how contrastive privacy can reveal retained identifiers and identity-revealing context across sanitization methods.
\end{abstract}

\begin{figure}[!t]

\begin{subfigure}[t]{0.4\textwidth}
\centering
\begin{tikzpicture}
    \begin{axis}[
        width=\linewidth,
        height=2.6in, 
        xlabel={Contrastive Privacy Resolution (lower is better)},
        ylabel={Utility (higher is better)},
        label style={font=\footnotesize},
        tick label style={font=\scriptsize},
        grid=both,
        grid style={line width=.1pt, draw=gray!30},
        major grid style={line width=.2pt, draw=gray!50},
        xmin=0.1, xmax=0.6,
        ymin=0.15, ymax=1.0,
        legend pos=south east,
        legend style={font=\scriptsize, cells={align=left}},
        clip=false 
    ]

    \addplot[
        only marks,
        mark=*,
        mark size=1.5pt,
        draw=blue!70!black,
        fill=blue!50,
    ] coordinates {
        (0.31, 0.51) 
        (0.25, 0.44) 
        (0.27, 0.71) 
        (0.30, 0.67) 
        (0.32, 0.70) 
        (0.29, 0.71) 
        (0.29, 0.66) 
        (0.52, 0.60) 
        (0.53, 0.63) 
        (0.45, 0.62) 
        (0.53, 0.61) 
        (0.54, 0.55) 
        (0.31, 0.77) 
        (0.41, 0.63) 
        (0.46, 0.56) 
        (0.39, 0.55) 
        (0.31, 0.59) 
        (0.28, 0.78) 
        (0.30, 0.75) 
        (0.32, 0.78) 
        (0.31, 0.80) 
        (0.33, 0.73) 
        (0.27, 0.67) 
        (0.28, 0.69) 
        (0.22, 0.69) 
        (0.40, 0.21) 
        (0.41, 0.39) 
        (0.23, 0.31) 
        (0.31, 0.49) 
    };

    \addplot[
        only marks,
        mark=square*,
        mark size=1.8pt,
        draw=red!80!black,
        fill=red!60,
    ] coordinates {
        (0.16, 0.54) 
        (0.17, 0.71) 
        (0.21, 0.74) 
        (0.25, 0.79) 
        (0.30, 0.86) 
    };

    \addplot[
        thick,
        dashed,
        color=red!80!black,
        mark=none,
    ] coordinates {
        (0.16, 0.54) 
        (0.17, 0.71) 
        (0.21, 0.74) 
        (0.25, 0.79) 
        (0.30, 0.86) 
    };

    \tikzset{
        every pin edge/.style={draw=black!60, thin, shorten <=2pt},
        every pin/.style={font=\scriptsize, text=black, inner sep=1pt}
    }

    \node[coordinate, pin={-45:iGPT15/OPS46}] at (axis cs:0.16, 0.54) {};
    \node[coordinate, pin={-45:iGPT15/Manual}]   at (axis cs:0.17, 0.7) {};
    \node[coordinate, pin={126:iGEM31f}] at (axis cs:0.21, 0.74) {};
    \node[coordinate, pin={105:FLX2p/Manual}]    at (axis cs:0.25, 0.79) {};
    \node[coordinate, pin={45:FLX2d}]    at (axis cs:0.30, 0.86) {};

    \end{axis}
\end{tikzpicture}

\end{subfigure}
\hspace{0.04\textwidth}
\begin{subfigure}[t]{0.4\textwidth}
\centering

\begin{tikzpicture}
    \begin{axis}[
        width=\linewidth,
        height=2.6in, 
        title style={font=\footnotesize\bfseries, yshift=-1ex},
        xlabel={Contrastive Privacy Resolution (lower is better)},
        ylabel={Utility (higher is better)},
        label style={font=\footnotesize},
        tick label style={font=\scriptsize},
        grid=both,
        grid style={line width=.1pt, draw=gray!30},
        major grid style={line width=.2pt, draw=gray!50},
        xmin=0.1, xmax=0.4,
        ymin=0.6, ymax=0.9,
        legend pos=south east,
        legend style={font=\scriptsize, cells={align=left}},
        clip=false 
    ]

    \addplot[
        only marks,
        mark=*,
        mark size=1.5pt,
        draw=blue!70!black,
        fill=blue!50,
    ] coordinates {
        (0.34, 0.74) 
        (0.22, 0.80) 
        (0.21, 0.79) 
        (0.24, 0.67) 
        (0.19, 0.76) 
        (0.21, 0.79) 
        (0.22, 0.80) 
        (0.23, 0.85) 
        (0.22, 0.79) 
    };

    \addplot[
        only marks,
        mark=square*,
        mark size=1.8pt,
        draw=red!80!black,
        fill=red!60,
    ] coordinates {
        (0.17, 0.74) 
        (0.18, 0.77) 
        (0.19, 0.81) 
        (0.22, 0.82) 
        (0.23, 0.86) 
    };

    \addplot[
        thick,
        dashed,
        color=red!80!black,
        mark=none,
    ] coordinates {
        (0.17, 0.74) 
        (0.18, 0.77) 
        (0.19, 0.81) 
        (0.22, 0.82) 
        (0.23, 0.86) 
    };

    \tikzset{
        every pin edge/.style={draw=black!60, thin, shorten <=2pt},
        every pin/.style={font=\scriptsize, text=black, inner sep=1pt}
    }

    \node[coordinate, pin={-90:GPT4om}] at (axis cs:0.17, 0.74) {};
    \node[coordinate, pin={135:GPT4o}]   at (axis cs:0.18, 0.77) {};
    \node[coordinate, pin={135:GPT54}]   at (axis cs:0.19, 0.81) {};
    \node[coordinate, pin={110:GEM2f}] at (axis cs:0.22, 0.82) {};
    \node[coordinate, pin={0:Haiku-4.5}] at (axis cs:0.23, 0.86) {};

    \end{axis}
\end{tikzpicture}

\end{subfigure}

\caption{Contrastive privacy versus utility (as similarity) for a variety of mechanisms sanitizing: (top) \texttt{the identity of the fast food restaurant} from 49 images; (bottom) \texttt{the movie discussed in the passage} from 49 text passages.
Pareto dominant mechanisms have square markers.
(Note the different axes for each figure.)
}
\Description{Two scatterplots compare privacy resolution with utility. The top plot shows image-sanitization mechanisms for McDonald's data, and the bottom shows text-sanitization mechanisms for Avengers comments. Dashed lines connect each Pareto frontier.}
\label{fig:privacy_performance}
\end{figure}

\section{Introduction}

Public releases of data are commonplace in  many forms, including court documents, military and law enforcement disclosures, acts of Congress, Freedom of Information Act (FOIA) requests, and even social media posts. Also common are failed attempts to privatize these data by way of partial redaction~\cite{bbc2026epstein,lin2015police,deuber2023assessing}. A typical workflow for releasing data involves humans, sometimes professionally trained, poring through media files, redacting sensitive information, and then releasing the remainder. This task is both tedious and delicate; privacy can be easily compromised through phenomena like the soft biometrics attack~\cite{garfinkel2015deidentification}, whereby secondary evidence allows an agent with ancillary knowledge to deanonymize a subject, such as through posture, body shape, or clothing. When the size of the released data is vast, it becomes impractical for humans to enforce strict privacy guidelines. As a result, reports of privacy failures appear frequently in news stories.

A common approach to privacy in the academic literature is to apply techniques from differential privacy, adapted to unstructured data. Yet, in practice, differential privacy is applied infrequently. Indeed, prior work has noted the shortcomings of such approaches~\cite{garrido2023lessons,Cummings:2023,Schneider:2025,li2024visualmixer}. Many have turned instead to machine-assisted privatization using large language models (LLMs) and vision-language models (VLMs)~\cite{monteiro2025imago}. LLMs and VLMs are now able to ``reason'' about privacy vulnerabilities in vast corpora of text and images and, to a more limited extent, video and audio. Several works introduce what we call \emph{semantic sanitization mechanisms} (SSMs), which are privacy mechanisms that operate at a semantic level, identifying specific objects and concepts within a dataset that should be removed to preserve privacy~\cite{monteiro2025imago,chen2025vision,garza2025prvl,wang2026guiguard,dinh2026unsafe2safe}.

Semantic sanitization is appealing because it naturally aligns with privacy as required by law or policy (unlike approaches such as DP). For example, the Seattle Police Department redacts from police-worn body-camera footage ``any identifying marks, including faces, clothes, tattoos, addresses, photos, paperwork, computer screens, names, etc.''~\cite{seattlepd_bodywornvideo}.
And the 
Health Insurance Portability and Accountability Act (HIPAA) lists 16 types of patient identifiers that should be removed before releasing patient data~\cite{hhs_hipaa_deidentification}. 
As AI models become increasingly capable, we anticipate a proliferation of SSMs in the near future. 

A critical limitation of SSMs is the lack of a generally applicable, quantitative, and automated method for evaluating their privacy. The primary challenge is that these mechanisms are as complex as the large models that underlie them, making a statistical model of their privatizing behavior elusive. Recent work~\cite{patwari2024perceptanon,abdulaziz2025evaluation,garza2025prvl} introduces systems that evaluate narrower aspects of SSM privacy. Existing measures commonly rely on cumbersome and error-prone hand-labeling of ground-truth data or combine collections of privacy-suggestive metrics such as anomaly detectors.

\paragraph{Contributions.} In this paper, we introduce \emph{contrastive privacy}, a privacy definition that admits an empirical and largely automated test of sanitized \emph{renderings}, i.e., documents that capture semantically rich concepts, with respect to a well-defined abstract privacy concept whose constituent properties need not be enumerated. The renderings whose sanitized forms are being evaluated constitute the \emph{audited set}. The \emph{reference set} contains renderings known to capture privacy-relevant properties of the concrete concepts selected for sanitization (e.g., face, logo, or title). By comparing each reference before and after sanitization, the test characterizes the removed semantics and checks the audited set for residual associations with them. Each evaluation is relative to a provided \textit{distance mechanism} that captures semantic similarity between two renderings. Candidates for sanitization mechanisms are LLMs for text and image generation models for images. Models like SBERT and CLIP are candidates for distance mechanisms in text and images.

We stress that contrastive privacy is neither limited to text and image modalities, nor is it limited to machine learning models for its mechanisms. Moreover, despite its theoretical underpinnings, our privacy test can readily be applied to real-world privacy problems, which we demonstrate empirically in Section~\ref{sec:experiments}. In fact, for a commonly used family of distance mechanisms, we develop an efficient algorithm for applying the test to finite audited and reference sets.

\paragraph{Media Provenance.} Real images are from Wikimedia Commons; the linked \href{https://github.com/umass-forensics/contrastive-privacy/blob/main/image_licenses.csv}{\texttt{image\_licenses.csv}} file maps each image to its source page and license, with creator credits on the source pages. Displayed sanitizations modify these source images; experimental figures identify the mechanisms used. Reddit comments came from Pushshift~\cite{baumgartner2020pushshift}, and Enron emails from the Klimt--Yang corpus~\cite{klimt2004enron}. AI-generated media and descriptions are disclosed in the AI Use section.

\paragraph{Terminology.} We call the procedure over audited and reference sets the \emph{test} and each ordered comparison of an audited rendering $x$ with a reference rendering $y$ a \emph{trial}. Each trial yields a signed difference and a nonnegative \emph{violation} value; larger violations indicate worse privacy. The \emph{resolution} of the test is the largest violation among all trials. For a candidate resolution, a trial is \emph{satisfied} exactly when its signed difference is smaller than the resolution. The test \emph{passes} at a given candidate resolution if and only if every trial is satisfied. A \emph{witness} for a violation is a concrete trial that exhibits it. A trial attaining the largest violation is a \emph{decisive witness}, since it determines the test's resolution. We describe the practical, model- and corpus-relative test as a \emph{diagnostic}; its result is not an unconditional privacy guarantee.

Within our idealized formal model, we prove conditions under which passing the test makes every rendering in a designated audited set private with respect to the privacy target. The result excludes the degenerate case in which all reference renderings used by the idealized test sanitize identically. Unlike evaluations that check a predetermined list of sensitive properties, our approach audits sanitization with respect to an abstract privacy target without exhaustively enumerating its constituent properties. It instead assumes that the properties selected for sanitization are semantically linked to the target. Given a representative reference proxy, reliable sanitization of those properties, and a distance mechanism sensitive to relevant semantic relationships, a pass indicates that the selected properties suffice for the audited set. Section~\ref{sec:distance_robustness} recommends heterogeneous distance mechanisms for sensitive deployments.

We also demonstrate empirically that the operational test often fails for common SSMs under the evaluated distance mechanisms and on the evaluated datasets. Figure~\ref{fig:privacy_performance} shows privacy resolution, our model-relative measure of privacy, for the evaluated mechanisms, highlighting broad privacy--utility tradeoffs across modalities. Observed resolutions vary by task: privatizing a celebrity identity in images can be easier than privatizing a global brand like McDonald's in images or a movie franchise such as \emph{the Avengers} in text. A separate synthetic-identity study shows that identity-specific fine-tuning causes the test to fail where the base distance mechanism passes; adding clothing to the sanitization concepts returns the adapted test to resolution 0. Finally, each of four off-the-shelf PII sanitizers yields a nonzero resolution on a corpus of 1,841 processed Enron emails selected for mentioning Ken Lay, even with acceptance thresholds that retain every candidate returned by the detectors: for each of the first three, a decisive witness includes a sanitized email that retains his name, while a decisive witness for Cloud DLP includes sanitized emails that omit his name but retain identifying facts. We show how to interpret the observed resolutions using semantic similarities between familiar concepts.

\section{Preliminaries}

In this section, we introduce definitions and vocabulary that help us define the problem and the contrastive privacy approach.
A \emph{rendering} is any representation of something from the real world. For example, a rendering could be an image, a text description, a video, an audio clip, or another \textit{modality}. Let $\mathbb{X}$ be the set of \emph{all possible} renderings of a given modality, and let $\mathbb{P}$ denote the set of all possible real-world \textit{properties} latent in $\mathbb{X}$.\footnote{The set $\mathbb{X}$ could indeed be extremely large. For example, the cardinality of the set of all \texttt{800x600} images with 24-bit pixels is $24^{48 \times 10^4}\!\!$. Thus, we acknowledge upfront that it will typically be impossible to work with $\mathbb{X}$ in its entirety.} When we say that a rendering \emph{captures} a property, we mean that it is possible to algorithmically infer the property from the rendering. For $P \subseteq \mathbb{P}$, we denote by $\mathbb{X}(P)$ the set of renderings that capture $P$. A \emph{concept} $c \subseteq \mathbb{P}$ is any collection of properties, and an \emph{instance} of $c$ is any subset of properties in $c$. By $\mathcal{I}(c)$, we denote the exhaustive, but not necessarily disjoint, set of instances comprising $c$.

For example, the concept $c$, \texttt{smurf}, is a collection of properties \{\texttt{blue skin}, \texttt{red hat}, \texttt{white beard}, \ldots\}. The concept $e$, \texttt{papa smurf}, is a specific instance: a subset of $c$ and an element of $\mathcal{I}(c)$ that may be captured by a rendering.

We differentiate between \emph{natural} and \emph{abstract} instances. A natural instance is any instance whose properties are computationally efficient to enumerate, such as those properties describing \texttt{the face of papa smurf}. An abstract instance is one whose properties cannot be efficiently identified with computational methods. For example, the instance \texttt{anything that can be used to identify papa smurf} is abstract. These properties could include his physical characteristics, the characteristics of his smurf hut, or merely the presence of other smurfs.
Natural and abstract concepts are concepts that comprise natural and abstract instances, respectively. 

\subsection{Threat Model}
\label{sec:threat}

In this paper, we seek to keep private from an adversary a specific abstract instance captured within a set of renderings. To express this formally, let $\widehat{\mathbb{X}}\subseteq\mathbb{X}$ be the set whose sanitized forms we seek to audit, and let $e$ be an abstract instance we seek to sanitize from it. The adversary receives only the sanitized forms of the renderings in $\widehat{\mathbb{X}}$; neither the original audited renderings nor the reference set is released to the adversary. Let $\gamma \subseteq \mathbb{P}$ be a set of natural and abstract properties such that $e \in \mathcal{I}(\gamma)$. We are limited to obfuscating computationally identifiable concepts $c\subseteq \mathbb{P}$, and we seek conditions under which sanitizing $c$ makes every rendering in $\widehat{\mathbb{X}}$ private with respect to $\gamma$. The adversary's goal is to learn what $e$ is from this release.

Our experiments assume a controlled audit: a trusted auditor selects the audited and reference sets and distance mechanism and runs the test on outputs of non-adversarial SSMs. Such an SSM may be inaccurate, and its configuration may be adaptively refined, as in the \texttt{Manual} condition in Section~\ref{sec:personal_privacy}, but it is not designed to pass the test while deliberately retaining target semantics. This models ordinary audits of commercial and locally deployed sanitizers. We leave analysis of adversarial SSMs to future work.

\subsection{Assumptions}

In the sequel, we make the following fundamental assumptions.
\begin{enumerate}
    \item \textbf{Efficiency.} There exist  computationally efficient mechanisms for:
    \begin{enumerate}
        \item identifying and representing the set of properties that comprise a natural instance;
        \item identifying and removing instance $e$ or replacing $e$ by instance $e'$, transforming rendering $x$ into $x'$; and
        \item calculating semantic similarity between renderings.
    \end{enumerate}
    \item \textbf{Inefficiency.} Generally, there is no computationally efficient mechanism for enumerating the properties comprising an abstract concept or an instance of that concept.
    \item \textbf{Representability.} Corresponding to every property $p \in \mathbb{P}$ there exists a rendering in $\mathbb{X}$ that captures $p$.
\end{enumerate}

\section{Contrastive Privacy Formulation}
\label{sec:priv_formulation}

\subsection{Privacy for Complex Data}

Ensuring the privacy of unstructured information is a complex problem. In this paper, we ground our definition of privacy with respect to appropriate information flows following Contextual Integrity (CI)~\cite{nissenbaum2004privacy}. CI postulates that an appropriate information flow conforms to contextual norms, which differ across contexts and are defined through the following parameters: \textit{subject}, \textit{sender}, \textit{recipient}, \textit{information type}, and \textit{transmission principle}. Once the contextual norm is captured, it can be evaluated for appropriateness. This evaluation is usually straightforward once all parameters are accurately identified. CI clearly captures, for example, an information flow in which the police (sender) share an image (information type) of a given individual (subject) with the public (receiver) as prescribed by a court of law (transmission principle).

However, in an unstructured setting, a single rendering might capture a vast number of appropriate information flows. It could contain ancillary information (other individuals or objects) that might be inappropriate to share because doing so could reveal a victim's identity or other PII. Therefore, privacy measures that focus on a single appropriate information flow are inadequate. What is needed is a method designed to test the many semantic connections that may contribute to such information flows. Contrastive privacy is designed for this purpose.

\subsection{Problem Formulation}
\label{sec:problem}

Next, we define privacy for SSMs and prove, under our stated assumptions, conditions under which sanitizing a concept $c$ privatizes the abstract target $\gamma$. We evaluate the test with respect to a privacy \emph{resolution} parameter $\delta \in \mathbb{R}_{\geq 0}$, where $\delta = 0$ imposes the strongest test and the numerical scale is inherited from the distance mechanism. Within any application of the definitions and results below, semantic connectedness, semantic closure, and contrastive privacy are evaluated using the same $\delta$; we suppress this shared dependence in the notation for readability.

\begin{definition}
\label{def:distance}
    A \emph{distance mechanism} $\mathcal{D}(x, y)$ is a deterministic algorithm that accepts renderings $x, y \in \mathbb{X}$ and returns a distance with respect to their semantic similarity. It has the following properties.\\
    \begin{itemize}
        \item Symmetry: $\forall x,y \in \mathbb{X}$, $\mathcal{D}(x, y) = \mathcal{D}(y, x)$.
        \item Proximity: $\forall x\in \mathbb{X}$, $\mathcal{D}(x, x) = 0$.
        \item Nonnegativity: $\forall x,y \in \mathbb{X}$, $\mathcal{D}(x, y) \geq 0$.
    \end{itemize}
    We do not require $\mathcal{D}(x,y) > 0$ when $x \neq y$; mechanisms based on learned embeddings may map distinct renderings to the same representation.
\end{definition}

\begin{definition}
A \emph{privacy mechanism}, $\mathcal{X}: 2^\mathbb{P} \times \mathbb{X} \rightarrow \mathbb{X}$, is any deterministic algorithm that takes a subset of properties and a rendering and returns a modified rendering. When the first argument $\gamma \subseteq \mathbb{P}$ is fixed, we write $\mathcal{X}_\gamma$.
\end{definition}

\begin{definition}
\label{def:obfuscation}
    For any $\gamma \subseteq \mathbb{P}$, $\mathcal{X}$ is a \emph{$\gamma$-privacy mechanism} if $\mathcal{X}_\gamma$ sanitizes from $x\in\mathbb{X}$ every instance $e \in \mathcal{I}(\gamma)$; $e$ can be obfuscated or replaced with another instance $e' \not \in \mathcal{I}(\gamma)$. In either case, we assume idempotence: $\mathcal{X}_\gamma(\mathcal{X}_\gamma(x)) = \mathcal{X}_\gamma(x)$.
\end{definition}

When the renderings $\mathbb{X}$ are images, sanitizing could be as simple as replacing concepts $c$ (e.g., a face) with black pixels. Alternatively, it can be as complex as a mechanism that provides differential privacy for those concepts~\cite{xue2021dp}.
If $\gamma$ is an abstract concept, then $\mathcal{X}_\gamma(x)$ will generally not be computable (according to our Inefficiency Assumption). Nevertheless, the results in this section allow us to reason about $\mathcal{X}_\gamma(x)$ without directly computing it.

\begin{definition}
\label{def:capture}
    We say that a rendering $x \in \mathbb{X}$ \emph{captures} an arbitrary concept $\gamma \subseteq \mathbb{P}$, denoted $x \in \mathbb{X}(\gamma)$, if $x \neq \mathcal{X}_{\gamma}(x)$. Equivalently, $x \in \mathbb{X}(\gamma)$ if and only if $x \in \mathbb{X}(\{q\})$ for some $q \in \gamma$.
\end{definition}

\begin{definition}[Privacy of $x$ with respect to $\gamma$]
\label{def:priv_x}
    For any concept $\gamma \subseteq \mathbb{P}$, rendering $x \in \mathbb{X}$ is \emph{private} with respect to $\gamma$ provided that $x \not \in \mathbb{X}(\gamma)$, i.e., provided that $x = \mathcal{X}_\gamma(x)$.
\end{definition}

\noindent In other words, if sanitizing a concept from an image $x$ does not produce the same image, then the concept is captured in the original image.

With there generally being no efficient way to compute $\gamma$, how can we actually test if $x = \mathcal{X}_\gamma(x)$, i.e., if $x$ is private with respect to $\gamma$? The remainder of this section answers this question by reasoning about the relationship between concepts $c$ that we \emph{can} compute and concepts $\gamma$ that we cannot.

\begin{definition}
\label{def:sem_conn_det}
    Sets of properties $c, d \subseteq \mathbb{P}$ are said to be \emph{semantically connected by $\mathcal{X}$ and $\mathcal{D}$}, denoted $c \sim d$, if for each $x \in \mathbb{X}(c)$ and $y \in \mathbb{X}(d)$,
    \begin{equation}
    \label{eq:conn1}
        \mathcal{D}(x, y) + \delta < \mathcal{D}(\mathcal{X}_{c}(x), y)
        \quad\text{whenever }\mathcal{X}_{c}(x)\neq y,
    \end{equation}
    and
    \begin{equation}
    \label{eq:conn2}
        \mathcal{D}(x, y) + \delta < \mathcal{D}(x, \mathcal{X}_{d}(y))
        \quad\text{whenever }x\neq\mathcal{X}_{d}(y).
    \end{equation}
\end{definition}

See Appendix~\ref{app:sc} for illustrative examples comparing oranges, apples, and iPhones. The intuition behind Definition~\ref{def:sem_conn_det} is that renderings $x$ and $y$ are more similar when concepts $c$ and $d$ are both included than when one is omitted. This implies that there is some semantic similarity between them.

\begin{definition}
\label{def:semantic_closure}
    For any privacy mechanism $\mathcal{X}$, distance mechanism $\mathcal{D}$, and properties $P \subseteq \mathbb{P}$, the \emph{semantic closure of $P$ with respect to $\mathcal{X}$ and $\mathcal{D}$}, denoted $\texttt{cl}(\mathcal{X}, \mathcal{D}, P)$, is the set of all properties that are semantically connected to $P$. Specifically,
    \begin{equation}
        \texttt{cl}(\mathcal{X}, \mathcal{D}, P) = \{q \in \mathbb{P}: P \sim \{q\}\}.
    \end{equation}
\end{definition}

From our definitions, we can test the efficacy of a computationally efficient $c$-privacy mechanism that we use to privatize concepts $\gamma$. We define this as \emph{contrastive privacy}, and apply the test throughout the remainder of this paper.

\begin{definition}[Contrastive privacy of $\mathcal{X}_c$ for $\widehat{\mathbb{X}}$ with respect to $\gamma$]
\label{def:privacy_det}
    Let $\gamma$ be any concept, $c$ a natural concept such that \mbox{$\gamma \subseteq \texttt{cl}(\mathcal{X}, \mathcal{D}, c)$}, and $\widehat{\mathbb{X}}\subseteq\mathbb{X}$ a set of renderings to be audited. We say that $\mathcal{X}_c$ offers \emph{contrastive privacy for $\widehat{\mathbb{X}}$ with respect to $\gamma$} provided that, for every $x \in \widehat{\mathbb{X}}$ and $y \in \mathbb{X}(\gamma \cap c)$,
    \begin{equation}
    \label{eq:priv_def1}
        \mathcal{D}(\mathcal{X}_c(x), y) + \delta > \mathcal{D}(\mathcal{X}_c(x), \mathcal{X}_c(y)).
    \end{equation}
\end{definition}
The test is generally asymmetric: each trial $(x,y)$ lists the \emph{audited rendering} $x$ first and the \emph{reference rendering} $y$ second, and $x$ appears only as $\mathcal{X}_c(x)$. We call the test \emph{symmetric} when its audited and reference sets coincide. Thus, Definition~\ref{def:privacy_det} is symmetric when $\widehat{\mathbb{X}} = \mathbb{X}(\gamma \cap c)$. If $\delta$ is not fixed, satisfaction at $\delta \geq 0$ gives \emph{contrastive privacy at resolution $\delta$}; resolution 0 is ideal.

\paragraph{Violation and Resolution.} For trial $(x,y)$, define
\begin{equation*}
v(x,y)=\max\!\left\{0,\mathcal{D}(\mathcal{X}_c(x),\mathcal{X}_c(y))-\mathcal{D}(\mathcal{X}_c(x),y)\right\}.
\end{equation*}
The trial is satisfied exactly when the difference inside the maximum is less than $\delta$. The test's resolution is
\begin{equation*}
    \sup\!\left\{v(x,y)\;\middle|\;
    \begin{aligned}
        x &\in \widehat{\mathbb{X}},\\[-2pt]
        y &\in \mathbb{X}(\gamma\cap c)
    \end{aligned}
    \right\}.
\end{equation*}
Over finite audited and reference sets, this is a maximum attained by a decisive witness. Because Inequality~\ref{eq:priv_def1} is strict, the resolution is a boundary value (an infimum); any larger $\delta$ satisfies every trial.

Most importantly, Definition~\ref{def:privacy_det} permits an audited set of sanitized renderings to be tested for properties in $\gamma$ without enumerating those properties or computing $\mathcal{X}_\gamma$. It relies on original and sanitized references and assumes $\gamma \subseteq \texttt{cl}(\mathcal{X}, \mathcal{D}, c)$ under the chosen mechanisms and rendering domain. Definition~\ref{def:priv_x}, by contrast, describes the privacy of an individual rendering.

\begin{figure}[!t]
\begin{center}
\begin{subfigure}[t]{0.9\columnwidth}
\centering
\includegraphics[width=0.9\linewidth]{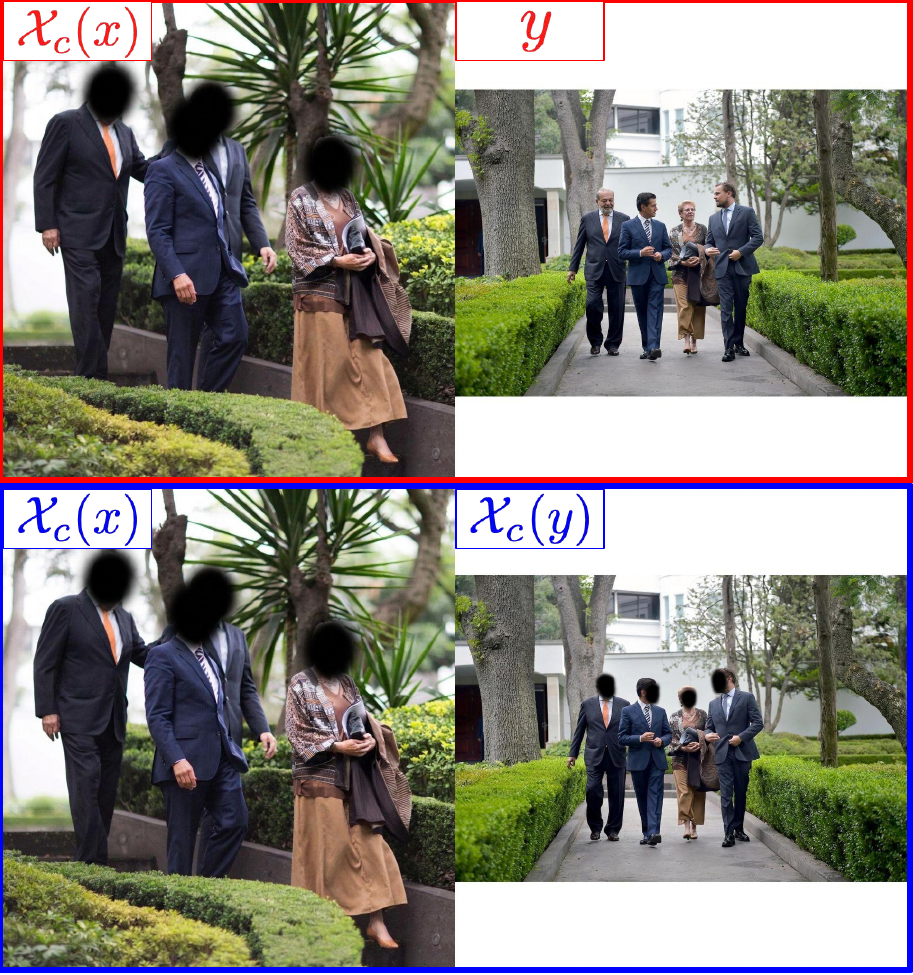} 
\caption{Trial satisfied at $\delta=0$ under \texttt{EVA}.}\label{fig:dicaprio_success}
\end{subfigure}
\begin{subfigure}[t]{0.9\columnwidth}
\centering
\includegraphics[width=0.9\linewidth]{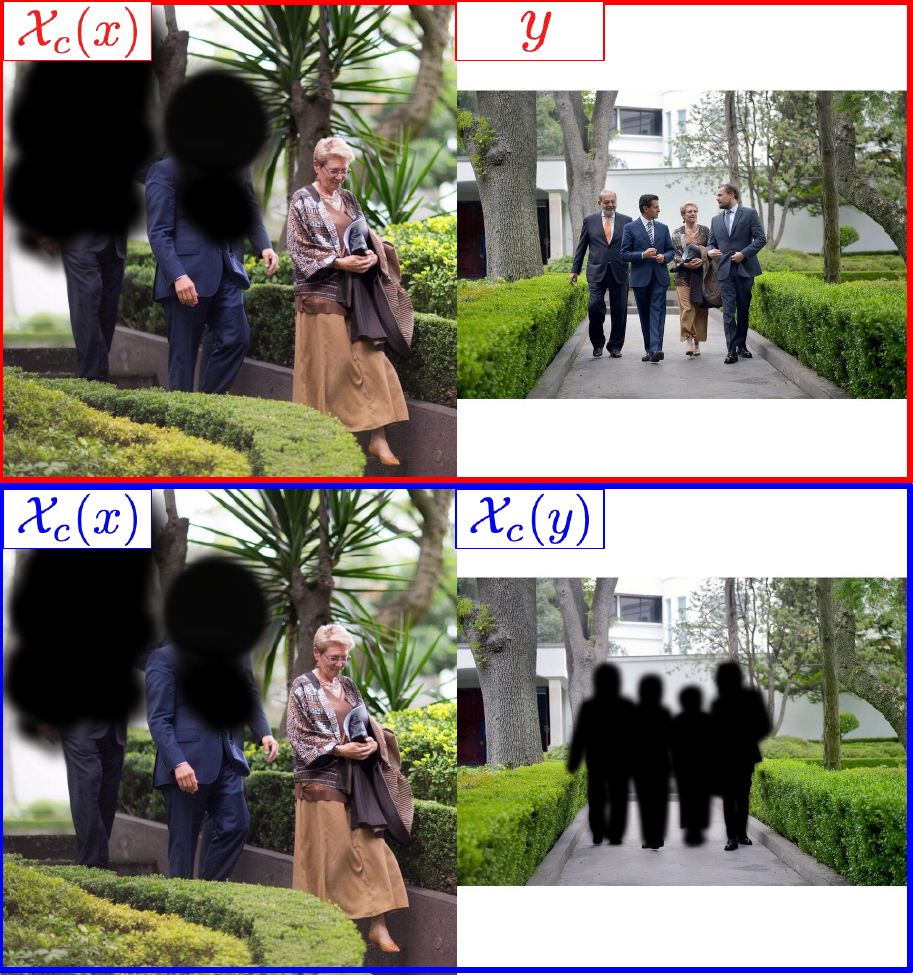}
\caption{Trial with violation 0.10 under \texttt{EVA}. In $\mathcal{X}_c(x)$, a celebrity's face is retained.}\label{fig:dicaprio_failure}
\end{subfigure}
\caption{Illustration of Inequality~\ref{eq:priv_def1}, ${\color{red}\mathcal{D}(\mathcal{X}_c(x), y)} +\delta > {\color{blue}\mathcal{D}(\mathcal{X}_c(x), \mathcal{X}_c(y))}$ (red and blue for clarity), when sanitizing \texttt{celebrities} and measuring distance with \texttt{EVA-CLIP-18B}. At $\delta=0$, trial (a)~is satisfied for \texttt{iGEM31f}, while trial (b)~has violation 0.10 for \texttt{FLX2d/GEM31p}.}
\Description{Two three-image trials compare a sanitized audited image with an original and sanitized reference image. The first trial is satisfied at resolution zero; in the second, the audited sanitization retains a celebrity's face and has violation 0.10.}
\label{fig:dicaprio_privacy_failures}
\end{center}
\end{figure}

\subsection{Application to CLIP / Cosine Similarity}\label{sec:application_to_clip}
To demonstrate the applicability of contrastive privacy, we apply it here to CLIP as a concrete example. (N.B., this example also applies to other embedding models that use the same metric.) For each rendering $x \in \mathbb{X}$, let $\mathcal{E}(x)$ denote its unit-normalized embedding in a space that uses cosine similarity as its metric. 
  
\begin{example}
\label{ex:clip}
   We define 
    \begin{equation}
    \label{eq:clip_distance}
        \mathcal{D}(x, y) \coloneqq 1 - \cos(\mathcal{E}(x), \mathcal{E}(y)).         
    \end{equation}
    The embedding map need not be injective, so distinct renderings may have identical embeddings and distance zero, as permitted by Definition~\ref{def:distance}. In this context, we have the following equivalent expressions.
    \begin{equation}
    \label{eq:clip_test}
        \begin{array}{c}
            \mathcal{D}(\mathcal{X}_c(x), y) + \delta> \mathcal{D}(\mathcal{X}_c(x), \mathcal{X}_c(y)) \\
            1\!-\!\mathcal{E}(\mathcal{X}_c(x)) \cdot \mathcal{E}(y) + \delta> 1\!-\!\mathcal{E}(\mathcal{X}_c(x))\!\cdot\!\mathcal{E}(\mathcal{X}_c(y)) \\
            \mathcal{E}(\mathcal{X}_c(x)) \cdot \mathcal{E}(y) < \mathcal{E}(\mathcal{X}_c(x)) \cdot \mathcal{E}(\mathcal{X}_c(y)) + \delta \\
            \mathcal{E}(\mathcal{X}_c(x)) \cdot \big(\mathcal{E}(y) - \mathcal{E}(\mathcal{X}_c(y))\big) < \delta 
        \end{array}
\end{equation}

\noindent Vector $\mathcal{E}(y) - \mathcal{E}(\mathcal{X}_c(y))$ is a $\mathcal{D}$-specific direction encoding the semantics of the portion of $c$ captured by $y$. Its inner product with $\mathcal{E}(\mathcal{X}_c(x))$ is the signed difference for trial $(x,y)$ and measures how much the sanitized $x$ retains those semantics. If it is below $\delta$ for every audited $x$ and reference $y$, the test detects no such residual encoding at that resolution.

Figure~\ref{fig:dicaprio_privacy_failures} illustrates two trials when sanitizing the concept $c =$ \texttt{celebrities} with different SSMs and using the \texttt{EVA} CLIP distance mechanism. (Each is discussed in detail in Section~\ref{sec:personal_privacy}.) Figure~\ref{fig:dicaprio_success} shows a trial satisfied at $\delta = 0$, whereas Figure~\ref{fig:dicaprio_failure} shows one with positive violation. In each trial, Inequality~\ref{eq:priv_def1} compares the sanitized audited image $\mathcal{X}_c(x)$ on the left with the concept removed from the reference image on the right, as represented by the difference between $y$ and $\mathcal{X}_c(y)$.
\end{example}

\subsection{Sufficiency of \texorpdfstring{$c$}{c} to Privatize \texorpdfstring{$\gamma$}{gamma}}

\begin{definition}
\label{def:diverse_representation}
    A natural concept $c$ is \emph{diversely represented with respect to $\gamma$} by a $c$-privacy mechanism $\mathcal{X}_c$ if there exist $y_1, y_2 \in \mathbb{X}(\gamma\cap c)$ whose sanitized forms remain distinct:
    \begin{equation*}
        \mathcal{X}_c(y_1) \neq \mathcal{X}_c(y_2).
    \end{equation*}
\end{definition}

Suppose that $z \in \widehat{\mathbb{X}}$ but $\mathcal{X}_c(z)$ retains a property $q\in\gamma$. Diverse representation supplies a reference $y \in \mathbb{X}(\gamma\cap c)$ whose sanitized form differs from $\mathcal{X}_c(z)$. Because $q$ lies in the semantic closure of $c$, semantic connectedness makes $\mathcal{X}_c(z)$ closer to $y$ than to $\mathcal{X}_c(y)$, while contrastive privacy requires the opposite. The test can therefore surface audited renderings that remain nonprivate. The following theorem formalizes this intuition for every rendering in the audited set $\widehat{\mathbb{X}}$, without requiring those renderings themselves to capture a property in $\gamma \cap c$.

\begin{theorem}
\label{thm:privacy_implies_no_capture}
    Fix abstract concept $\gamma$, natural concept $c$, and audited set $\widehat{\mathbb{X}}\subseteq\mathbb{X}$ meeting the criteria of Definition~\ref{def:privacy_det}, with $\gamma \cap c \neq \emptyset$. Suppose that $c$ is diversely represented with respect to $\gamma$ by the $c$-privacy mechanism $\mathcal{X}_c$. If $\mathcal{X}_{c}$ offers contrastive privacy for $\widehat{\mathbb{X}}$ with respect to $\gamma$, then $\mathcal{X}_c(z)$ is private with respect to $\gamma$ for every $z \in \widehat{\mathbb{X}}$.
\end{theorem}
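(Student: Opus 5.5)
The plan is a proof by contradiction that pairs the semantic-connectedness inequality \eqref{eq:conn1} with the contrastive-privacy inequality \eqref{eq:priv_def1} on the same pair of renderings. Symmetry of $\mathcal{D}$ (Definition~\ref{def:distance}) will let us compare the two inequalities directly. Since both use the same shared $\delta$ and both are strict, they cannot hold simultaneously.

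Concretely, I would fix $z \in \widehat{\mathbb{X}}$ and set $w = \mathcal{X}_c(z)$. Suppose, for contradiction, that $w$ is not private with respect to $\gamma$. By Definitions~\ref{def:priv_x} and~\ref{def:capture}, this means $w \in \mathbb{X}(\{q\})$ for some $q \in \gamma$. The criteria of Definition~\ref{def:privacy_det} give $\gamma \subseteq \texttt{cl}(\mathcal{X},\mathcal{D},c)$, so $c \sim \{q\}$ by Definition~\ref{def:semantic_closure}.

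Next I choose the reference. Diverse representation (Definition~\ref{def:diverse_representation}) supplies $y_1, y_2 \in \mathbb{X}(\gamma\cap c)$ with $\mathcal{X}_c(y_1) \neq \mathcal{X}_c(y_2)$. At most one of these sanitized forms can equal $w$, so I pick $y \in \{y_1,y_2\}$ with $\mathcal{X}_c(y) \neq w$. I also need $y \in \mathbb{X}(c)$ in order to apply connectedness. Since $y$ captures some property of $\gamma\cap c \subseteq c$, the ``equivalently'' clause of Definition~\ref{def:capture} gives $y \in \mathbb{X}(c)$.

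Now I apply \eqref{eq:conn1} to the connected pair $(c,\{q\})$, taking $y$ in the role of the $c$-rendering and $w$ in the role of the $\{q\}$-rendering. The side condition $\mathcal{X}_c(y) \neq w$ holds by the choice of $y$, so $\mathcal{D}(y,w) + \delta < \mathcal{D}(\mathcal{X}_c(y), w)$. Contrastive privacy applied to the trial $(z,y)$ gives $\mathcal{D}(w,y) + \delta > \mathcal{D}(w,\mathcal{X}_c(y))$. Rewriting both with symmetry yields
\[
\mathcal{D}(w,y)+\delta < \mathcal{D}(w,\mathcal{X}_c(y)) < \mathcal{D}(w,y)+\delta ,
\]
which is a contradiction. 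Hence $w = \mathcal{X}_c(z)$ is private with respect to $\gamma$. Since $z$ was arbitrary, this holds for every $z \in \widehat{\mathbb{X}}$.

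The main obstacle is bookkeeping rather than any estimate. Specifically, I must make sure the side condition of \eqref{eq:conn1} is met, which is precisely where diverse representation is used and why the degenerate case is excluded. I must also justify $y \in \mathbb{X}(c)$ from $y \in \mathbb{X}(\gamma\cap c)$ via the property-wise characterization of capture. One further check is that $z$ itself need not capture anything in $\gamma\cap c$: only $w$ enters connectedness, and it enters as the $\{q\}$-side rendering.
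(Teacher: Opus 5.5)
Your proof is correct and follows the paper's argument essentially step for step. You assume $\mathcal{X}_c(z)$ captures some $q\in\gamma$, use diverse representation to pick a reference $y$ with $\mathcal{X}_c(y)\neq\mathcal{X}_c(z)$, apply Inequality~\ref{eq:conn1} for $c\sim\{q\}$ together with symmetry, and contradict the contrastive-privacy inequality on trial $(z,y)$; your explicit derivation of $y\in\mathbb{X}(c)$ from $y\in\mathbb{X}(\gamma\cap c)$ via Definition~\ref{def:capture} spells out a step the paper only asserts.
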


\begin{proof}
    Suppose that $s = \mathcal{X}_c(z)$ is not private with respect to $\gamma$ for some $z \in \widehat{\mathbb{X}}$. Then $s \in \mathbb{X}(\{q\})$ for some $q \in \gamma$. Definition~\ref{def:diverse_representation} supplies $y_1,y_2 \in \mathbb{X}(\gamma \cap c)$ with distinct sanitized forms, so one, denoted $y$, satisfies $\mathcal{X}_c(y) \neq s$. Since $y \in \mathbb{X}(c)$ and $q \in \gamma \subseteq \texttt{cl}(\mathcal{X}, \mathcal{D}, c)$, we have $c \sim \{q\}$. Applying Inequality~\ref{eq:conn1} and symmetry of $\mathcal{D}$ gives
    \begin{equation}
        \mathcal{D}(s,y) + \delta < \mathcal{D}(s,\mathcal{X}_c(y)).
    \end{equation}
    However, $z \in \widehat{\mathbb{X}}$ and $y \in \mathbb{X}(\gamma \cap c)$. Therefore, contrastive privacy for $\widehat{\mathbb{X}}$ with respect to $\gamma$ requires
    \begin{equation}
        \mathcal{D}(s,y) + \delta > \mathcal{D}(s,\mathcal{X}_c(y)),
    \end{equation}
    a contradiction.
\end{proof}

Our formal analysis concerns the concept actually and consistently sanitized, which need not equal the full concept $c$ requested of an implementation. Suppose that imperfect communication with an SSM yields a $c'$-privacy mechanism for some fixed natural concept $c' \subseteq c$. If $c'$ meets the hypotheses of Theorem~\ref{thm:privacy_implies_no_capture} for the audited set $\widehat{\mathbb{X}}$, then the theorem applies directly: passing the test implies that $\mathcal{X}_{c'}(z)$ is private with respect to $\gamma$ for every $z \in \widehat{\mathbb{X}}$. Thus, failure to sanitize $c \setminus c'$ does not itself invalidate the rendering-level privacy conclusion. This reasoning does not directly apply when the effective sanitized concept varies by rendering without a fixed core, and a practical result remains conditional on the representativeness of the finite reference set and the fidelity of $\mathcal{X}_{c'}$ and $\mathcal{D}$.

This formulation also separates mechanism failure from concept misspecification. If $c$ contains only properties inside a rendered face, a target-relevant silhouette may remain even when $\mathcal{X}_c$ removes $c$ perfectly. The resulting test failure indicates that the effective concept $c$ should be broadened (for example, by directing the SSM to blur a larger region) rather than that $\mathcal{X}_c$ failed to remove the concept it was given.

\section{Operationalizing Contrastive Privacy}

In this section, we \1~develop a finite, corpus-relative reference test; \2~explain semantic closure and $\delta$; \3~discuss limitations of direct LLM and VLM assessments; \4~give practical considerations; and \5~provide an audit workflow and efficient algorithm.

\subsection{Proxy Sets}
\label{sec:proxy_sets}

Let $\widehat{\mathbb{X}} \subseteq \mathbb{X}$ be the audited set. In practice, a finite \emph{proxy set} $\widetilde{\mathbb{X}} \subseteq \mathbb{X}(\gamma\cap c)$ approximates the full reference domain. Enlarging it adds trials and cannot lower the observed resolution; an incomplete proxy may omit a decisive reference and yield an optimistic result. Every audited rendering is still tested, but absent reference properties remain untested. If $\widehat{\mathbb{X}} \subseteq \mathbb{X}(\gamma \cap c)$ is itself a reasonable proxy, the symmetric choice $\widetilde{\mathbb{X}}=\widehat{\mathbb{X}}$ is desirable: every item fills both roles, and all within-corpus ordered pairs are tested.

\subsection{Semantic Closure and the Privacy Parameter}
\label{sec:privacy_param}

In addition to requiring Inequality~\ref{eq:priv_def1} to be satisfied, Definition~\ref{def:privacy_det} requires $\gamma \subseteq \texttt{cl}(\mathcal{X}, \mathcal{D}, c)$. The algorithm exhausts the audited set but only samples the reference domain, and in practice we assume closure for the chosen mechanism and domain. This is often plausible by construction: $c$ may include a target person's face or a target brand's logo. Each application should justify why every target property is expected to be semantically connected to $c$ and how that premise might fail; an operational pass remains conditional on it.

Increasing $\delta$ makes the test easier to pass but the semantic-connectedness inequalities harder to satisfy, shrinking the semantic closure of $c$. Maintaining $\gamma$ within this closure may therefore require a broader $c$. Independently, a larger $\delta$ reduces sensitivity to semantic similarities. This tradeoff motivates the term privacy \emph{resolution}.

\subsection{Large Language Models Give an Incomplete Privacy Picture}

A natural approach to assessing the privacy of SSMs is to use large language models (LLMs) or vision-language models (VLMs). The idea is to provide an input text string or image that is believed to be obfuscated from anything that reveals a given concept and prompt the model to decide whether privacy has been achieved. This approach is highly effective up to a point. Nevertheless, it ultimately fails to offer a reliable privacy measure because of the stochasticity of the models and the ambiguity and qualitative nature of language itself.

To demonstrate the shortcomings of this approach, consider the problem of deciding whether the image in Figure~\ref{fig:obs_dicaprio} captures the abstract concept \texttt{anything that reveals the identity of Leonardo DiCaprio}. We gave this image and the following two prompts to Gemini 3 Pro.

\begin{figure}
    \centering
    \includegraphics[width=0.8\linewidth]{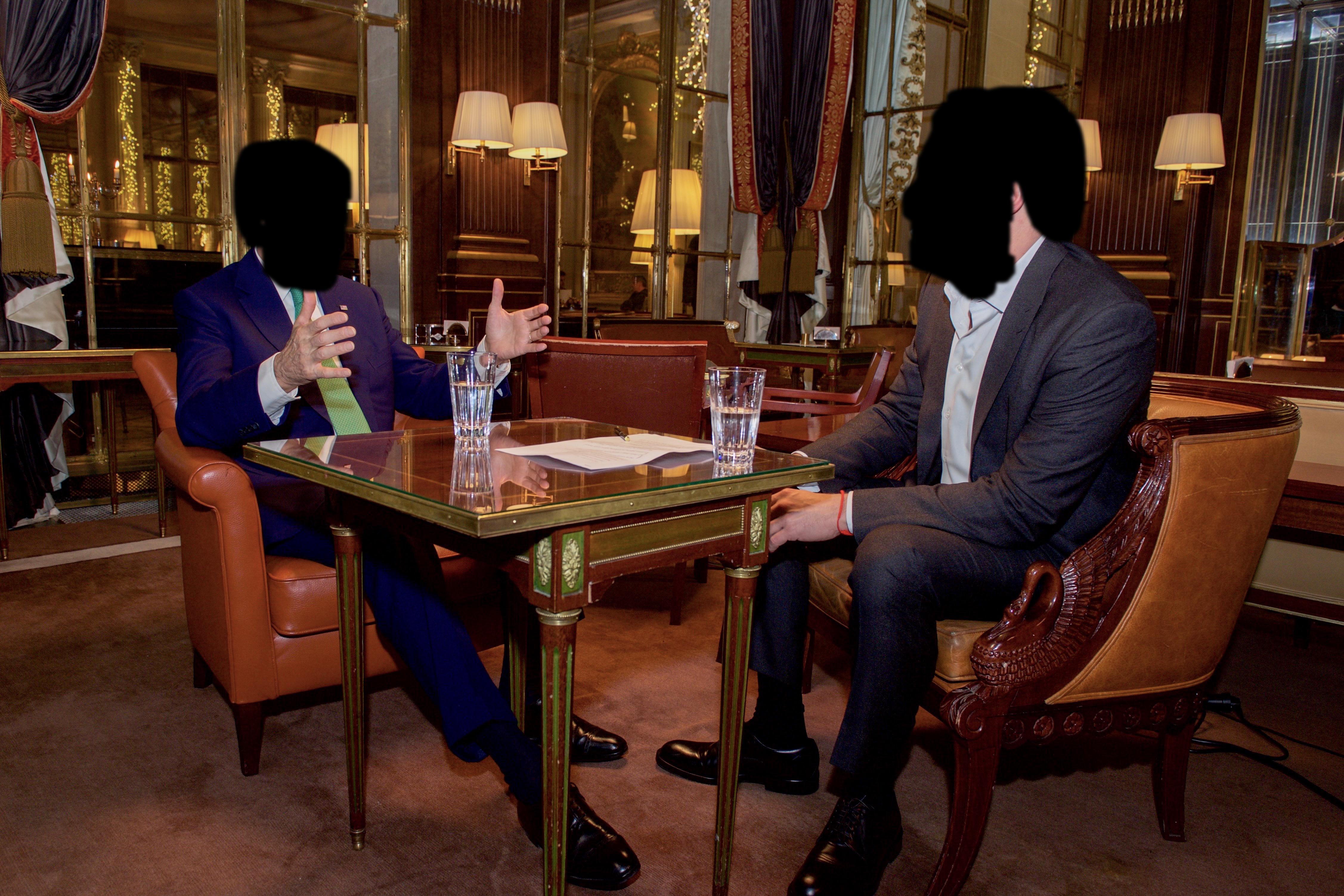}
    \caption{A sanitized image of John Kerry and  Leonardo DiCaprio.}
    \Description{Two men sit across a table in an ornate room. Solid black regions obscure both heads, while their clothing, bodies, and the room remain visible.}
    \label{fig:obs_dicaprio}
\end{figure}

\begin{prompt}
\label{prompt:dicaprio1}
    Is there anything in this image that can identify Leonardo DiCaprio?
\end{prompt}

\begin{prompt}
\label{prompt:dicaprio2}
    Can Leonardo DiCaprio be identified conclusively from this image?
\end{prompt}

To Prompt~\ref{prompt:dicaprio1} Gemini responded ``Based on the visual clues and context of this specific photograph, the person on the right can be identified as Leonardo DiCaprio.'' And to Prompt~\ref{prompt:dicaprio2}, Gemini responded, ``Based on the image provided, no, Leonardo DiCaprio cannot be conclusively identified.'' In both cases, Gemini went on to identify several of the same characteristics, such as the presence of a red bracelet and the likely location, as reasons for (Prompt~\ref{prompt:dicaprio1}) or against (Prompt~\ref{prompt:dicaprio2}) the identifiability of the actor. 
Thus, with only small changes to the prompt, the model's privacy determination changed dramatically. In Prompt 3, we asked the model to identify the two people in the image without offering the answer.

\begin{prompt}
\label{prompt:dicaprio3}
    Prompt 3: Identify the people in this image. 
\end{prompt}

To Prompt~\ref{prompt:dicaprio3}, Gemini responded ``The two people in this image are Nigel Farage (on the left) and Tucker Carlson (on the right).'' The model is capable of finding reasons to justify the presence of Leonardo DiCaprio, \emph{a posteriori}, but it struggles to produce the actor as a candidate \emph{a priori}. 

It is tempting to use Prompt~\ref{prompt:dicaprio3} as a binary test for privacy: if it does not successfully identify the instance in question, then the obfuscated image is private; otherwise, it is not. There are three problems with this approach. First, LLMs and VLMs are inherently stochastic. In our experiments, Gemini 3 Pro returned different guesses for the people in the image when the query was repeated. How should we interpret the privacy of the obfuscation if the model successfully identified Leonardo DiCaprio on the second, third, or fourth attempt?
Second, this approach requires that we know upfront the instance whose privacy we seek. This was the case in our experiment but need not always be. Consider, for example, the task of privatizing a collection of images of celebrities when we do not know which specific celebrities appear in any given image. Third, there is no way to quantify the extent of privacy loss if the obfuscated image is deemed to have compromised privacy.

\subsection{Practical Considerations}

\subsubsection{Concept Ambiguity Penalty}
\label{sec:conc_ambig}

Consider a person named Bob whose privacy, in reality, can be completely protected by obfuscating his face. Obfuscating all faces should privatize the instance of \texttt{Bob}. However, suppose that there exist several images where Bob appears next to another individual whose face has a very large semantic closure, such as Santa Claus. Then each of those images will require more comprehensive sanitization because properties of Santa Claus, other than his face, will be semantically connected with other images of Bob that also capture Santa Claus, such as all Christmas trees, even though Bob is not associated with Christmas. This challenge arises because we have chosen to obfuscate \emph{more} than is actually necessary (all faces, rather than Bob's face). 

This problem arises because the privacy mechanism $\mathcal{X}_c$ is a blunt tool that removes all instances of concept $c$ --- it is not possible for our approach to target the privacy of one instance of an abstract concept $e \in \gamma$ but not another. Semantically rich instances of $c$ in either an audited or reference rendering can determine the resolution. The test is all-or-nothing over the configured sets: every ordered trial must be satisfied, so the combinations most difficult to sanitize determine whether contrastive privacy holds. We refer to this as the \emph{concept ambiguity penalty}.

Approaches to coping with concept ambiguity include:
\begin{enumerate}
     \item Audited renderings that participate in witness trials arising from the concept ambiguity penalty could be flagged and ultimately not released.
    \item For audited renderings $x \in \widehat{\mathbb{X}}$ that participate in such trials, additional properties $d \subseteq \mathbb{P}$ could be obfuscated from $x$ \emph{prior} to applying the privacy test.
    \item The concept $c$ could be \emph{edited} to remove semantically rich instances that are unrelated to the instances of interest.
\end{enumerate}
Approach 1 is the least desirable because it entirely fails to release some renderings.

Approach 2 would involve obfuscating more features in renderings that participate in unsatisfied trials, such as clothing or a silhouette. All audited renderings are obfuscated minimally by $c$, and sometimes by $d$. Thus, any contrastive-privacy conclusion remains tied to the semantic closure of $c$, not $d$.

Approach 3 gives the most desirable result in that it both allows for all images to be released and avoids unnecessarily obfuscating concepts. In our example, it amounts to avoiding the obfuscation of any portion of an image related to Santa Claus. However, the approach also presents an additional technical hurdle: solving the open entity recognition problem. In addition to custom solutions~\cite{hu2023open,wu2020scalable}, LLMs and VLMs are also good at solving this problem in practice. 

\subsubsection{Choosing Concepts \texorpdfstring{$c$}{c}}

The present work gives no insight into \emph{how} to achieve contrastive privacy, that is to say, how to choose $c$ or how to sanitize a rendering $x$ with respect to $c$. This amounts to discovering an effective SSM, which can be difficult. Assuming those details are provided, Theorem~\ref{thm:privacy_implies_no_capture} gives a conditional sufficiency result within the formal model for using a natural concept $c$ to sanitize a semantically related, but distinct, and in principle unknown, abstract concept $\gamma$.

\subsubsection{The Reference Domain Must Be Approximated}

In practice, the reference domain $\mathbb{X}(\gamma\cap c)$ in Definition~\ref{def:privacy_det} must be approximated by a proxy set $\widetilde{\mathbb{X}}$, as defined in Section~\ref{sec:proxy_sets}. The test is exhaustive over the specified finite audited set but corpus-relative with respect to its references, so the result is not an unconditional privacy guarantee. Differential privacy (DP)~\cite{dwork2014algorithmic}, by contrast, gives a formal, parameterized guarantee under its stated neighboring relation and threat-model assumptions. Contrastive privacy is intended for a different setting and supplies a direct semantic interpretation and diagnostic witnesses, but it inherits the blind spots of the mechanisms $\mathcal{D}$ and $\mathcal{X}_c$.

\subsubsection{Blind Spots and Adaptation of \texorpdfstring{$\mathcal{D}$}{D}}
\label{sec:distance_robustness}

We do not evaluate SSMs designed to evade a known or queryable distance mechanism; such adversarial SSMs and corresponding defenses remain future work.

For sensitive deployments, we recommend an ensemble of heterogeneous distance mechanisms, such as embedding models from different families, supplemented by components such as OCR. The test should be run separately for each member. If resolution 0 is required, every member must pass at 0. At nonzero resolutions, each member's resolution should be reported and decisive witnesses should be provided separately rather than averaging across models. Diversity can help identify model-specific blind spots, though shared blind spots may remain. Table~\ref{tab:dicaprio_distance_agreement} provides an initial empirical illustration using three image distance mechanisms.

The distance mechanism $\mathcal{D}$ must also encode target-relevant semantic relationships and may fail when the abstract concept is absent from its training data.
For example, a pretrained vision-language embedding model may not have seen an obscure individual. Fine-tuning on that individual's images may therefore be needed to make $\mathcal{D}$ suitable.
One approach to addressing a model's lack of familiarity with the privacy target is to fine-tune on media containing the target concept with coarse labels, such as \texttt{a picture of John Doe} or \texttt{a picture containing protected health information}, without property-level annotations. \textbf{Detailed lists of natural concepts that reveal those abstract concepts need not be provided.}
Section~\ref{sec:finetune_unseen} evaluates a simple identity-specific instance of this strategy; we leave broader comparison of fine-tuning and ensemble strategies across targets for future work.

\subsection{Contrastive Privacy Audit Workflow}
\label{sec:calc_privacy}

A contrastive privacy audit should proceed iteratively: (1) define $\gamma$, $\widehat{\mathbb{X}}$, and a natural $c$ whose closure plausibly contains $\gamma$; (2) document why this premise should hold and how it might fail; (3) construct a representative proxy set for the reference domain and choose $\mathcal{D}$, using an ensemble for sensitive deployments; (4) sanitize both sets, run the test, inspect its resolution and decisive witness, and reassess closure; and (5) if needed, broaden $c$ and resanitize, improve the proxy or $\mathcal{D}$, or withhold implicated audited renderings, and rerun.
Resolution 0 means only that no violation was detected for that audited set against that proxy under those mechanisms, not that privacy is unconditional. A positive resolution is the largest violation observed across all trials. Every trial satisfies the weak test inequality at this value, and the strict test passes at any larger value; the trial attaining it is returned as the decisive witness.

\begin{algorithm}[H]
\caption{Contrastive Privacy Test for Cosine Distance
\label{alg:clip_priv}}
\begin{algorithmic}
\Require audited set $\widehat{\mathbb{X}}$
\Require reference proxy $\widetilde{\mathbb{X}} \subseteq \mathbb{X}(\gamma\cap c)$
\State $S,T \gets [\,]$ \Comment{matrices with row vectors}
\For{$x \in \widehat{\mathbb{X}}$}
    \State append $\mathcal{E}(\mathcal{X}_c(x))$ to $S$
\EndFor
\For{$y \in \widetilde{\mathbb{X}}$}
    \State append $\mathcal{E}(y) - \mathcal{E}(\mathcal{X}_c(y))$ to $T$
\EndFor
\State $(a_{\max},i^*,j^*) \gets \texttt{maxWithIndex}(ST^\top)$
\State $\delta^* \gets \max(0,a_{\max})$
\State \Return $\delta^*,(i^*,j^*)$
\end{algorithmic}
\end{algorithm}

\noindent\textbf{Efficient Calculation.} Let $m = |\widehat{\mathbb{X}}|$, $n = |\widetilde{\mathbb{X}}|$, and $d$ be the embedding dimension. Algorithm~\ref{alg:clip_priv} stores sanitized audited embeddings as rows of $S$ and reference differences $\mathcal{E}(y)-\mathcal{E}(\mathcal{X}_c(y))$ as rows of $T$. Entry $(i,j)$ of $ST^\top$ is the signed difference for trial $(x_i,y_j)$ in Inequality~\ref{eq:clip_test}; rows of $T$ are not normalized. The algorithm returns boundary resolution $\delta^* = \max(0,a_{\max})$ and a decisive witness. Every trial satisfies the weak inequality at $\delta^*$, and the strict test passes for every $\delta > \delta^*$; if $a_{\max} < 0$, it also passes at 0.

In practice, rows of $S$ can be processed in batches. This uses standard matrix multiplication, avoids storing the full $m \times n$ matrix of signed differences, and retains a trial with the largest difference as a decisive witness. After embedding the sanitized audited renderings and the original and sanitized references, the exact test performs $O(mnd)$ arithmetic operations and uses $O((m+n)d)$ working memory. Computing the returned resolution and a decisive witness requires processing every trial.

For much larger audited or proxy sets, approximate inner-product search, such as HNSW~\cite{malkov2018efficient}, may serve as a preliminary screening step that finds trials with large signed differences without evaluating every trial. An HNSW index is built over the reference rows $t$ of $T$, and each sanitized audited embedding $s$ is used as a query. The exact difference $s \cdot t$ is then recomputed for the top-$k$ candidates returned by the index. The largest difference found yields a lower bound on the exact resolution and a candidate witness; increasing $k$ or the search depth improves recall at additional cost. Because HNSW may omit the true maximum, its approximate resolution may be too small, and the candidate is not known to be decisive. Absent a certified error bound, the exact test is required to certify the returned resolution.

\section{Privatization Experiments} 
\label{sec:experiments}

\newcommand{\suppDicaprioGallery}{%
\begin{strip}
\section{Supplementary Experiment Figures}
\label{sec:supp_figures}
\captionsetup{type=figure}
\begin{subfigure}[t]{0.32\textwidth}
\includegraphics[width=0.99\linewidth]{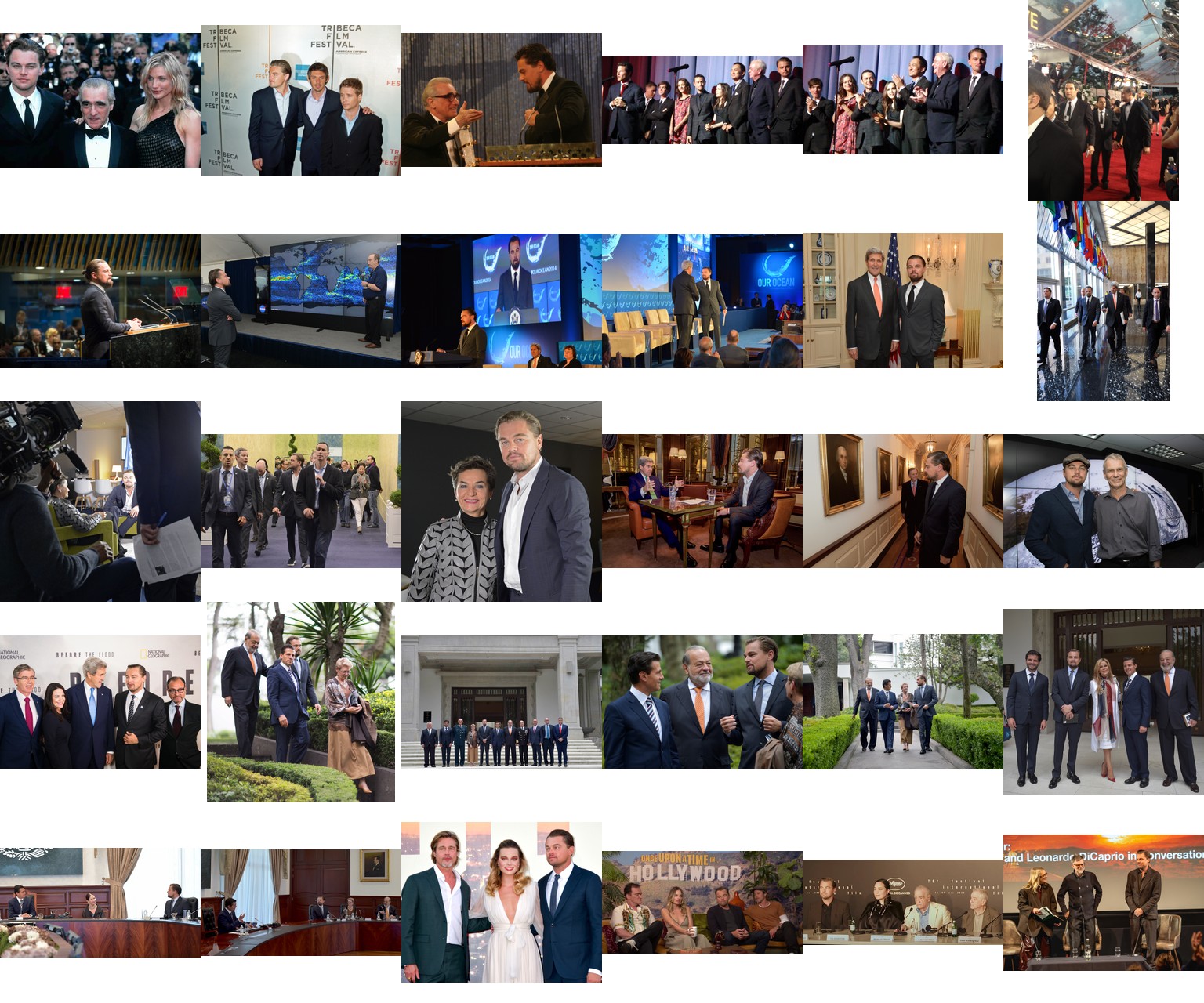} 
\phantomcaption\label{fig:dicapiro_sum_orig}
\end{subfigure}%
\hfill
\begin{subfigure}[t]{0.32\textwidth}
\includegraphics[width=0.99\linewidth]{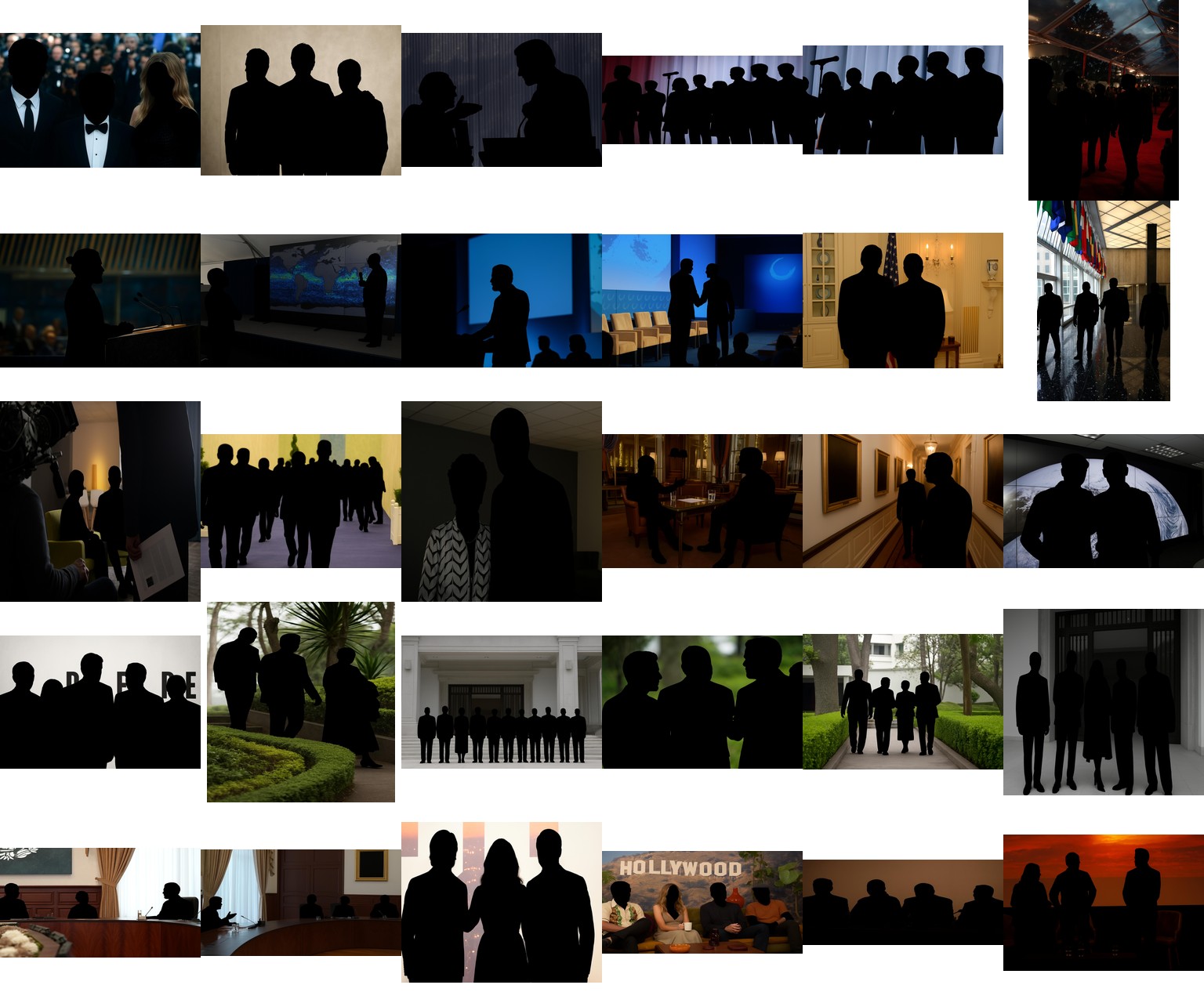}
\phantomcaption\label{fig:dicaprio_igpt1m_gem31p}
\end{subfigure}
\hfill
\begin{subfigure}[t]{0.32\textwidth}
\includegraphics[width=0.99\linewidth]{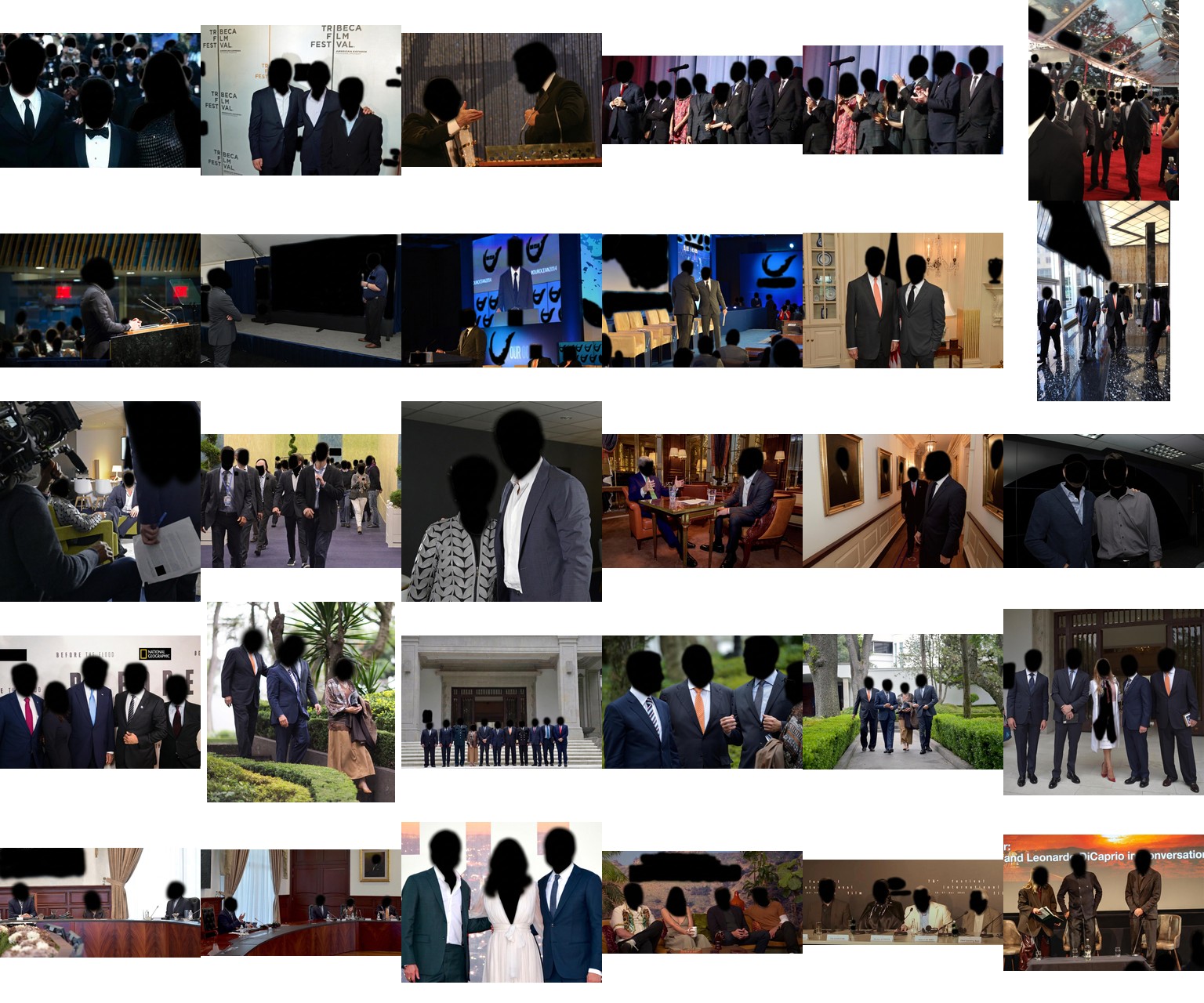}
\phantomcaption\label{fig:dicaprio_man_sam3}
\end{subfigure}
\caption{Thirty images capturing Leonardo DiCaprio (left), sanitized by \texttt{iGPT1m/GEM31p} (center; privacy resolution 0.02 and utility 0.42) and \texttt{iGEM31f/Manual} (right; privacy resolution 0 and utility 0.61). Both mechanisms sanitize \texttt{the identity of the celebrities} and use \texttt{EVA} as $\mathcal{D}$.}
\Description{Three grids show the same 30 DiCaprio images before sanitization and after two sanitization mechanisms. The center grid contains extensive black redaction; the right grid contains more localized redaction.}
\label{fig:priv_dicaprio}
\end{strip}
}

\begin{figure}[t]
\centering  
\small
\setlength{\aboverulesep}{1pt}
\setlength{\belowrulesep}{1pt}
\resizebox{0.9\linewidth}{!}{
\begin{tabular}{@{} l @{\hspace{2ex}} @{}c@{} @{}c@{} @{}c@{} @{}c@{} @{}c@{} }
\toprule
\multicolumn{1}{p{1.3cm}}{\scriptsize\scshape Redaction model}
& \multicolumn{5}{c}{\sc Concept generation model} \\
\cmidrule(lr){1-1}
\cmidrule(lr){2-6}
& \thead{\texttt{GPT54}}
& \thead{\texttt{GEM31p}} 
& \thead{\texttt{OPS46}}
& \thead{\texttt{Manual}}
& \thead{None} \\

\texttt{iGPT15}
& \diagcell{\fpeval{round(min(0.2025,0.2570,0.2003), 2)}}{\fpeval{round(0.3671, 2)}}
& \diagcell{\fpeval{round(min(0.2578, 0.2393, 0.2251), 2)}}{\fpeval{round(0.3787, 2)}}
& \diagcell{\fpeval{round(min(0.2379, 0.1932, 0.1357), 2)}}{\fpeval{round(0.2993, 2)}}
& \failDiagCell{\fpeval{round(min(0.0422, 0.0924, 0), 2)}}{\fpeval{round(0.5310, 2)}}
& \failDiagCell{\fpeval{round(min(0.0394, 0.0936, 0.1316), 2)}}{\fpeval{round(0.4642, 2)}} \\

\texttt{iGPT1m}
& \diagcell{\fpeval{round(min(0.0558, 0.1985, 0.1716), 2)}}{\fpeval{round(0.3861, 2)}}
& \diagcell{\fpeval{round(min(0.0176, 0.1617, 0.1297), 2)}}{\fpeval{round(0.4161, 2)}}
& \diagcell{\fpeval{round(min(0.3799, 0.3672, 0.3828), 2)}}{\fpeval{round(0.4267, 2)}}
& \diagcell{\fpeval{round(min(0.0814, 0.2004, 0.1155), 2)}}{\fpeval{round(0.3866, 2)}}
& \diagcell{\fpeval{round(min(0.2437, 0.3195, 0.2079), 2)}}{\fpeval{round(0.3690, 2)}} \\

\texttt{iGEM3p}
& \diagcell{\fpeval{round(min(0.3211, 0.3940, 0.3959), 2)}}{\fpeval{round(0.2892, 2)}}
& \diagcell{\fpeval{round(min(0.3500, 0.3151, 0.3139), 2)}}{\fpeval{round(0.3794, 2)}}
& \diagcell{\fpeval{round(min(0.3324, 0.2544, 0.3794), 2)}}{\fpeval{round(0.3330, 2)}}
& \diagcell{\fpeval{round(min(0.4946, 0.4480, 0.5029), 2)}}{\fpeval{round(0.4995, 2)}}
& \diagcell{\fpeval{round(min(0.5285, 0.2296, 0.1840), 2)}}{\fpeval{round(0.6587, 2)}} \\

\texttt{iGEM31f}
& \diagcell{\fpeval{round(min(0.2389, 0.3384, 0.2760), 2)}}{\fpeval{round(0.4043, 2)}}
& \diagcell{\fpeval{round(min(0.2854, 0.2279, 0.4095), 2)}}{\fpeval{round(0.3989, 2)}}
& \diagcell{\fpeval{round(min(0.3488, 0.2276, 0.4750), 2)}}{\fpeval{round(0.3134, 2)}}
& \paretoDiagCell{\fpeval{round(min(0, 0, 0.0046), 2)}}{\fpeval{round(0.6123, 2)}}
& \paretoDiagCell{\fpeval{round(min(0.1536, 0.2638, 0.0771), 2)}}{\fpeval{round(0.6702, 2)}} \\

\texttt{FLX2p}
& \diagcell{\fpeval{round(min(0.2974, 0.2474, 0.3648), 2)}}{\fpeval{round(0.6892, 2)}}
& \diagcell{\fpeval{round(min(0.3486, 0.2763, 0.1920), 2)}}{\fpeval{round(0.6649, 2)}}
& \diagcell{\fpeval{round(min(0.2177, 0.2552, 0.2886), 2)}}{\fpeval{round(0.6953, 2)}}
& \diagcell{\fpeval{round(min(0.2178, 0.2756, 0.1766), 2)}}{\fpeval{round(0.7338, 2)}}
& \paretoDiagCell{\fpeval{round(min(0.2931, 0.1557, 0.3792), 2)}}{\fpeval{round(0.7461, 2)}} \\

\texttt{FLX2d}
& \diagcell{\fpeval{round(min(0.1492, 0.1818, 0.1071), 2)}}{\fpeval{round(0.5718, 2)}}
& \diagcell{\fpeval{round(min(0.1110, 0.1393, 0.1389), 2)}}{\fpeval{round(0.6277, 2)}}
& \diagcell{\fpeval{round(min(0.3461, 0.2386, 0.3145), 2)}}{\fpeval{round(0.6731, 2)}}
& \diagcell{\fpeval{round(min(0.2231, 0.2319, 0.2641), 2)}}{\fpeval{round(0.7085, 2)}}
& \paretoDiagCell{\fpeval{round(min(0.2054, 0.2760, 0.2746), 2)}}{\fpeval{round(0.7725, 2)}} \\

\texttt{SAM3}
& \diagcell{0.09}{0.12}
& \diagcell{0.11}{0.26}
& \diagcell{0.12}{0.12}
& \diagcell{0}{0.58}
& \diagcellNA \\

\bottomrule
\end{tabular}
}
\caption{Privacy resolution under \texttt{EVA} and utility for 30 images capturing Leonardo DiCaprio sanitized of \texttt{the identity of the celebrities}. Diagonal cells: privacy resolution (red) and utility (blue); lighter color is better. Black-framed cells are Pareto optimal. Red-framed cells failed to sanitize one image due to model safeguards.}
\Description{A seven-by-five matrix compares redaction and concept-generation models. Each diagonal cell displays privacy resolution in red and utility in blue; borders mark Pareto-optimal cells and safeguard failures.}
\label{fig:dicaprio_model_compare}
\end{figure}

The primary purpose of this section is to demonstrate that contrastive privacy can surface and rank trial violations in practical case studies. Secondarily, we show how various models perform on a limited number of privatization tasks, how privatization performance scales with data size, and how identity-specific fine-tuning affects a distance mechanism's sensitivity. We leave for future work the task of systematic evaluation of the privatization capabilities of these models. (The frontier-model sanitization experiments cost roughly \$1,200 USD in aggregate.)

Contrastive privacy (Definition~\ref{def:privacy_det}) is quite general. In principle, it can be instantiated across modalities including text, images, video, and audio and with forms of sanitization including redaction, replacement, and inpainting. An operational instantiation requires concepts whose sanitization is intended to privatize the abstract target and a distance mechanism conforming to Definition~\ref{def:distance}. Such mechanisms are most mature for text and images, which are the modalities that we chose to evaluate.

Table~\ref{tab:mechanisms} summarizes the models underlying our chosen mechanisms; we provide details of their use later in this section. Our image and Reddit experiments sanitize by redaction, replacing image pixels with black and text characters with boxes; the commercial PII tools use their configured replacement outputs.

All experiments use the symmetric case $\widehat{\mathbb{X}} = \widetilde{\mathbb{X}}$: every rendering captures a direct target-related property in $\gamma \cap c$, as detailed below, and we argue that each collection is a reasonable proxy set for the reference domain.

\subsection{Personal Privacy}
\label{sec:personal_privacy}

We began by computing model-relative resolutions for collections of sanitized images with respect to a given abstract privacy concept. For this task we used the \texttt{EVA} CLIP model for the distance mechanism $\mathcal{D}$ as described in Algorithm~\ref{alg:clip_priv}. We performed sanitization in two phases, denoted as \texttt{REDACT/CONCEPT}. In the concept phase we created a list of natural concepts whose sanitization was believed sufficient to privatize the abstract concept, either manually (\texttt{Manual}) or using one of the image-to-text (i2t) models in Table~\ref{tab:mechanisms}. The former involved an author iteratively guessing concepts whose obfuscation would lead to a lower observed resolution and refining the set based on the resulting privacy resolution. The latter passed each image in the collection to the i2t model, prompting it to generate a list of concepts to be redacted. The responses were compiled into a master list used for redaction.\footnote{Our goal was not to evaluate the ability of humans in general to generate concepts. Instead, we wanted a basic comparison against the models.} In the redaction phase, we prompted i2i sanitization mechanisms from Table~\ref{tab:mechanisms} to redact the list of natural concepts derived in the concept phase. For example, \texttt{iGPT1m/GEM31p} means that \texttt{GEM31p} generated natural concepts from the experimental image set, and \texttt{iGPT1m} redacted those concepts from the same set. If redaction is prompted to remove the abstract concept directly, e.g., \texttt{obfuscate celebrities from the image}, then the concept phase is not performed.

\textbf{Methodology.} Our first image-related task was to use the privacy test to detect failures in model sanitizations of images, each selected to show Leonardo DiCaprio's face. Because the sanitization concepts targeted faces directly or through closely related properties, we assumed that their semantic closure under each configured $\mathcal{D}$ contained DiCaprio's identity. This premise could fail for identifying text or context that $\mathcal{D}$ does not connect to those concepts.
When i2i models alone were used for sanitization we gave them Prompt~\ref{prompt:i2i_abs_con} (in Appendix~\ref{sec:prompts}) with \texttt{CONCEPT} being \texttt{the identity of the celebrities}. For two-stage sanitization, i2t models first returned natural concept lists in response to Prompt~\ref{prompt:i2t_con} with \texttt{CONCEPT} being \texttt{the identity of the celebrities}. In Appendix~\ref{sec:concepts} outputs for \texttt{GPT54}, \texttt{GEM31p}, and \texttt{OPS46} are listed as Concepts~\ref{con:dicaprio_gpt-5.4},~\ref{con:dicaprio_gemini-3.1-pro}, and~\ref{con:dicaprio_opus-4.6}, respectively. Concepts~\ref{concepts:dicaprio_manual}, for \texttt{Manual}, were derived manually. Each i2i mechanism then performed their sanitization in response to Prompt~\ref{prompt:i2i_nat_con}, where \texttt{OBJECTS} were returned by the i2t model.

\textbf{Results.} Figure~\ref{fig:priv_dicaprio} in Appendix~\ref{sec:supp_figures} shows a collection of 30 images depicting Leonardo DiCaprio before (\ref{fig:dicapiro_sum_orig}) and after sanitization via two different mechanisms $\mathcal{X}_c$: \texttt{iGPT1m/GEM31p} (\ref{fig:dicaprio_igpt1m_gem31p}) and \texttt{iGEM31f/Manual} (\ref{fig:dicaprio_man_sam3}). Under \texttt{EVA} on this proxy set, Algorithm~\ref{alg:clip_priv} returned resolution 0 for \texttt{iGEM31f/Manual} but a nonzero resolution for \texttt{iGPT1m/GEM31p}. The image in the lower-right corner of Figure~\ref{fig:dicaprio_man_sam3} highlights a blind spot of \texttt{EVA}: text behind the obfuscated faces clearly spells out ``Leonardo DiCaprio.'' The retained name makes this a resolution-0 false pass. Because \texttt{Manual} refined $c$ using \texttt{EVA}'s resolution on this same proxy set, it could become tailored to \texttt{EVA}'s sensitivities and blind spots, achieving resolution 0 without removing all identifying information. Possible remedies are to add \texttt{text} to $c$ or to add OCR followed by the text test to the heterogeneous ensemble in Section~\ref{sec:distance_robustness}.

\newcommand{\suppMcDonaldsMatrix}{%
\begin{figure}[t]
\centering  
\small
\setlength{\aboverulesep}{1pt}
\setlength{\belowrulesep}{1pt}
\resizebox{0.9\linewidth}{!}{
\begin{tabular}{@{} l @{\hspace{2ex}} @{}c@{} @{}c@{} @{}c@{} @{}c@{} @{}c@{} }
\toprule
\multicolumn{1}{p{1.3cm}}{\scriptsize\scshape Redaction model}
& \multicolumn{5}{c}{\sc Concept generation model} \\
\cmidrule(lr){1-1}
\cmidrule(lr){2-6}
& \thead{\texttt{GPT54}}
& \thead{\texttt{GEM31p}} 
& \thead{\texttt{OPS46}}
& \thead{\texttt{Manual}}
& \thead{None} \\

\texttt{iGPT15}
& \diagcell{\fpeval{round(min(0.3291, 0.3272, 0.3065), 2)}}{\fpeval{round(0.5092, 2)}}
& \diagcell{\fpeval{round(min(0.2555, 0.2453, 0.2628), 2)}}{\fpeval{round(0.4368, 2)}}
& \paretoDiagCell{\fpeval{round(min(0.2240, 0.1637, 0.3049), 2)}}{\fpeval{round(0.5411, 2)}}
& \paretoDiagCell{\fpeval{round(min(0.1654, 0.3708, 0.3664), 2)}}{\fpeval{round(0.7132, 2)}}
& \diagcell{\fpeval{round(min(0.2721, 0.2853, 0.2725), 2)}}{\fpeval{round(0.7132, 2)}} \\

\texttt{iGPT1m}
& \diagcell{\fpeval{round(min(0.4271, 0.3090, 0.3042), 2)}}{\fpeval{round(0.6729, 2)}}
& \diagcell{\fpeval{round(min(0.3178, 0.5418, 0.3859), 2)}}{\fpeval{round(0.7006, 2)}}
& \diagcell{\fpeval{round(min(0.2935, 0.3465, 0.3799), 2)}}{\fpeval{round(0.7072, 2)}}
& \diagcell{\fpeval{round(min(0.2909, 0.4208, 0.4275), 2)}}{\fpeval{round(0.6629, 2)}}
& \diagcell{\fpeval{round(min(0.5590, 0.5225, 0.6106), 2)}}{\fpeval{round(0.5968, 2)}} \\

\texttt{iGEM3p}
& \diagcell{\fpeval{round(min(0.5692, 0.6058, 0.5280), 2)}}{\fpeval{round(0.6299, 2)}}
& \diagcell{\fpeval{round(min(0.4491, 0.5705, 0.6053), 2)}}{\fpeval{round(0.6165, 2)}}
& \diagcell{\fpeval{round(min(0.5621, 0.5410, 0.5317), 2)}}{\fpeval{round(0.6113, 2)}}
& \diagcell{\fpeval{round(min(0.6250, 0.6777, 0.5360), 2)}}{\fpeval{round(0.5540, 2)}}
& \diagcell{\fpeval{round(min(0.3127, 0.5454, 0.5603), 2)}}{\fpeval{round(0.7746, 2)}} \\

\texttt{iGEM31f}
& \diagcell{\fpeval{round(min(0.4064, 0.8024, 0.5421), 2)}}{\fpeval{round(0.6309, 2)}}
& \diagcell{\fpeval{round(min(0.5514, 0.4582, 0.5464), 2)}}{\fpeval{round(0.5604, 2)}}
& \diagcell{\fpeval{round(min(0.4794, 0.4734, 0.3942), 2)}}{\fpeval{round(0.5497, 2)}}
& \diagcell{\fpeval{round(min(0.3979, 0.3104, 0.5142), 2)}}{\fpeval{round(0.5910, 2)}}
& \paretoDiagCell{\fpeval{round(min(0.2838, 0.2568, 0.2093), 2)}}{\fpeval{round(0.7418, 2)}} \\

\texttt{FLX2p}
& \diagcell{\fpeval{round(min(0.3021, 0.4489, 0.2781), 2)}}{\fpeval{round(0.7783, 2)}}
& \diagcell{\fpeval{round(min(0.3102, 0.3397, 0.2979), 2)}}{\fpeval{round(0.7468, 2)}}
& \diagcell{\fpeval{round(min(0.3282, 0.5107, 0.3181), 2)}}{\fpeval{round(0.7781, 2)}}
& \paretoDiagCell{\fpeval{round(min(0.2530, 0.3608, 0.3001), 2)}}{\fpeval{round(0.7940, 2)}}
& \diagcell{\fpeval{round(min(0.3074, 0.3294, 0.3158), 2)}}{\fpeval{round(0.8040, 2)}} \\

\texttt{FLX2d}
& \diagcell{\fpeval{round(min(0.3330, 0.5634, 0.4006), 2)}}{\fpeval{round(0.7309, 2)}}
& \diagcell{\fpeval{round(min(0.3235, 0.2825, 0.2664), 2)}}{\fpeval{round(0.6702, 2)}}
& \diagcell{\fpeval{round(min(0.2786, 0.3194, 0.3270), 2)}}{\fpeval{round(0.6910, 2)}}
& \diagcell{\fpeval{round(min(0.2237, 0.2809, 0.2621), 2)}}{\fpeval{round(0.6903, 2)}}
& \paretoDiagCell{\fpeval{round(min(0.3243, 0.3054, 0.3038), 2)}}{\fpeval{round(0.8580, 2)}} \\

\texttt{SAM3}
& \diagcell{\fpeval{round(0.4025,2)}}{\fpeval{round(0.2130,2)}}
& \diagcell{\fpeval{round(0.4099,2)}}{\fpeval{round(0.3864,2)}}
& \diagcell{\fpeval{round(0.2286,2)}}{\fpeval{round(0.3123,2)}}
& \diagcell{\fpeval{round(0.3140,2)}}{\fpeval{round(0.4859,2)}}
& \diagcellNA \\

\bottomrule
\end{tabular}
}
\caption{Privacy-resolution (red) and utility (blue) values for sanitizing \texttt{the identity of the fast food restaurant} from 49 images of McDonald's, summarized in the top panel of Figure~\ref{fig:privacy_performance}. The formatting matches Figure~\ref{fig:dicaprio_model_compare}.}
\Description{A seven-by-five matrix compares redaction and concept-generation models on the McDonald's images. Each diagonal cell displays privacy resolution in red and utility in blue.}
\label{fig:mcdonalds_model_compare}
\end{figure}
}

Figure~\ref{fig:dicaprio_model_compare} shows the privacy--utility tradeoff for all tested SSMs (listed in Table~\ref{tab:mechanisms}). Each cell, except those for \texttt{SAM3}, gives the lowest observed resolution and its associated utility across three independent sanitization runs.\footnote{Given the number of models we tested, these experiments were costly to run even when limited to three runs per combination.}
Column ``None'' indicates that only the redaction phase was executed. 
We report \textit{utility} as the cosine similarity between \texttt{EVA} embeddings of original and sanitized images. (Utility is maximized at 1.0.) This measure indirectly penalizes both concept ambiguity and unnecessary redaction.

Mechanism \texttt{iGEM31f/Manual} returns resolution 0 under \texttt{EVA} on this proxy set and has relatively high utility of 0.61.
A variety of redaction-phase-only mechanisms are Pareto dominant at higher utility values. Generally, we see no clear advantage for models with more parameters or from later generations. Moreover, we see that the same set of concepts (consistent within columns) leads to drastically different privacy resolutions and utility across redaction models.

As discussed in Section~\ref{sec:application_to_clip}, Figure~\ref{fig:dicaprio_privacy_failures} shows a trial satisfied at $\delta=0$ (\ref{fig:dicaprio_success}) and a witness with positive violation (\ref{fig:dicaprio_failure}) under the distance mechanism \texttt{EVA}. The primary difference between the two obfuscations is that all faces are sanitized in the resolution-0 example, whereas the face of a woman (likely an aide to the then-president of Mexico, Enrique Peña Nieto) remains unsanitized in the image on the left in \ref{fig:dicaprio_failure}. This is a direct sanitization failure relative to the prompt, which indicated that \emph{all} celebrities should be obfuscated. Here, \texttt{FLX2d} appears ambivalent about the celebrity status of the woman, obfuscating her face in one image but not another. This is also an example of the concept ambiguity penalty at work (see Section~\ref{sec:conc_ambig}). Although we care only about privatizing the abstract concept of Leonardo DiCaprio, we broadly requested properties that describe celebrities in general. As a result, the properties associated with this woman were swept into the semantic closure of the obfuscated concepts $c$ because $c$ includes all faces (see Concepts~\ref{con:dicaprio_gemini-3.1-pro}). For \ref{fig:dicaprio_failure}, the presence of the woman's face in $\mathcal{X}_c(x)$, coupled with its absence from $\mathcal{X}_c(y)$, allows $\mathcal{D}(\mathcal{X}_c(x), y)$ to be less than $\mathcal{D}(\mathcal{X}_c(x), \mathcal{X}_c(y))$, giving this trial a positive violation.

Figure~\ref{fig:dicaprio_failure_compare} in Appendix~\ref{sec:privacy_failures} shows a decisive witness for each sanitization listed in Table~\ref{tab:mechanisms}. Most are due again to the concept ambiguity penalty. In multiple cases, the sanitization mechanism returned an image for $\mathcal{X}_c(y)$ that is nearly entirely black, which sweeps all properties in the image into concept $c$. This makes it very difficult for images $\mathcal{X}_c(x)$ and $\mathcal{X}_c(y)$ to be closer than $\mathcal{X}_c(x)$ and $y$. The same scenario occurs when mechanisms return black-and-white silhouettes for $\mathcal{X}_c(y)$. The decisive witness in cell \texttt{FLX2d/GEM31p} is more interesting. Obfuscation fails to remove Margot Robbie and a movie poster for the film \emph{Once Upon a Time... in Hollywood}, in which DiCaprio starred. The violation ostensibly arises from a semantic link between those instances and DiCaprio being interviewed in the image on the right. Another interesting decisive witness appears in cell \texttt{iGPT15/GPT54}. Here, the only prominent feature remaining in the image on the left is an ocean map. The semantic match with the image on the right is ostensibly due to the association of both Leonardo DiCaprio and John Kerry with the Our Ocean Conference.

\begin{table}[t]
\caption{Agreement among distance mechanisms on 94 matched DiCaprio sanitizations. Spearman's $\rho_s$ compares resolution ranks; other columns count outcomes whose unrounded resolution is exactly 0.}
\label{tab:dicaprio_distance_agreement}
\centering
\footnotesize
\setlength{\tabcolsep}{3pt}
\renewcommand{\arraystretch}{1.05}
\begin{tabular*}{\columnwidth}{@{\extracolsep{\fill}}lrrrr@{}}
\toprule
\textbf{Pair $(A,B)$} & $\boldsymbol{\rho_s}$ & \textbf{Both 0} & \textbf{$A$ only 0} & \textbf{$B$ only 0} \\
\midrule
\texttt{EVA}, \texttt{SIGLIP2} & 0.72 & 4 & 0 & 7 \\
\texttt{EVA}, \texttt{METACLIP2} & 0.35 & 0 & 4 & 2 \\
\texttt{SIGLIP2}, \texttt{METACLIP2} & 0.57 & 0 & 11 & 2 \\
\bottomrule
\end{tabular*}
\end{table}

\noindent\textbf{Distance-Mechanism Ensemble.} To measure dependence on $\mathcal{D}$, we held each DiCaprio sanitization fixed and reran the test with \path{google/siglip2-giant-opt-patch16-384}~\cite{tschannen2025siglip2} and \path{facebook/metaclip-2-worldwide-giant-378}~\cite{chuang2025metaclip2}, denoted \texttt{SIGLIP2} and \texttt{METACLIP2}. Together with \texttt{EVA}, all three mechanisms evaluated the 94 individual sanitized collections underlying Figure~\ref{fig:dicaprio_model_compare}. We analyze these matched runs directly because independently selecting each mechanism's minimum could select different sanitized collections. Since resolution scales are mechanism-specific, Table~\ref{tab:dicaprio_distance_agreement} compares ranks and resolution-0 outcomes rather than averaging resolutions. Appendix Table~\ref{tab:dicaprio_distance_raw} reports the underlying resolutions.

The pairwise rank correlations are uniformly positive: 0.72 and 0.57 indicate moderate-to-strong agreement for two model pairs, while 0.35 indicates weaker but still positive agreement for the third. Thus, the mechanisms broadly concur about relative sanitization performance despite their different resolution scales. At the strict resolution-0 boundary, however, \texttt{EVA}, \texttt{SIGLIP2}, and \texttt{METACLIP2} return 0 for 4, 11, and 2 sanitizations, respectively, and none returns 0 under all three. Even there, the models can nearly agree in magnitude: for the four sanitizations where \texttt{EVA} returns 0 but \texttt{METACLIP2} does not, the latter's resolutions range only from 0.0101 to 0.0536. Nevertheless, each single-model resolution-0 result is paired with a positive violation under another mechanism. These results show broad overall agreement, while the boundary differences suggest that an ensemble may help identify model-specific blind spots; shared blind spots and corpus limitations remain.

\subsubsection{Fine-Tuning for an Unseen Identity}
\label{sec:finetune_unseen}

\textbf{Methodology.} We used \texttt{Nano Banana 2}~\cite{google2026nanobanana2} to create two 100-image galleries, each intended to depict a consistent synthetic child identity. Each gallery comprised ten series of ten images in different scenarios; prompts and earlier image references encouraged consistency of the child, as well as objects associated with the child, within and across series. Using the released \path{finetune_clip.py} script, we fine-tuned \path{google/siglip2-giant-opt-patch16-384}~\cite{tschannen2025siglip2} with the labels \texttt{a picture of subject1} and \texttt{a picture of subject2}. Each epoch comprised 100 two-example batches, one example per identity; all 200 images therefore appeared once per epoch. Ten epochs yielded 1,000 optimizer steps and ten exposures per image.

No model parameters were frozen: both the vision and text encoders were updated using SigLIP 2's sigmoid pair loss and AdamW (learning rate $10^{-5}$, weight decay $0.1$, $\beta=(0.9,0.98)$, and $\epsilon=10^{-6}$). We retained the script defaults of no learning-rate scheduler or warmup, global gradient-norm clipping at $1.0$, \texttt{fp32} model weights with \texttt{bf16} autocasting on CUDA, and seed 0. We denote the epoch-10 distance mechanism by \texttt{SIGLIP2-FT10}, distinguishing it from the base \texttt{SIGLIP2} mechanism above.

For the first 100-image gallery, which was also included in fine-tuning, \texttt{SAM3}~\cite{carion2025sam} sanitized $c_1=\{\texttt{head},\texttt{skin}\}$. Every image showed the child's face, so $c_1$ targeted a direct cue to the privacy target, the child's identity. We treat the closure premise as plausible for \texttt{SIGLIP2-FT10} after identity-specific training, but it is deliberately uncertain for the base \texttt{SIGLIP2} and could fail for cues that a checkpoint does not connect to $c_1$, such as clothing or recurring backgrounds. Both checkpoints served as $\mathcal{D}$ in Algorithm~\ref{alg:clip_priv}.
For this experiment, \texttt{SAM3} used \texttt{threshold} $=0.2$, \texttt{blur} $=20$, \texttt{dilate} $=25$, \texttt{min-coverage} $=10^{-5}$, and \texttt{max-coverage} $=1.0$, with convex-hull postprocessing disabled.

\textbf{Results.} With $c_1$, \texttt{SIGLIP2} returned resolution 0. Holding the sanitized images fixed and replacing only $\mathcal{D}$ with \texttt{SIGLIP2-FT10} returned 0.1480, exposing target-related signal missed by the base mechanism. After we added \texttt{clothing} to form $c_2=c_1\cup\{\texttt{clothing}\}$ and resanitized the same original images, \texttt{SIGLIP2-FT10} again returned resolution 0. The broader sanitization of $c_2$ may have removed identity-correlated information learned by the adapted distance mechanism, including clothing and body shape.

Using the same gallery for adaptation and symmetrically as both the audited set and the proxy reference set is appropriate for this corpus-specific diagnostic: after adaptation, the distance mechanism is fixed, and the test asks which target-related signals represented in that corpus remain after sanitization. The fine-tuned model exposed residual properties that the base distance mechanism missed. The result remains conditional on how well this proxy reference set represents the reference domain; omitted reference comparisons remain untested.

\subsection{Brand Privacy}

\newcommand{\suppAvengersMatrix}{%
\begin{figure}[t]
\centering  

\setlength{\aboverulesep}{0pt}
\setlength{\belowrulesep}{0pt}

\begin{minipage}{0.45\textwidth}
    \centering
    \resizebox{\linewidth}{!}{
        \begin{tabular}{@{}c@{} @{}c@{} @{}c@{} @{}c@{} @{}c@{}}
        \toprule
        \paretoDiagcellLabeled{\fpeval{round(min(0.2090, 0.1934, 0.2090), 2)}}{\fpeval{round(0.8131, 2)}}{\texttt{GPT54}} &
        \diagcellLabeled{\fpeval{round(min(0.3535, 0.3535, 0.3438), 2)}}{\fpeval{round(0.7362, 2)}}{\texttt{GPT54m}} & 
        \diagcellLabeled{\fpeval{round(min(0.2197, 0.2188, 0.2188), 2)}}{\fpeval{round(0.7960, 2)}}{\texttt{GPT52}} & 
        \paretoDiagcellLabeled{\fpeval{round(min(0.2676, 0.1914, 0.1836), 2)}}{\fpeval{round(0.7668, 2)}}{\texttt{GPT4o}} & 
        \paretoDiagcellLabeled{\fpeval{round(min(0.1699, 0.1777, 0.2031), 2)}}{\fpeval{round(0.7397, 2)}}{\texttt{GPT4om}} \\
        
        \diagcellLabeled{\fpeval{round(min(0.2090, 0.2266, 0.2168), 2)}}{\fpeval{round(0.7932, 2)}}{\texttt{GEM31p}} &
        \diagcellLabeled{\fpeval{round(min(0.3359, 0.3301, 0.2402), 2)}}{\fpeval{round((0.6671)/1, 2)}}{\texttt{GEM3f}} &
        \diagcellLabeled{\fpeval{round(min(0.2031, 0.1992, 0.1934), 2)}}{\fpeval{round(0.7643, 2)}}{\texttt{GEM25p}} &
        \diagcellLabeled{\fpeval{round(min(0.2227, 0.3086, 0.4502), 2)}}{\fpeval{round(0.8035, 2)}}{\texttt{GEM25f}} &
        \paretoDiagcellLabeled{\fpeval{round(min(0.2207, 0.2188, 0.2188), 2)}}{\fpeval{round(0.8204, 2)}}{\texttt{GEM2f}} \\
        
        \diagcellLabeled{\fpeval{round(min(0.2090, 0.2051, 0.2051), 2)}}{\fpeval{round(0.7920, 2)}}{OPS46} & 
        \diagcellLabeled{\fpeval{round(min(0.2188, 0.2217, 0.2217), 2)}}{\fpeval{round(0.7977, 2)}}{OPS41} & 
        \diagcellLabeled{\fpeval{round(min(0.2266, 0.2285, 0.2373), 2)}}{\fpeval{round(0.8455, 2)}}{SON46} &
        \paretoDiagcellLabeled{\fpeval{round(min(0.2256, 0.2266, 0.2266), 2)}}{\fpeval{round(0.8569, 2)}}{HAK45} & 
        \diagcellLabeled{\fpeval{round(min(0.2168, 0.2168, 0.2168), 2)}}{\fpeval{round(0.7881, 2)}}{HAK35} \\
        
        \bottomrule
        \end{tabular}
    }
\end{minipage}

\caption{Privacy-resolution (red) and utility (blue) values for 49 Reddit comments discussing an \emph{Avengers} movie, summarized in the bottom panel of Figure~\ref{fig:privacy_performance}. The formatting matches Figure~\ref{fig:dicaprio_model_compare}.}
\Description{A three-by-five matrix compares text sanitization models on the Avengers comments. Each diagonal cell displays privacy resolution in red and utility in blue.}
\label{fig:diag_visual_labeled_sidebyside}
\end{figure}
}

\textbf{Methodology.} We next turn our attention to privatizing the concept of \texttt{the identity of the fast food restaurant} from 49 images of McDonald's, each selected to show the Golden Arches logo. Because the sanitization concepts targeted the logo directly or through closely related properties, we assumed that their semantic closure under \texttt{EVA} contained this brand-identity target. This premise could fail for non-logo brand cues that \texttt{EVA} does not connect to those concepts. We used the target string as the \texttt{CONCEPT} in Prompts~\ref{prompt:i2i_abs_con} and~\ref{prompt:i2t_con}. In Appendix~\ref{sec:concepts} outputs for
concept phase models \texttt{GPT54}, \texttt{GEM31p}, and \texttt{OPS46} are listed as 
Concepts~\ref{con:mcdonalds_gpt-5.4},~\ref{con:mcdonalds_gemini-3.1-pro}, and~\ref{con:mcdonalds_opus-4.6}, respectively. 
Their counterparts for a subset of nine images are listed as Concepts~\ref{con:mcdonalds_sample_gpt-5.4},~\ref{con:mcdonalds_sample_gemini-3.1-pro-preview}, and~\ref{con:mcdonalds_sample_claude-opus-4.6}, respectively. 
Concepts~\ref{con:mcdonalds_manual} were used for mechanism \texttt{Manual} in both the nine and 49 image collections.

\textbf{Results.} This privacy problem is challenging because there are many concepts semantically linked to a given brand. This difficulty is illustrated first with a subset of nine images, shown in Figure~\ref{fig:mcdonalds_orig} (see Appendix~\ref{sec:privatizations}). Figures~\ref{fig:mcdonalds_openai_obfs} and~\ref{fig:mcdonalds_sam3_obfs} show the results of sanitization using mechanisms \texttt{iGPT15/GPT54} and \texttt{SAM3/Manual}, respectively. Under \texttt{EVA} on this proxy set, the latter returned resolution 0, while the former returned 0.05; a decisive witness contained a silhouetted ``M'' on a window.

The top panel of Figure~\ref{fig:privacy_performance} summarizes privacy resolution and utility (the \texttt{EVA} cosine similarity of images before and after sanitization) for each sanitization mechanism on the set of 49 images; Figure~\ref{fig:mcdonalds_model_compare} in Appendix~\ref{sec:supp_figures} gives the complete matrix.
Mechanisms incorporating \texttt{iGPT15} for the redaction phase generally have the lowest resolution, while those incorporating \texttt{FLX2p} generally have the highest utility.
Mechanism \texttt{iGEM31f}, also a smaller model, has one of the lowest resolutions, making parameter size a poor predictor of performance. 
Figure~\ref{fig:priv_mcdonalds_49} in Appendix~\ref{sec:privatizations} shows the original images (left) and the sanitizations made by \texttt{iGPT15}.

Figure~\ref{fig:mcdonalds_failure_compare} in Appendix~\ref{sec:privacy_failures} shows a decisive witness for each SSM. As was the case for the DiCaprio dataset, totally blacked-out images often appeared in decisive witnesses and produced large violations due to the ambiguity penalty. The framework also detected clear sanitization failures. For example, \texttt{FLX2p/OPS46} almost entirely fails to obfuscate the carton of French fries, \texttt{iGPT1m/GEM31p} fails to obfuscate any food items, and \texttt{iGEM3p} leaves most of the exterior of a McDonald's restaurant unobfuscated.

\subsection{Text Privatization}

\textbf{Methodology.} Contrastive privacy can also be applied to text renderings. We analyzed 49 Reddit comments (originally collected~\cite{baumgartner2020pushshift} in April 2019) that capture the \emph{Avengers} movie franchise, each selected to contain the string \texttt{Avengers}. Because sanitization targeted this franchise name directly or through related concepts, we assumed that their semantic closure under \texttt{QWEN3} contained the franchise-identity target. This premise could fail for oblique plot details or lore that \texttt{QWEN3} does not connect to those concepts. We chose a relatively old dataset to increase the likelihood that the topics covered, and perhaps even the data themselves, were part of the training data used by our chosen distance mechanism, \texttt{QWEN3}. We tested 15 different LLM-based SSMs, summarized in the bottom panel of Figure~\ref{fig:privacy_performance} and shown individually in Figure~\ref{fig:diag_visual_labeled_sidebyside} in Appendix~\ref{sec:supp_figures}, and one open-vocabulary entity-recognition system, \texttt{GLN2}. The LLM-based mechanisms sanitized the passages in response to Prompt~\ref{prompt:avengers_gpt5.2}, while \texttt{GLN2} sanitized the manually chosen Concepts~\ref{con:avengers_manual}, where quoted items indicate entity removal (i.e., the exact word was removed). In all cases, we measured utility in terms of the cosine similarity between passages before and after sanitization when embedded with the \texttt{QWEN3} model.

\textbf{Results.} Under \texttt{QWEN3} on the nine-comment proxy set, \texttt{GLN2} returned resolution 0 and a utility of 0.75. No other evaluated mechanism returned resolution 0; \texttt{GEM25p} was closest, at 0.04 with a utility of 0.78. On the full set of 49 comments, \texttt{GLN2} returned resolution 0.18 with a utility of 0.77. The bottom panel of Figure~\ref{fig:privacy_performance} shows the utility--privacy tradeoff for the LLM-based mechanisms. Utility remained fairly high relative to the nine-comment sample, but observed resolutions were high across all mechanisms.

\begin{figure}[t]
\centering
\footnotesize 
\setlength{\tabcolsep}{4pt} 
\renewcommand{\arraystretch}{1.3} 

\begin{tabularx}{\linewidth}{@{} p{0.65\linewidth} | X @{}}
\toprule
\textcolor{red}{$\mathcal{X}_c(x)$} \hfill \textit{(ejujfgx)} & \textcolor{red}{$y$} \hfill \textit{(ejunc9k)} \\
\midrule
That explains a lot. The Phase 1s felt new. Part of that is certainly due to the sudden resurgence of the superhero movie, but some of the later films felt like they could have recaptured that and just didn’t. \par\addvspace{0.4em}
I generally enjoy the more recent films, but they feel cookie cutter. I’ll watch them, but I rarely get excited about seeing them like I did for \obf{2.5em} \obf{1.5em} or the original \obf{4em}.
&
Watchmen, Dark Knight, Avengers \\
\midrule
\textcolor{blue}{$\mathcal{X}_c(x)$} & \textcolor{blue}{$\mathcal{X}_c(y)$} \\
\midrule
That explains a lot. The Phase 1s felt new. Part of that is certainly due to the sudden resurgence of the superhero movie, but some of the later films felt like they could have recaptured that and just didn’t. \par\addvspace{0.4em}
I generally enjoy the more recent films, but they feel cookie cutter. I’ll watch them, but I rarely get excited about seeing them like I did for \obf{2.5em} \obf{1.5em} or the original \obf{4em}.
&
\obf{0.6em}, \obf{2.5em} \obf{3em}, \obf{4.5em} \\
\bottomrule
\end{tabularx}
\caption{A decisive witness for \texttt{GPT54} under \texttt{QWEN3} on the 49-comment \emph{Avengers} dataset.}
\Description{A two-column comparison shows a sanitized Avengers comment and a short comparison passage, followed by their respective sanitizations. The comment retains contextual references to superhero movies despite redacting explicit names.}
\label{fig:avengers_GPT54_failure}
\end{figure}

Figure~\ref{fig:avengers_GPT54_failure} shows a decisive witness for \texttt{GPT54} on the 49-comment dataset. In passage $x$ (left), the mechanism removed ``Iron Man'' and ``Avengers'' but retained ``superhero movie,'' which forms a strong semantic connection to $y$: ``Watchmen, Dark Knight, Avengers.'' This trial, with similar sanitizations, was also a decisive witness for \texttt{GPT4o}, \texttt{HAK45}, and \texttt{HAK35}. A witness exhibiting the second-largest violation for \texttt{GPT54} matched against the same passage $y$, with $\mathcal{X}_c(x)$ having sanitized phrases such as ``comic,'' ``single giant crossover with six or seven heroes,'' and ``films grossing around 200 million,'' which strongly suggest \emph{the Avengers}. This trial was a decisive witness for \texttt{GPT4om}. Every SSM on the Pareto frontier had one of these two trials as a decisive witness. This suggests that mechanisms for text tend to perform more uniformly across model families.

\subsubsection{Off-the-Shelf Text Sanitization}
\label{sec:commercial_text}

\textbf{Methodology.} From the Enron corpus~\cite{klimt2004enron}, we retained 1,841 emails containing \texttt{Ken Lay} and at most 500 words after stripping header metadata where possible. We treated the four commercial systems in Table~\ref{tab:mechanisms} as SSMs, $\mathcal{X}_c$. Because their standard PII categories target person names, we assumed the semantic closure of $c$ under each SSM, according to distance mechanism $\mathcal{D}$, contained Ken Lay's identity; this may fail for contextual facts the model does not connect to his name. We used each system off the shelf, without fine-tuning or custom recognizers, and selected its most inclusive standard acceptance threshold. With such settings, Privacy Filter sanitized all spans assigned to its eight privacy categories; Comprehend all returned PII entities; Presidio all candidates from enabled recognizers for the selected language; and Cloud DLP all findings rated \texttt{VERY\_UNLIKELY} or higher. We applied the operational test to the sanitized emails with respect to Ken Lay's identity, using \texttt{QWEN3-4B} as $\mathcal{D}$.

\begin{table}[t]
\caption{Model-relative resolution and case-insensitive \texttt{Ken Lay} matches among 1,841 sanitized emails; Share is the match rate.}
\label{tab:enron_name_baseline}
\centering
\footnotesize
\setlength{\tabcolsep}{3pt}
\renewcommand{\arraystretch}{1.05}
\begin{tabular*}{\columnwidth}{@{\extracolsep{\fill}}lrrr@{}}
\toprule
\textbf{Mechanism} & \textbf{Resolution} & \textbf{Matches} & \textbf{Share} \\
\midrule
Privacy Filter & 0.3047 & 225 & 12.22\% \\
Comprehend & 0.1953 & 9 & 0.49\% \\
Presidio & 0.1914 & 17 & 0.92\% \\
Cloud DLP & 0.1426 & 0 & 0.00\% \\
\bottomrule
\end{tabular*}
\end{table}

\textbf{Results.} All four mechanisms returned nonzero resolution. Table~\ref{tab:enron_name_baseline} compares contrastive privacy with a case-insensitive exact-name baseline: sanitized emails still containing \texttt{Ken Lay}. The baseline detects explicit identifiers but not identifying context. Decisive witnesses for Privacy Filter, Comprehend, and Presidio retained the full string. For Comprehend and Presidio, these flagrant failures occurred in only 9 (0.49\%) and 17 (0.92\%) emails, respectively, yet Algorithm~\ref{alg:clip_priv} selected them as decisive. Cloud DLP had no exact match, but its decisive witness preserved a sentence stating that the addressee had sold ``\$101 million worth of Enron stock'' while urging employees to buy it. Thus, the test surfaced rare explicit-name leaks and contextual disclosure, which the exact-name baseline necessarily misses.

\section{Resolution Analysis}

Privacy engineers may need to release renderings sanitized with a mechanism for which the test returns a nonzero resolution under the chosen distance mechanism. In such cases, it is natural to ask how to interpret the observed resolution.

Familiar semantic comparisons calibrate unitless $\delta$. For each concept in Table~\ref{tab:resolution_analysis}, we used 49 or 50 images, or 50 one-sentence descriptions from \texttt{GEM31p}, blacked out the concept at threshold 0.4, and retained renderings with at least 0.1\% coverage. For each retained pair $(u,v) \in \mathbb{X}(c) \times \mathbb{X}(d)$, $b(u,v) = \min\{\mathcal{D}(\mathcal{X}_c(u), v), \mathcal{D}(u, \mathcal{X}_d(v))\} - \mathcal{D}(u,v)$ is the shared strict upper bound on $\delta$ for both Definition~\ref{def:sem_conn_det} inequalities; Table~\ref{tab:resolution_analysis} reports cross-product percentiles. Images used \texttt{SAM3} with \texttt{EVA}; text used \texttt{GLN2} with \texttt{QWEN3}.

Let $\delta$ denote the configured test's resolution. If we assume that $\mathcal{D}$ fully captures the adversary's semantic sensitivity, accepting an audit result at resolution $\delta$ requires treating violations up to $\delta$ as undetectable by the adversary. At a resolution just above 0.246, for example, this corresponds to an adversary that would fail to register roughly 95\% of the empirical \texttt{EVA} \texttt{dog}--\texttt{cat} comparisons as connected, despite their familiar relationship.

\begin{table}[t]
    \caption{Percentiles of empirical pairwise boundaries for semantic connectedness.}
    \label{tab:resolution_analysis}
    \centering
    \small
    \setlength{\tabcolsep}{2pt}
    \renewcommand{\arraystretch}{1.05}
    \begin{tabular*}{\columnwidth}{@{\extracolsep{\fill}}lrrrrrr@{}}
        \toprule
        & \multicolumn{3}{c}{\textbf{Image} (\texttt{EVA})} & \multicolumn{3}{c}{\textbf{Text} (\texttt{QWEN3})} \\
        \cmidrule(lr){2-4} \cmidrule(lr){5-7}
        \textbf{Concepts} & \textbf{90\%} & \textbf{95\%} & \textbf{99\%} & \textbf{90\%} & \textbf{95\%} & \textbf{99\%} \\
        \midrule
        desk vs.\ paperclip & 0.035 & 0.057 & 0.092 & 0 & 0.003 & 0.017 \\
        car vs.\ boat & 0.076 & 0.101 & 0.143 & 0 & 0 & 0.015 \\
        dog vs.\ cat & 0.217 & 0.246 & 0.308 & 0 & 0.007 & 0.027 \\
        orange vs.\ lemon & 0.272 & 0.341 & 0.487 & 0.053 & 0.063 & 0.085 \\
        \bottomrule
    \end{tabular*}
\end{table}

\section{Related Work}

Section~\ref{sec:experiments} compares frontier and off-the-shelf sanitization mechanisms across image and text; we situate this evaluation among the following approaches.

\textbf{Sanitization Systems.} We evaluate AWS Comprehend, Microsoft Presidio, Google Cloud Data Loss Prevention, and Privacy Filter from OpenAI in Section~\ref{sec:commercial_text}. Imago Obscura~\cite{monteiro2025imago} cannot be directly tested by our approach, and Unsafe2\-Safe~\cite{dinh2026unsafe2safe} is not publicly available. VisShield~\cite{chen2025vision}, which filters Protected Health Information, and GUIGuard~\cite{wang2026guiguard}, which detects and locally sanitizes sensitive regions before screenshots are sent to remote agents, require domain-specific distance mechanisms, so we defer their evaluation.

\textbf{Measuring Sanitization Privacy.} Abdulaziz et al.~\cite{abdulaziz2025evaluation} evaluate AI identity obfuscation over 18 manually labeled attributes, finding residual risks and no uniformly best method. Patwari et al.~\cite{patwari2024perceptanon} train a neural network with human annotations to detect obfuscation failures. Both depend on selected attributes and manual labels; our approach avoids per-item labels but inherits the distance mechanism's semantic blind spots. Using SPriV, which measures the proportion of private tokens leaked, Garza et al.~\cite{garza2025prvl} find that instruction-tuned LLMs can preserve utility while leaking little private information. Calculating SPriV requires hand-labeled ground truth. Like our approach, Asiri et al.~\cite{asiri2025spadr} use a notion of connectivity to detect semantic links. But their entity graphs capture only a fraction of the information encoded by modern embedding models.

Other methods below do not sanitize concept lists from unstructured data and so cannot be evaluated directly by our approach. 

\textbf{Generative Models.} Early context-preserving methods required bounding boxes and focused mainly on faces~\cite{sun2018natural,sun2018hybrid,hukkelaas2019deepprivacy,maximov2020ciagan}, with some providing privacy guarantees~\cite{li2021differentially}; later methods blend\\ anonymized objects into the original background~\cite{hukkelaas2023deepprivacy2,barattin2023attribute,malm2023rad,zwick2024context,kung2025face,ertan2025beyond,chang2025privacy}.

\textbf{Privacy from Face Recognition.} Ilia et al.~\cite{ilia2015face} introduced face-level access control. Other systems perturb facial features to prevent detection~\cite{Chandrasekaran2021FaceOff,Rajabi2021Impracticality,Le2025DiffPrivate}; Yan et al.~\cite{yan2024coder} obtain high-utility face privacy using differential privacy (DP), while Shoshitaishvili et al.~\cite{shoshitaishvili2015portrait} show that face detection can reveal human relationships.

\textbf{Privatizing Text.} LLMs can reproduce training passages~\cite{carlini2021extracting,he2025privacyxray}, even after DP training~\cite{pang2025reconstruction}. Pr$\epsilon\epsilon$mpt~\cite{roychowdhury2026preempt} sanitizes prompts with token-level DP, while Pham et al.~\cite{pham2025can} find that LLMs poorly obfuscate named entities. Xin et al.~\cite{xin2025false} quantify re-identification risk using subjective privacy-loss measures, such as LLM-assessed severity of linked auxiliary information; Paudel~\cite{paudel2026sanitization} shows that traditional measures can be overly optimistic for LLM sanitization.

\textbf{Differential Privacy.} DP provides formal guarantees under stated assumptions, whereas our operational test is a model- and corpus-relative diagnostic; it is also complementary and can sanitize segmented image concepts. Zhao and Chen~\cite{zhao2022survey} distinguish pixel-level, latent-vector, and metric DP. Maris et al.~\cite{maris2025differential} extend $k$-anonymity and DP to image re-identification, Pittaluga et al.~\cite{pittaluga2023ldp} privatize entire images, and Xue et al.~\cite{xue2021dp} heuristically select and perturb identifying facial features. Persuasive Privacy~\cite{bon2026persuasive} generalizes forms of DP to measure deterministic systems semantically in terms of adversarial knowledge, but not linguistically semantic features. Other alternatives shuffle pixels~\cite{li2024visualmixer} or alter faces to be perceptually similar yet unlinkable~\citep{chen2021perceptual}. Li et al.~\cite{li2013membership} model the background knowledge needed for re-identification as a distribution; our approach instead represents such knowledge with a trained model.

\textbf{Privacy of Video and Audio.} Video work privatizes temporal blocks~\cite{li2023stprivacy,aslam2026pixels}, obfuscates faces across frames~\cite{rosberg2023fiva} or in real time~\cite{huh2025novel}, and uses GANs to preserve utility; audio work obfuscates recordings~\cite{perero2022x,lim2022overo,ghosh2024improving,bibbo2025speech}.

\section{Conclusion}
We introduced \textit{contrastive privacy}, a formal definition and algorithm for auditing sanitized renderings with respect to an abstract privacy target $\gamma$, producing model-relative resolutions and decisive witnesses. The auditor specifies $\gamma$ but need not enumerate its potentially open-ended constituent properties or compute $\mathcal{X}_\gamma$. This trades enumeration for explicit application-specific conditions: semantic closure and diverse representation formally, and adequate proxy coverage and mechanism fidelity operationally. These conditions are often plausible when $c$ includes a direct cue, such as a face, name, or logo, and references capture target-related properties. Embeddings provide a natural implementation but are not required. Under the formal assumptions, passing the idealized full-reference test at a resolution where semantic closure holds makes every sanitized rendering in $\widehat{\mathbb{X}}$ private with respect to $\gamma$; a finite proxy-set pass is only a model- and corpus-relative diagnostic. Across image, text, and Enron evaluations, witnesses revealed retained identifiers and identifying context, including after name removal. Identity-specific fine-tuning exposed signals missed by the base distance mechanism and guided broader sanitization. Three distance mechanisms produced broadly correlated rankings across 94 DiCaprio sanitizations but differed at resolution 0, supporting heterogeneous ensembles for cautious assessments while leaving shared blind spots and adversarial SSMs to future work.

\appendix

\begin{ethics}
A stakeholder-based ethics analysis did not identify significant ethical concerns. We considered the research team, society at large, potential users of our system, and people depicted in images or discussed in texts used in our evaluations. This analysis was conducted by the authors rather than an independent analyst. The work did not require, and did not receive, review by an institutional review board or another external ethics panel.

\begin{itemize}
    \item \textbf{Respect for Persons.} Our evaluations use images depicting real people and passages written by real people or discussing them. We used only images obtained from Wikimedia Commons and publicly available text corpora, and we are unaware of efforts by the relevant individuals to retract those materials.
    \item \textbf{Potential Misuse and Bias.} An operator could rely on our test to claim that a dataset is private while concealing violations that it identifies. Sanitization can also bias images or text by selectively retaining content, whether unintentionally or through a malicious operator's deliberate removal of more content than necessary. We cannot prevent these forms of misuse.
    \item \textbf{Respect for Law and the Public Interest.} Our work may help government officials and other entities evaluate compliance with legal requirements for sanitizing data before release.
    \item \textbf{Beneficence.} Our work seeks to improve privacy; this benefit must be weighed against the potential harms described above. Overall, we judge the expected benefits to outweigh these risks.
\end{itemize}

The primary goal of our work is to \emph{measure} how mechanisms sanitize private information from data. A central risk is that the diagnostic could be deployed and relied upon beyond its supported assumptions. We therefore state its limitations throughout the paper.
\end{ethics}

\begin{openscience}
Artifacts for the image, Reddit, and Enron experiments and the core analysis are available in a public repository at \url{https://github.com/umass-forensics/contrastive-privacy}:
\begin{itemize}
    \item The image, Reddit, and processed Enron data are in the top-level \path{data} folder; Wikimedia Commons source pages and licenses are in \path{image_licenses.csv}.
    \item Scripts for the core analysis are in \path{src/contrastive_privacy/scripts}.
    \item The fine-tuning script is \path{src/contrastive_privacy/scripts/finetune_clip.py}.
    \item Commands for the released experiments are in \path{experiments.sh}.
    \item Detailed setup and execution instructions, as well as an explanation of the parameters, are in \path{README.md}.
\end{itemize}
\end{openscience}

\begin{ai}
As described in Section~\ref{sec:experiments}, AI-based tools were used as part of the research methodology to generate and sanitize images and text and to generate candidate sanitization concepts. Figures~\ref{fig:dicaprio_privacy_failures}, \ref{fig:avengers_GPT54_failure}, \ref{fig:priv_dicaprio}, \ref{fig:not_priv_mcdonalds}, \ref{fig:priv_mcdonalds_49}, \ref{fig:dicaprio_failure_compare}, and~\ref{fig:mcdonalds_failure_compare} display their sanitization outputs. \texttt{Nano Banana 2} generated the synthetic child galleries, Appendix~\ref{app:sc} displays separate images generated with \texttt{GPT-4o}, and \texttt{GEM31p} generated the synthetic descriptions used in Table~\ref{tab:resolution_analysis}. Generative AI-based tools were also used to develop code implementing the experimental tools and workflows used throughout the paper, revise prose, improve flow, and correct language. The authors assume responsibility for the accuracy and integrity of all AI-assisted output used in this work.
\end{ai}

\bibliographystyle{abbrvnat}
\bibliography{references}

\clearpage

\section{Examples of Semantically Connected Concepts}
\label{app:sc}
  
We use CLIP ViT-L/14 to illustrate why apples and oranges may be considered semantically connected, whereas oranges and iPhones are not. Here we remove the objects instead of blacking out pixels. All images in this section were generated with \texttt{GPT-4o}. In Example~1, the displayed pair provides evidence that apples and oranges are semantically connected, but the same inequalities would need to hold for every pair in $\mathbb{X}(c)\times\mathbb{X}(d)$ to establish semantic connectedness under Definition~\ref{def:sem_conn_det}. By contrast, the single pair in Example~2 suffices to disprove semantic connectedness because it violates one of the required inequalities.

\noindent \textbf{Example 1. Evidence That $c$=Oranges and $d$=Apples Are Connected.}

\noindent    
\begin{center}
\resizebox{0.99\linewidth}{!}{
\begin{tabular}{rrrl}
\toprule$x$=\includegraphics[width=1.15cm]{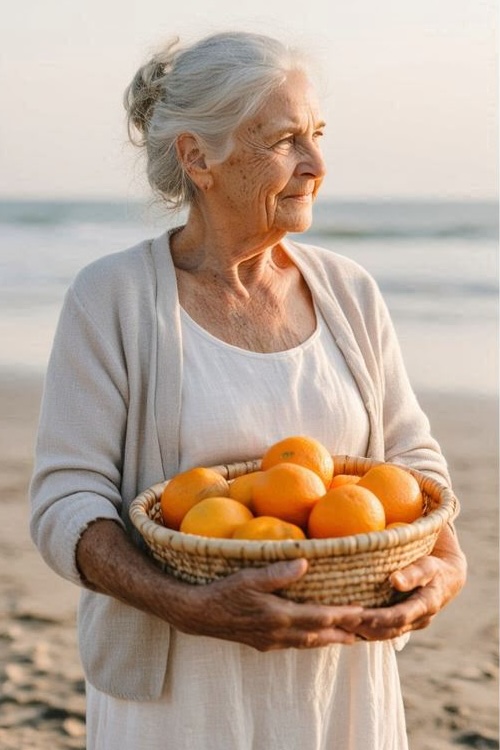} &
$y$= \includegraphics[width=1.3cm]{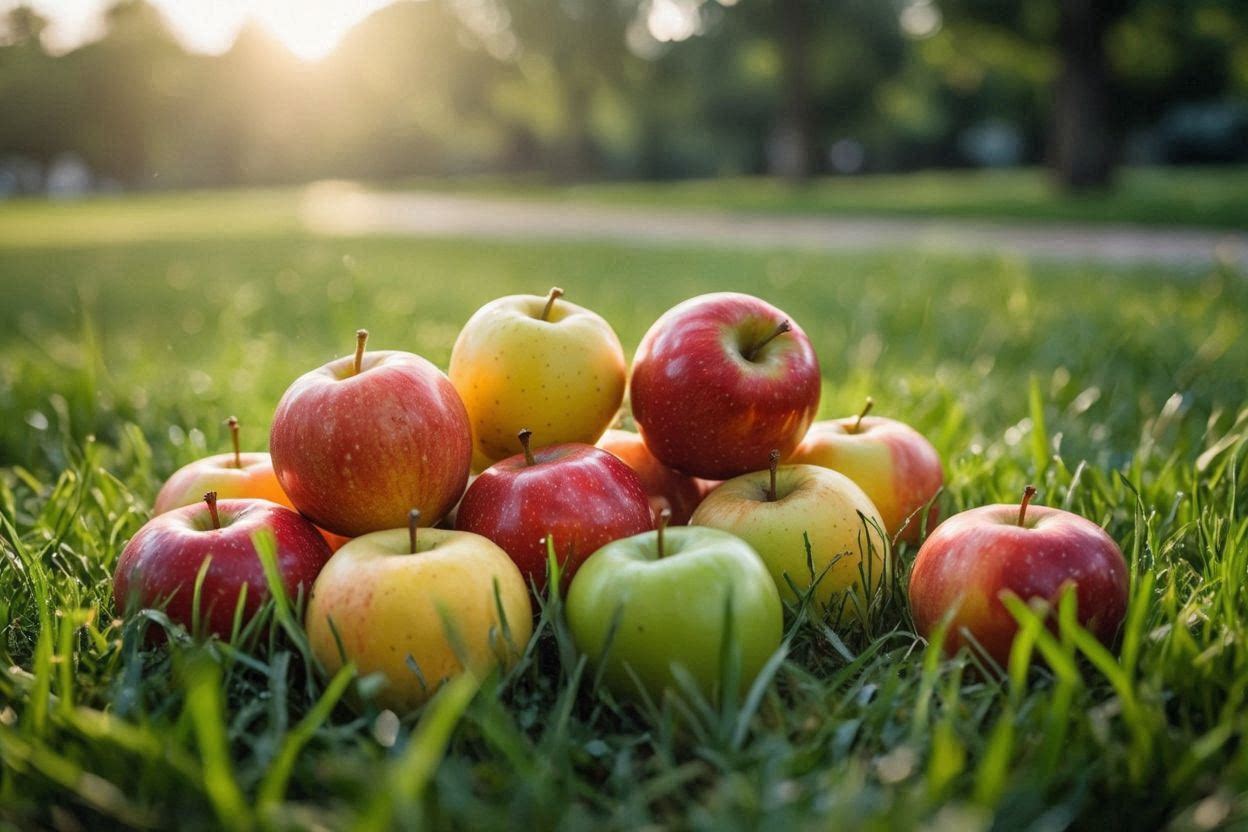} &
$\mathcal{D}(x, y)=$&
0.33\\\midrule

$\mathcal{X}_{c}(x)$=\includegraphics[width=1.15cm]{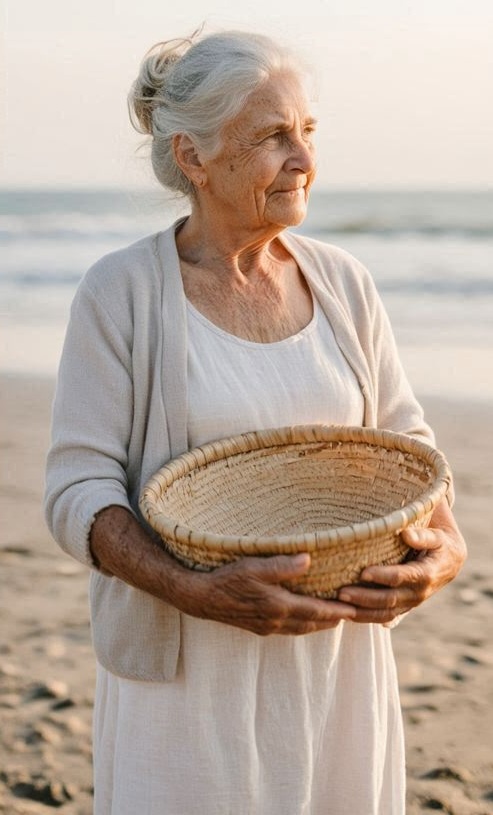} &
$y$= \includegraphics[width=1.3cm]{images/apples_with.jpg} &
$\mathcal{D}(\mathcal{X}_{c}(x), y)=$ &
0.41\\\midrule

$x$=\includegraphics[width=1.15cm]{images/oranges_with.jpg} &
$\mathcal{X}_{d}(y)$= \includegraphics[width=1.3cm]{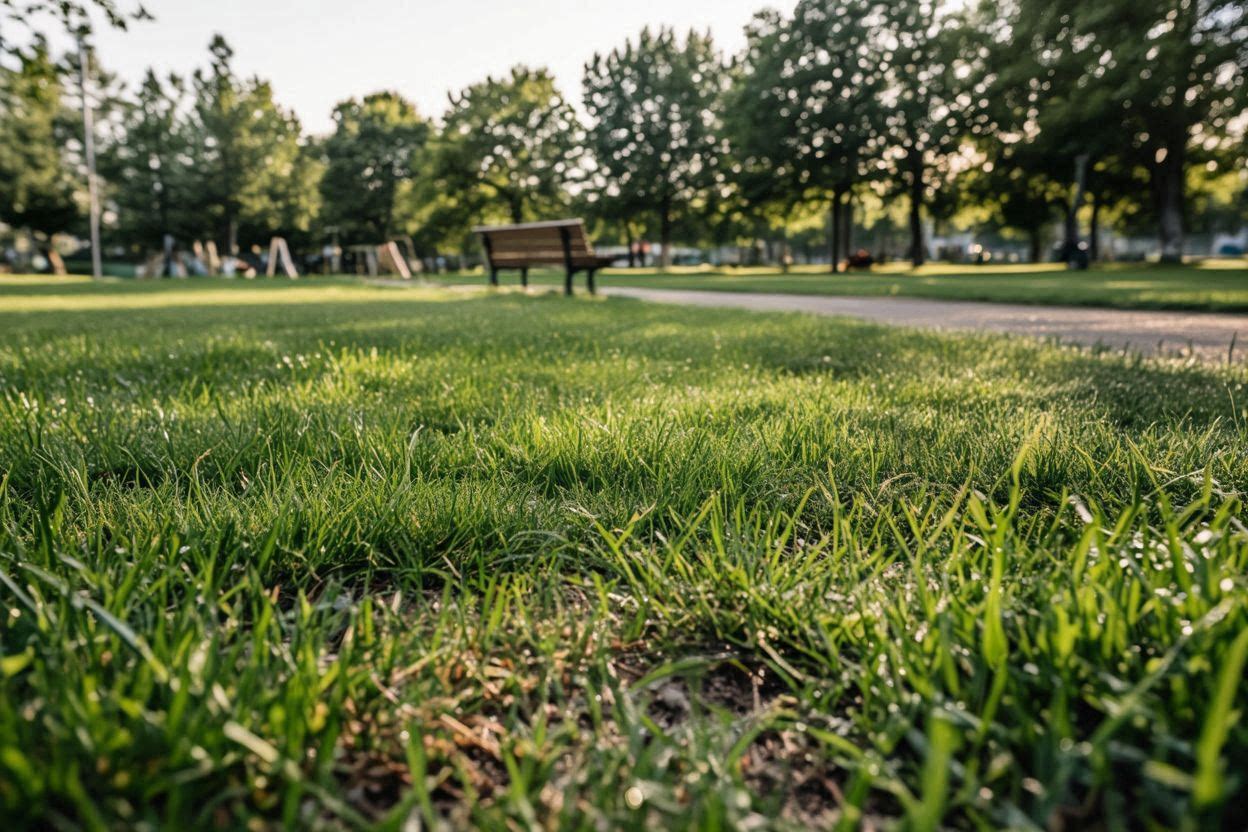}&
$\mathcal{D}(x, \mathcal{X}_{d}(y))=$ &
0.42\\\bottomrule
\end{tabular}
}
\end{center}

\smallskip
\noindent For this pair, both inequalities required for semantic connectedness hold for an appropriate $\delta \geq 0$:
\begin{align*}
\mathcal{D}(x,y) + \delta < \mathcal{D}(\mathcal{X}_{c}(x), y) &\Rightarrow 0.33 + \delta < 0.41 \\
\mathcal{D}(x,y) + \delta < \mathcal{D}(x, \mathcal{X}_{d}(y)) &\Rightarrow 0.33 + \delta < 0.42
\end{align*}
The equations hold for any $0 \leq \delta < 0.08$.

\smallskip
\noindent\textbf{Example 2. $c$=Oranges and $d$=iPhones Are Not Connected.}

\noindent
\begin{center}
\resizebox{0.99\linewidth}{!}{
\begin{tabular}{rrrl}
\toprule$x$=\includegraphics[width=1.15cm]{images/oranges_with.jpg} &
$y$= \includegraphics[width=1.15cm]{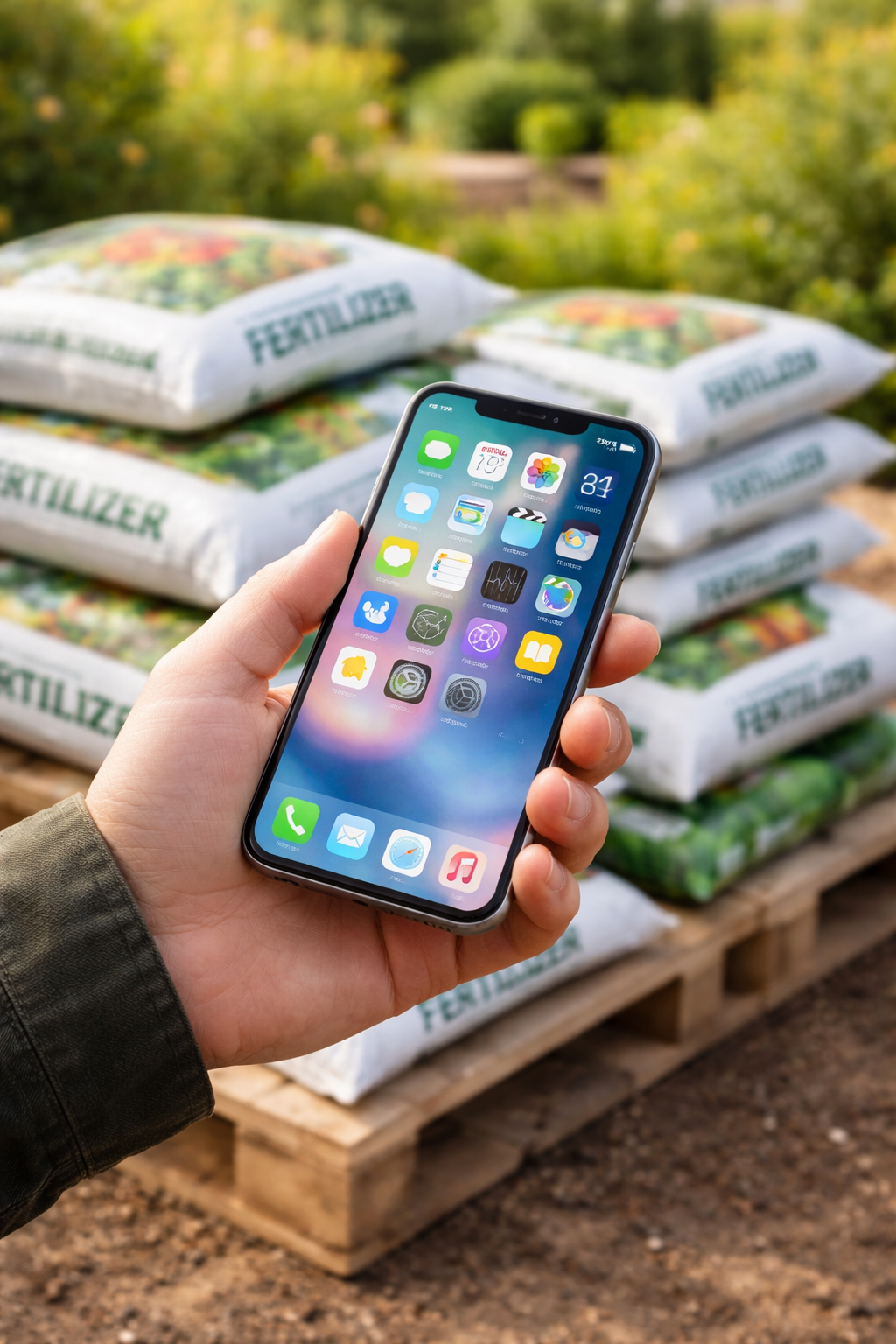} &
$\mathcal{D}(x, y)=$&
0.42\\\midrule

$\mathcal{X}_{c}(x)$=\includegraphics[width=1.15cm]{images/oranges_without.jpg} &
$y$= \includegraphics[width=1.15cm]{images/iphone_with.png} &
$\mathcal{D}(\mathcal{X}_{c}(x), y)=$ &
0.49\\\midrule

$x$=\includegraphics[width=1.15cm]{images/oranges_with.jpg} &
$\mathcal{X}_{d}(y)$= \includegraphics[width=1.15cm]{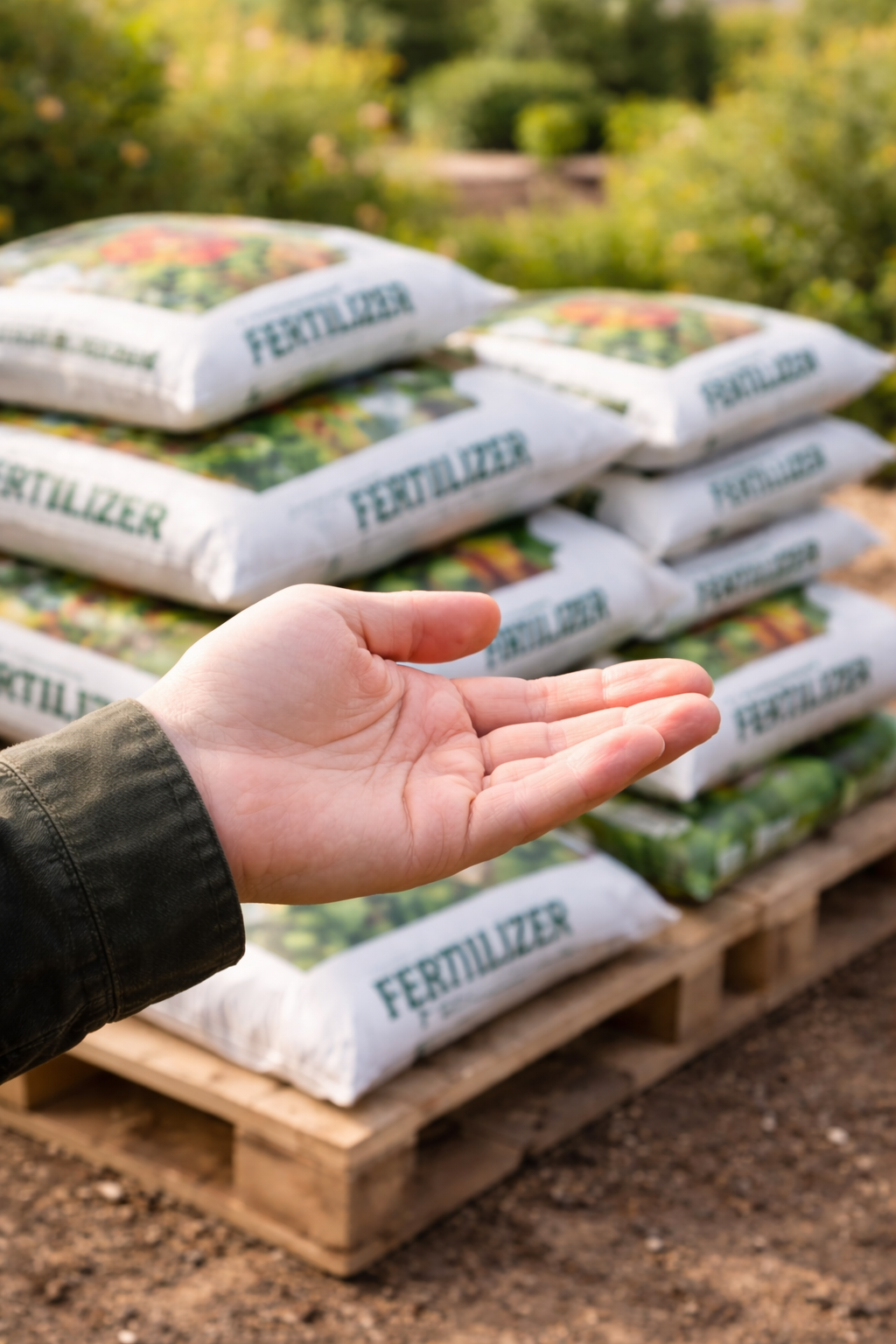}&
$\mathcal{D}(x, \mathcal{X}_{d}(y))=$ &
0.37\\\bottomrule
\end{tabular}
}
\end{center}

\smallskip
\noindent To disprove semantic connectedness, it is enough to show that no $\delta \geq 0$ makes both Eqs.~\ref{eq:conn1} and~\ref{eq:conn2} hold for this pair.
\begin{align*}
\mathcal{D}(x,y) + \delta < \mathcal{D}(\mathcal{X}_{c}(x), y) &\Rightarrow 0.42 + \delta < 0.49 \\
\mathcal{D}(x,y) + \delta < \mathcal{D}(x, \mathcal{X}_{d}(y)) &\Rightarrow 0.42 + \delta < 0.37
\end{align*}
No such $\delta \geq 0$ exists.

\FloatBarrier
\section{Mechanisms}
\label{sec:mechanisms}

\begin{table}[h]
\caption{
Distance and Sanitization Mechanisms.
Unless otherwise noted, \texttt{SAM3} sets the following parameters: threshold = 0.2, blur = 5, and dilate = 25. \texttt{GLN2} sets threshold = 0.05. Dashes indicate that the underlying model is unspecified.}
\label{tab:mechanisms}
\centering
\renewcommand{\arraystretch}{1}
\setlength{\tabcolsep}{2pt}
\footnotesize
\begin{adjustbox}{max width=\columnwidth}
\begin{tabular}{c l l l}
\toprule
\textbf{Mechanism} & \textbf{Name} & \textbf{Model} & \textbf{Modality} \\
\midrule
  & \texttt{EVA}  & EVA-CLIP-18B  & images \\
  \textit{Distance} & \texttt{SIGLIP2} & SigLIP 2 Giant (base) & images \\
  $(\mathcal{D})$ & \texttt{SIGLIP2-FT10} & SigLIP 2 Giant (epoch 10) & images \\
  & \texttt{METACLIP2} & MetaCLIP 2 Giant & images \\
  & \texttt{QWEN3}  & Qwen3-Embedding-8B  & text \\
  & \texttt{QWEN3-4B}  & Qwen3-Embedding-4B  & text \\
\midrule

  & \texttt{GPT54} & GPT 5.4  & i2t, t2t  \\
  & \texttt{GPT54m} & GPT 5.4 Mini  & t2t  \\
  & \texttt{GPT52} & GPT 5.2  & t2t  \\
  & \texttt{GPT4o} & GPT 4o  & t2t  \\
  & \texttt{GPT4om} & GPT 4o Mini  & t2t  \\
  & \texttt{GEM31p} & Gemini 3.1 Pro & i2t, t2t \\
  & \texttt{GEM3f} & Gemini 3 Flash & t2t \\
  & \texttt{GEM25p} & Gemini 2.5 Pro & t2t \\
  & \texttt{GEM25f} & Gemini 2.5 Flash & t2t \\
  & \texttt{GEM2f} & Gemini 2 Flash & t2t \\
  & \texttt{OPS46} & Opus 4.6 & i2t, t2t  \\
  & \texttt{OPS41} & Opus 4.1 & t2t  \\
  \textit{Sanitization} & \texttt{SON46} & Sonnet 4.6 & t2t  \\
  $(\mathcal{X}_c)$ & \texttt{HAK45} & Haiku 4.5 & t2t  \\
  & \texttt{HAK35} & Haiku 3.5 & t2t  \\

 & \texttt{iGPT15} & GPT Image 1.5 & i2i \\
 & \texttt{iGPT1m} & GPT Image 1 Mini & i2i \\
 & \texttt{iGEM3p} & Gemini 3 Pro Image & i2i \\
  & \texttt{iGEM31f} & Gemini 3.1 Flash Image & i2i \\
  & \texttt{FLX2p} & FLUX 2 Pro & i2i \\
  & \texttt{FLX2d} & FLUX 2 Dev & i2i \\
  & \texttt{SAM3} & SAM3 & i2i \\
  & \texttt{GLN2} & GLiNER2 base v1 & t2t \\
  & \texttt{Privacy Filter} & -- & t2t \\
  & \texttt{AWS Comprehend} & -- & t2t \\
  & \texttt{Microsoft Presidio} & -- & t2t \\
  & \texttt{Google Cloud DLP} & -- & t2t \\
\bottomrule
\end{tabular}
\end{adjustbox}
\end{table}

\FloatBarrier

\section{Obfuscation Concepts}
\label{sec:concepts}

\begin{concept}
\label{concepts:dicaprio_manual}
    face, flags, globe hair, logo, map, neck 
\end{concept}

\begin{concept}
\label{con:dicaprio_gpt-5.4}
accessories, age appearance, beard, body build, body shape, body silhouette, bodyguard, celebrity face, celebrity name, clothing, distinctive clothing, distinctive features, distinctive profile, event backdrop, event signage, event text, event title, eye color, face, faces, facial features, facial hair, festival logo, festival text, formal attire, formal clothing, formal suit, glasses, group composition, group lineup, hair color, hairstyle, hairstyles, headwear, jewelry, lapel pin, logo, military insignia, name badge, name plaque, name tag, person, person name, portrait, pose, profile, profile view, red carpet, red carpet photo, red carpet setting, service medals, side profile, smile, sponsor logo, suit, text, tie
\end{concept}

\begin{concept}
\label{con:dicaprio_gemini-3.1-pro}
award, badge, face, facial hair, glasses, head, id badge, logo, medals, military uniform, name tag, nameplate, person, plaque, sign, text
\end{concept}

\begin{concept}
\label{con:dicaprio_opus-4.6}
backdrop text, badge, beard, body shape, brand logo, building facade, building plaque, building sign, bust sculpture, clothing, clothing style, credential badge, document, emblem, event backdrop, event banner, event logo, event name, event signage, event text, eyeglasses, face, facial features, facial hair, festival name, flag, flag pin, goatee, government building, government emblem, government seal, hairstyle, hashtag, lanyard, lapel pin, large screen, military uniform, movie title, name placard, name plaque, name tag, name text, national flags, naval uniform, official seal, organization text, plaque, podium, portrait painting, premiere background, projected text, red carpet, seal, sponsor logo, sponsor text, stage backdrop
\end{concept}

\begin{concept}
\label{con:mcdonalds_gpt-5.2}
    advertisement banner, architectural design, barcode, billboard ad, brand colors, brand icon, brand icons, brand mark, brand mascots, brand name, brand signage, brand slogan, brand symbol, brand text, branded bags, branded colors, branded cups, branded icons, branded kiosk, branded packaging, building facade, building facade text, building logo, cafe sub-brand, character image, counter decals, counter text, cup branding, cup design, cup logo, delivery box branding, distinctive arches, distinctive color scheme, distinctive colors, distinctive icon, door stickers, drink cup, drink cup logo, drink cups, drink machine branding, drive-thru sign, employee name tag, employee uniform logo, flag text, food ads, food container, food images, food packaging, food wrapper, food wrappers, fry carton, golden arches, hat logo, hours sign, interior branding colors, interior decor, item names, logo, map graphic, mascot characters, mascot costume, mascot icon, mascot statue, menu advertisement, menu board, menu boards, menu branding, menu flyer, menu images, menu item name, menu item text, menu items, menu posters, menu prices, menu screen, menu screens, menu sign, menu text, on-screen text, order screen, packaging, packaging design, packaging logo, packaging text, paper bag, paper bags, phone number, play area sign, prices, printed pattern, product branding, product icons, product name, product names, product packaging, product photos, promotional banner text, promotional poster, promotional posters, promotional screens, promotional signage, promotional text, receipt, receipt header, receipt kiosk, receipt text, restaurant decor, restaurant logo, restaurant name, restaurant sign, restaurant signage, restaurant text, roof sign, sauce packet, self-order kiosk branding, shirt logo, signage, signature burger, signature colors, slogan text, staff uniform, staff uniform logo, store name text, store number, store sign, store signage, storefront sign, storefront signage, storefront text, sub-brand logo, tagline, tagline text, tray liner, tray liners, uniform, uniform branding, uniform logo, uniforms, wall sign, wall signage, wall text, window decal, window decals, window signage, window text, wrapper design, wrapper text
\end{concept}

\begin{concept}
\label{con:mcdonalds_gpt-5.4}
    awning text, beverage machine branding, billboard, brand colors, brand name, brand pattern, brand sign, brand slogan, brand symbol, brand text, branded bag, branded colors, branded flag, branded packaging, building signage, burger icon, burger image, cafe branding, cafe sign, character makeup, chinese text, counter signage, coupon, cup design, delivery scooter branding, digital screens, door text, drink cup, drink cup logo, drink cups, drive-thru sign, employee uniform, flag text, food advertisement, food container, food icons, food item name, food packaging, food poster, food text, food wrapper, fries icon, fries image, fry carton, golden arches, hat logo, hours sign, interior decor, item names, kiosk screen, location text, logo, map, mascot, mascot costume, mccafe sign, mccafé sign, meal names, menu board, menu boards, menu branding, menu flyer, menu item name, menu item names, menu item text, menu items, menu photos, menu placemat, menu poster, menu screen, menu sign, menu text, name tag, order display, packaging, packaging design, paper bag, paper bags, payment terminal, pickup counter, poster, price text, product name, promotional poster, promotional sign, promotional text, receipt, receipt text, restaurant branding, restaurant colors, restaurant entrance, restaurant facade, restaurant interior, restaurant name, restaurant sign, restaurant signage, roadside sign, sauce packet, self-order kiosk, sign, slogan, store directory, store locations, store number, store sign, storefront, storefront branding, storefront logo, storefront sign, storefront signage, storefront text, text, tray liner, uniform, uniform emblem, uniform logo, wall graphics, wall sign, wall signage, wall text, website text, welcome message, window decal, window decals, window graphics, window posters, window text, wrapper, wrapper text
\end{concept}

\begin{concept}
\label{con:mcdonalds_gemini-3.1-pro}
    advertisement, architectural feature, billboard, brand colors, brand name, brand pattern, brand symbol, branded packaging, food packaging, logo, mascot, mascot costume, menu board, menu items, menu text, name tag, packaging, packaging design, product name, promotional text, receipt, restaurant sign, sign, slogan, store sign, storefront sign, text, uniform, uniform logo, wall graphics
\end{concept}

\begin{concept}
\label{con:mcdonalds_opus-4.6}
    arches symbol, architectural arches, awning, banner, beverage brand, beverage dispenser logo, brand color, brand color scheme, brand mascot, brand name, brand name text, brand sign, brand symbol, brand text, branded bag, branded cap, branded clothing, branded cup, branded decor, branded lanyard, branded packaging, branded shirt, branded text, branded uniform, branded wrapper, building design, building sign, building signage, burger box, burger wrapper, cafe sign, character figure, chicken box, chinese text, clown costume, clown figure, color scheme, coupon, cup design, digital signage, directional signage, drink cup, drink dispenser, drive thru sign, drive-through sign, drive-thru sign, employee uniform, flag, food imagery, food wrapper, fry container, golden arches, hat logo, illuminated sign, interior decor, letter emblem, logo, mascot costume, mascot statue, menu board, menu display, menu item names, menu item text, menu items, name badge, operating hours sign, order display, order screen, order screens, packaging, packaging text, play area sign, playground structure, pole sign, poster, price signage, product name, promotional banner, promotional poster, promotional posters, promotional sign, promotional signage, receipt, red wig, roadside sign, sandwich wrapper, sauce packet, self-order kiosk, shirt logo, shop address, sign, signage text, slogan, storefront sign, storefront text, striped sleeves, text, text signage, trademark symbol, tray liner, uniform, uniform logo, wall text, website text, welcome text, window decal, wrapper text, yellow jumpsuit
\end{concept}

\begin{concept}
\label{con:mcdonalds_sample_gpt-5.4}
    awning branding, billboard, brand character, brand colors, brand icon, brand logo, brand name, brand sign, brand text, combo meal names, decor pattern, drink cup, drive-thru sign, food box, food packaging, french fry carton, fry carton, golden arches, hat logo, logo, mascot costume, mascot makeup, menu board, menu item names, menu poster, menu text, packaging, packaging logo, playground branding, prices, promotional display, promotional poster, receipt, restaurant interior, restaurant sign, restaurant signage, restaurant text, slogan, storefront sign, storefront text, text, uniform, uniform logo, wall signage, window decal, wrapper text
\end{concept}

\begin{concept}
\label{con:mcdonalds_sample_gemini-3.1-pro-preview}
    architectural feature, brand name, branded packaging, logo, mascot costume, menu board, menu item name, menu item names, name tag, promotional, restaurant sign, sign, slogan, store sign, uniform logo
\end{concept}

\begin{concept}
\label{con:mcdonalds_sample_claude-opus-4.6}
    architectural arches, awning, banner, beverage brand, brand color scheme, brand mascot, brand name, brand symbol, brand text, branded cap, branded cup, branded packaging, branded shirt, branded uniform, building design, building sign, burger box, chicken box, clown costume, color scheme, drink cup, drink dispenser, drive-through sign, food wrapper, fry container, golden arches, hat logo, interior decor, logo, mascot costume, menu board, menu boards, menu item names, name badge, packaging text, playground structure, pole sign, product name, promotional banner, promotional posters, promotional signage, receipt, red wig, sign, slogan, storefront sign, striped sleeves, tray liner, uniform logo, window decal, yellow jumpsuit
\end{concept}

\begin{concept}
\label{con:mcdonalds_manual}
    arches, clown, drink, face, food, logo, roof, sign, vent
\end{concept}

\begin{concept}
\label{con:avengers_manual}
    comic, movie, publisher, author, superhero, abbreviation, archetype, year, ``remake'', ``blockbuster'', ``battle'', ``first'', ``powerful'', ``defeated'', ``movies'', ``die''
\end{concept}

\FloatBarrier
\section{Obfuscation Prompts}
\label{sec:prompts}

\begin{prompt}
\label{prompt:i2i_nat_con}
    Black out all regions containing: \texttt{OBJECTS}. The blacked-out regions must be solid black. Do not alter any other pixels. When the blacked-out object can be recognized by silhouette alone, blur the edges to ensure it cannot be recognized.    
\end{prompt}

\begin{prompt}
\label{prompt:i2i_abs_con}
    Black out every region that can reveal \texttt{CONCEPT}. The blacked-out regions must be solid black. Do not alter any other pixels. When the blacked-out object can be recognized by silhouette alone, blur the edges to ensure it cannot be recognized.
\end{prompt}

\begin{prompt}
\label{prompt:i2t_con}
    Analyze this image and identify specific concepts (objects, text, visual elements) that should be obfuscated to preserve the privacy of the following target concept:\\ \texttt{CONCEPT}. Please provide a list of general, reusable concepts that appear in this image and would reveal information about the target concept if left visible. Use general, broad concepts whenever possible (e.g., "logo" instead of "McDonald's logo", "sign" instead of "Joe's Restaurant sign", "menu" instead of "specific menu item names"). Return ONLY a JSON array of concept strings, one per line, with no additional text or explanation. Each concept should be a single word or short phrase (2-3 words maximum). Focus on concepts that are directly related to revealing the target concept. Example format: ["logo", "sign", "menu", "brand name", "text"] If no relevant concepts are found, return an empty array: []
\end{prompt}

\begin{prompt}
\label{prompt:avengers_gpt5.2}
    Please redact the following passage by replacing with a single block character (\rule{0.6em}{0.7em}, Unicode U+2588) every word or phrase that is consistent with this privacy concept: \texttt{anything that can identify the movie discussed in this passage}. Rules: (1) Replace only the sensitive tokens with \rule{0.6em}{0.7em}, one character per original character so that the length and layout of the passage are preserved. (2) Do not redact words that do not match the concept. (3) Return only the redacted passage, no explanation. 
\end{prompt}

\clearpage
\suppDicaprioGallery
\suppMcDonaldsMatrix
\suppAvengersMatrix
\FloatBarrier

\clearpage
\twocolumn[{
\begin{minipage}{\textwidth}
\section{Privatizations}
\label{sec:privatizations}
\captionsetup{type=figure}
    \centering
    \begin{subfigure}[t]{.3\textwidth}
        \centering
        \includegraphics[width=\linewidth]{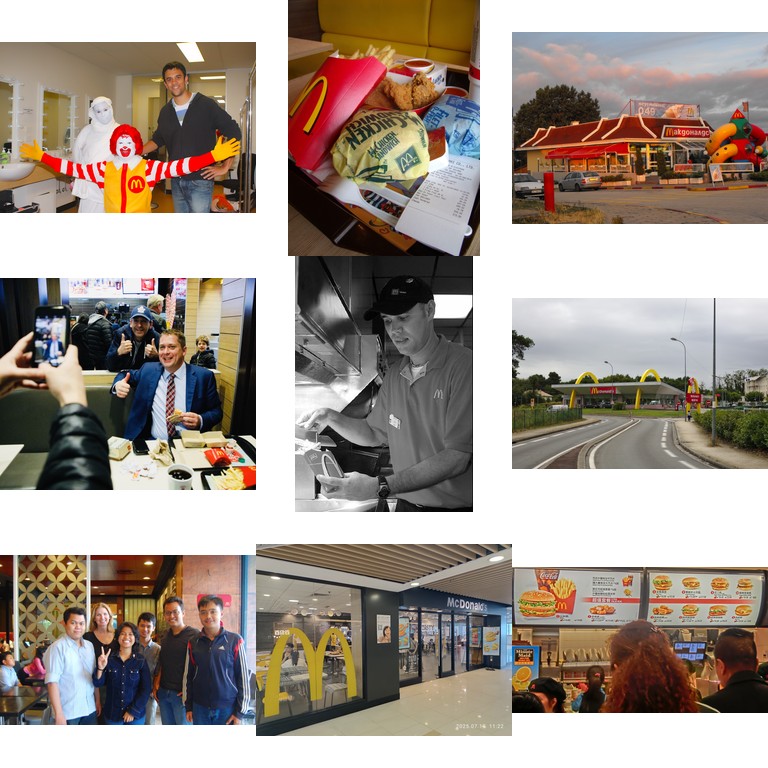} 
        \caption{Unaltered images capturing McDonald's.}
        \label{fig:mcdonalds_orig}
    \end{subfigure} %
    \hspace{0.03\textwidth}%
    \begin{subfigure}[t]{.3\textwidth}
        \centering
        \includegraphics[width=\linewidth]{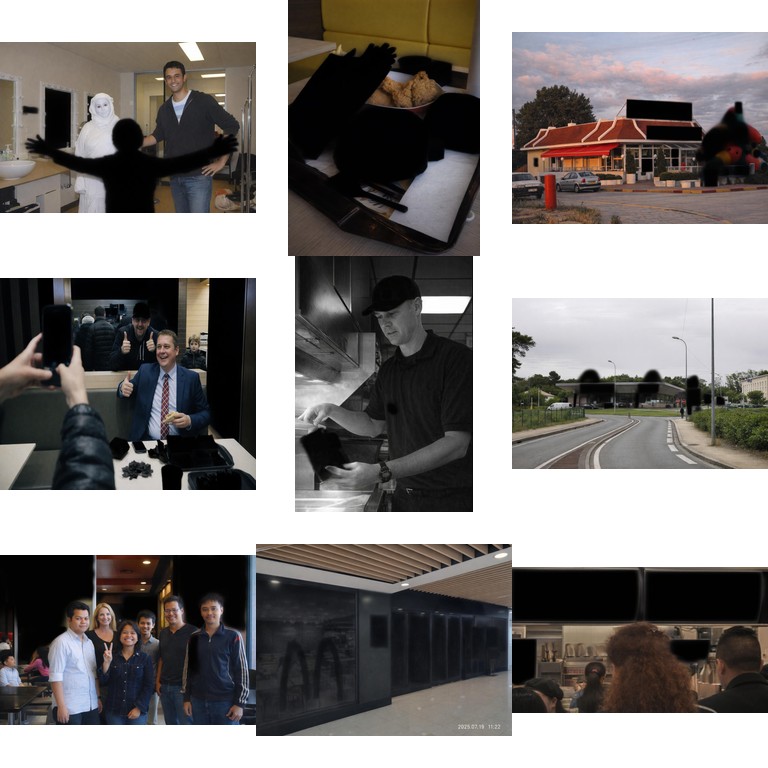}
        \caption{Images obfuscated by \texttt{iGPT15/GPT54}; the test returns resolution 0.05 under \texttt{EVA} and utility 0.63.}
        \label{fig:mcdonalds_openai_obfs}
    \end{subfigure}%
    \hspace{0.03\textwidth}%
    \begin{subfigure}[t]{.3\textwidth}
        \centering
        \includegraphics[width=\linewidth]{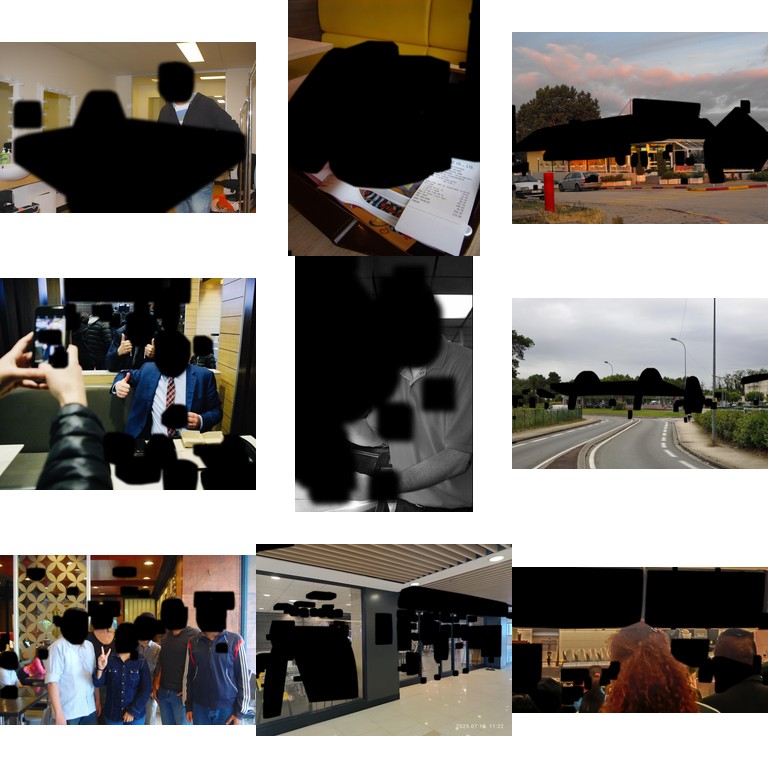}
        \caption{Images obfuscated by \texttt{SAM3/Manual}; the test returns resolution 0 under \texttt{EVA} and utility 0.46.}
        \label{fig:mcdonalds_sam3_obfs}
    \end{subfigure}
    
    \caption{An attempt to sanitize the concept \texttt{the identity of the fast food restaurant} from a collection of nine images capturing the brand McDonald's.}
    \Description{Three nine-image grids show McDonald's images before sanitization and after two mechanisms. The sanitizations obscure different combinations of logos, signs, food, and restaurant features.}
    \label{fig:not_priv_mcdonalds}

\par\medskip
    \centering
    \begin{subfigure}[t]{0.48\textwidth}
        \includegraphics[width=0.9\linewidth, height=6cm]{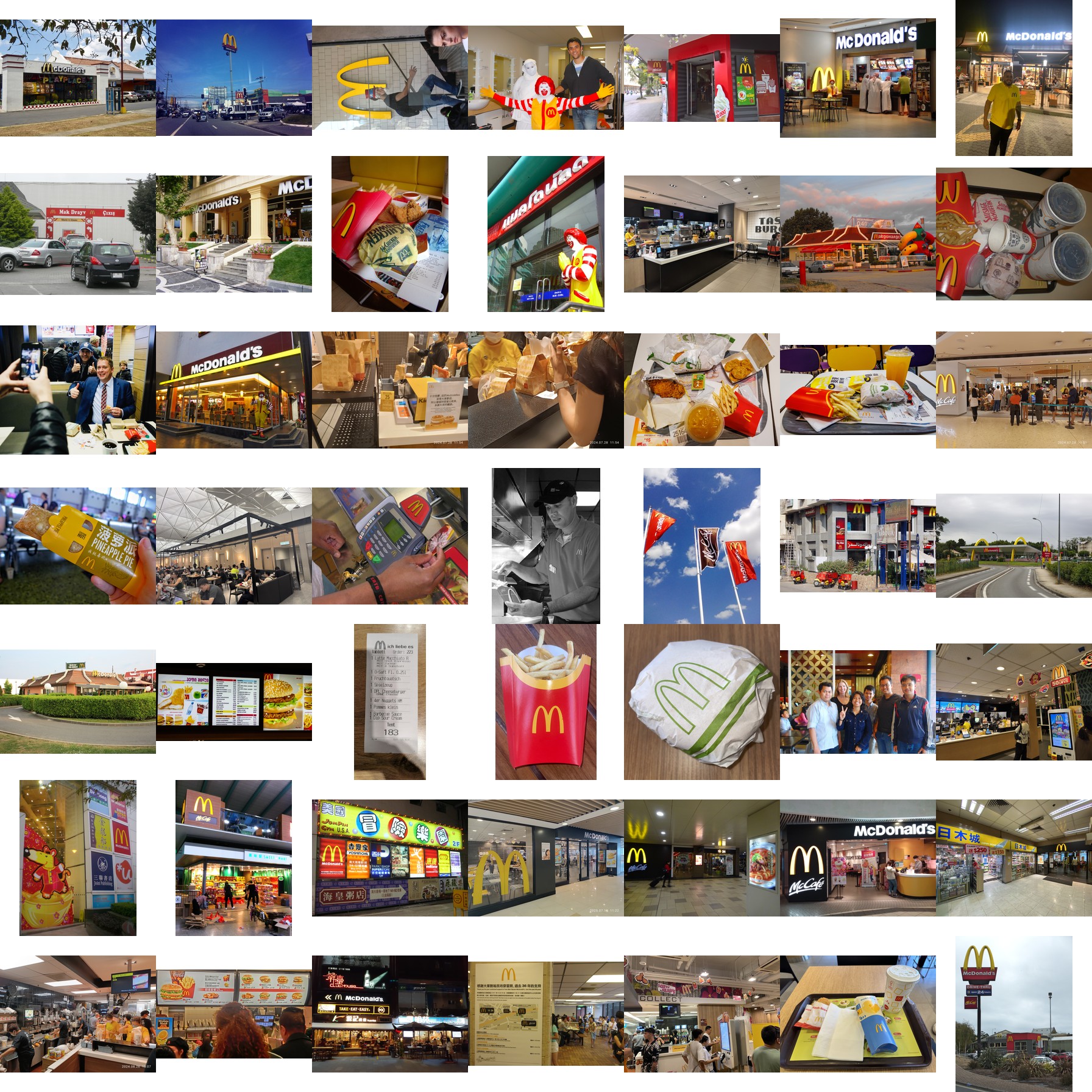} 
        \caption{Unaltered images capturing the McDonald's brand.}
        \label{fig:mcd_49_sum_orig}
    \end{subfigure}%
    \hspace{0.04\textwidth}%
    \begin{subfigure}[t]{0.48\textwidth}
        \includegraphics[width=0.9\linewidth, height=6cm]{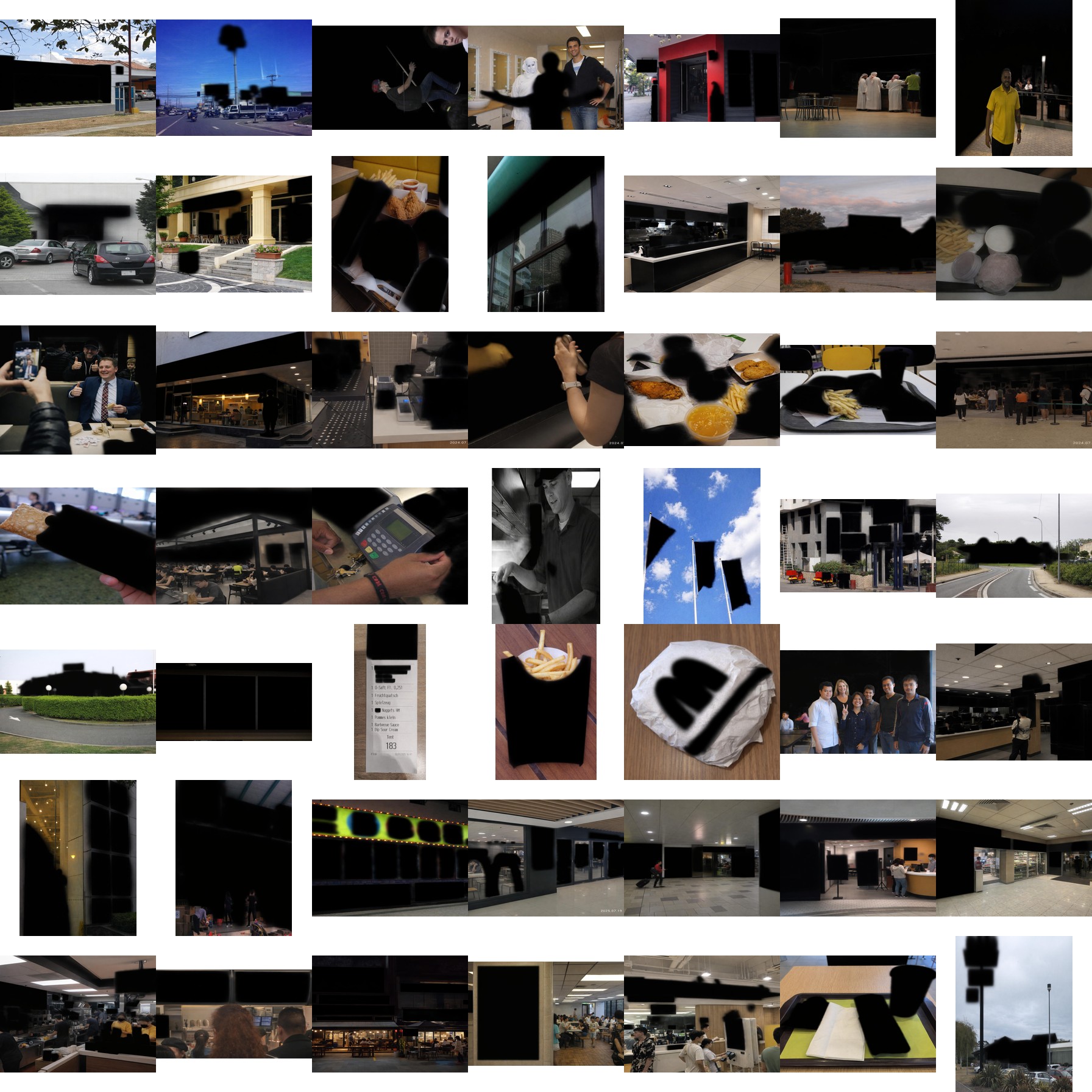}
        \caption{Images with Concept~\ref{con:mcdonalds_gpt-5.2} obfuscated by \texttt{iGPT15}; the test returns resolution 0.17 under \texttt{EVA}.}
        \label{fig:mcd_49_sum_obfs}
    \end{subfigure}
    
    \caption{A sanitized collection of 49 images depicting McDonald's. Resolutions are calculated using the \texttt{EVA} distance mechanism.}
    \Description{Two grids show 49 McDonald's images before sanitization and after an image model obfuscates concepts generated by a language model.}
    \label{fig:priv_mcdonalds_49}
\end{minipage}
}]

\clearpage
\twocolumn[{
\begin{minipage}{\textwidth}
\section{Decisive Witnesses}
\label{sec:privacy_failures}
\captionsetup{type=figure}
\centering  
\small

\setlength{\aboverulesep}{1pt}
\setlength{\belowrulesep}{1pt}
\setlength{\fboxrule}{1.5pt}

\resizebox{0.99\textwidth}{!}{
\begin{tabular}{@{} l @{\hspace{3ex}} c c c c c @{}}
\toprule
Model 
& \thead{\texttt{GPT54}}
& \thead{\texttt{GEM31p}} 
& \thead{\texttt{OPS46}}
& \thead{\texttt{Manual}}
& \thead{None} \\
\midrule
\addlinespace[1.5ex]

\texttt{iGPT15}
& \includegraphics[width=0.17\textwidth, valign=m]{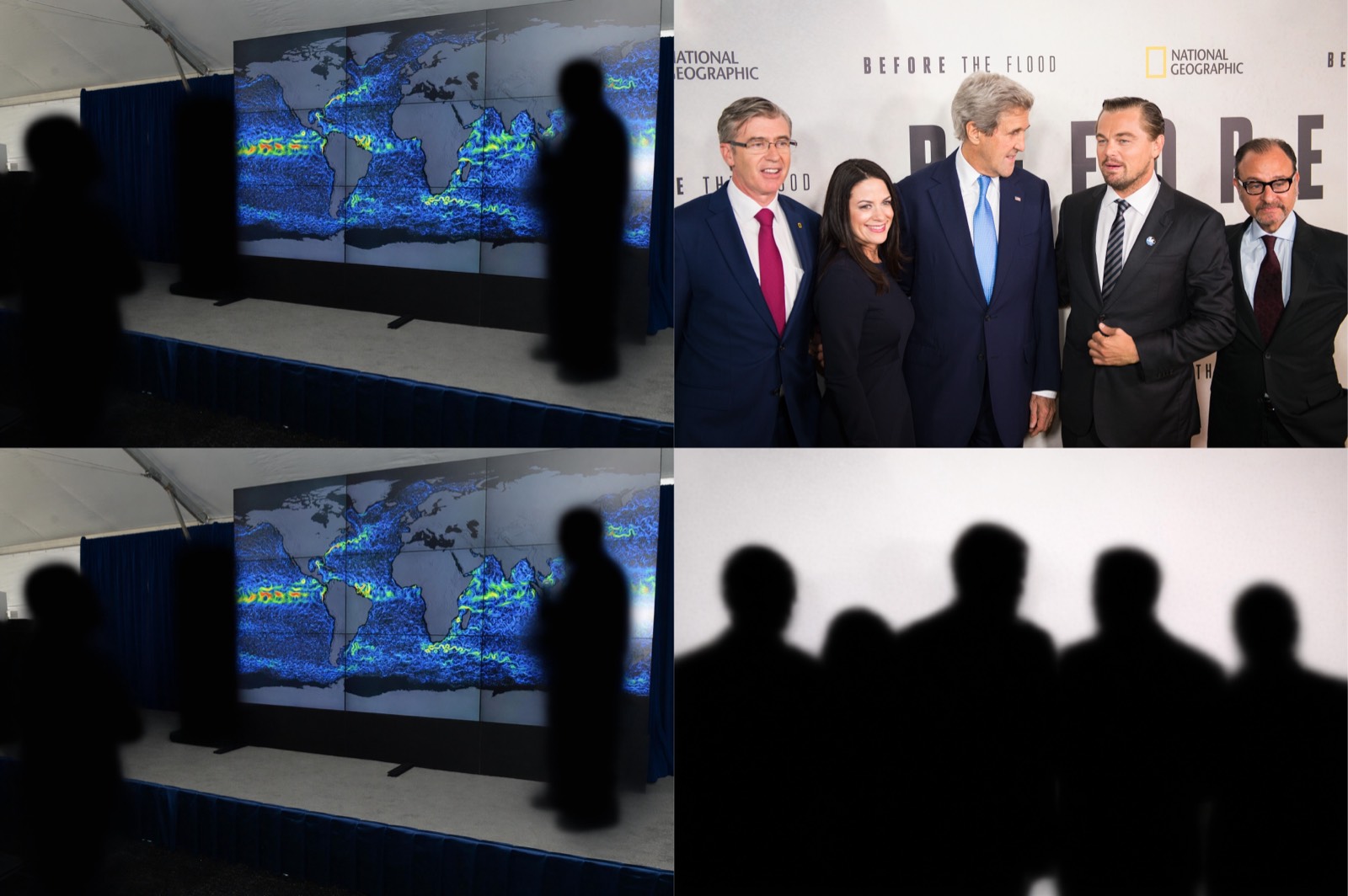}
& \includegraphics[width=0.17\textwidth, valign=m]{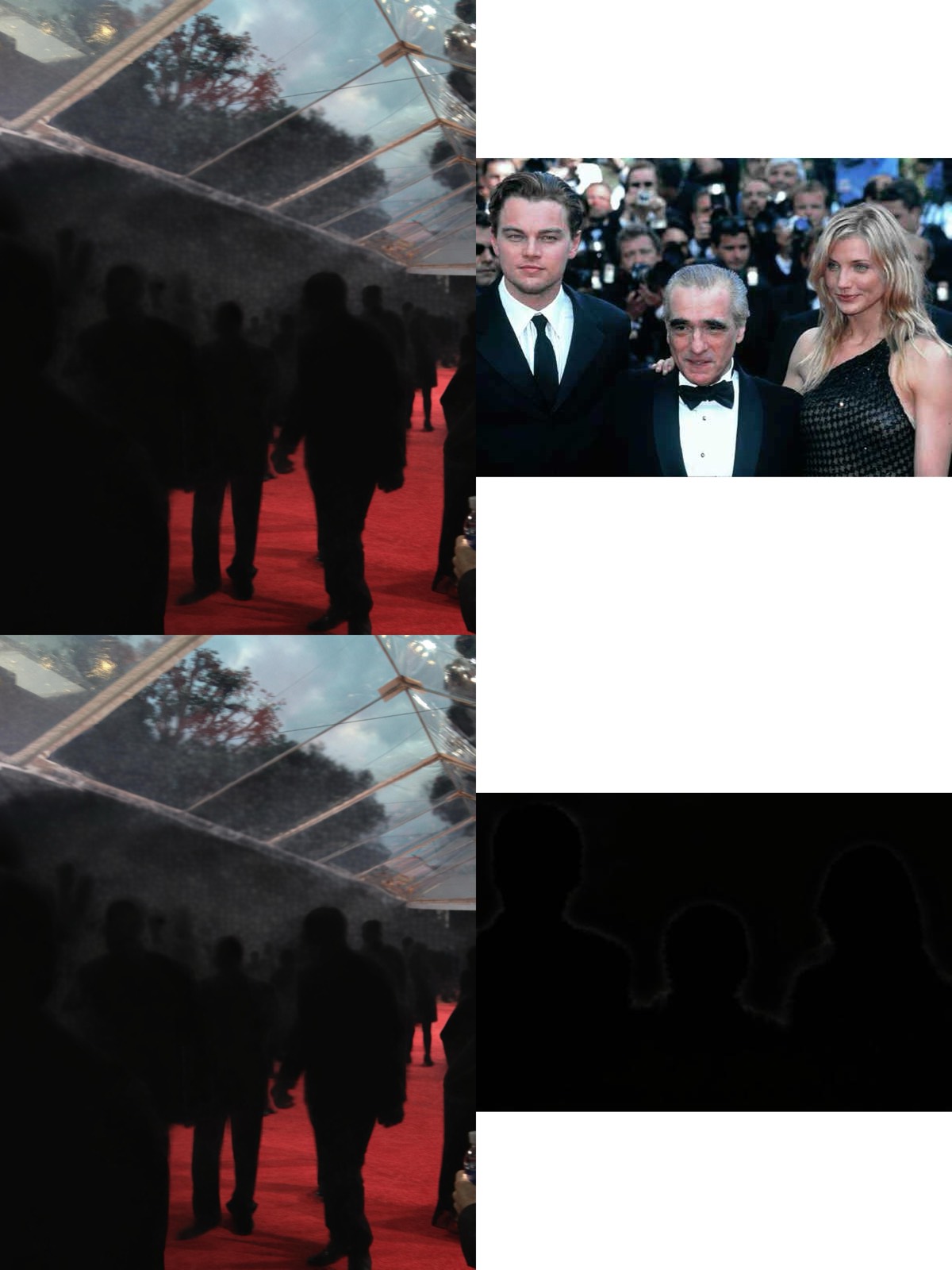}
& \includegraphics[width=0.17\textwidth, valign=m]{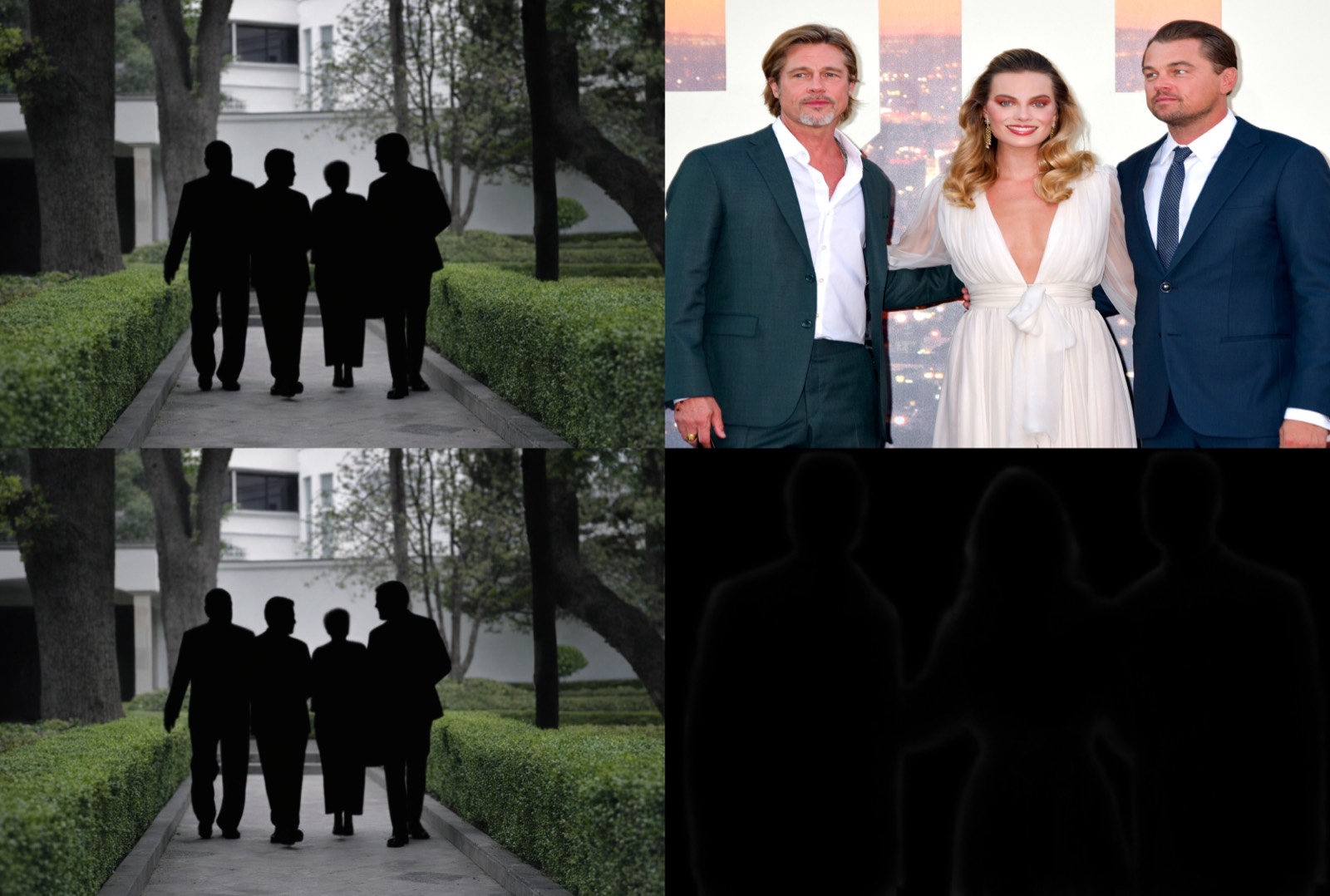}
& 
& \includegraphics[width=0.17\textwidth, valign=m]{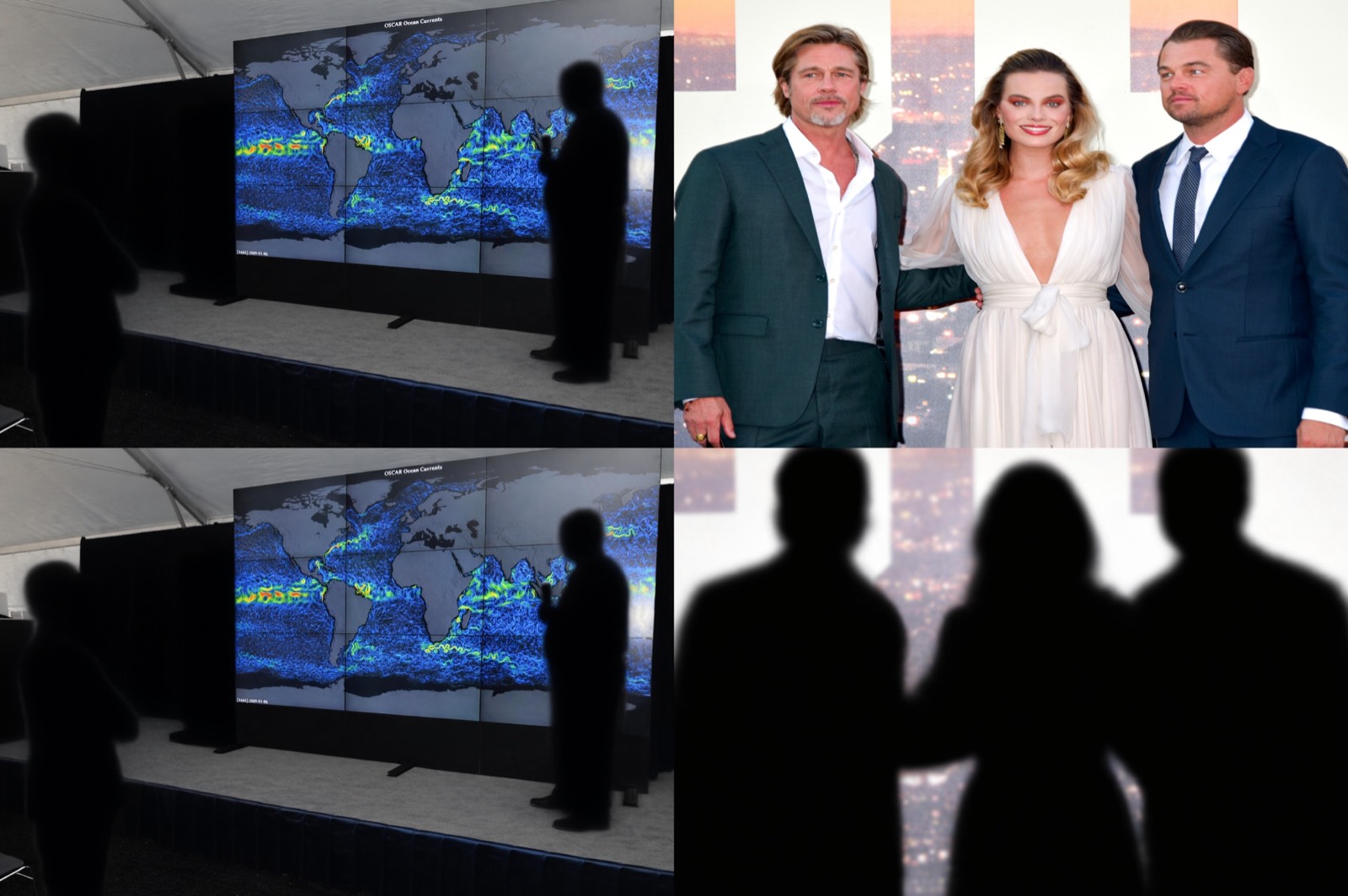} \\[1.5ex] 

\texttt{iGPT1m}
& \includegraphics[width=0.17\textwidth, valign=m]{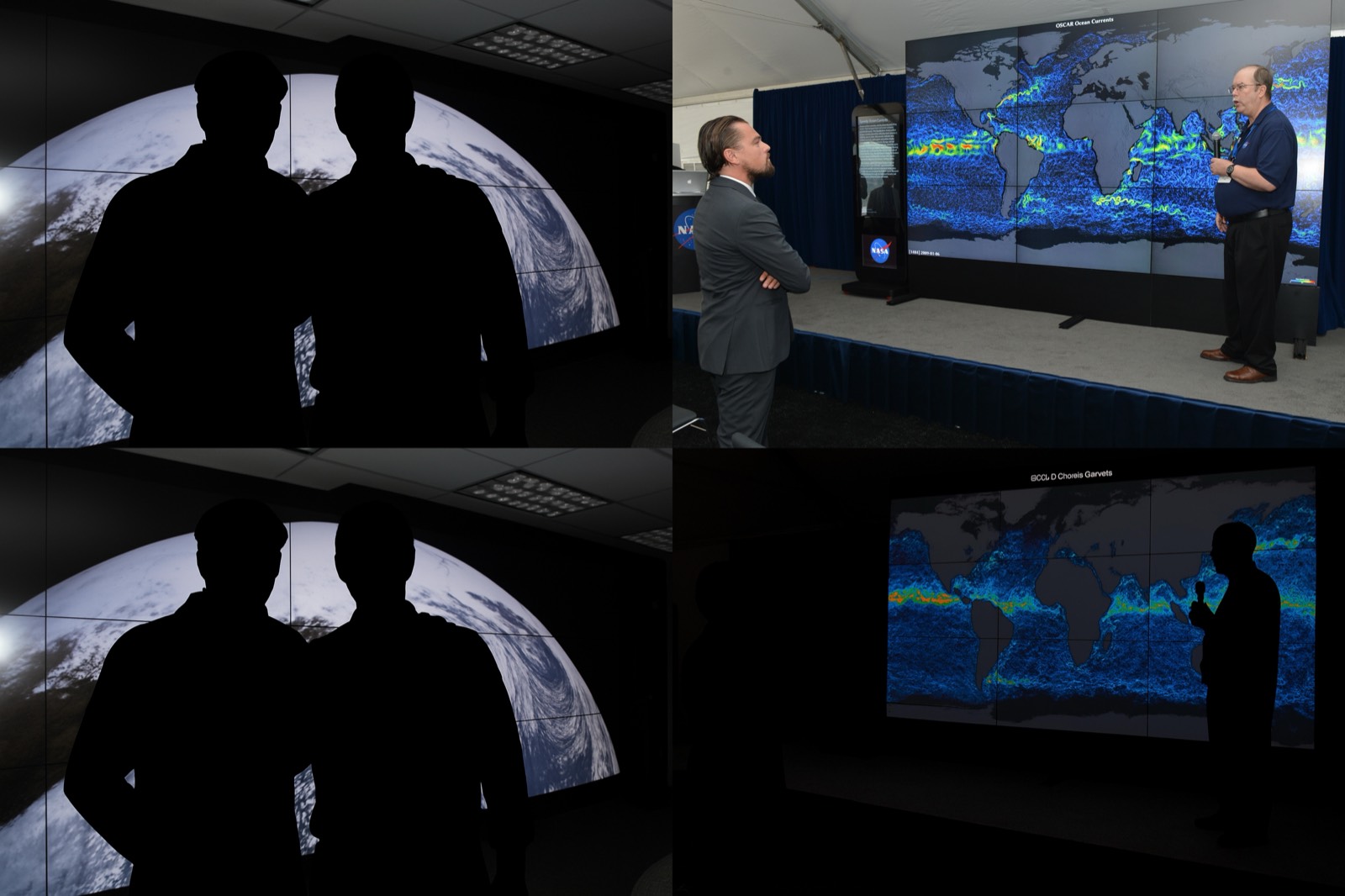}
& \includegraphics[width=0.17\textwidth, valign=m]{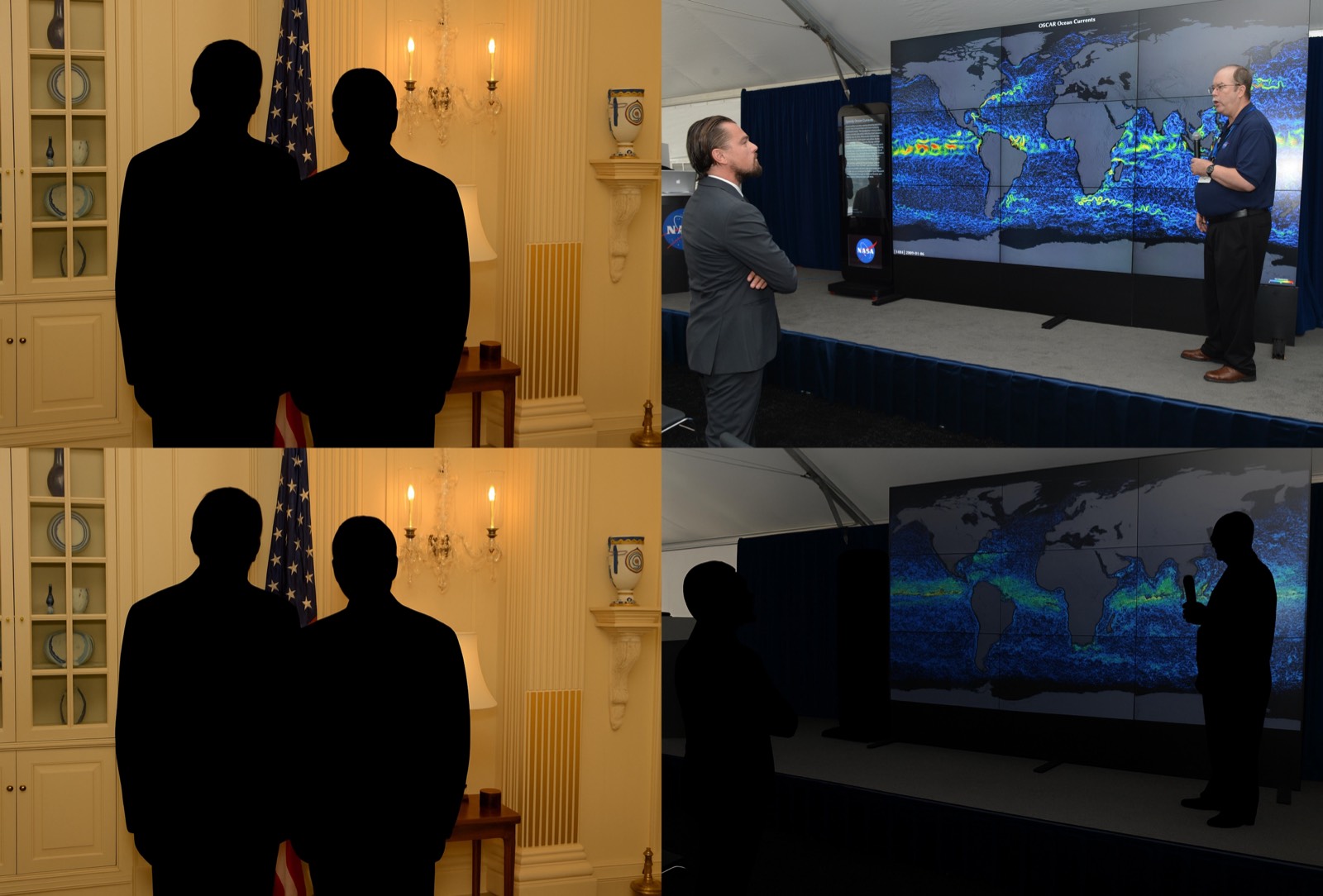}
& \includegraphics[width=0.17\textwidth, valign=m]{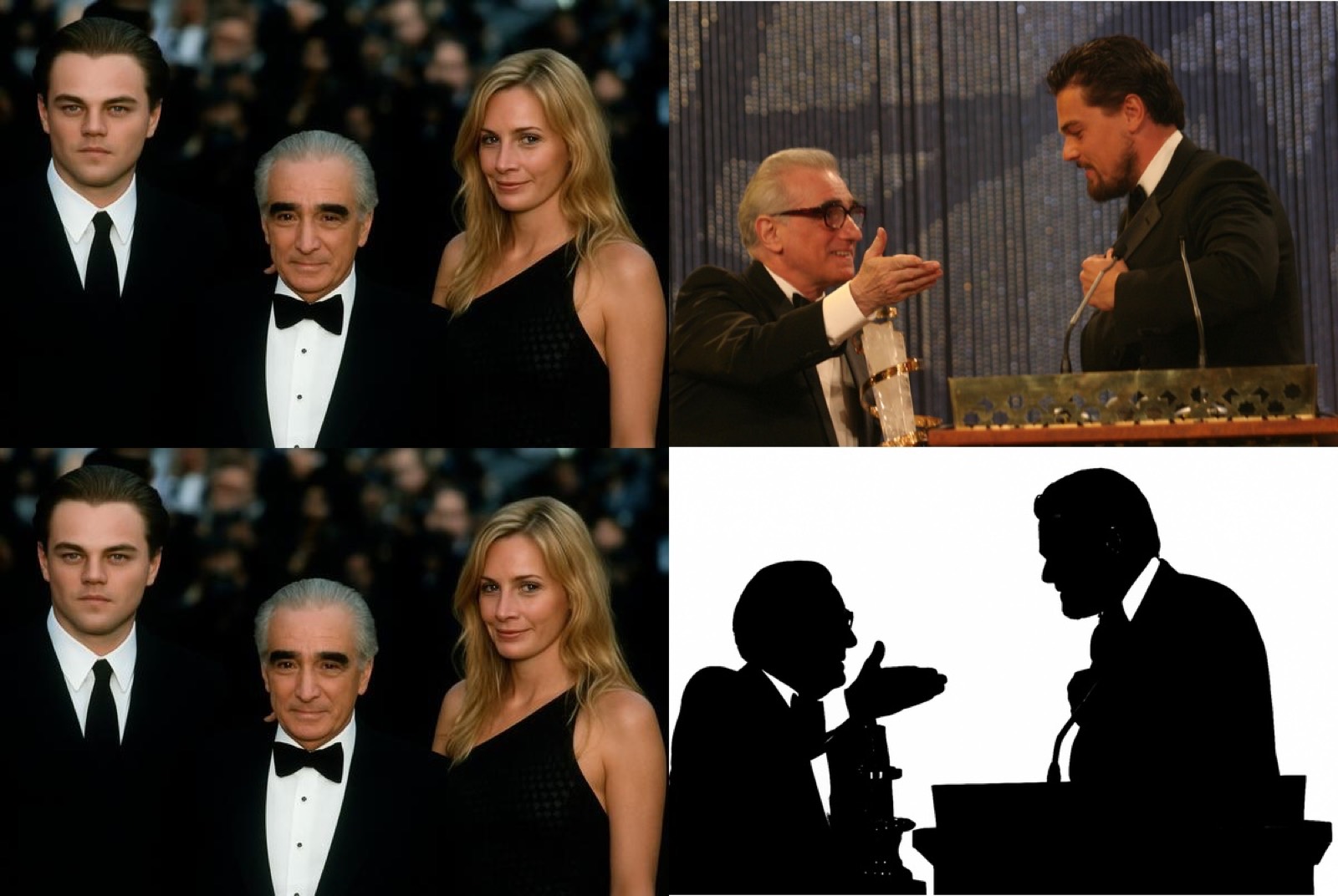}
& \includegraphics[width=0.17\textwidth, valign=m]{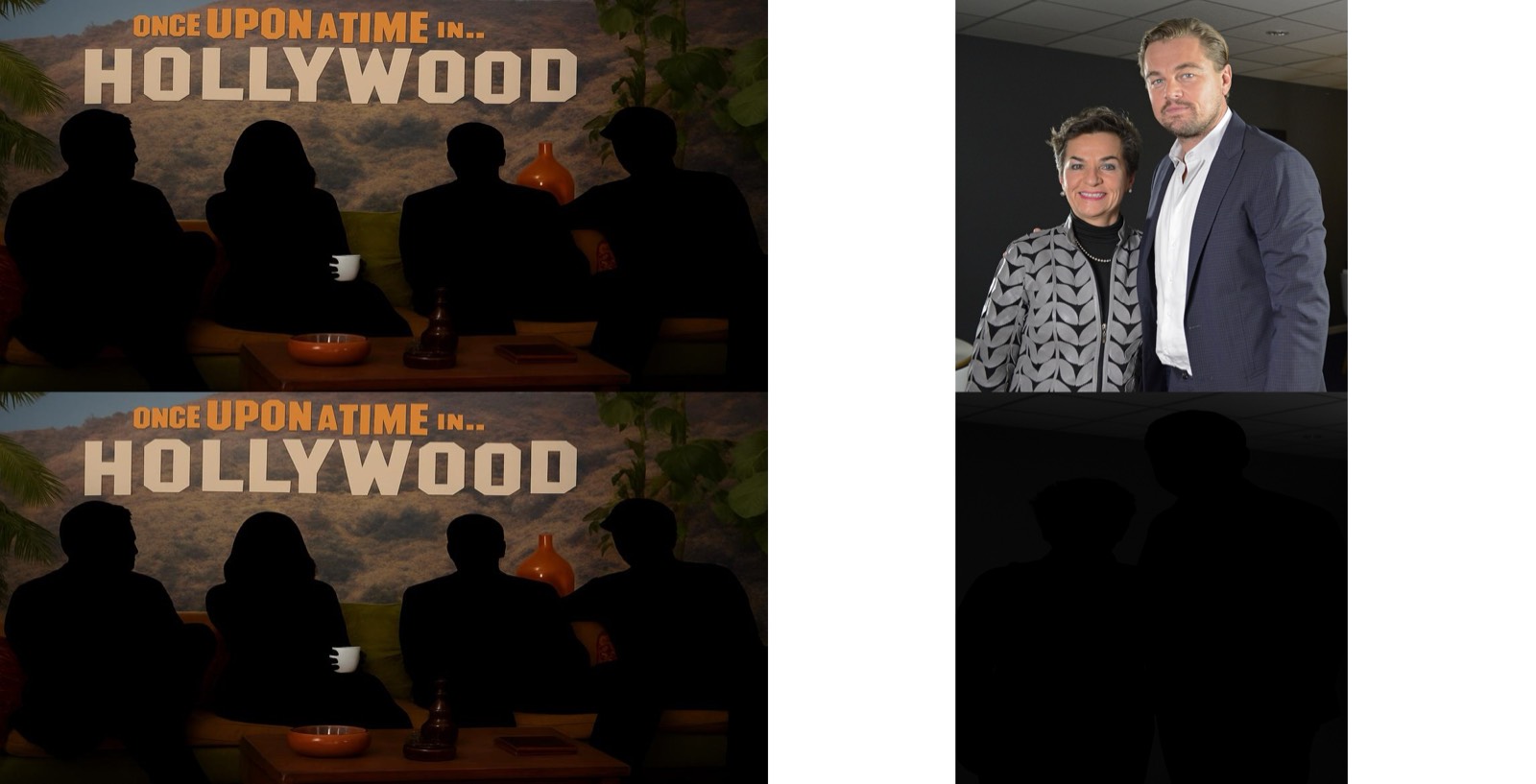}
& \includegraphics[width=0.17\textwidth, valign=m]{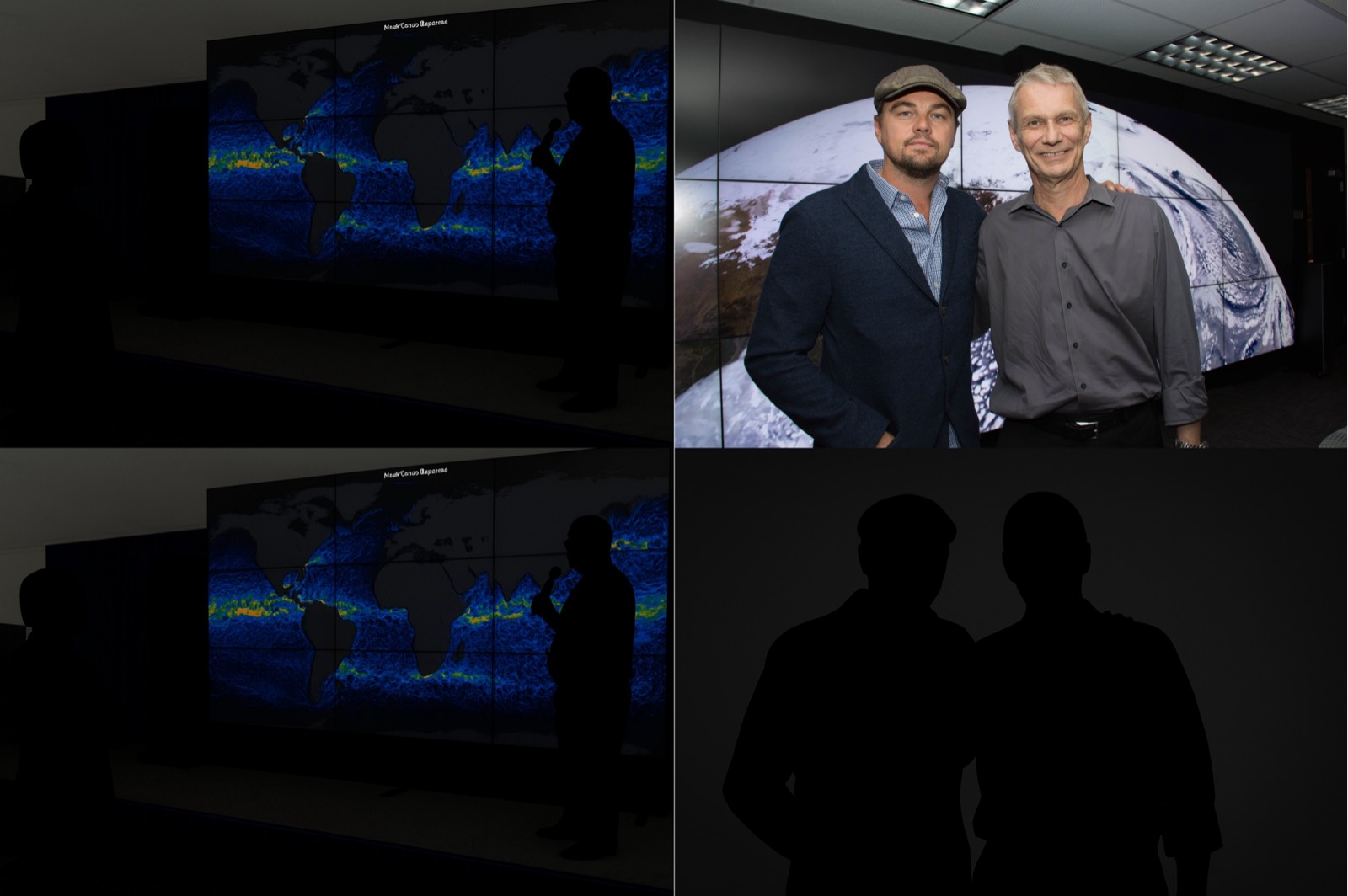} \\[1.5ex]

\texttt{iGEM3p}
& \includegraphics[width=0.17\textwidth, valign=m]{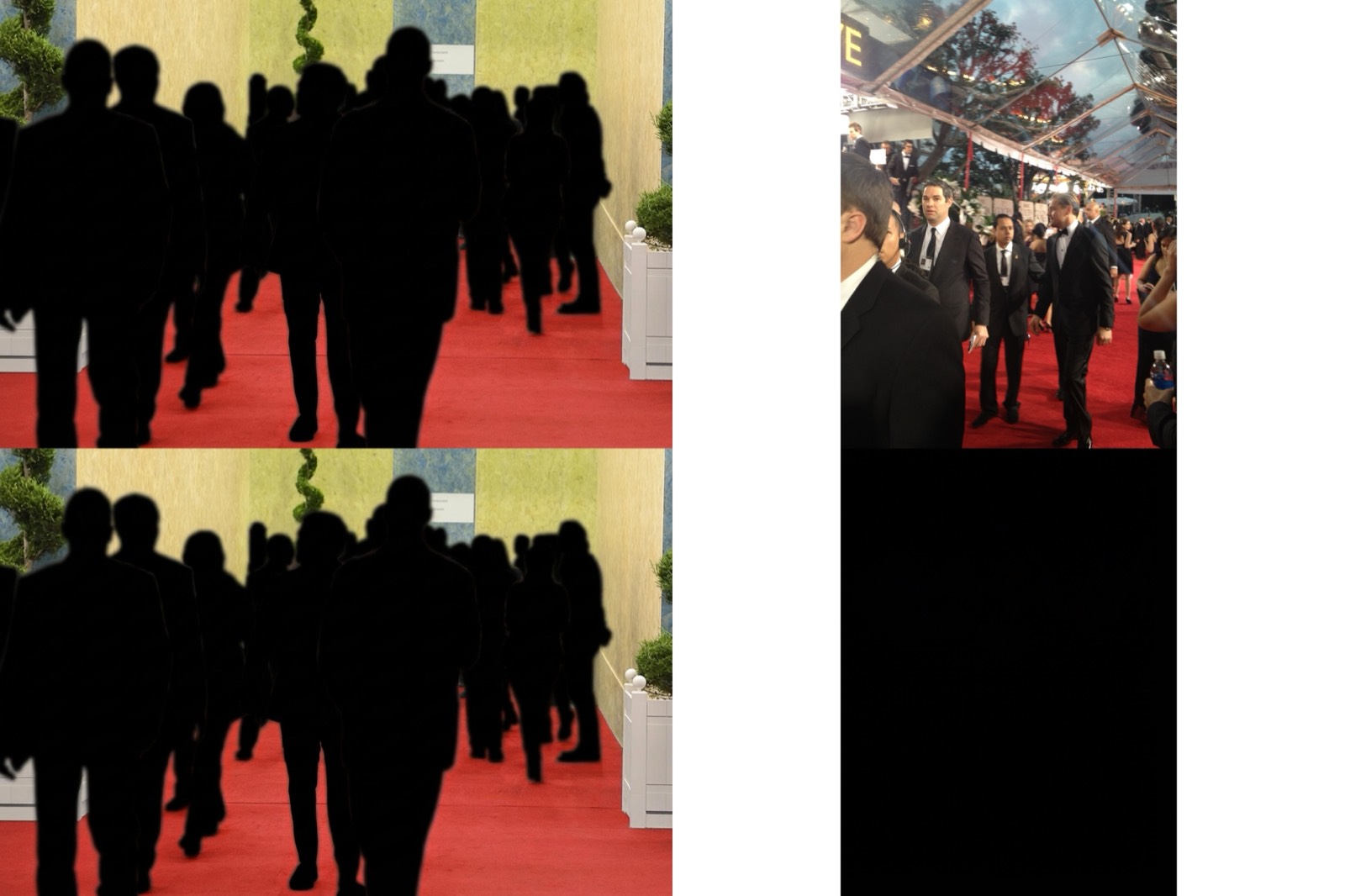}
& \includegraphics[width=0.17\textwidth, valign=m]{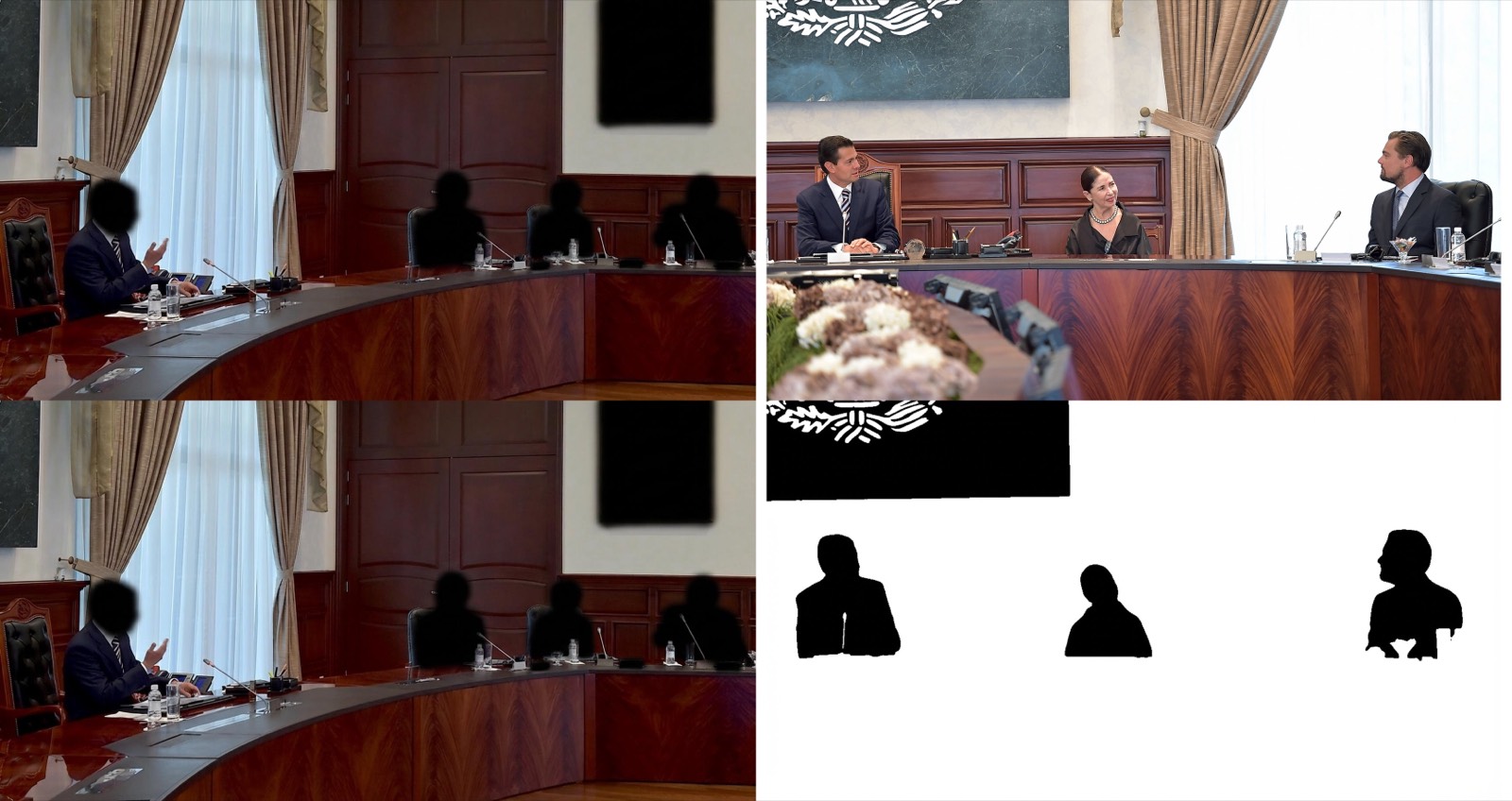}
& \includegraphics[width=0.17\textwidth, valign=m]{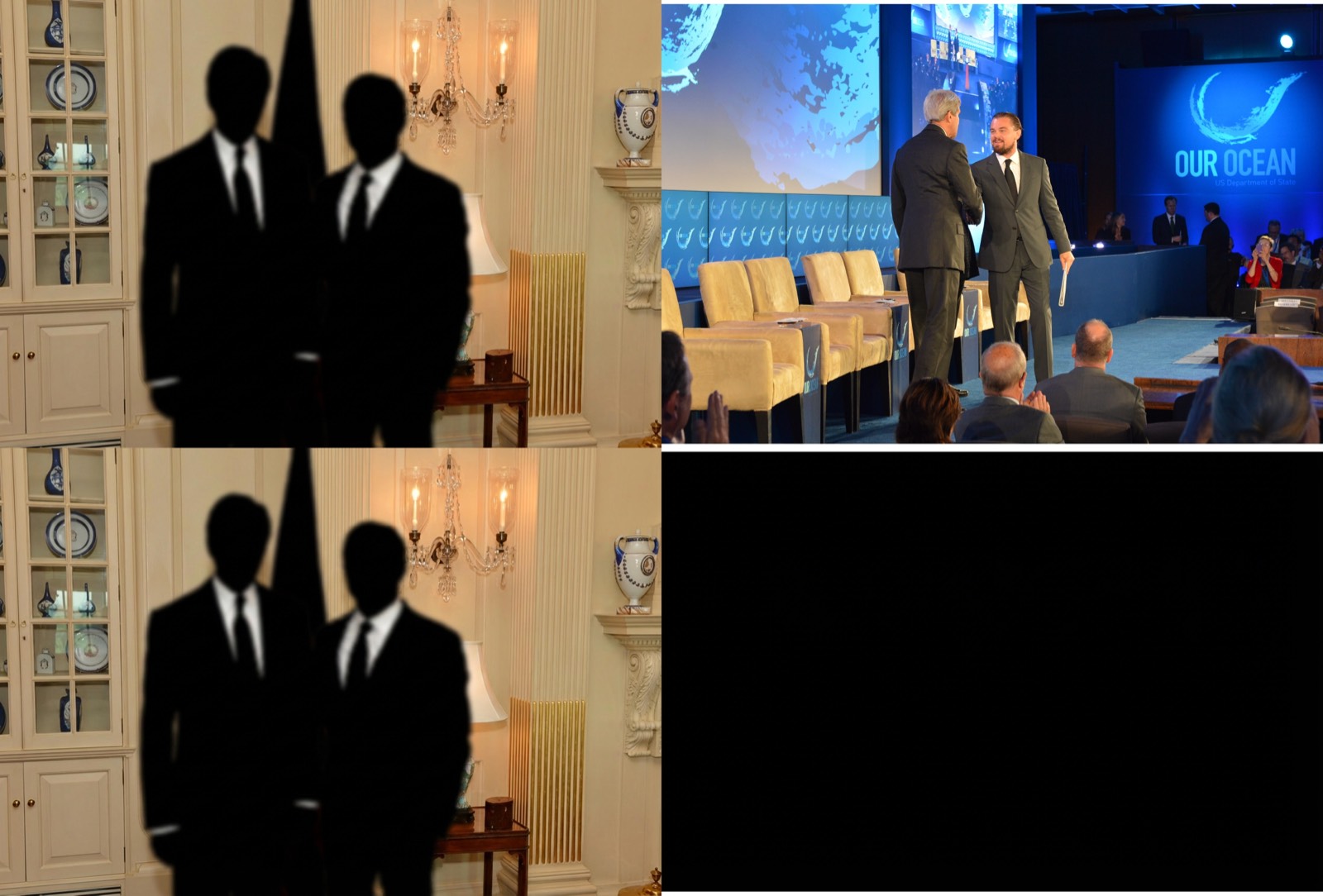}
& \includegraphics[width=0.17\textwidth, valign=m]{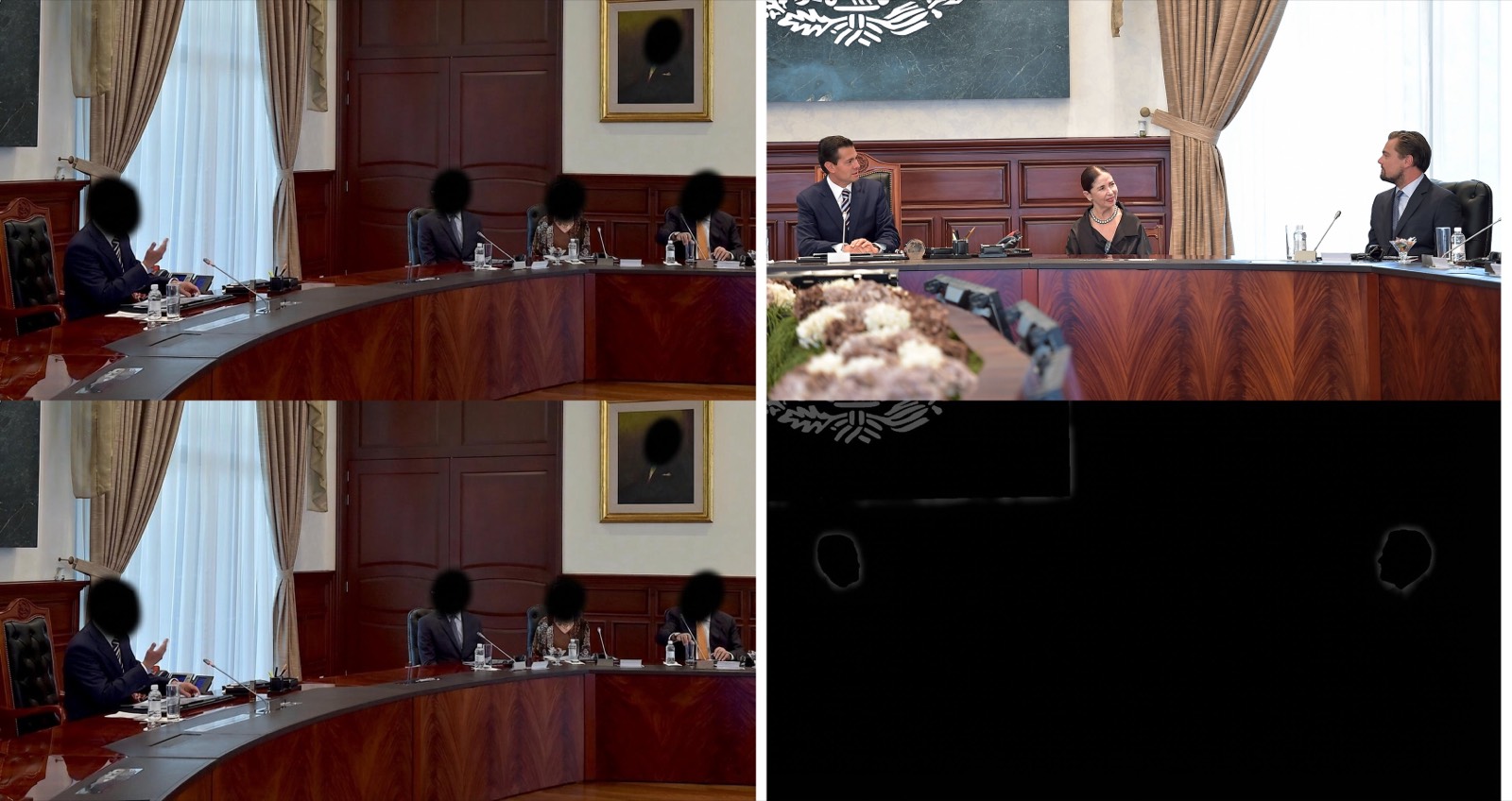}
& \includegraphics[width=0.17\textwidth, valign=m]{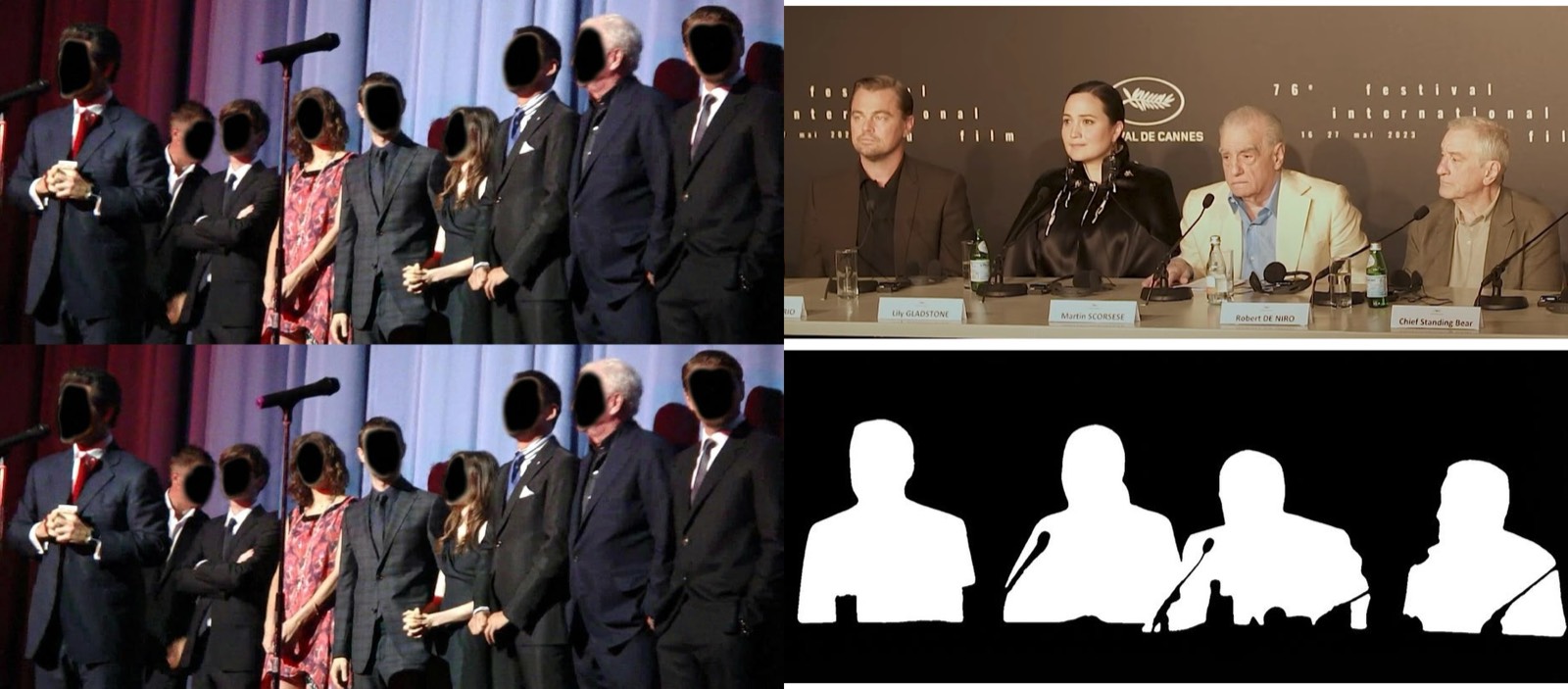} \\[1.5ex]

\texttt{iGEM31f}
& \includegraphics[width=0.17\textwidth, valign=m]{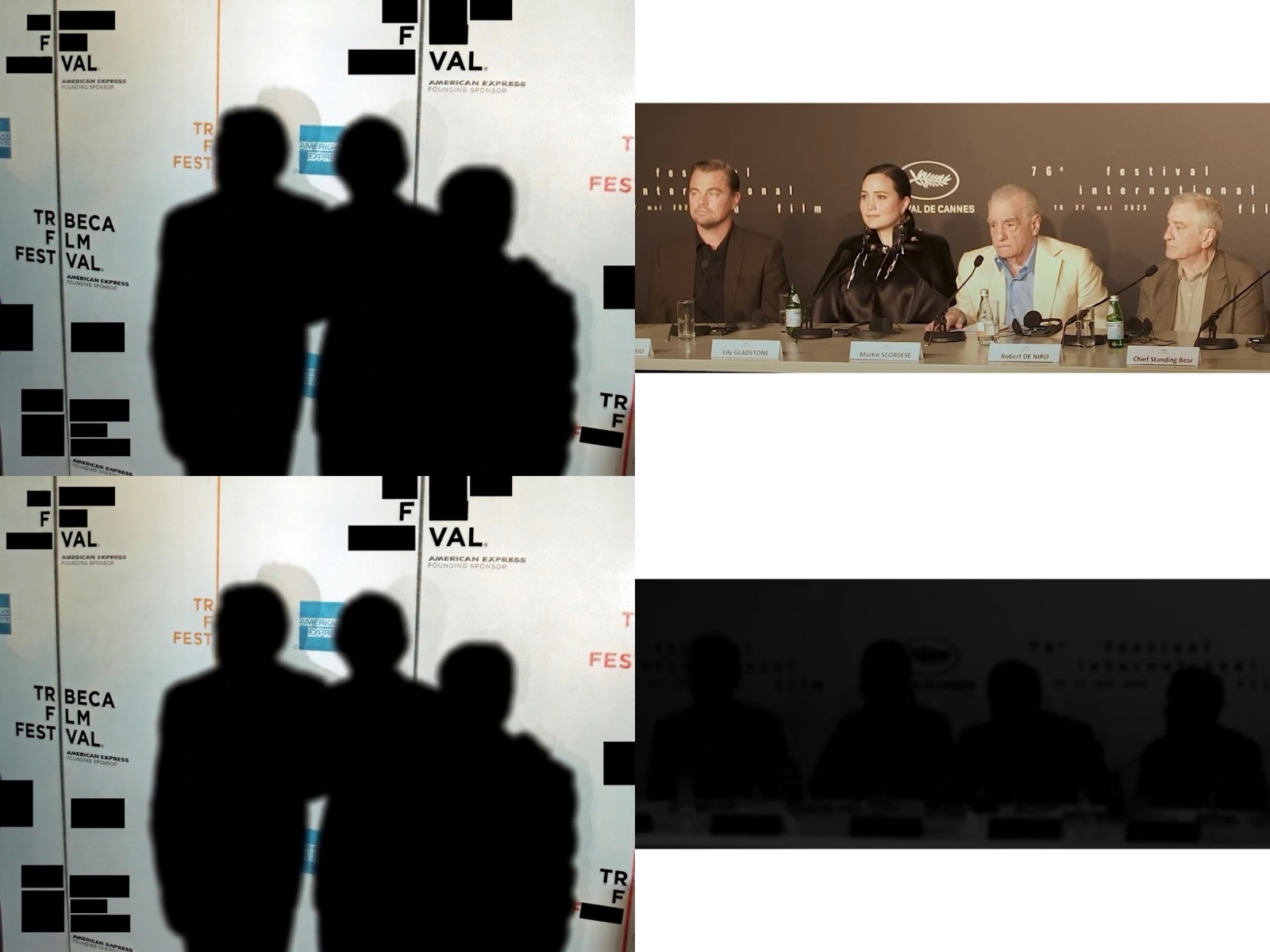}
& \includegraphics[width=0.17\textwidth, valign=m]{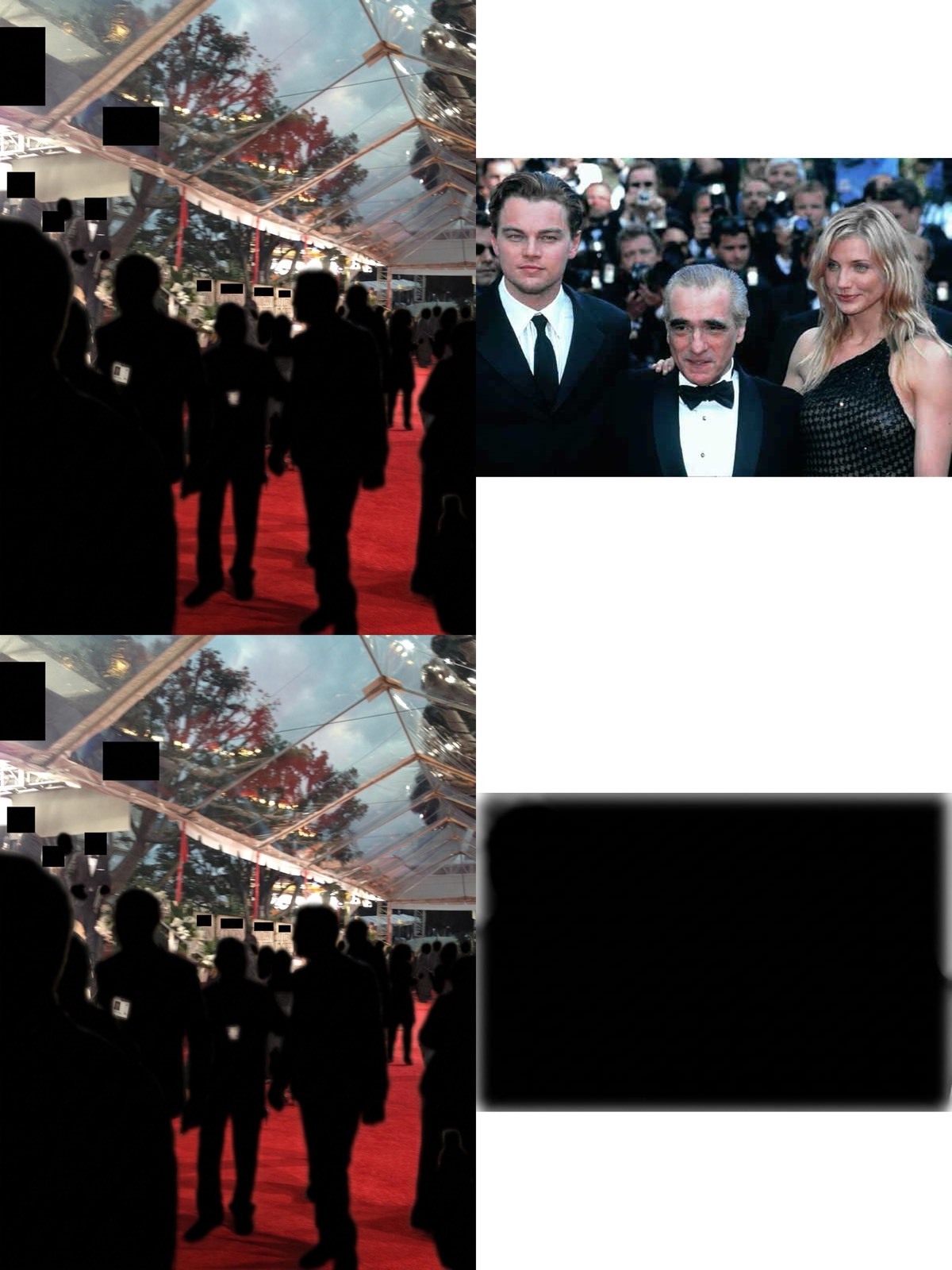}
& \includegraphics[width=0.17\textwidth, valign=m]{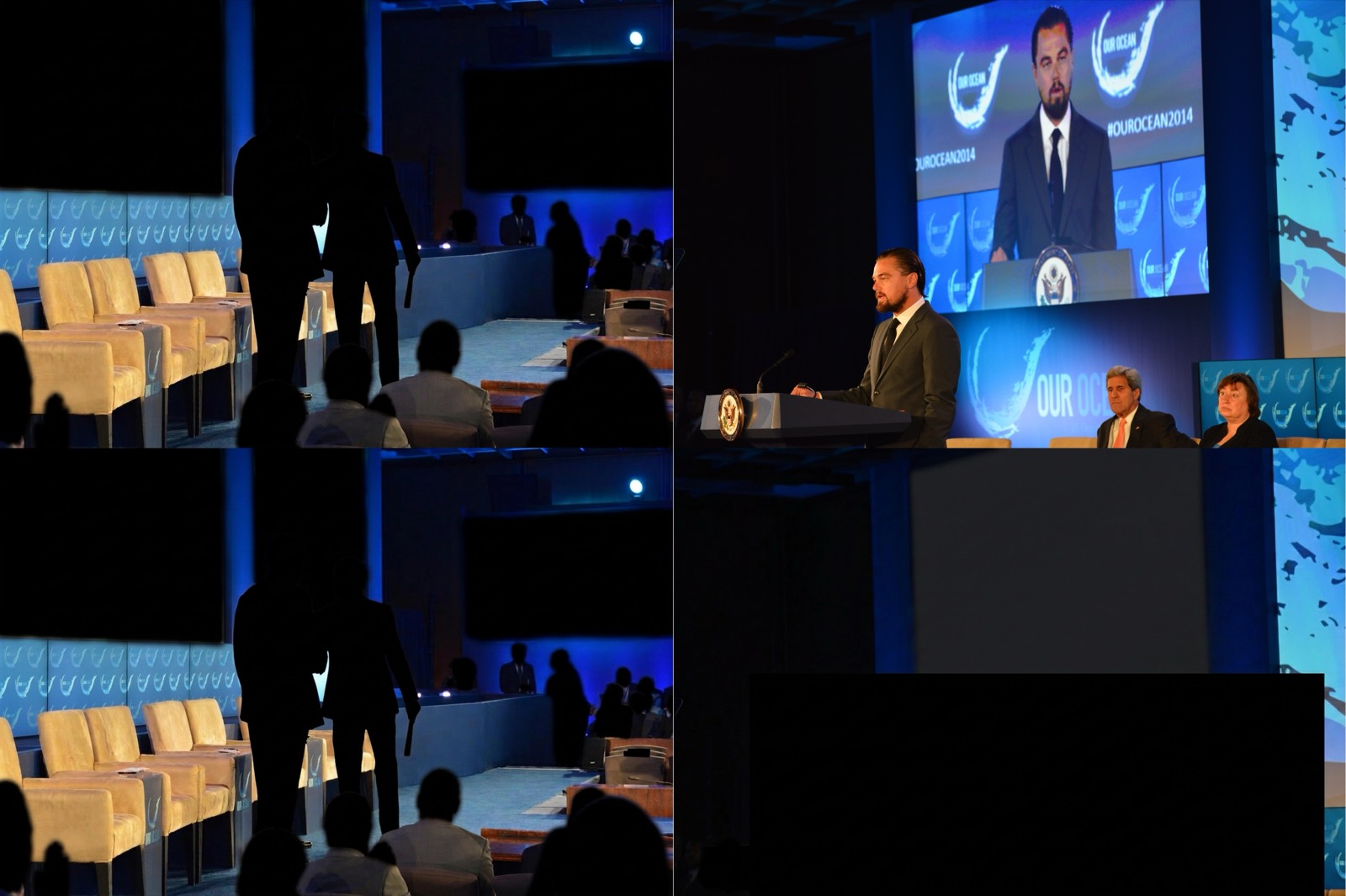}
& 
& \includegraphics[width=0.17\textwidth, valign=m]{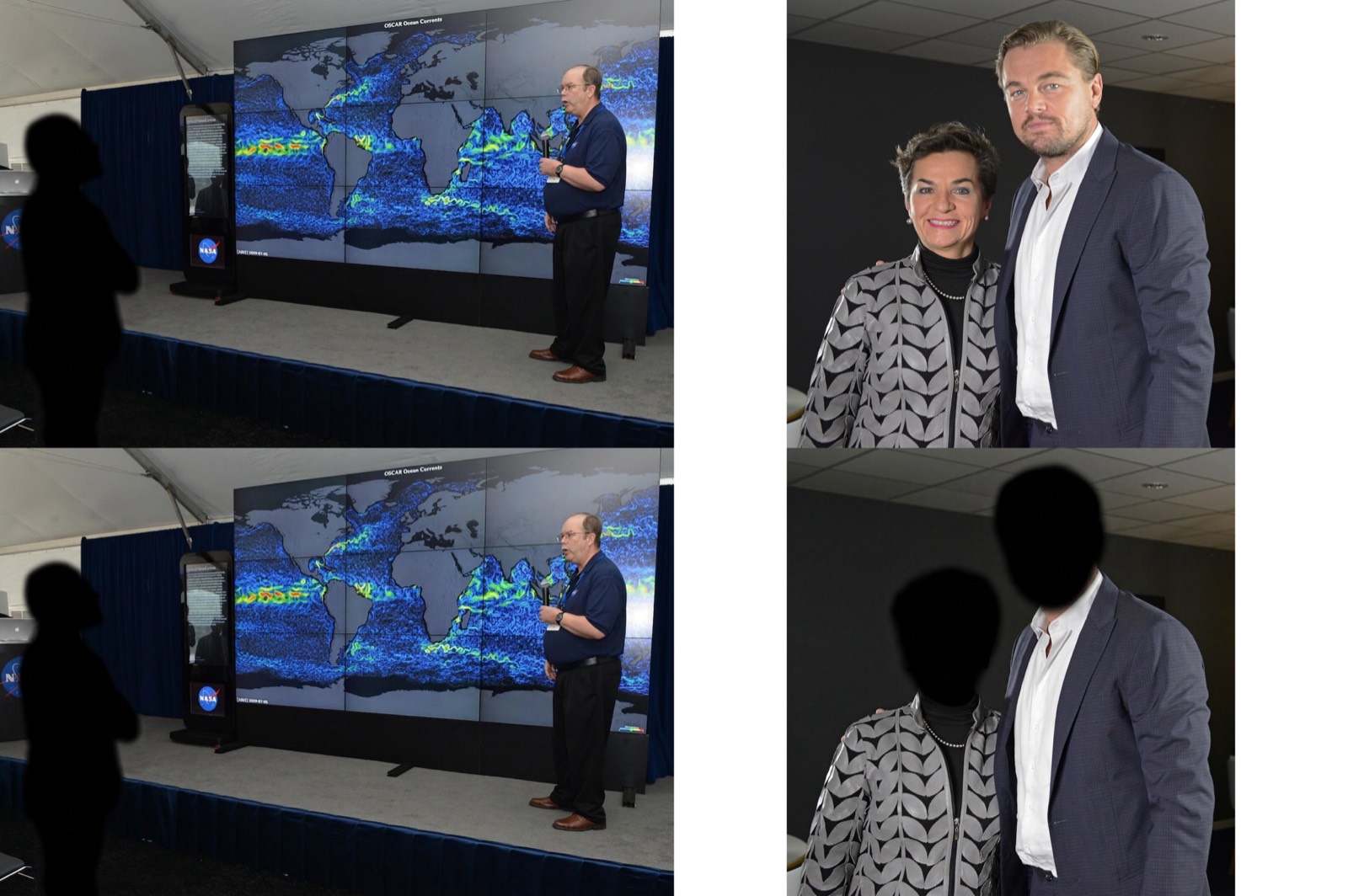} \\[1.5ex]

\texttt{FLX2p}
& \includegraphics[width=0.17\textwidth, valign=m]{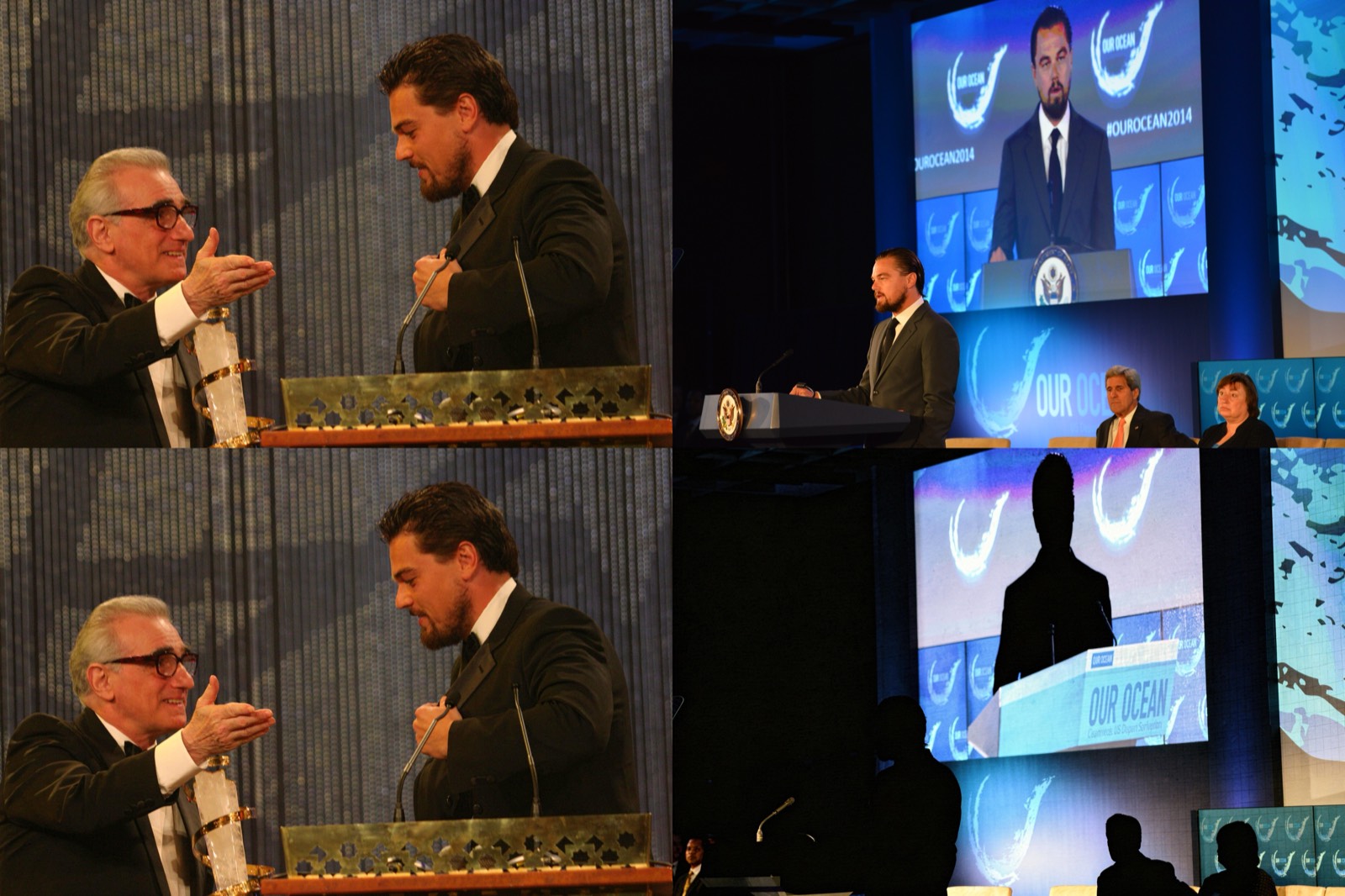}
& \includegraphics[width=0.17\textwidth, valign=m]{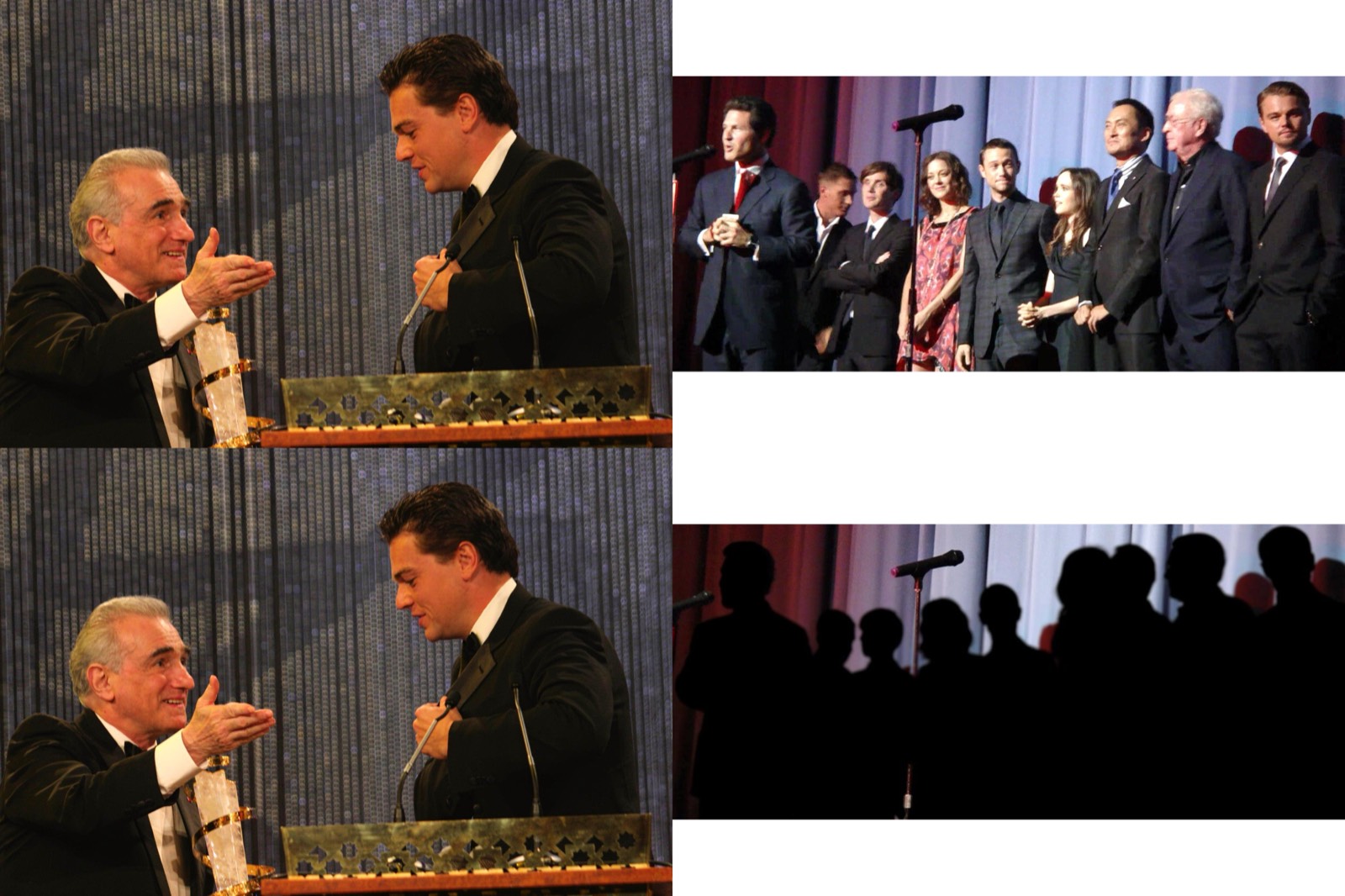}
& \includegraphics[width=0.17\textwidth, valign=m]{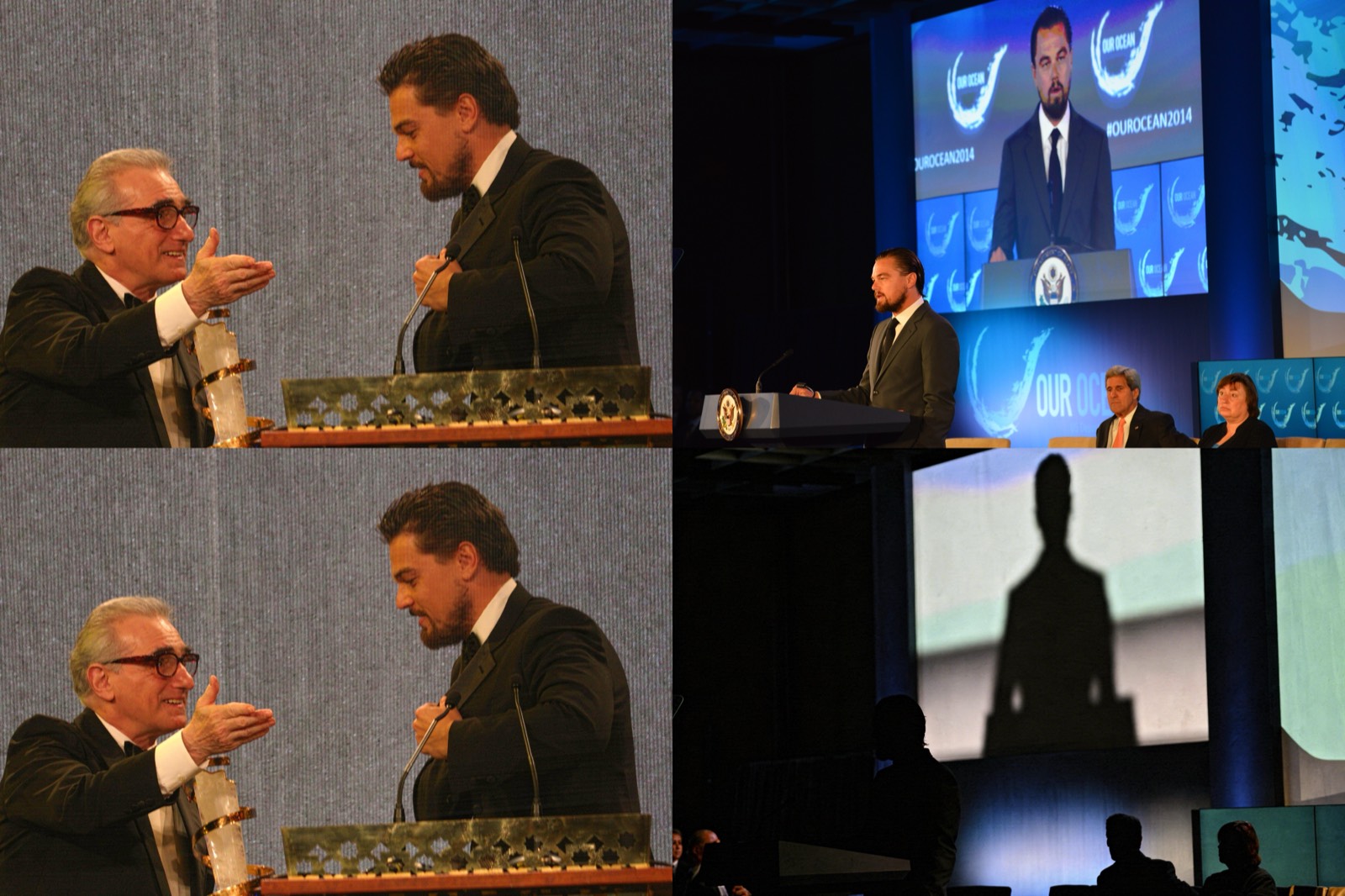}
& \includegraphics[width=0.17\textwidth, valign=m]{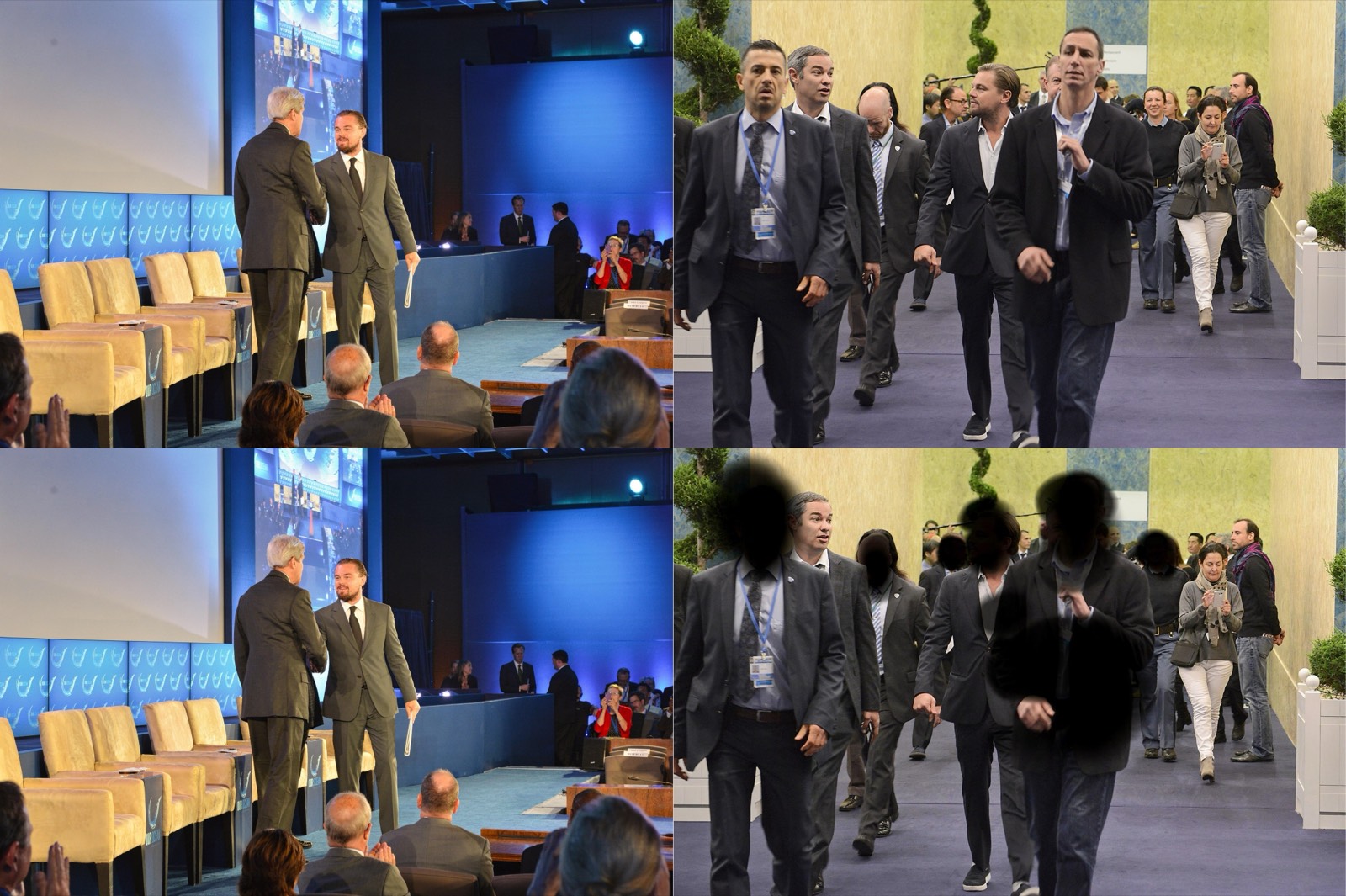}
& \includegraphics[width=0.17\textwidth, valign=m]{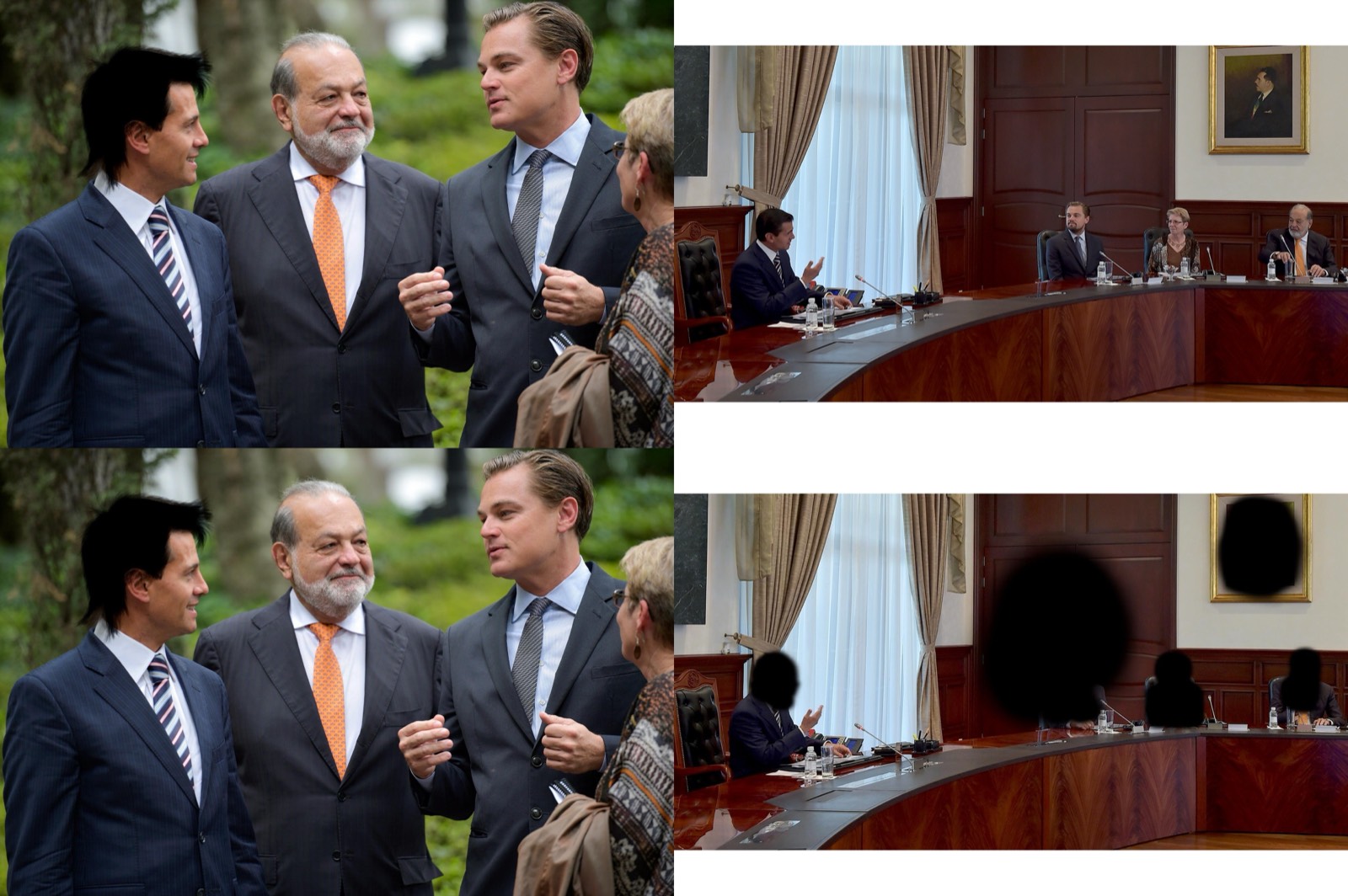} \\[1.5ex]

\texttt{FLX2d}
& \includegraphics[width=0.17\textwidth, valign=m]{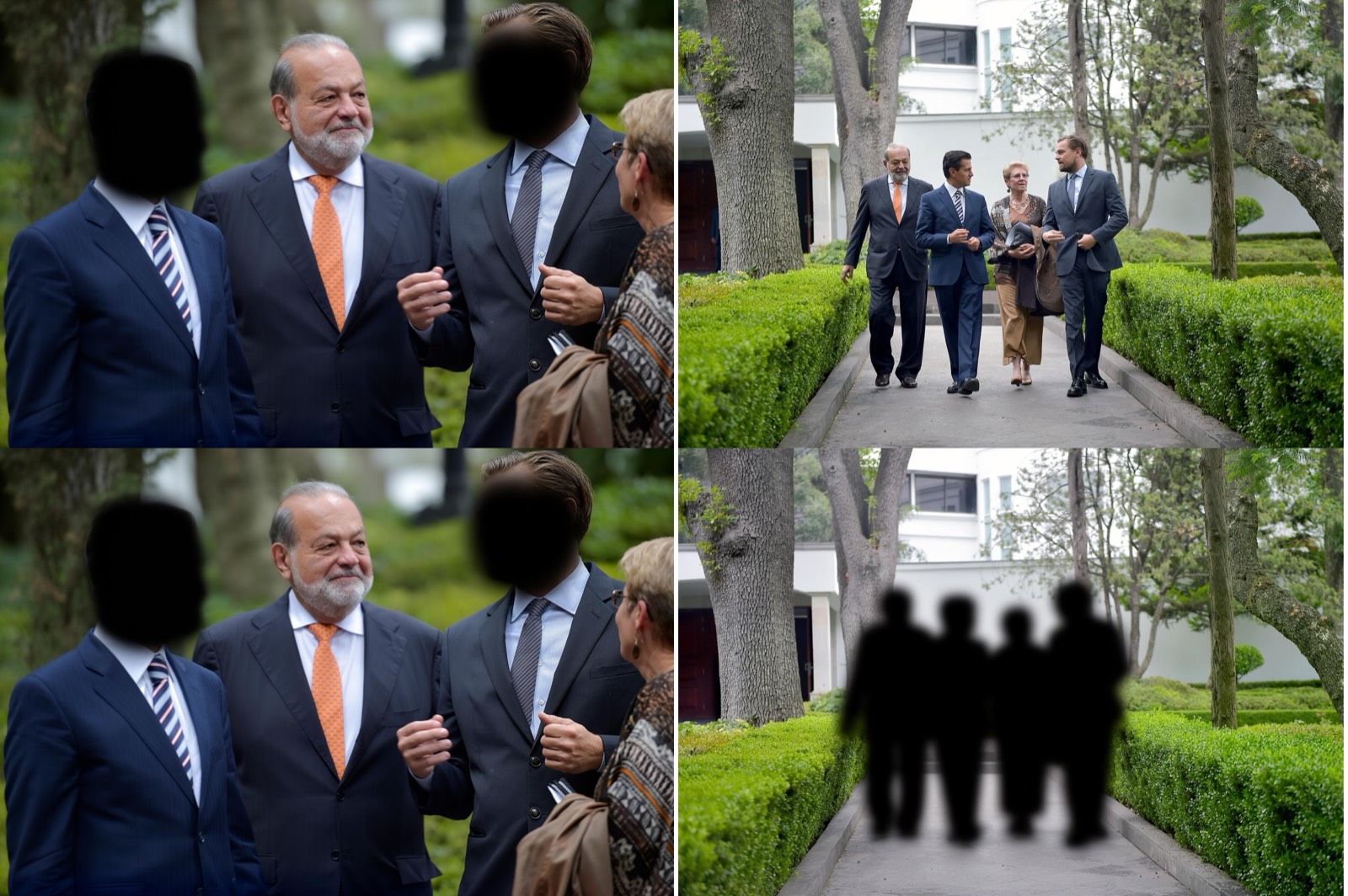}
& \includegraphics[width=0.17\textwidth, valign=m]{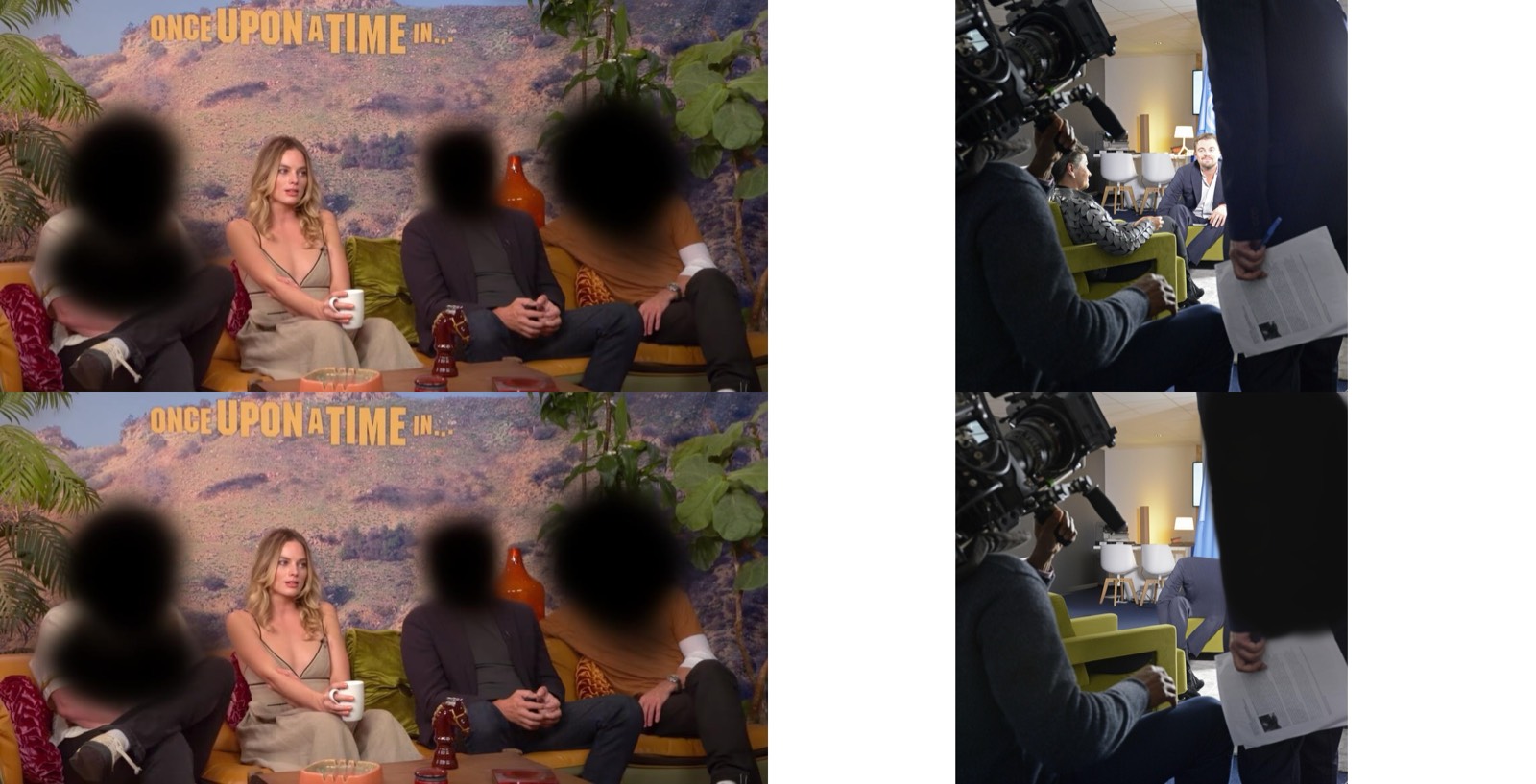}
& \includegraphics[width=0.17\textwidth, valign=m]{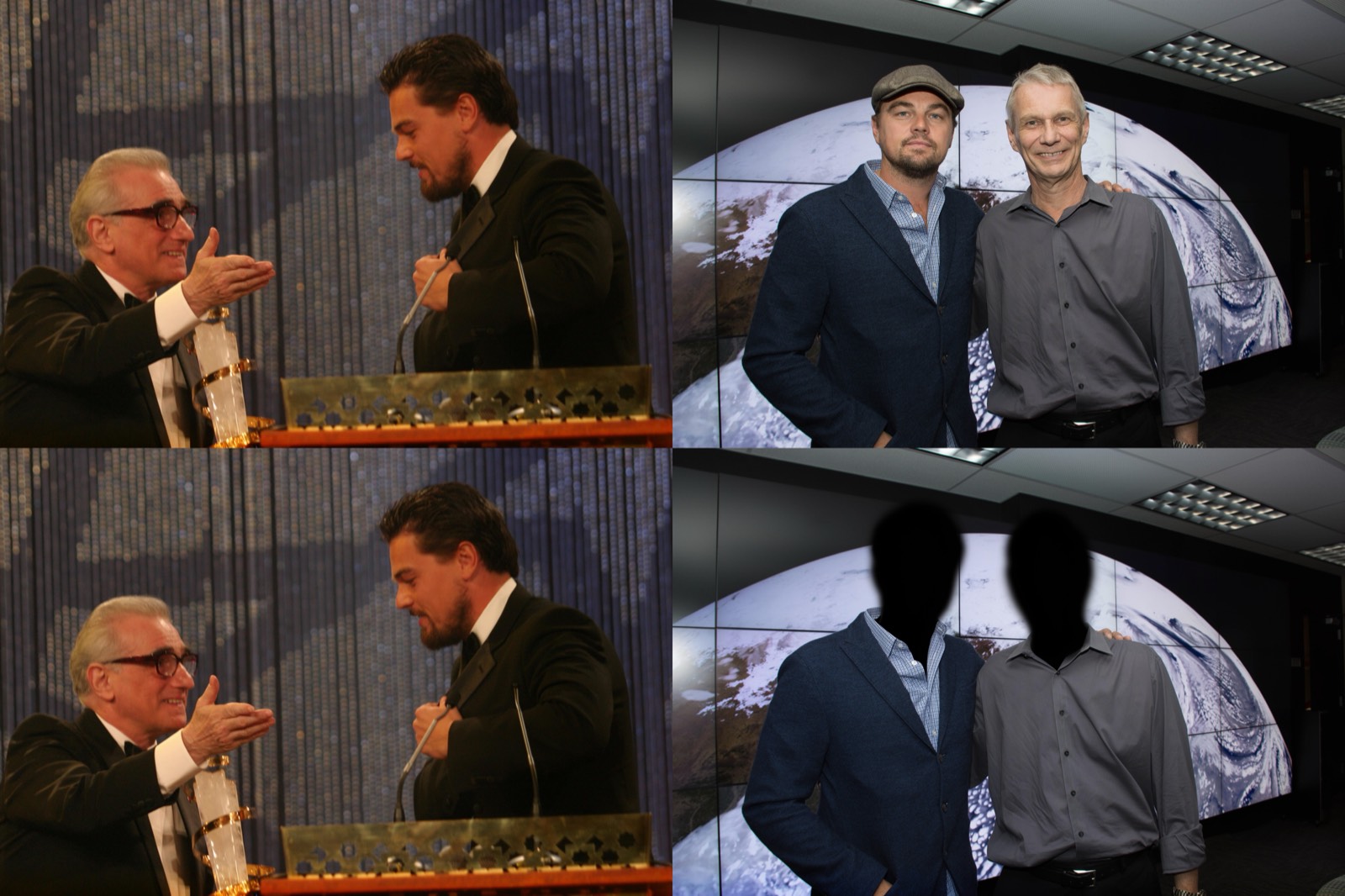}
& \includegraphics[width=0.17\textwidth, valign=m]{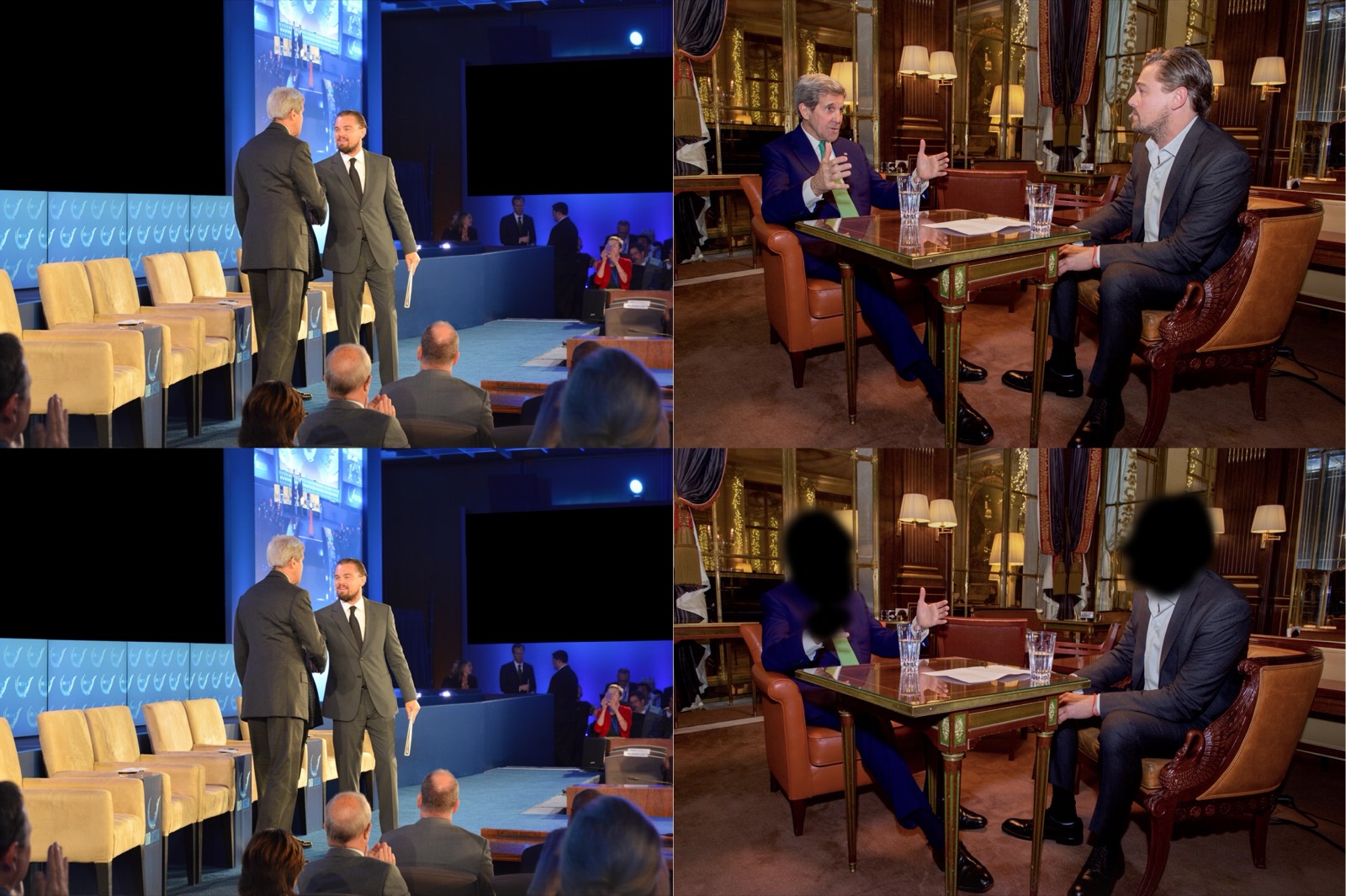}
& \includegraphics[width=0.17\textwidth, valign=m]{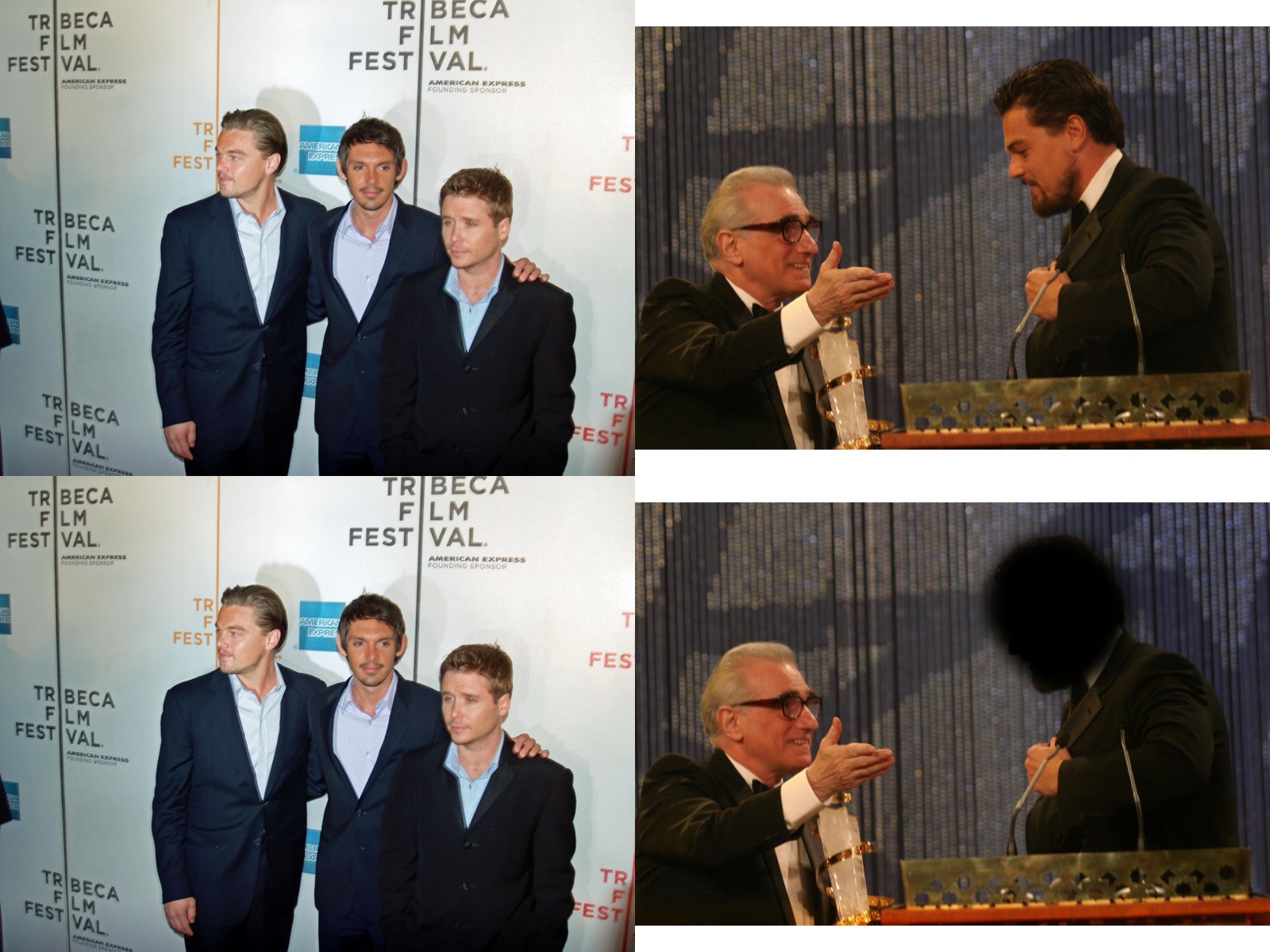} \\[1.5ex]

\texttt{SAM3}
& \includegraphics[width=0.17\textwidth, valign=m]{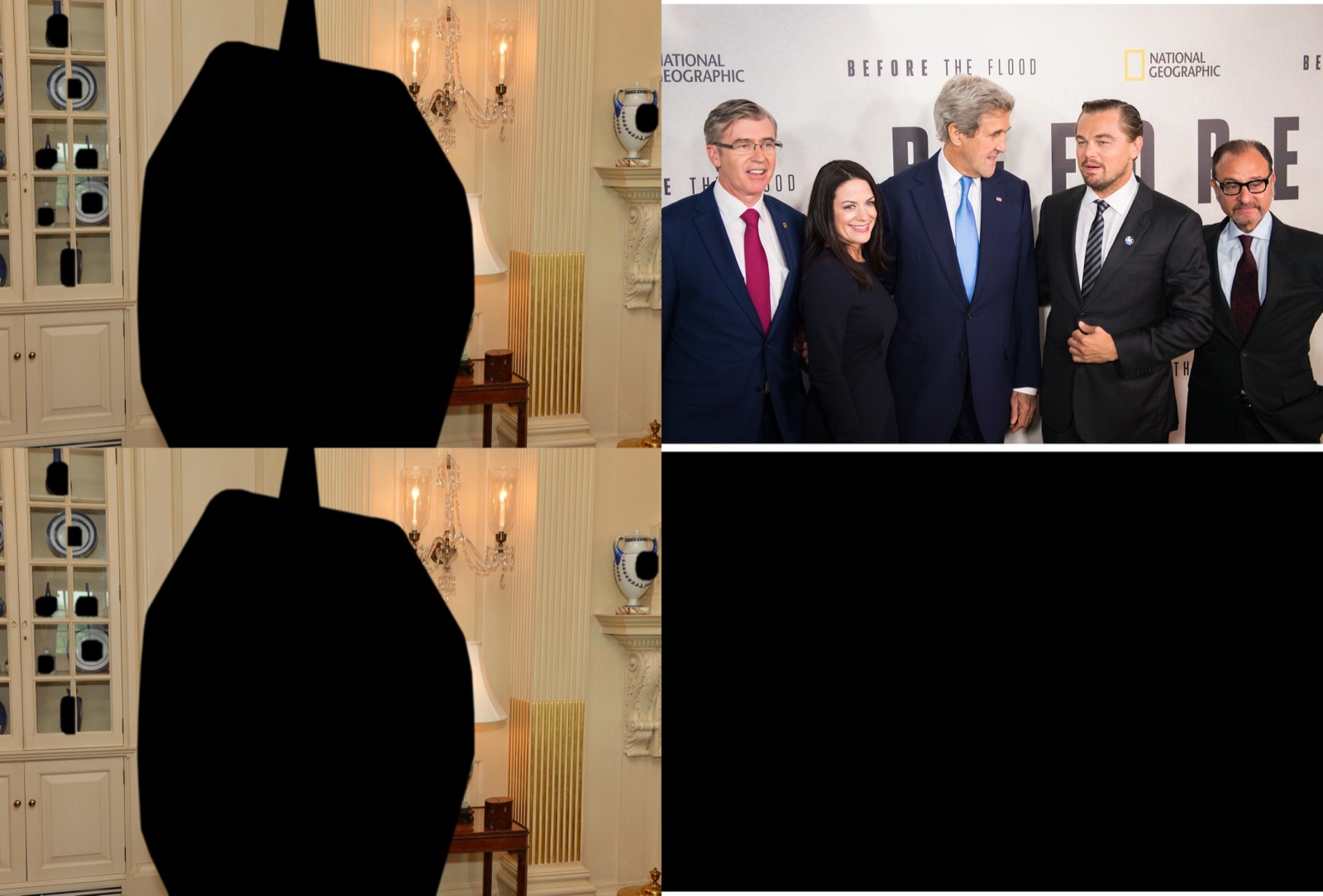}
& \includegraphics[width=0.17\textwidth, valign=m]{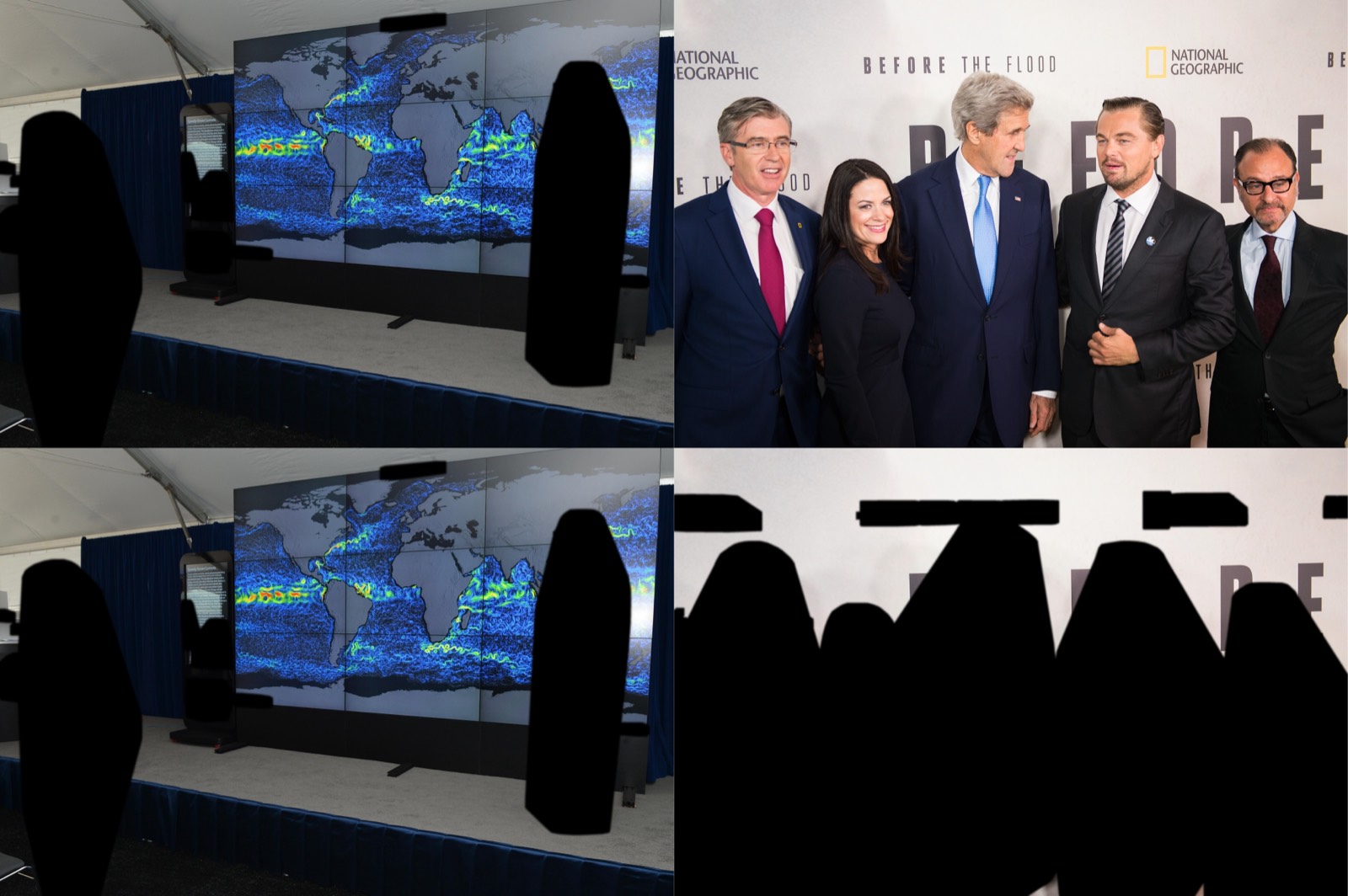}
& \includegraphics[width=0.17\textwidth, valign=m]{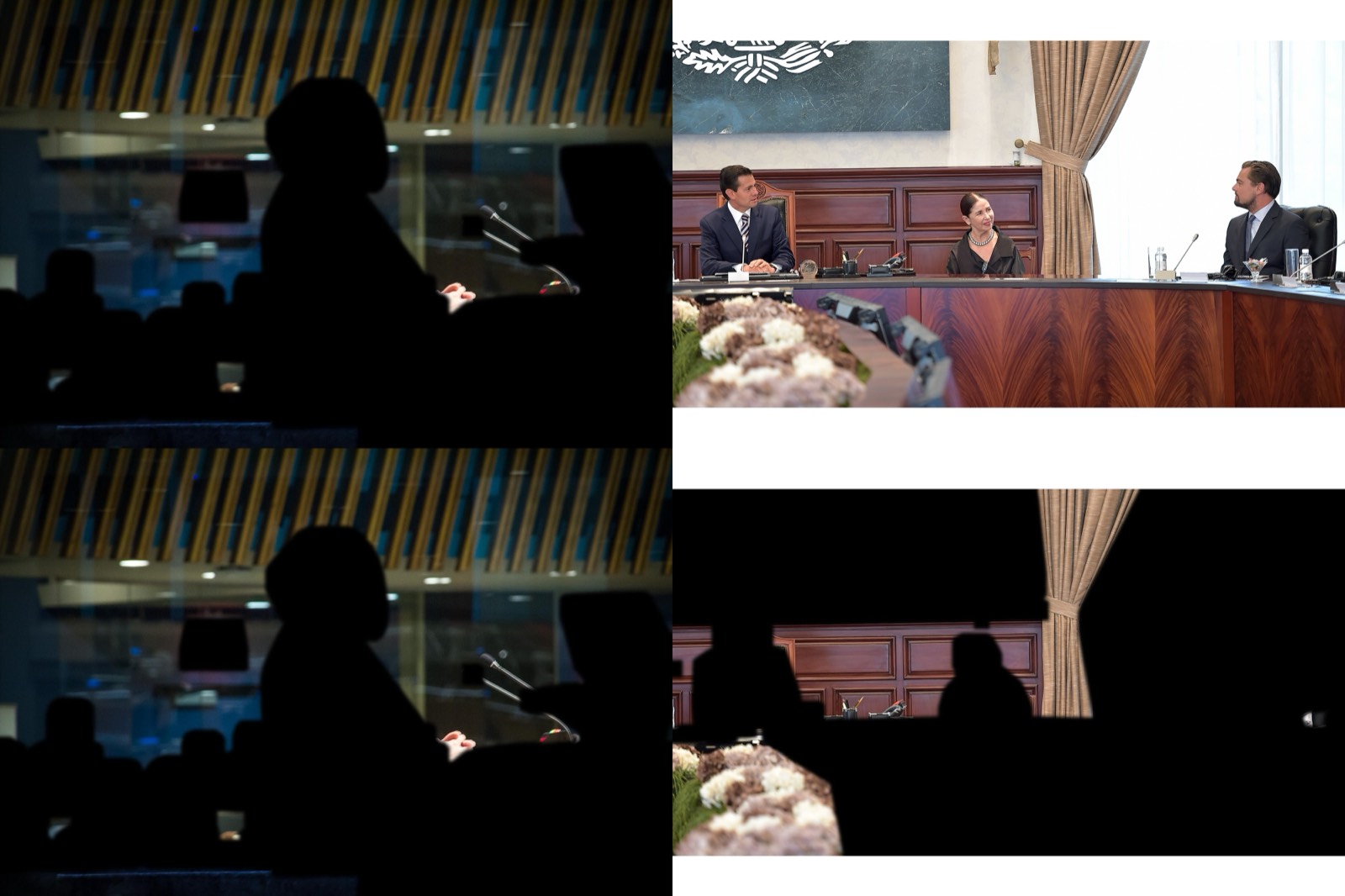}
& 
& \\
\addlinespace[1ex] 

\bottomrule
\end{tabular}
}
\caption{Decisive witnesses for 30 images capturing Leonardo DiCaprio and sanitizing the abstract concept \texttt{the identity of the celebrities}. Empty cells indicate that the test returned resolution 0 under \texttt{EVA}.}
\Description{A matrix of image pairs gives the decisive witness for each DiCaprio sanitization mechanism. Rows are redaction models, columns are concept-generation models, and empty cells indicate resolution zero.}
\label{fig:dicaprio_failure_compare}
\end{minipage}
}]
\clearpage

\twocolumn[{
\begin{minipage}{\textwidth}
\captionsetup{type=figure}
\centering  
\small

\setlength{\aboverulesep}{1pt}
\setlength{\belowrulesep}{1pt}
\setlength{\fboxrule}{1.5pt} 

\resizebox{0.99\textwidth}{!}{
\begin{tabular}{@{} l @{\hspace{3ex}} c c c c c @{}}
\toprule
Model 
& \thead{\texttt{GPT54}}
& \thead{\texttt{GEM31p}} 
& \thead{\texttt{OPS46}}
& \thead{\texttt{Manual}}
& \thead{None} \\
\midrule
\addlinespace[1.5ex] 

\texttt{iGPT15}
& \includegraphics[width=0.17\textwidth, valign=m]{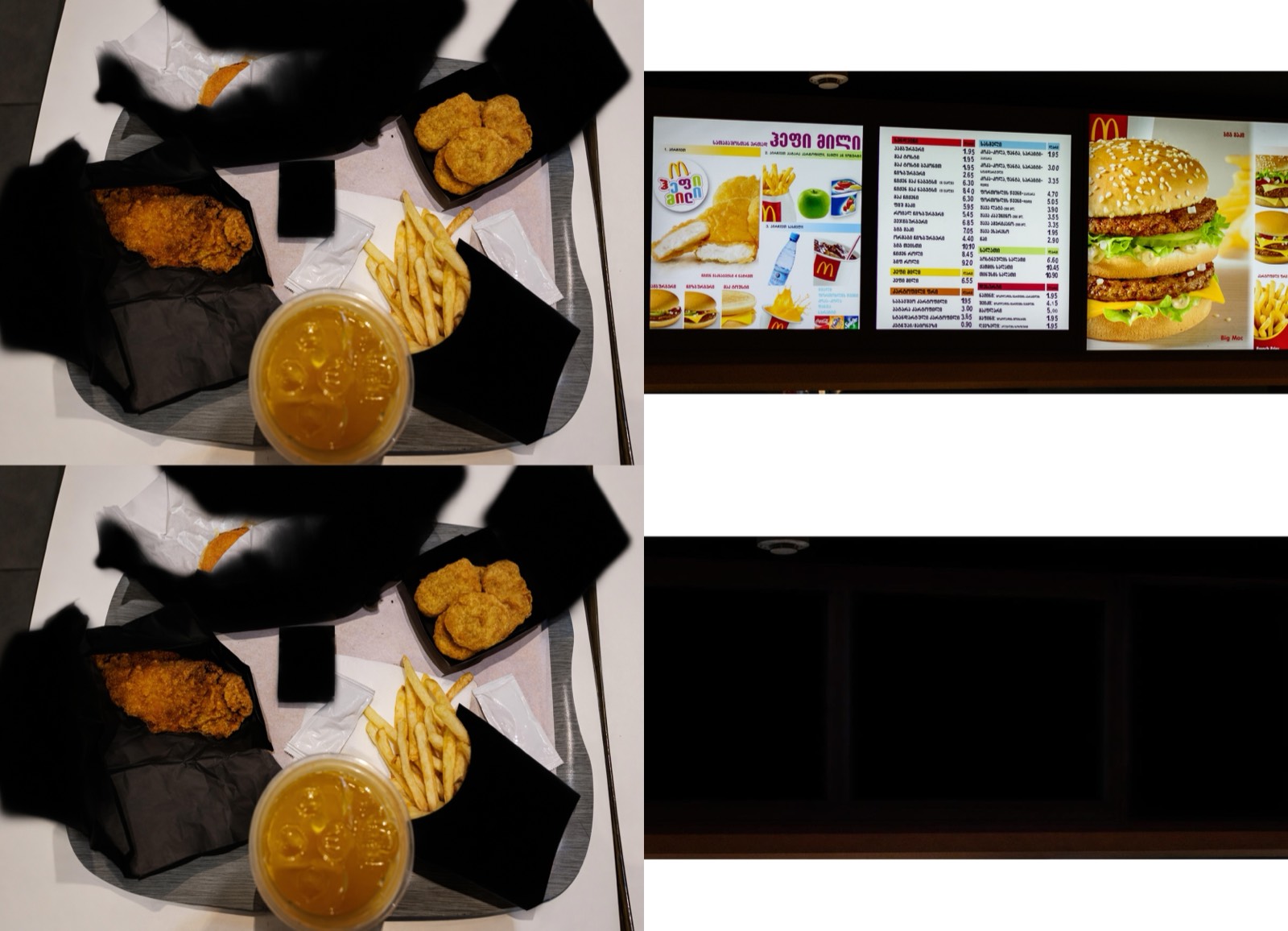}
& \includegraphics[width=0.17\textwidth, valign=m]{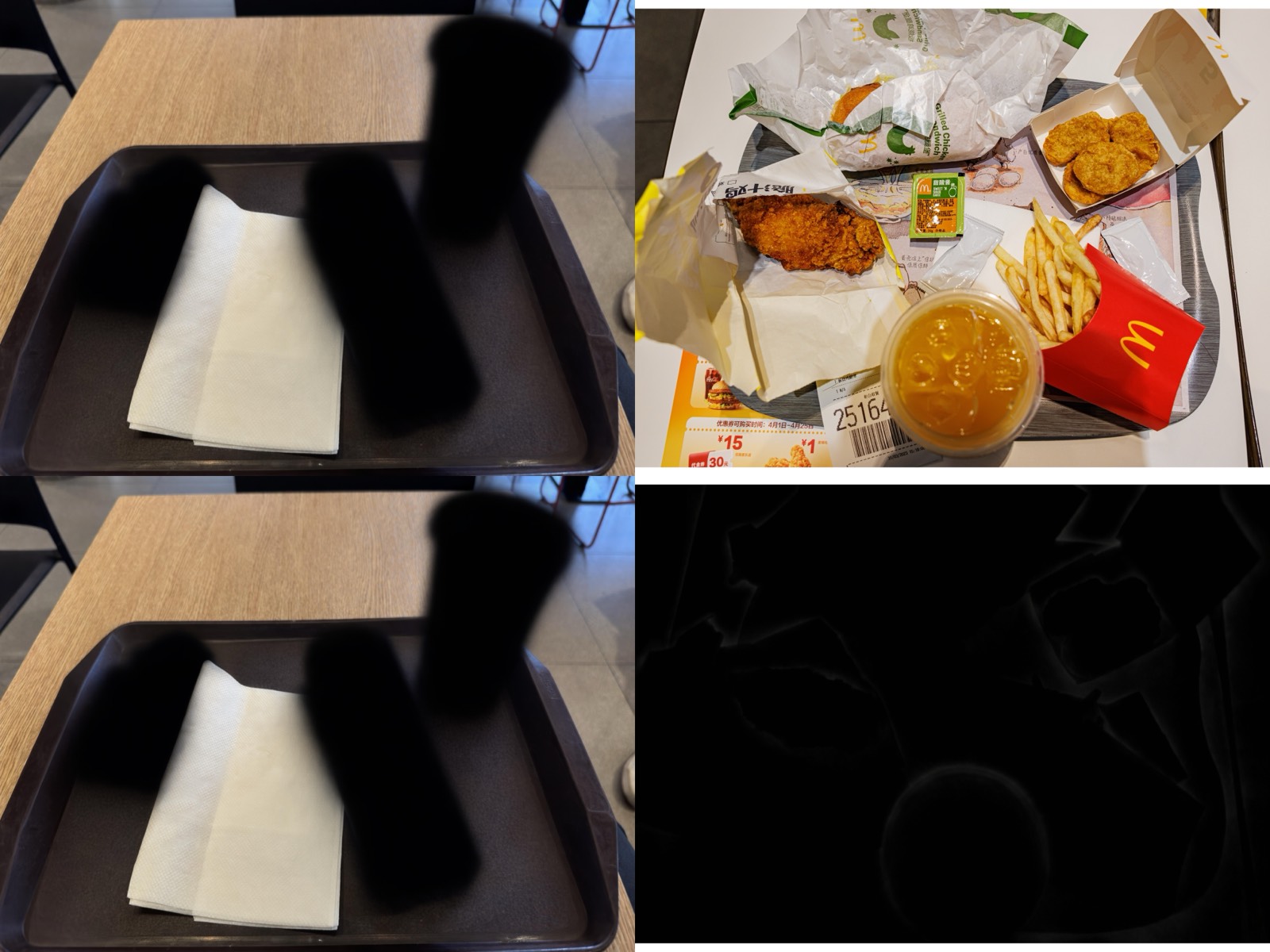}
& \includegraphics[width=0.17\textwidth, valign=m]{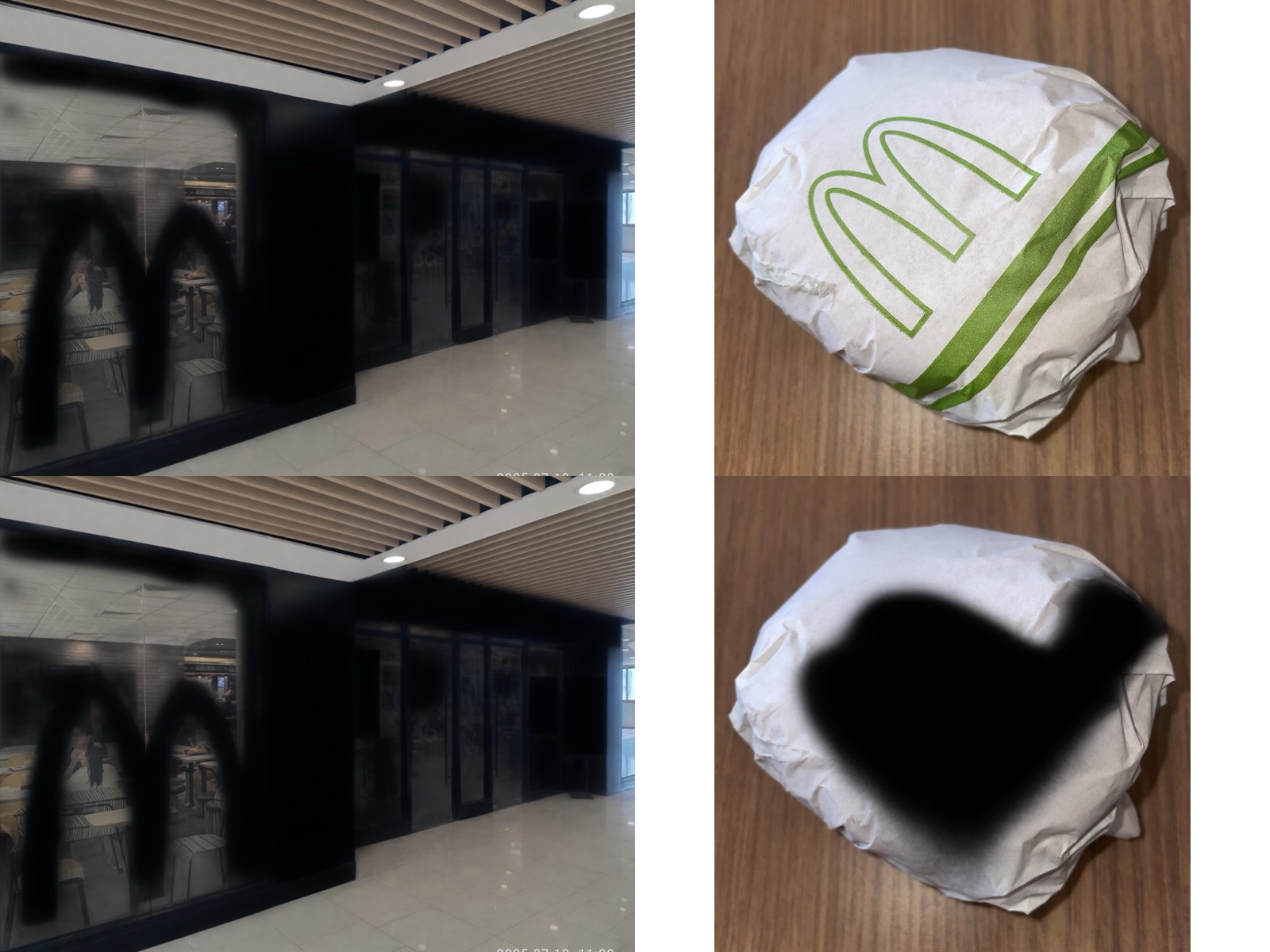}
& \includegraphics[width=0.17\textwidth, valign=m]{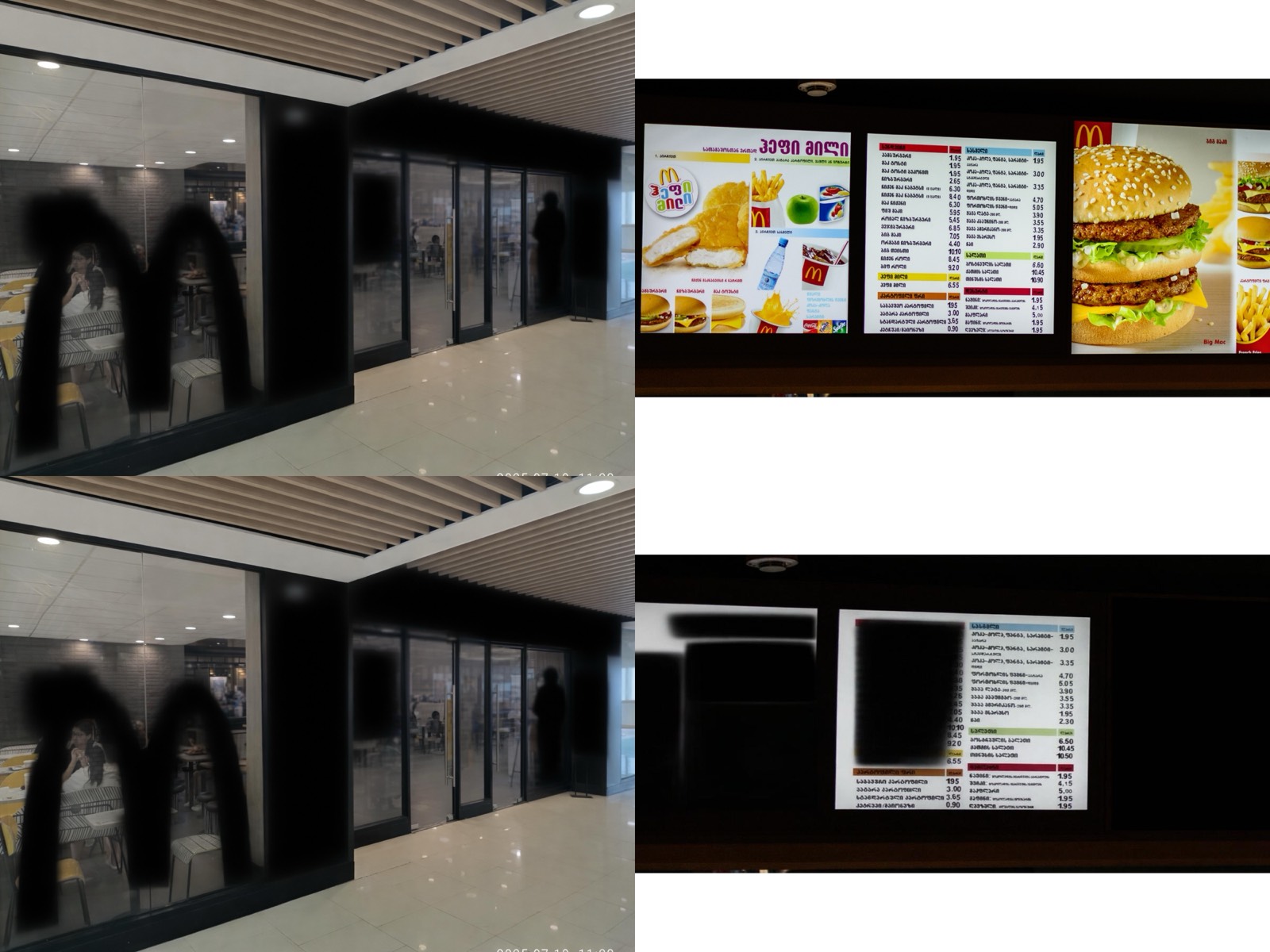}
& \includegraphics[width=0.17\textwidth, valign=m]{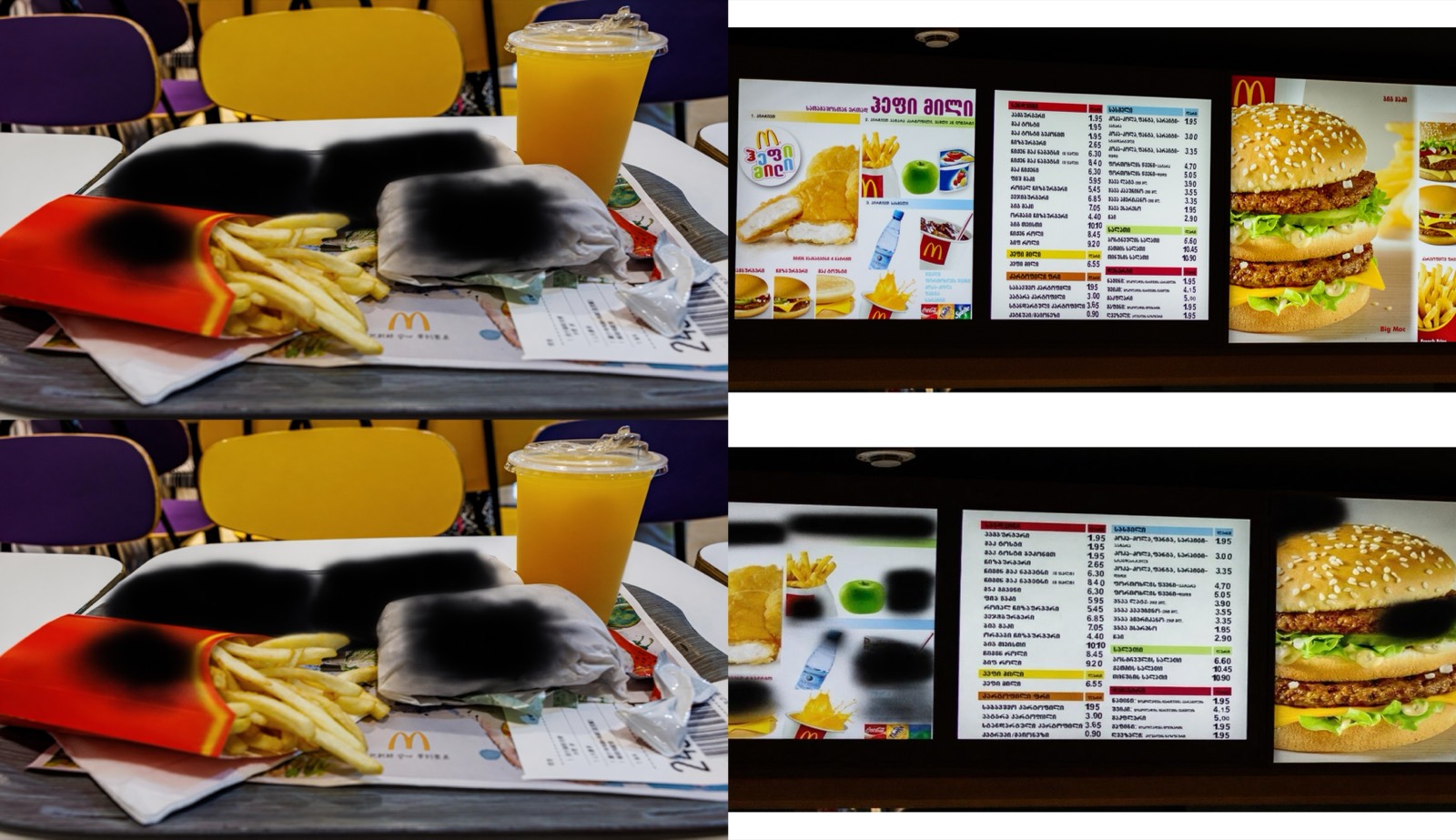} \\[1.5ex] 

\texttt{iGPT1m}
& \includegraphics[width=0.17\textwidth, valign=m]{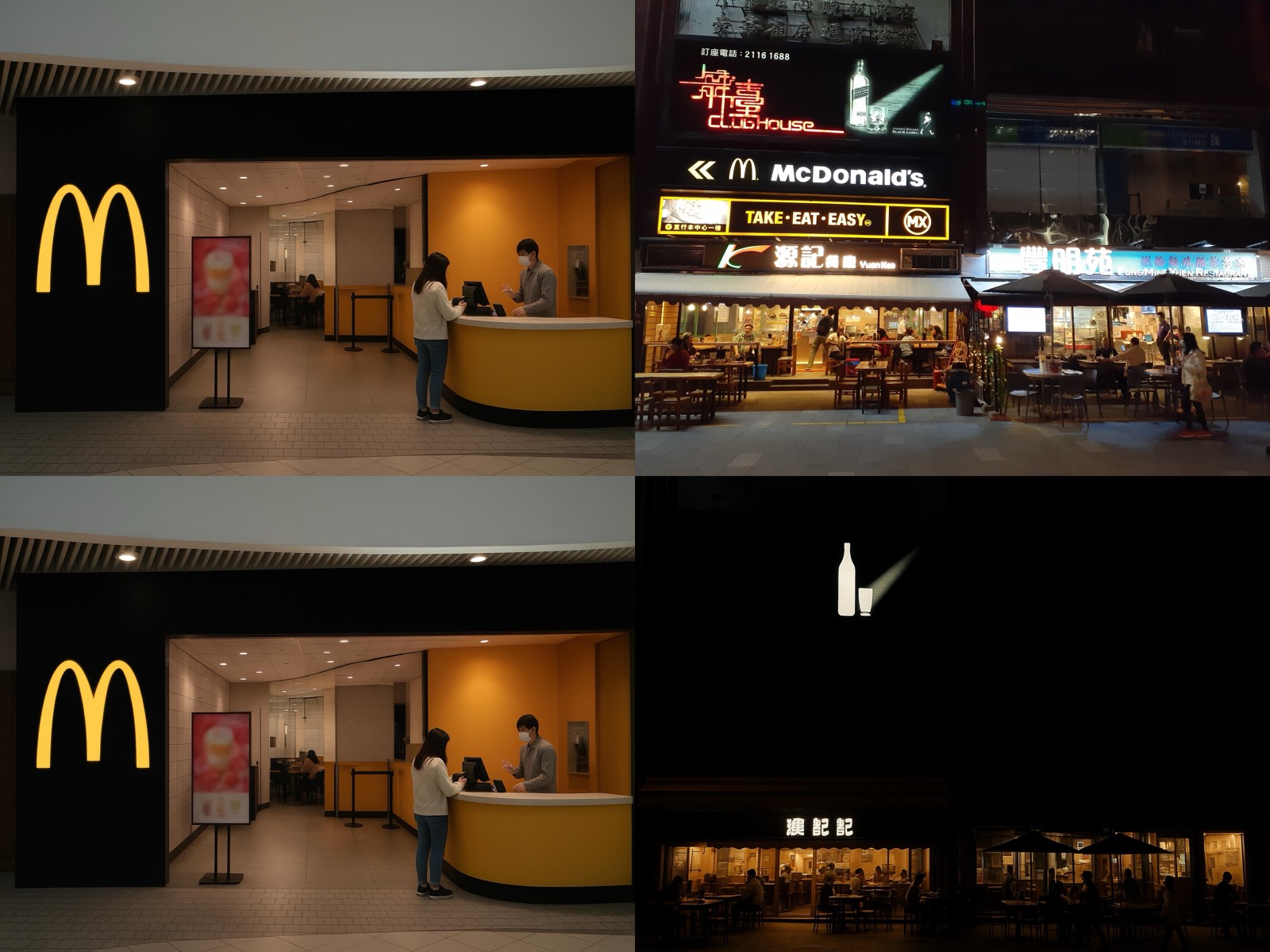}
& \includegraphics[width=0.17\textwidth, valign=m]{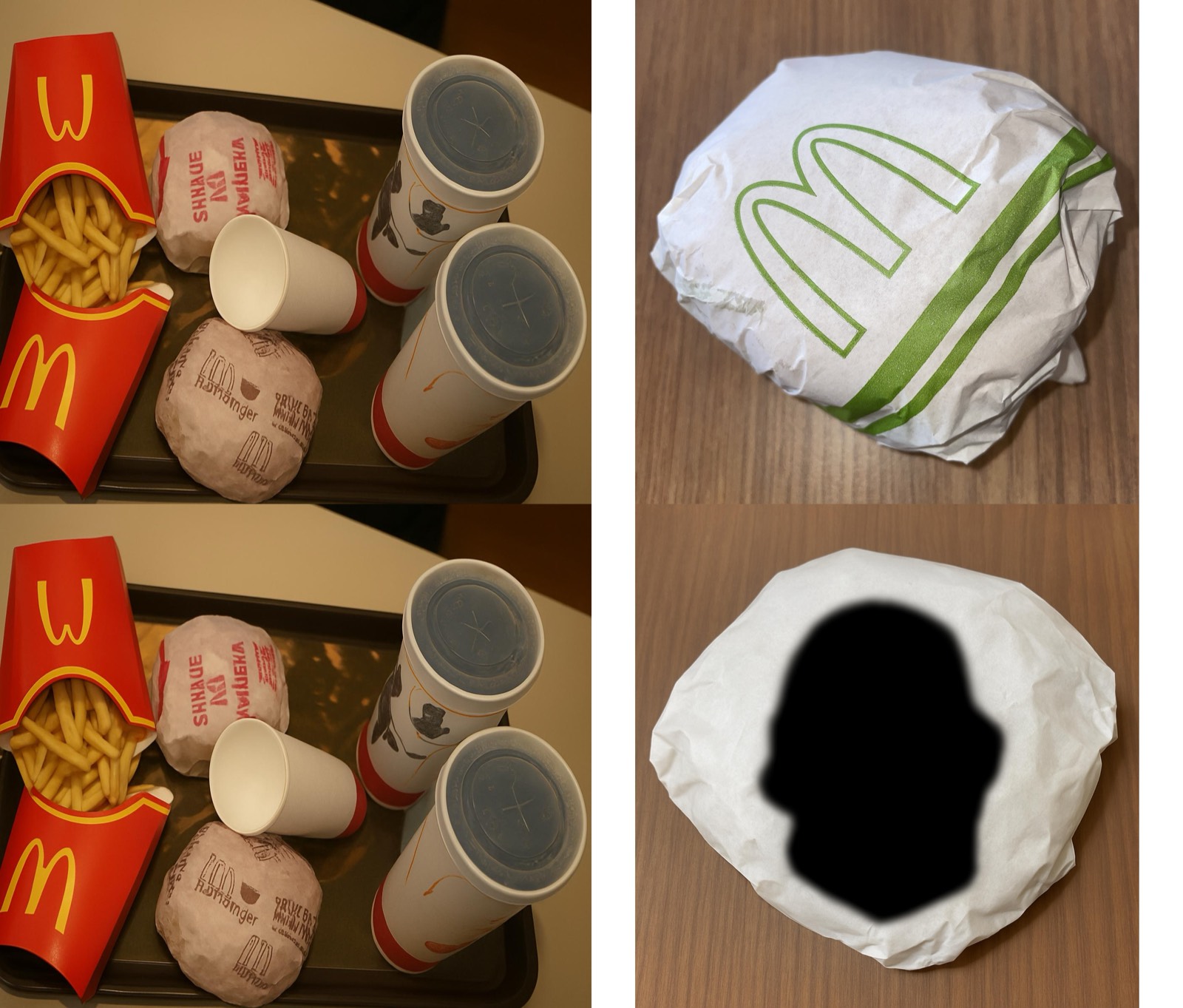}
& \includegraphics[width=0.17\textwidth, valign=m]{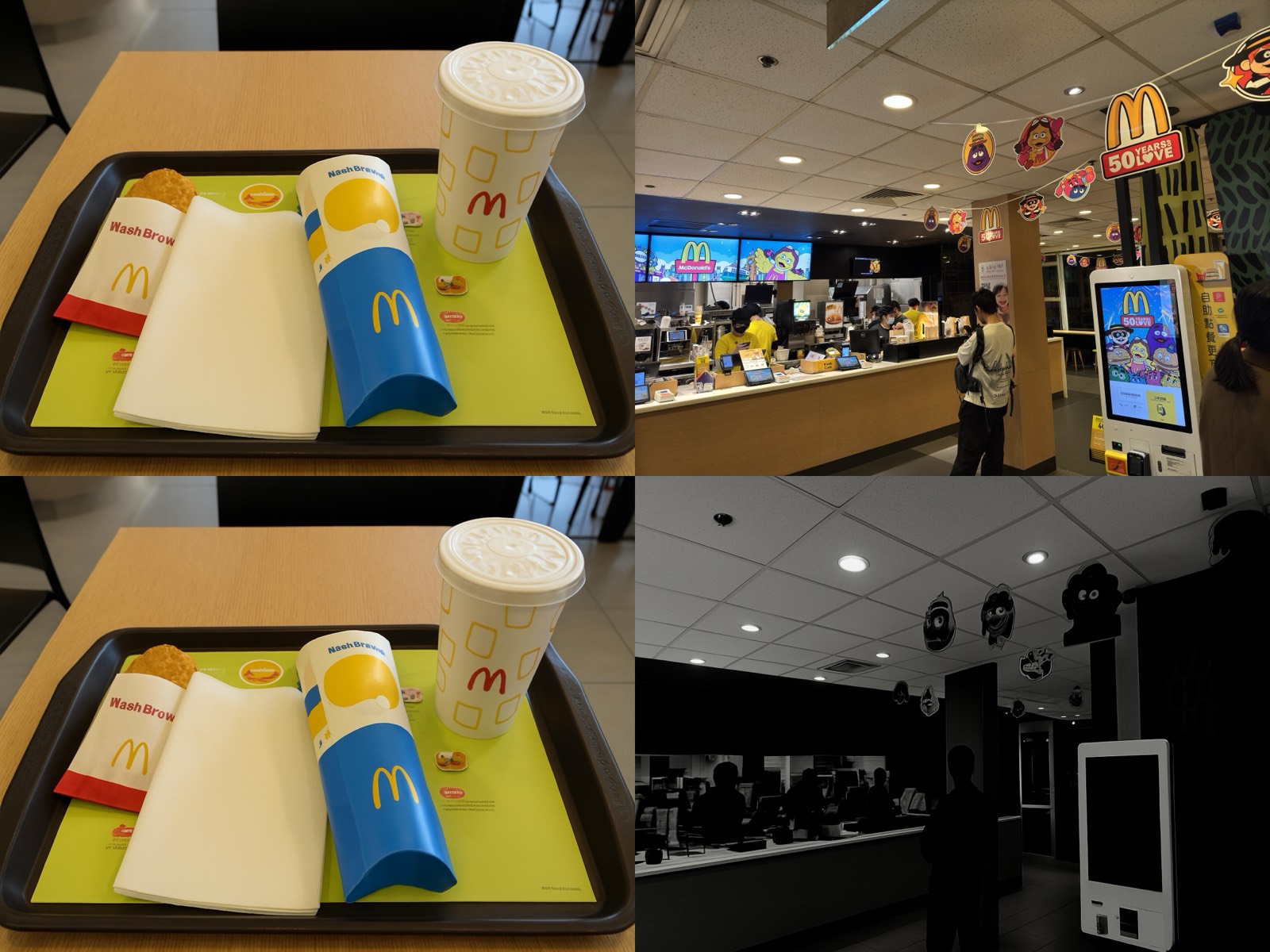}
& \includegraphics[width=0.17\textwidth, valign=m]{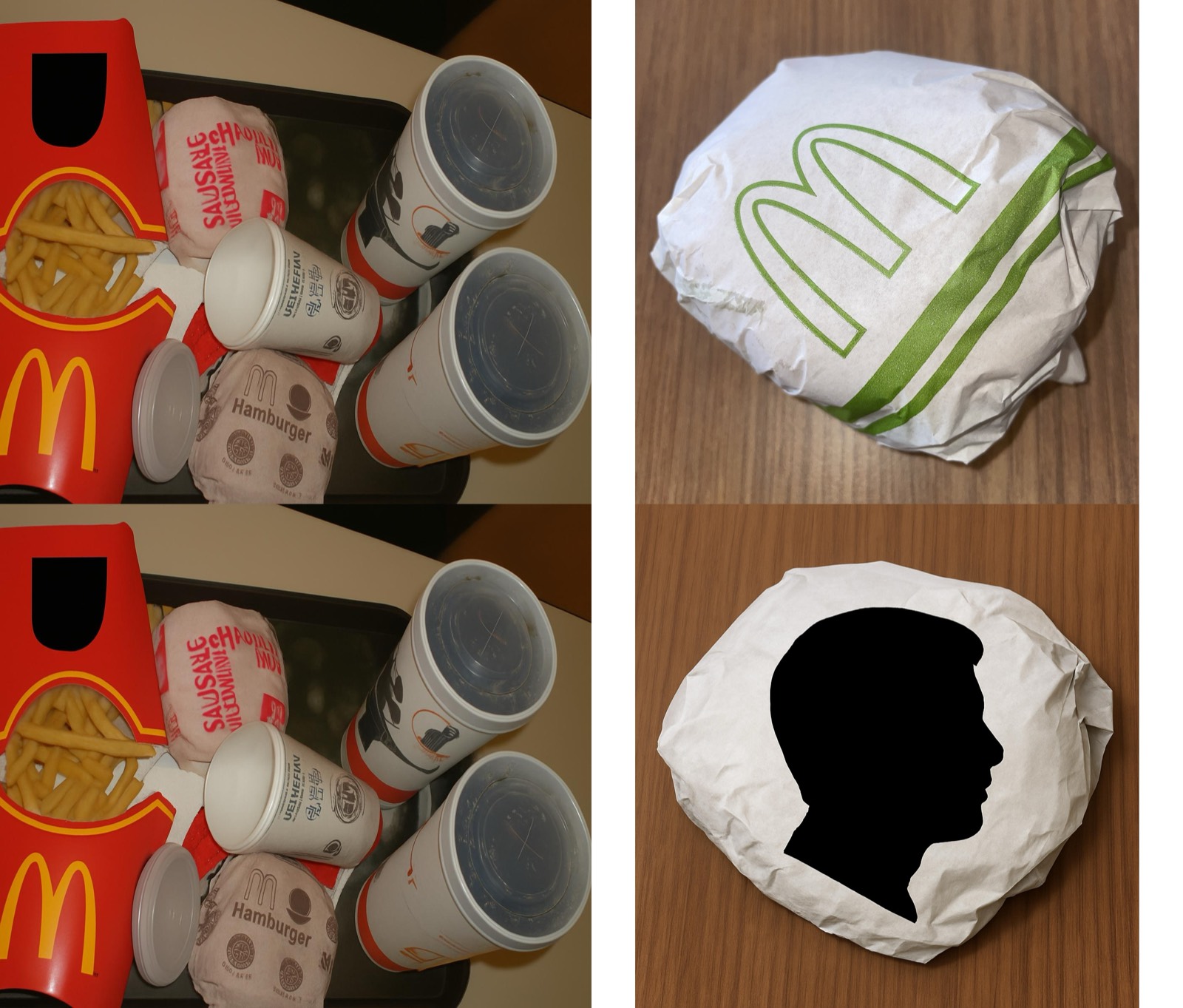}
& \includegraphics[width=0.17\textwidth, valign=m]{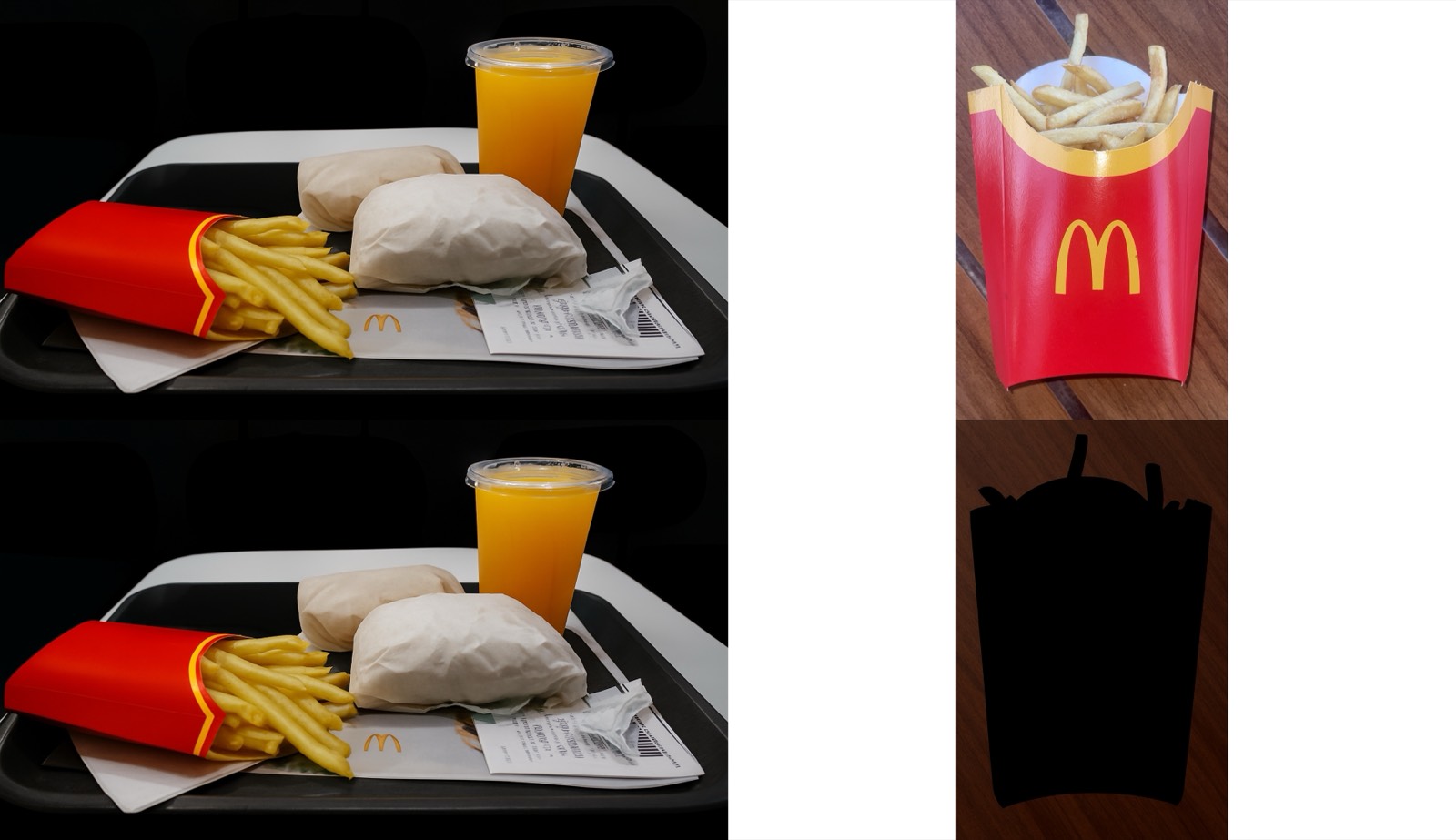} \\[1.5ex]

\texttt{iGEM3p}
& \includegraphics[width=0.17\textwidth, valign=m]{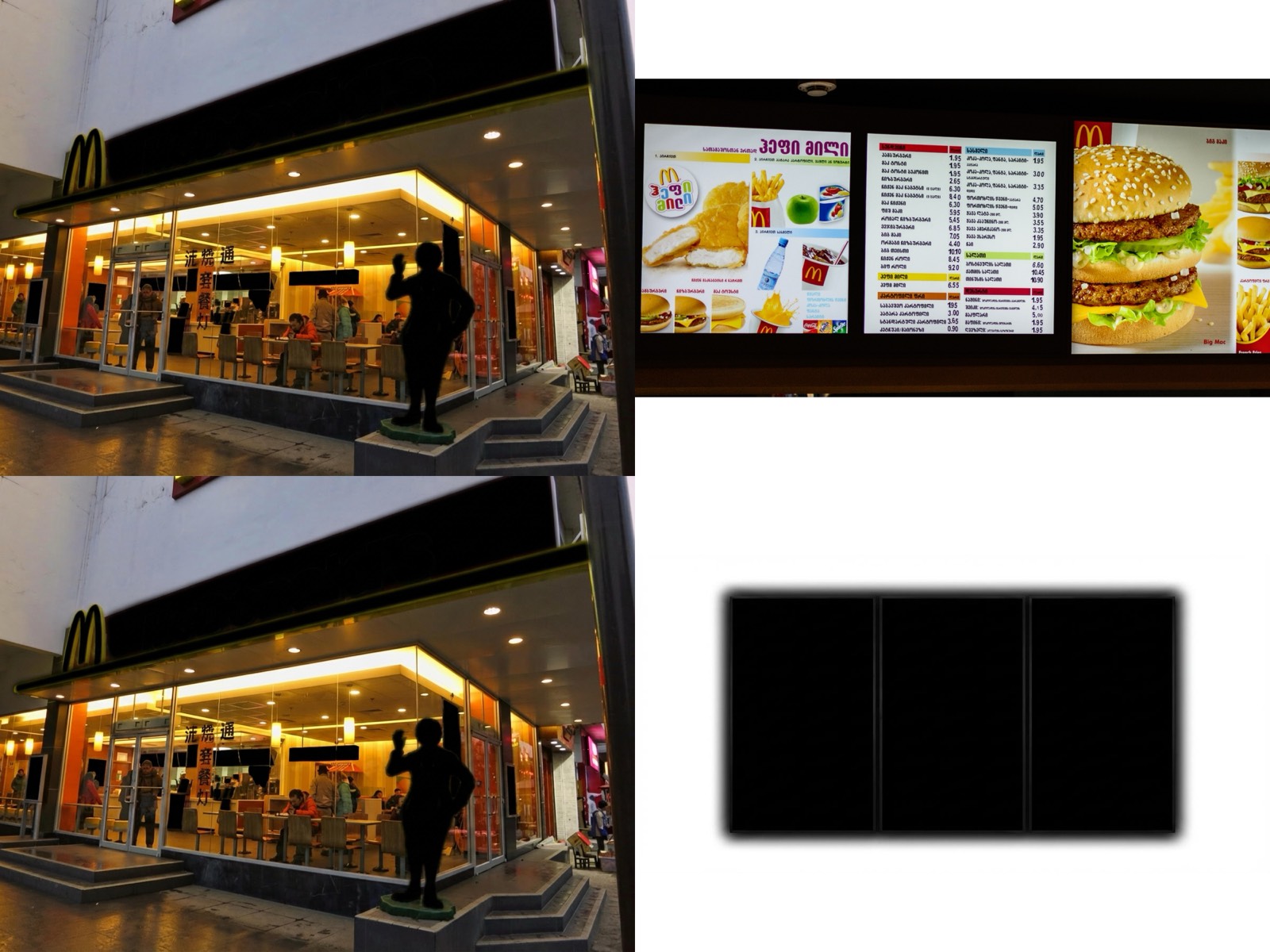}
& \includegraphics[width=0.17\textwidth, valign=m]{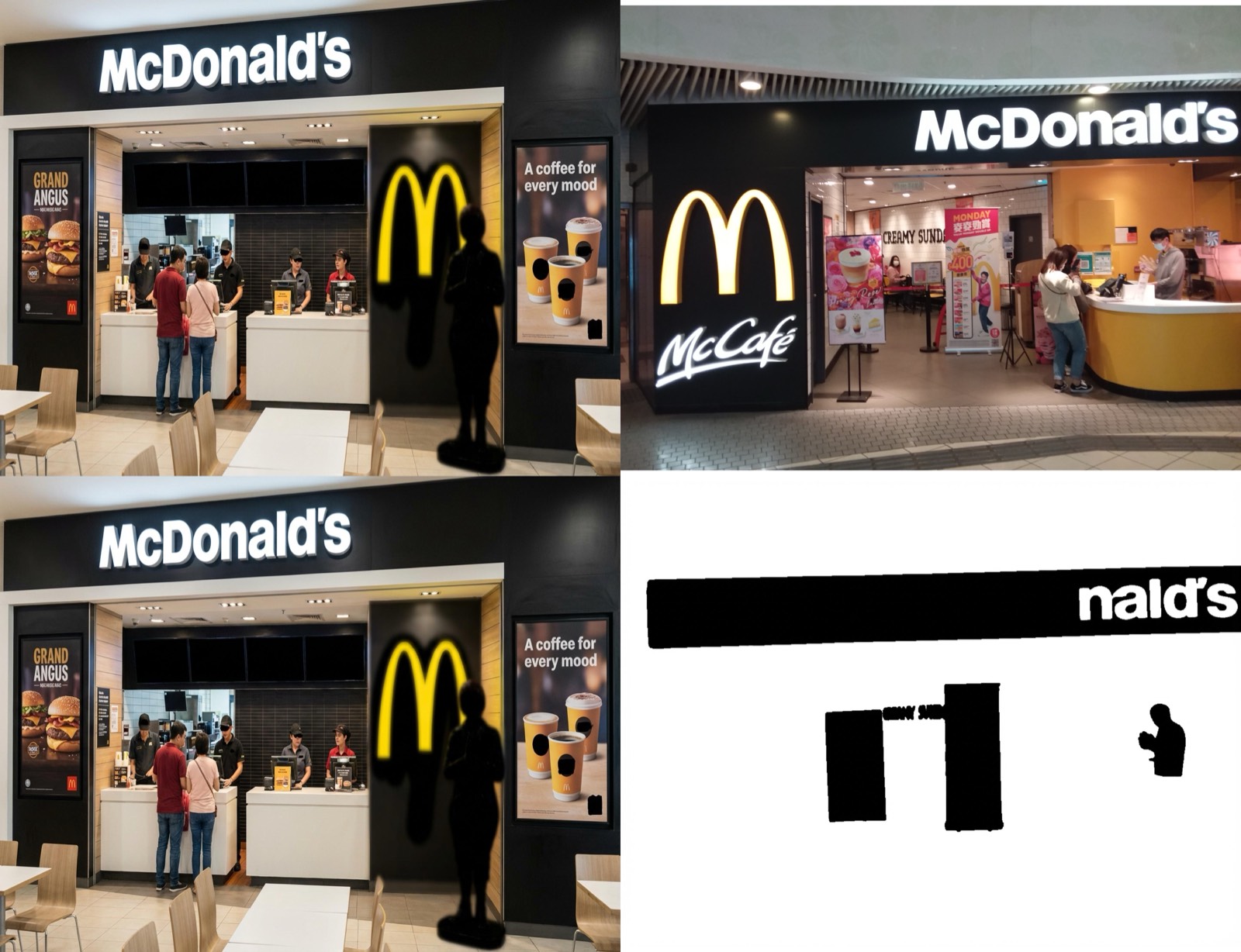}
& \includegraphics[width=0.17\textwidth, valign=m]{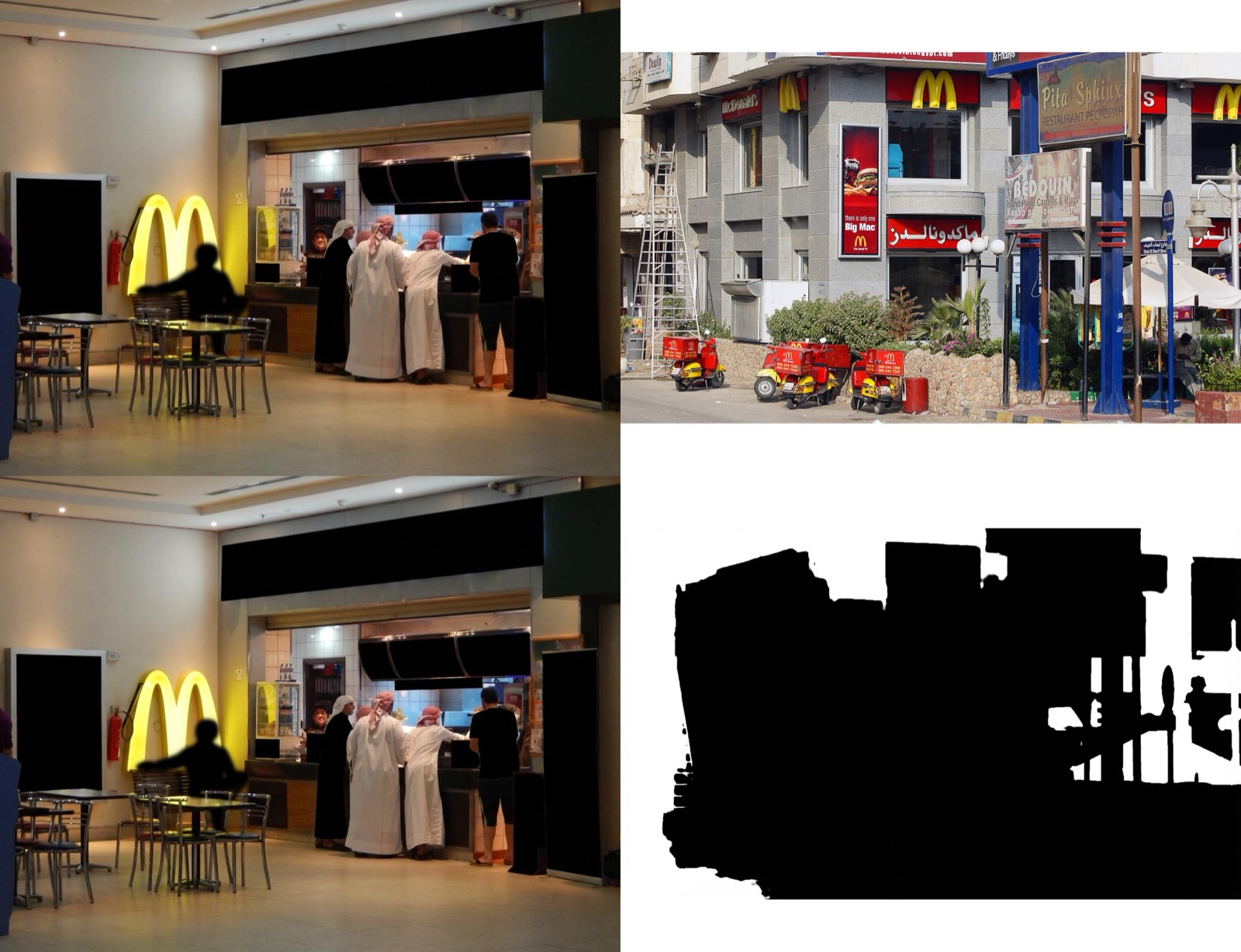}
& \includegraphics[width=0.17\textwidth, valign=m]{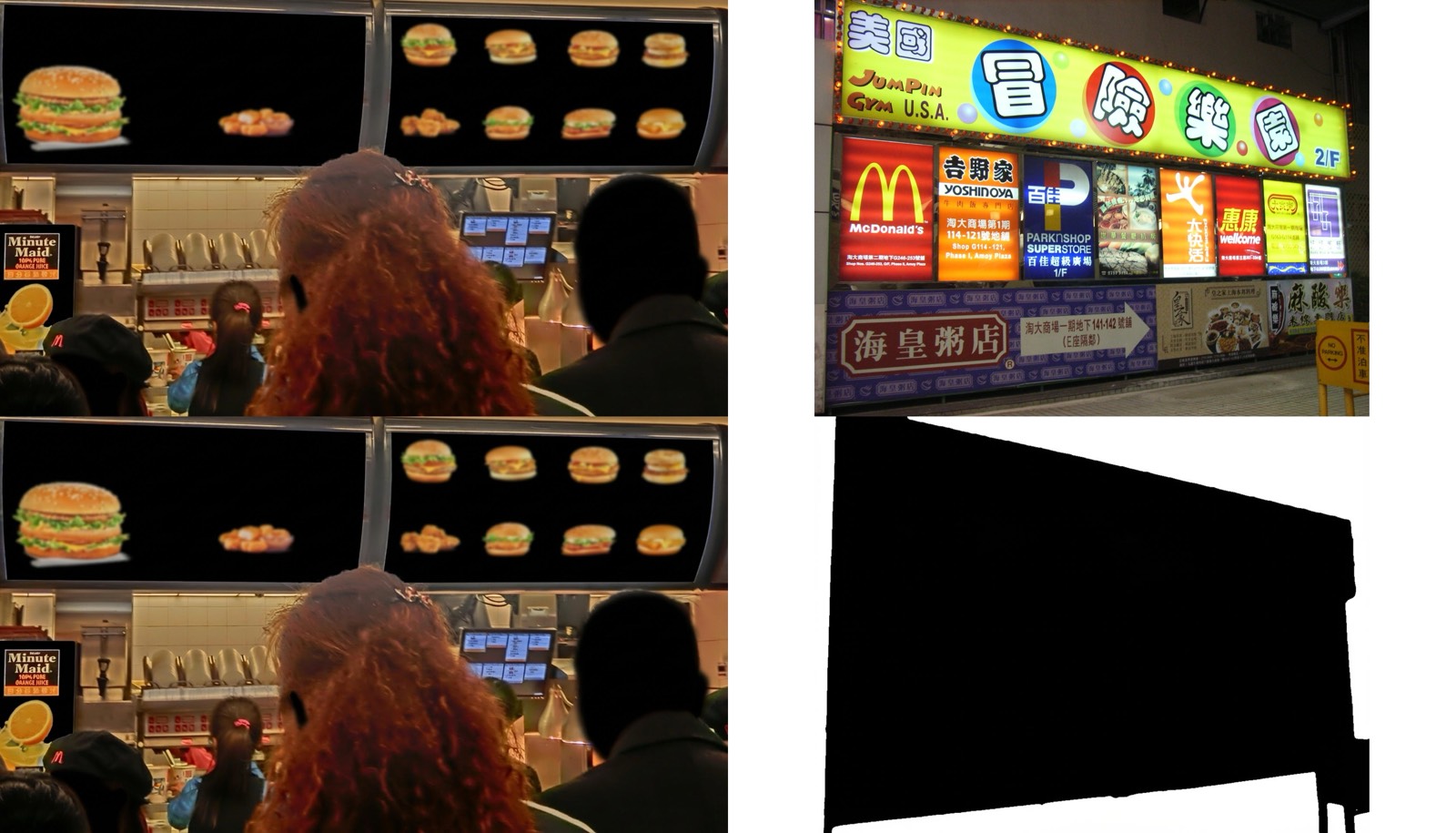}
& \includegraphics[width=0.17\textwidth, valign=m]{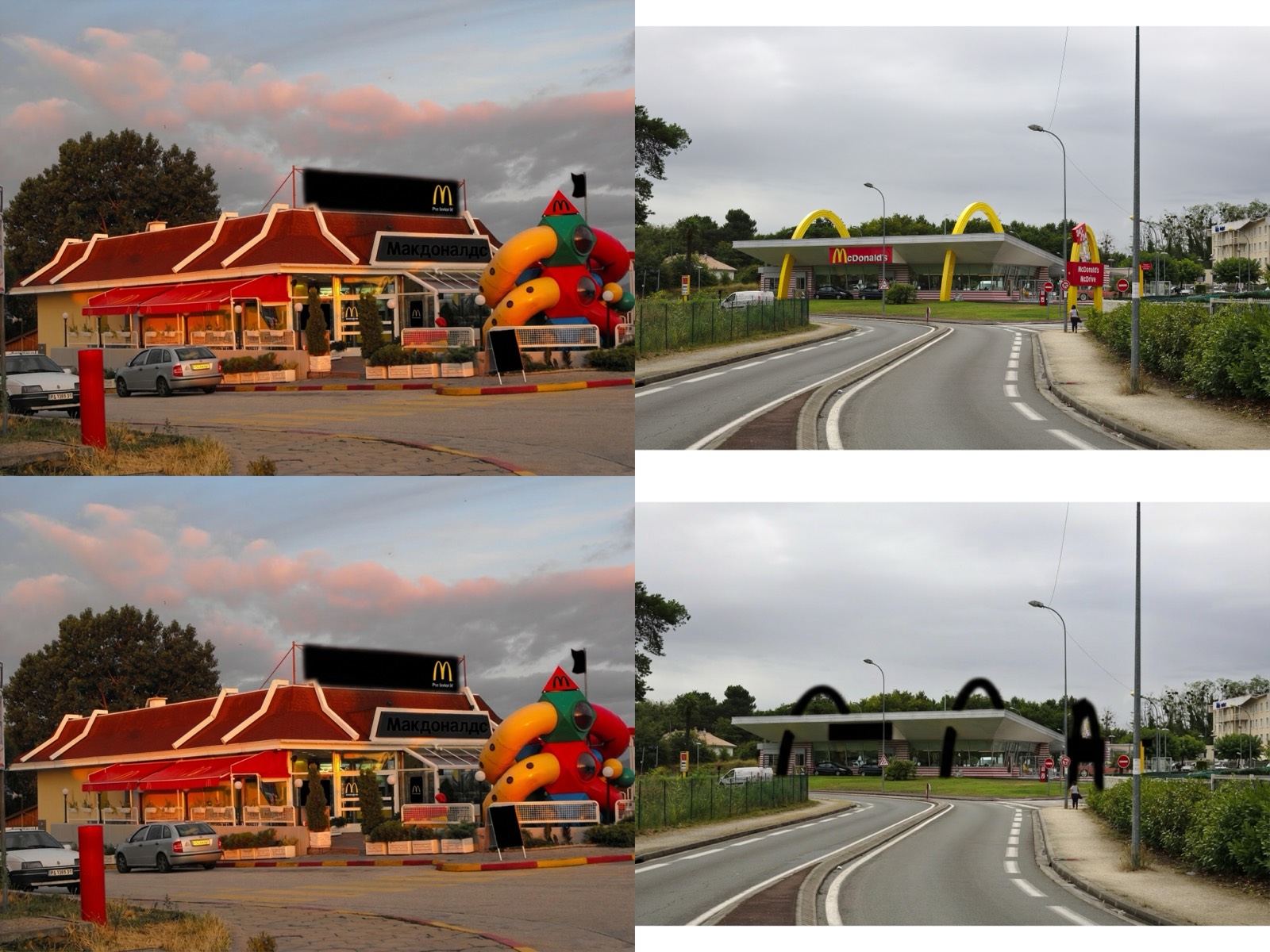} \\[1.5ex]

\texttt{iGEM31f}
& \includegraphics[width=0.17\textwidth, valign=m]{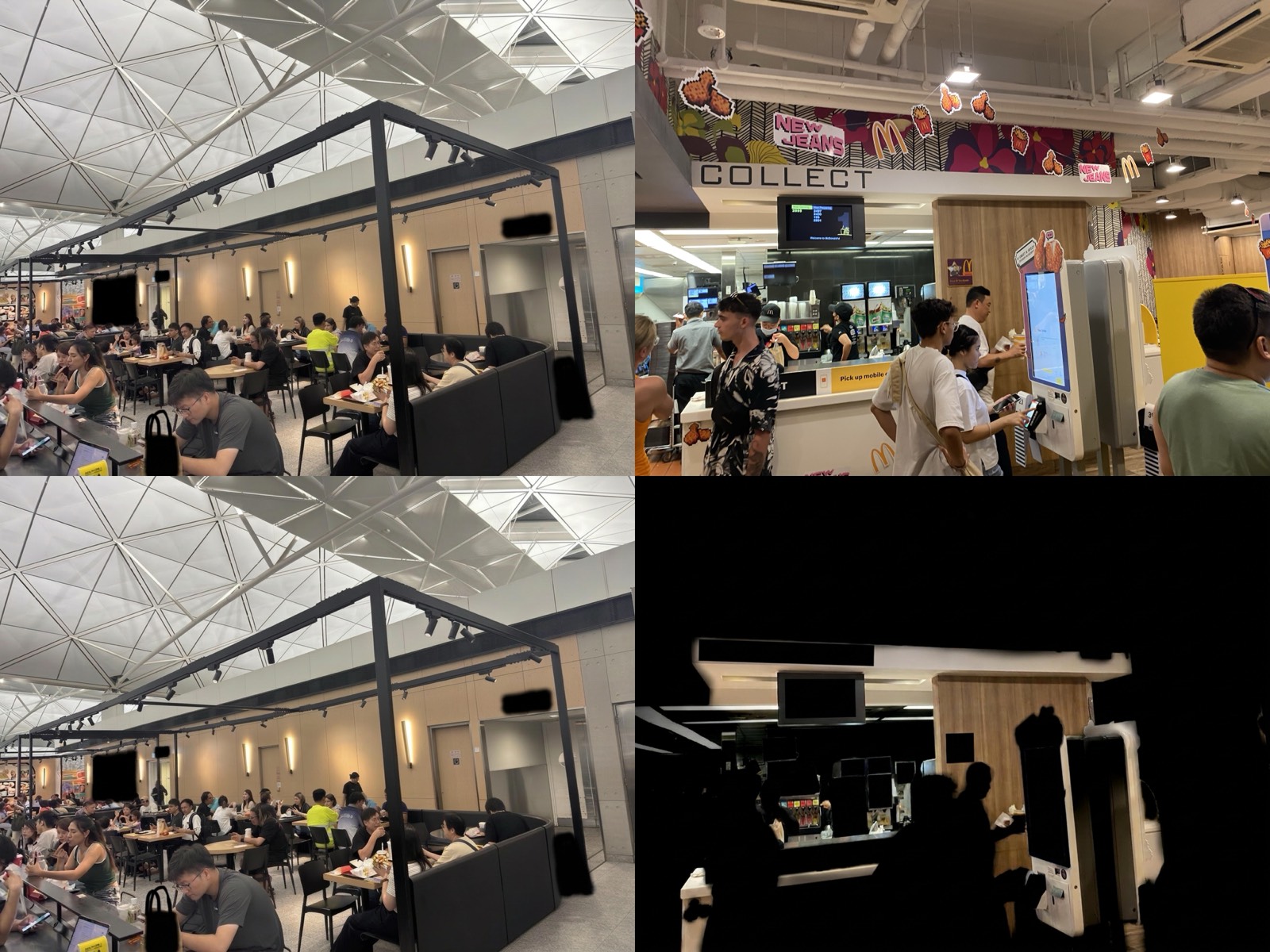}
& \includegraphics[width=0.17\textwidth, valign=m]{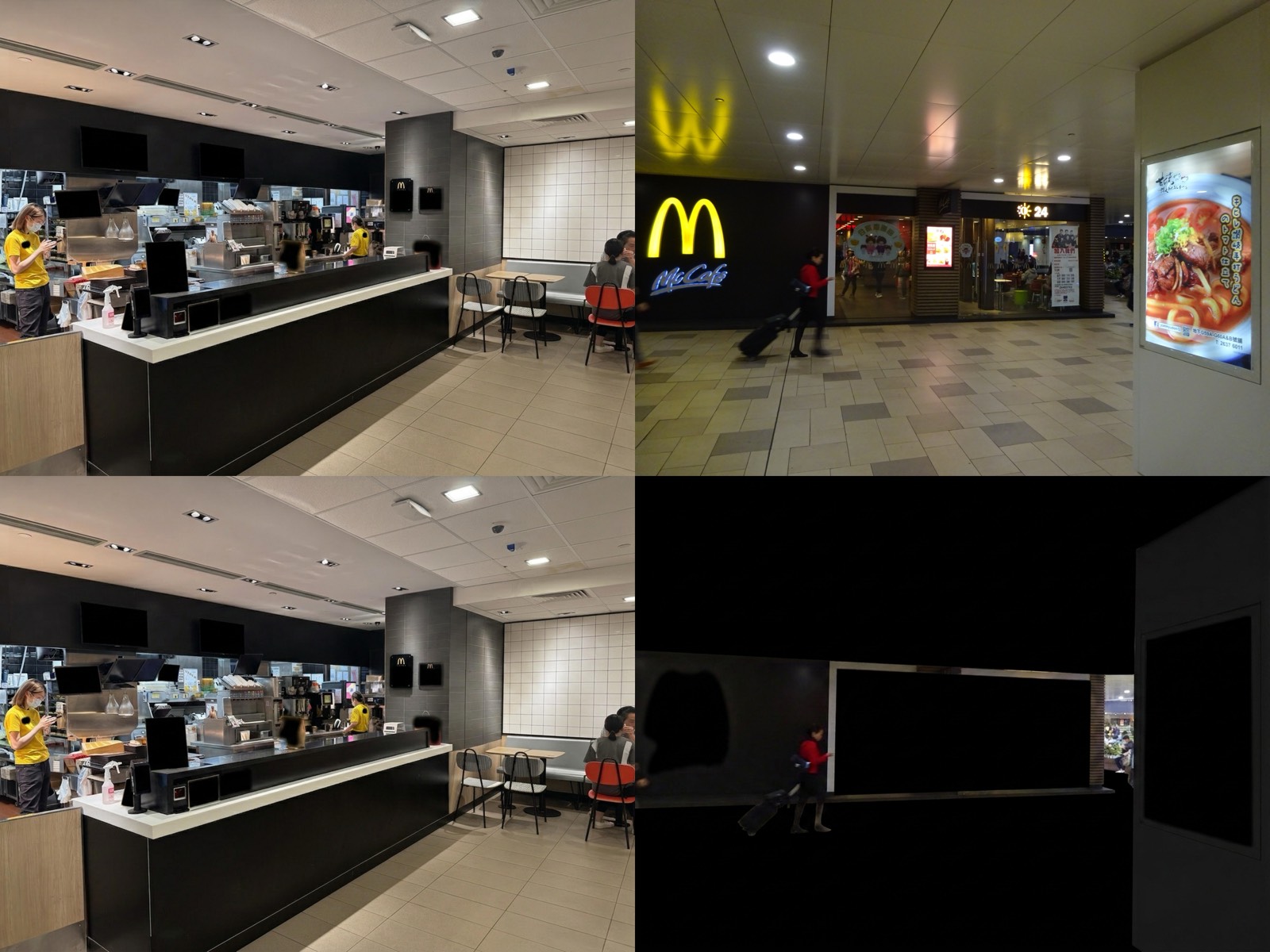}
& \includegraphics[width=0.17\textwidth, valign=m]{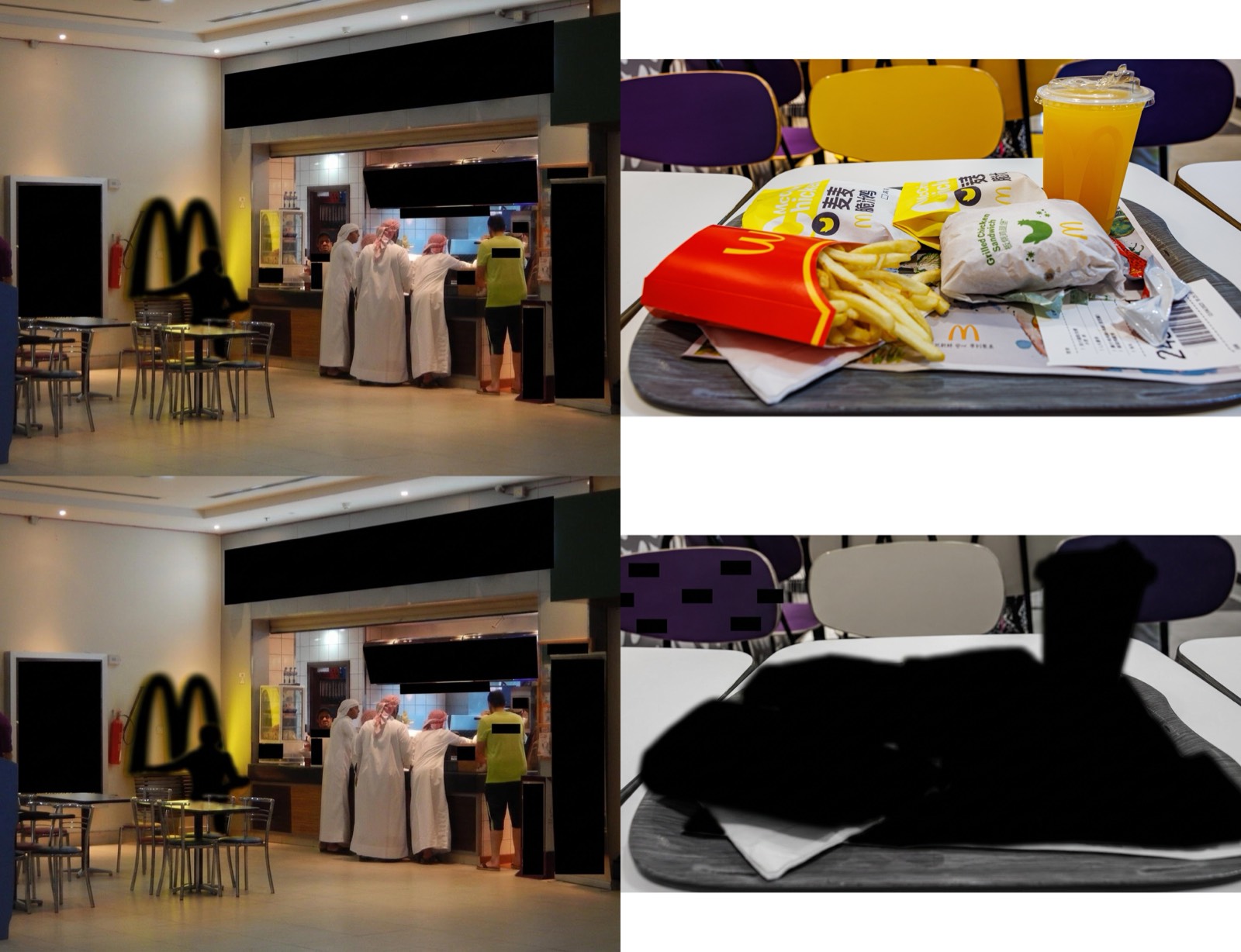}
& \includegraphics[width=0.17\textwidth, valign=m]{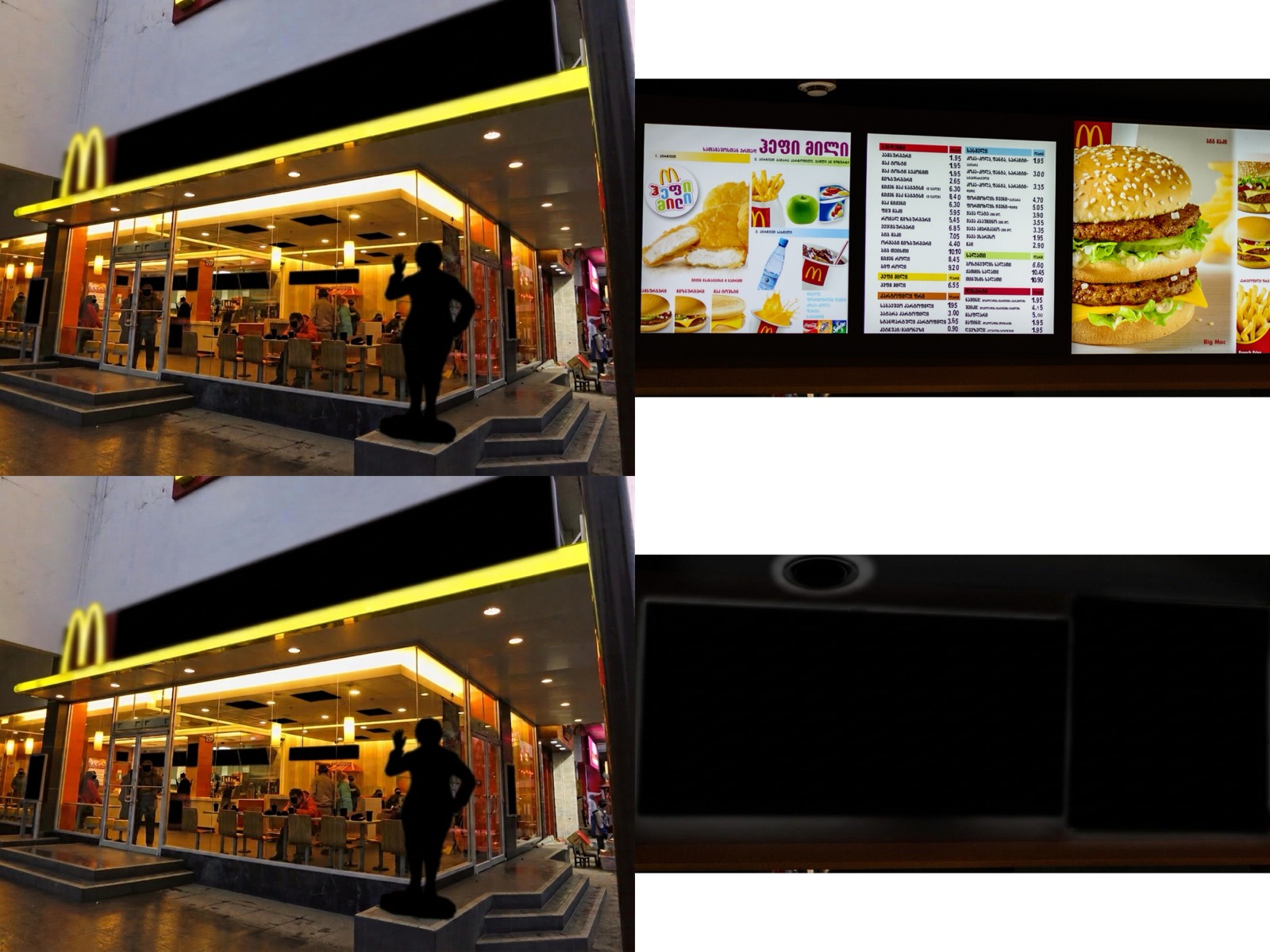}
& \includegraphics[width=0.17\textwidth, valign=m]{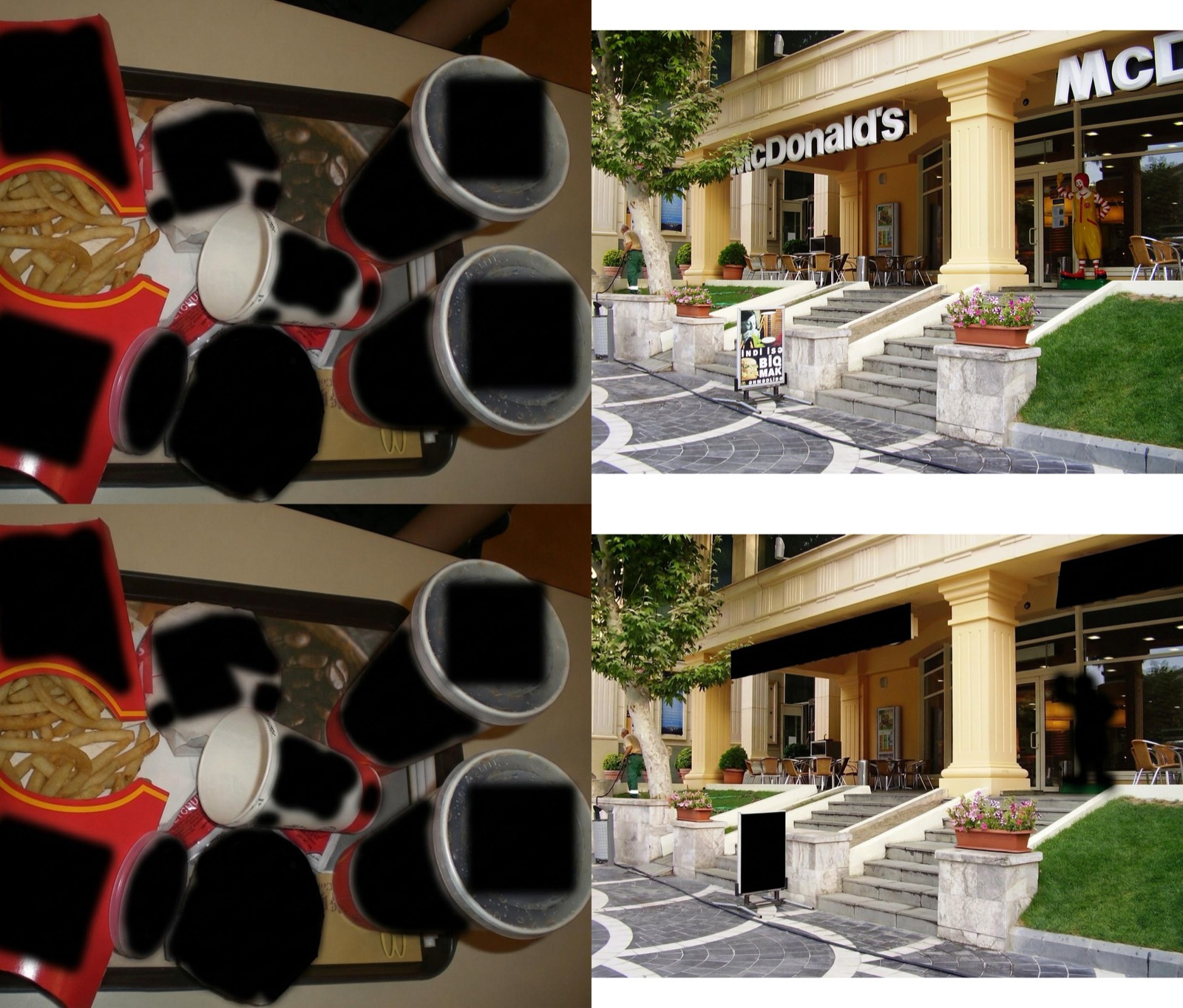} \\[1.5ex]

\texttt{FLX2p}
& \includegraphics[width=0.17\textwidth, valign=m]{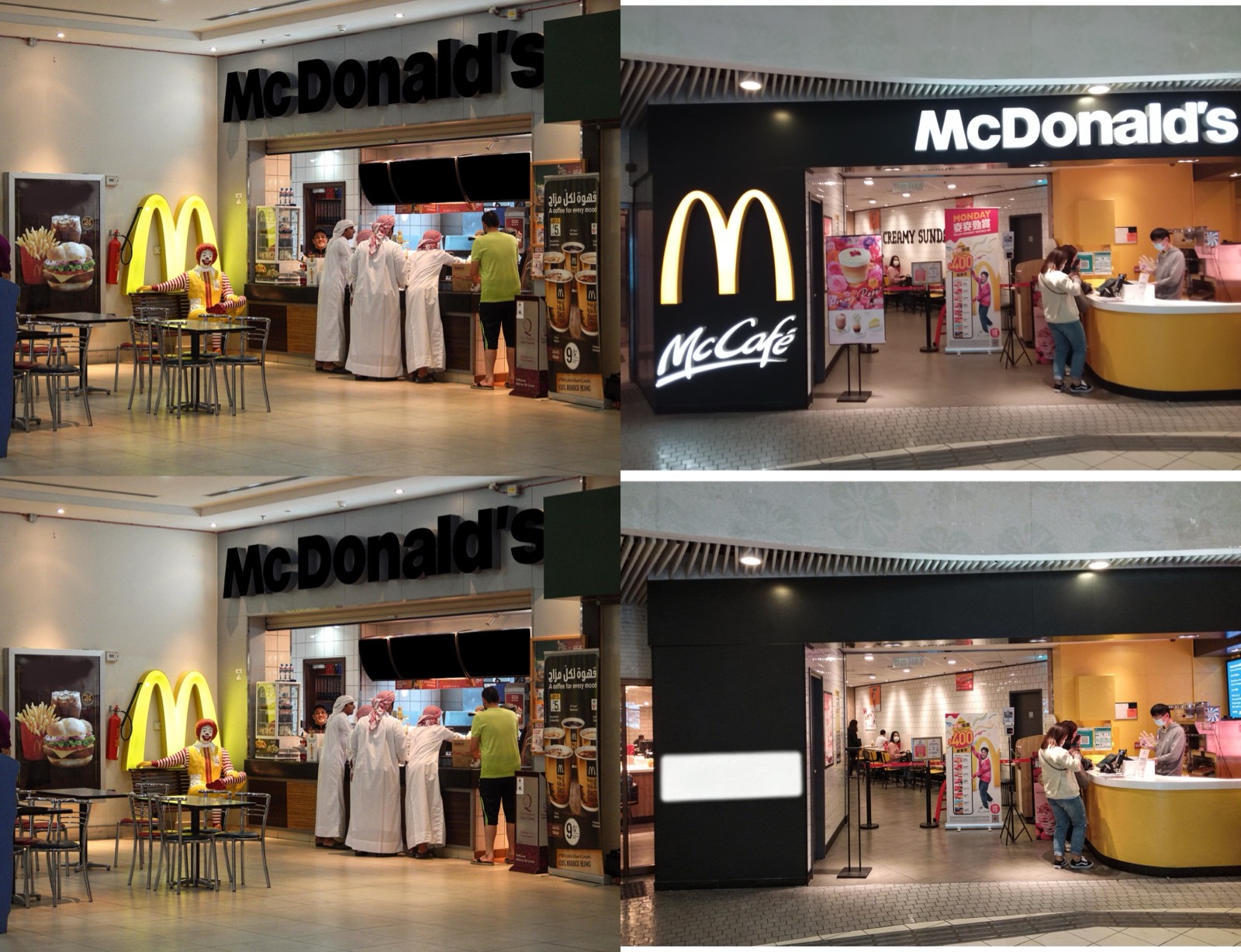}
& \includegraphics[width=0.17\textwidth, valign=m]{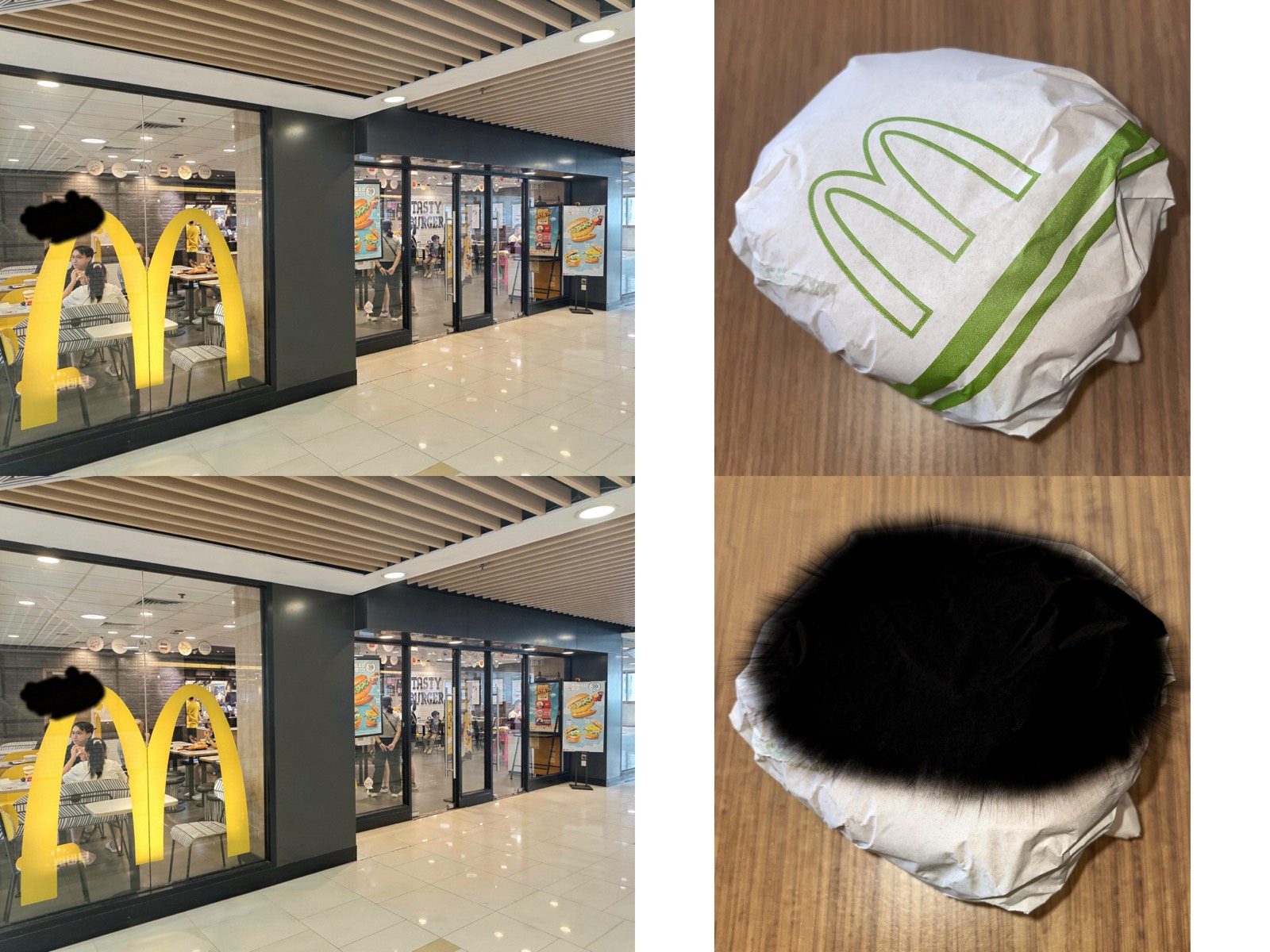}
& \includegraphics[width=0.17\textwidth, valign=m]{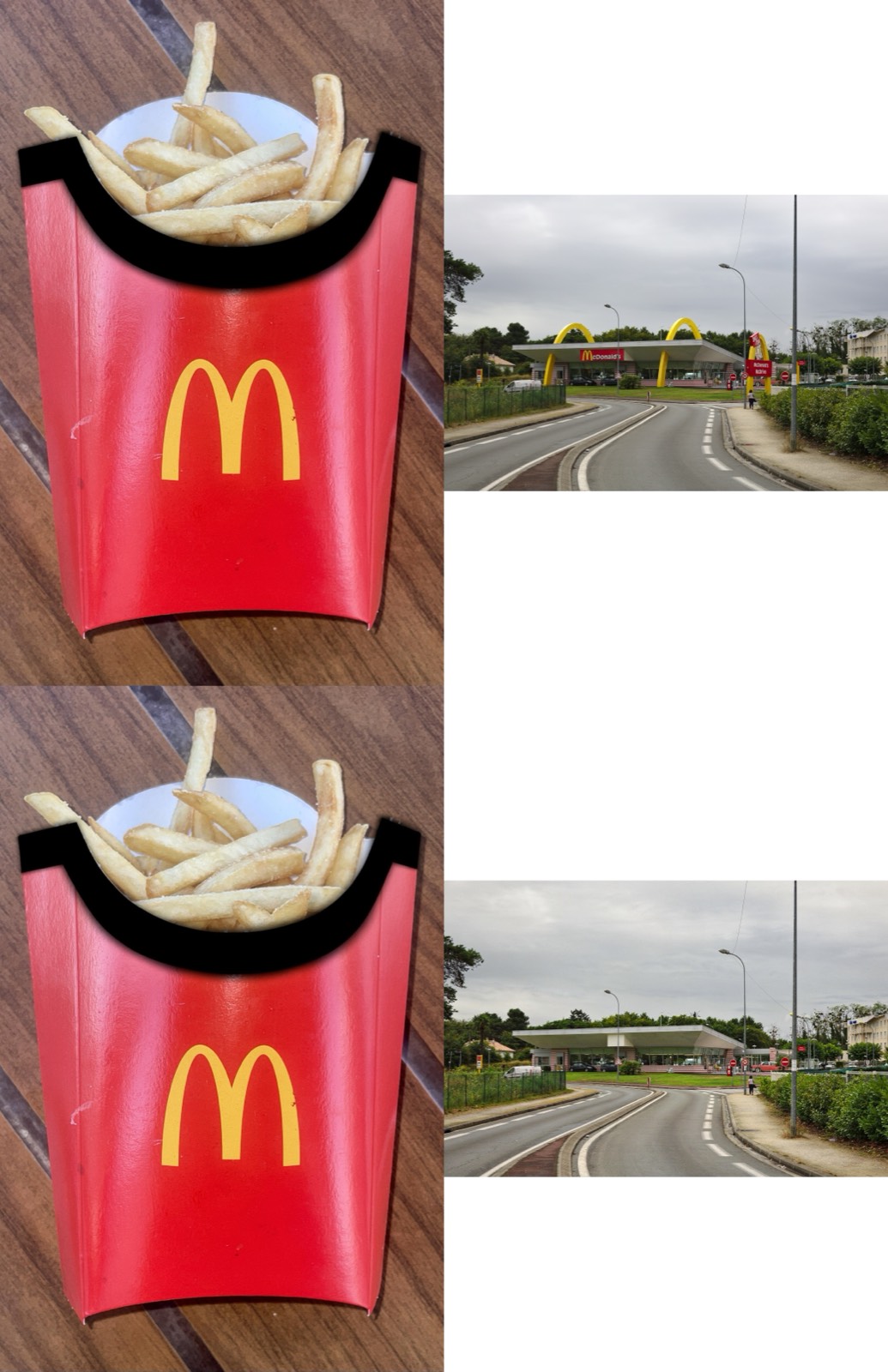}
& \includegraphics[width=0.17\textwidth, valign=m]{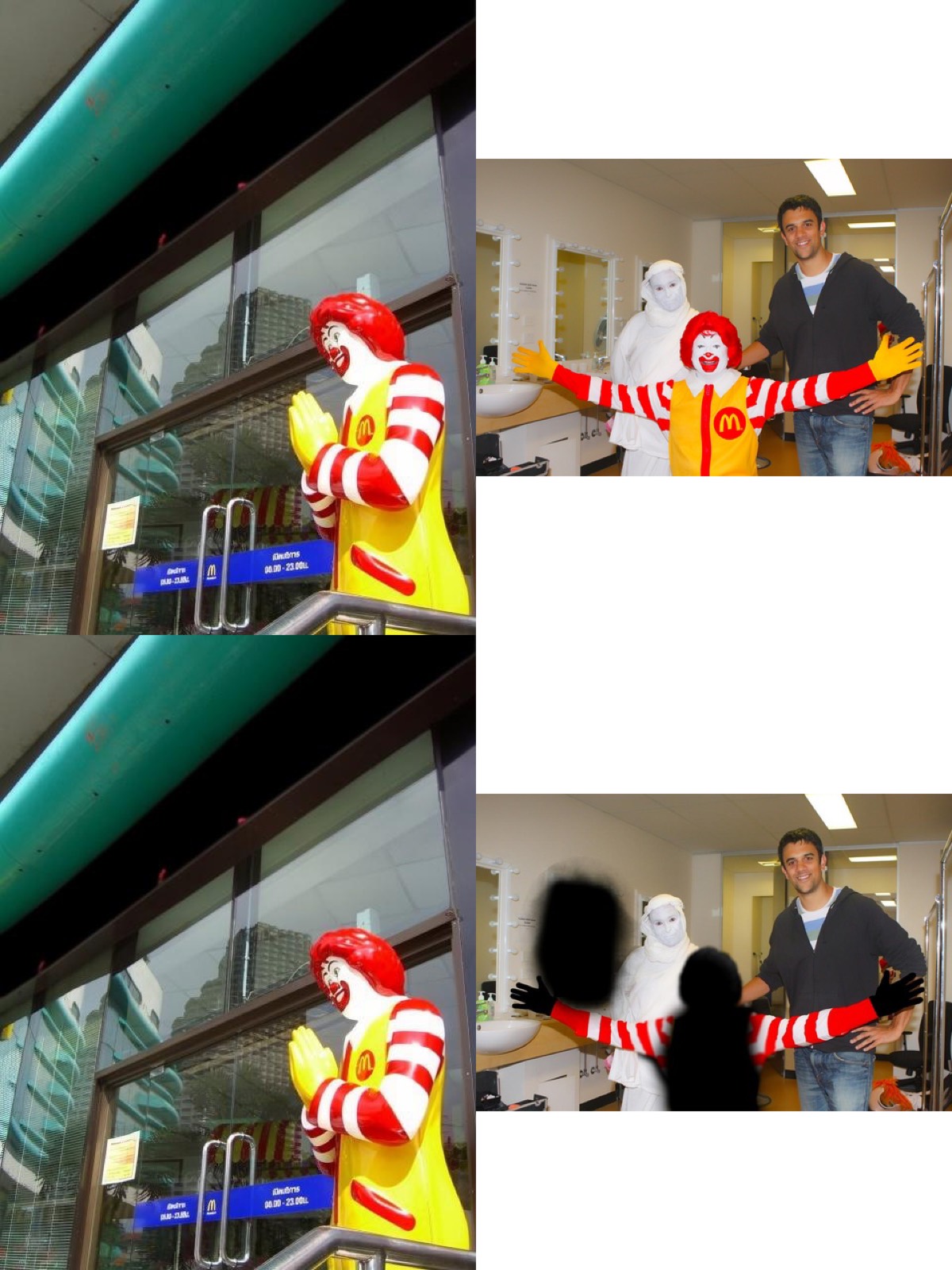}
& \includegraphics[width=0.17\textwidth, valign=m]{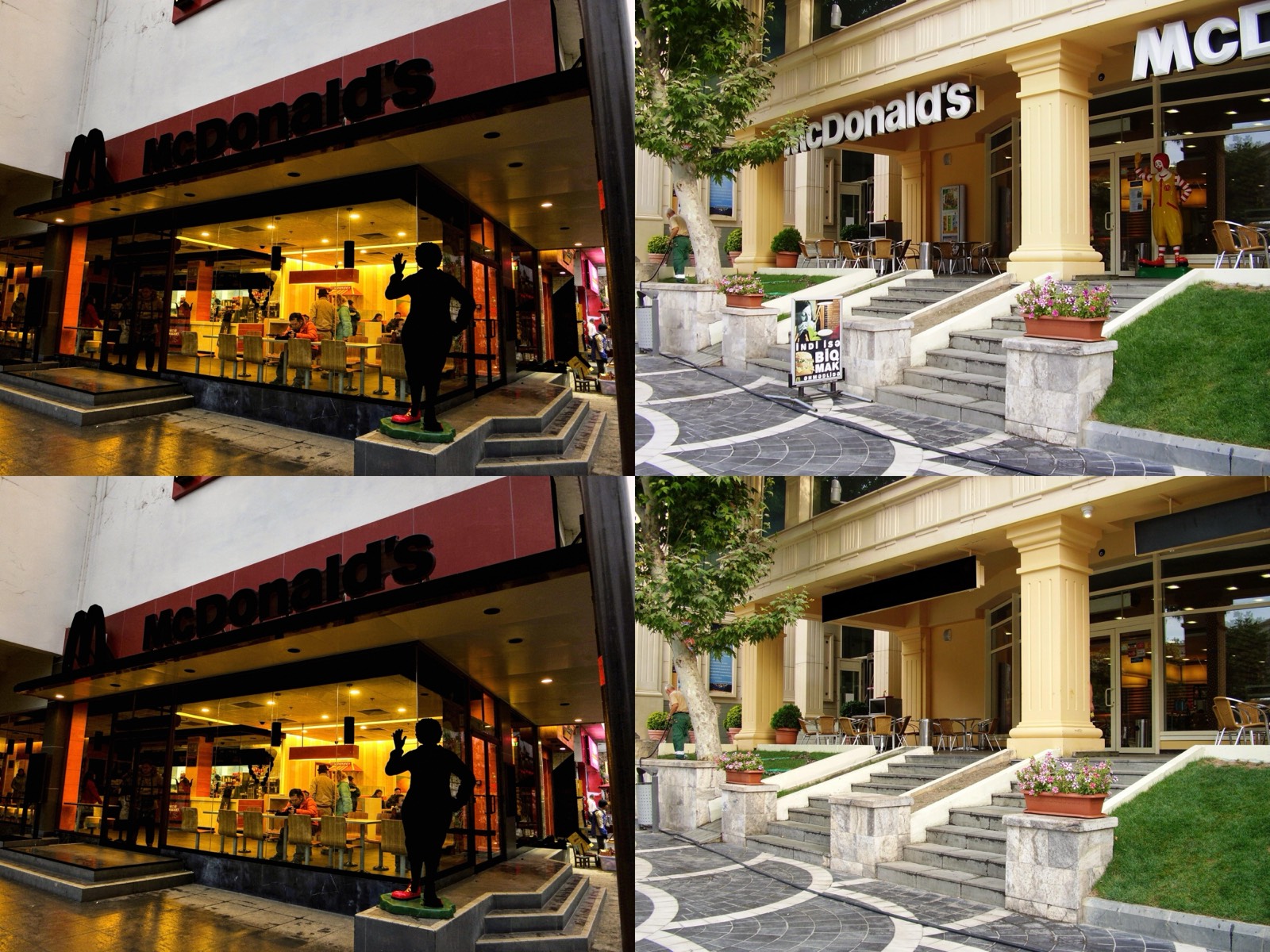} \\[1.5ex]

\texttt{FLX2d}
& \includegraphics[width=0.17\textwidth, valign=m]{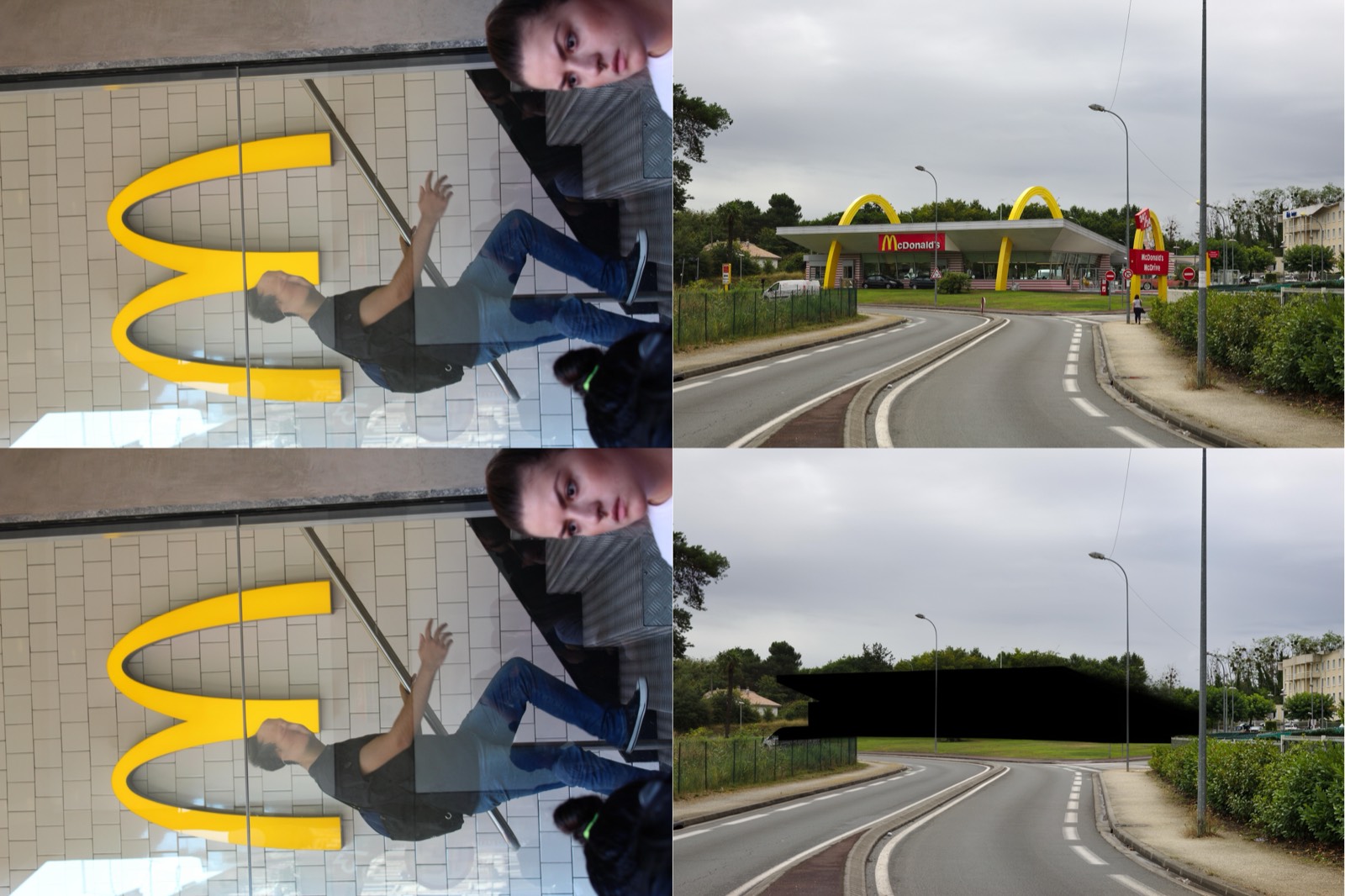}
& \includegraphics[width=0.17\textwidth, valign=m]{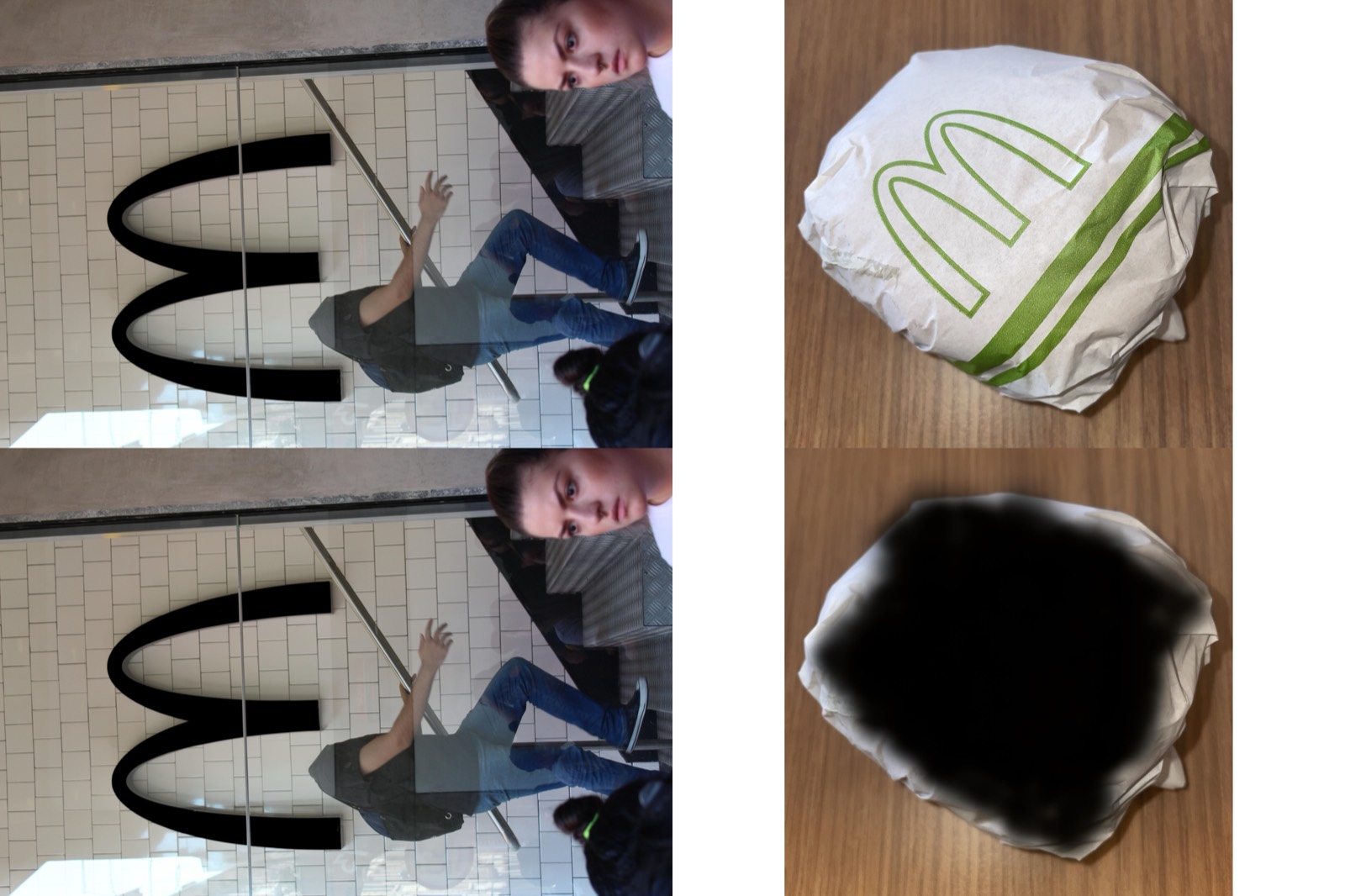}
& \includegraphics[width=0.17\textwidth, valign=m]{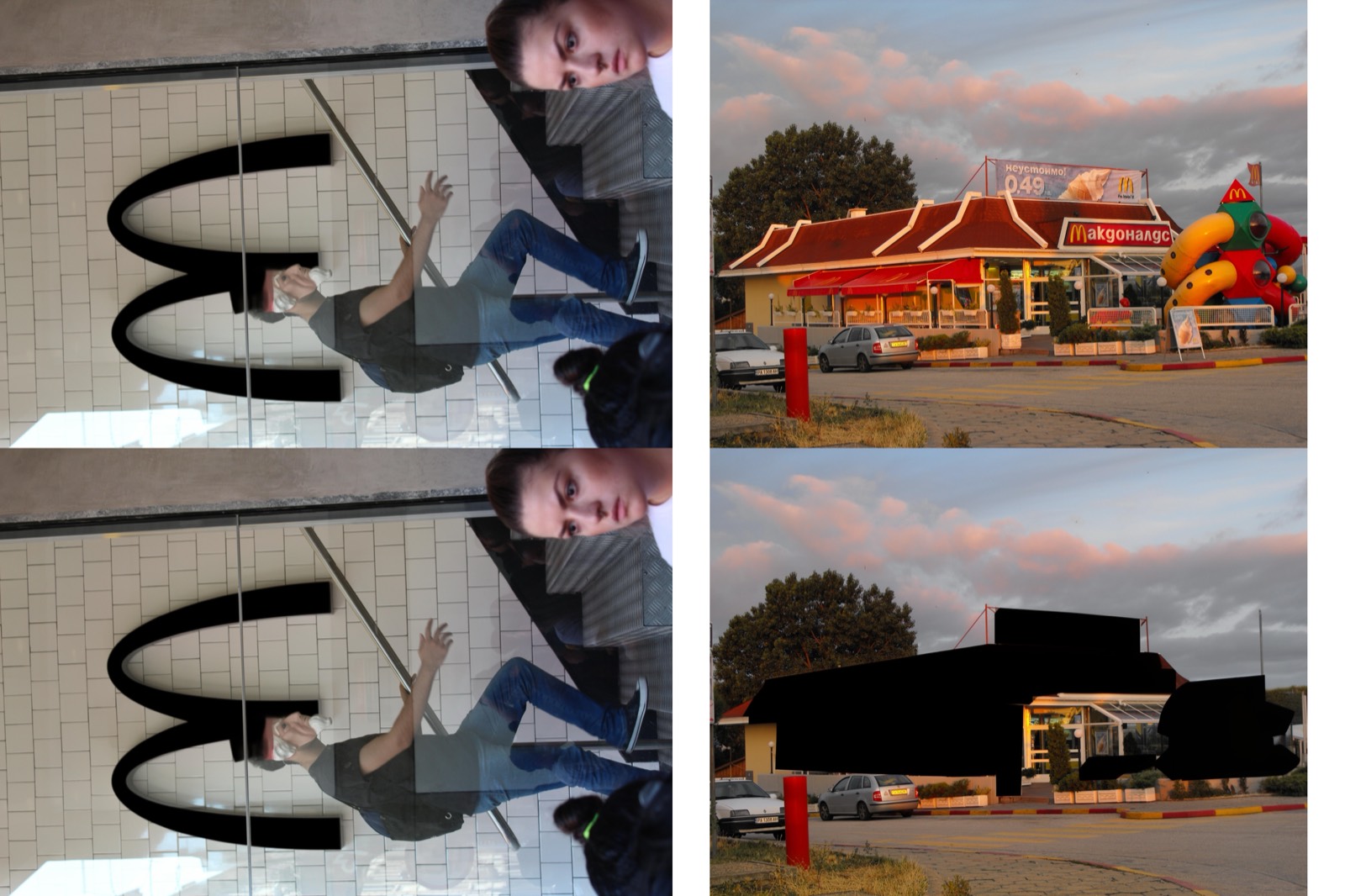}
& \includegraphics[width=0.17\textwidth, valign=m]{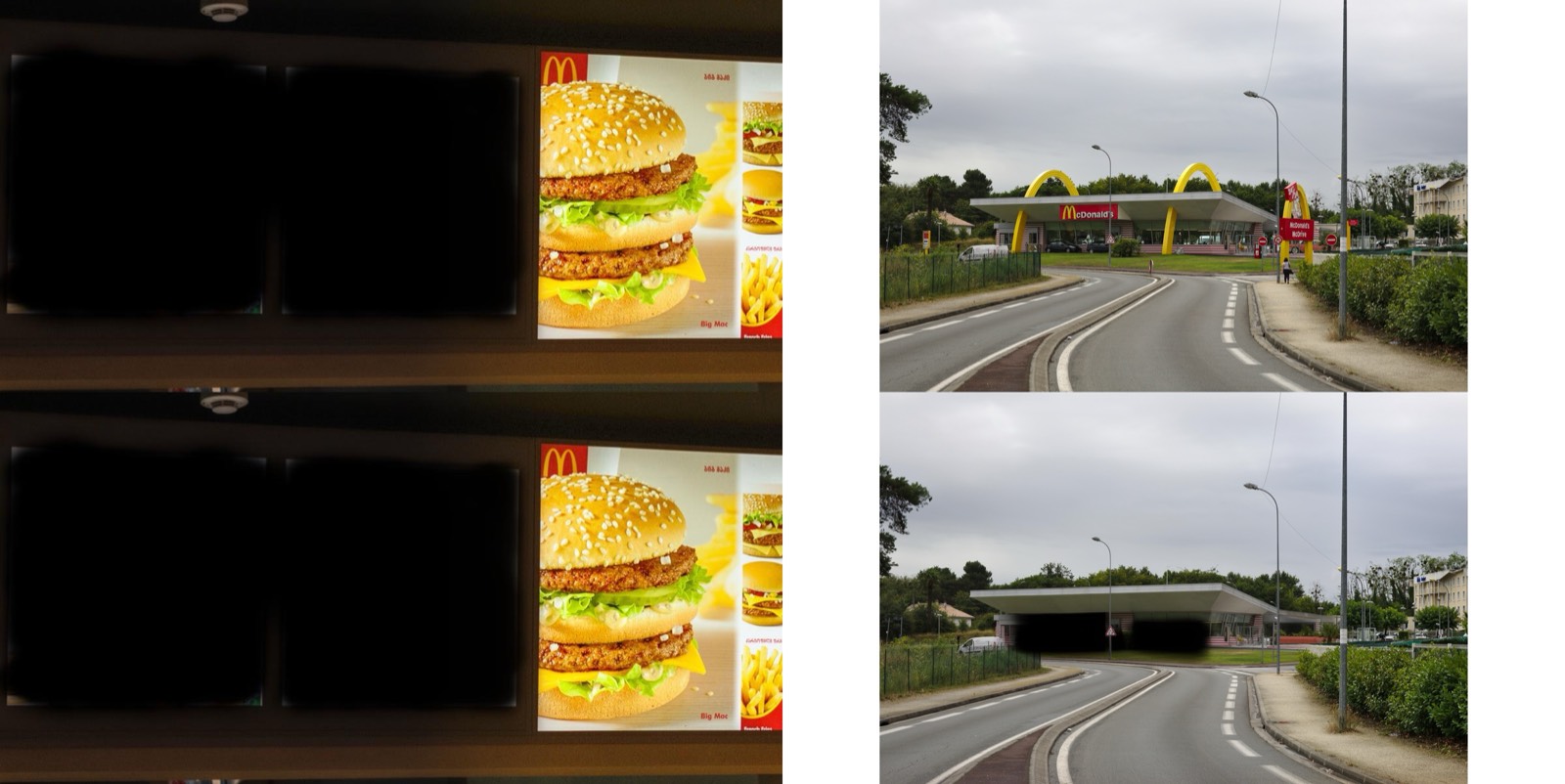}
& \includegraphics[width=0.17\textwidth, valign=m]{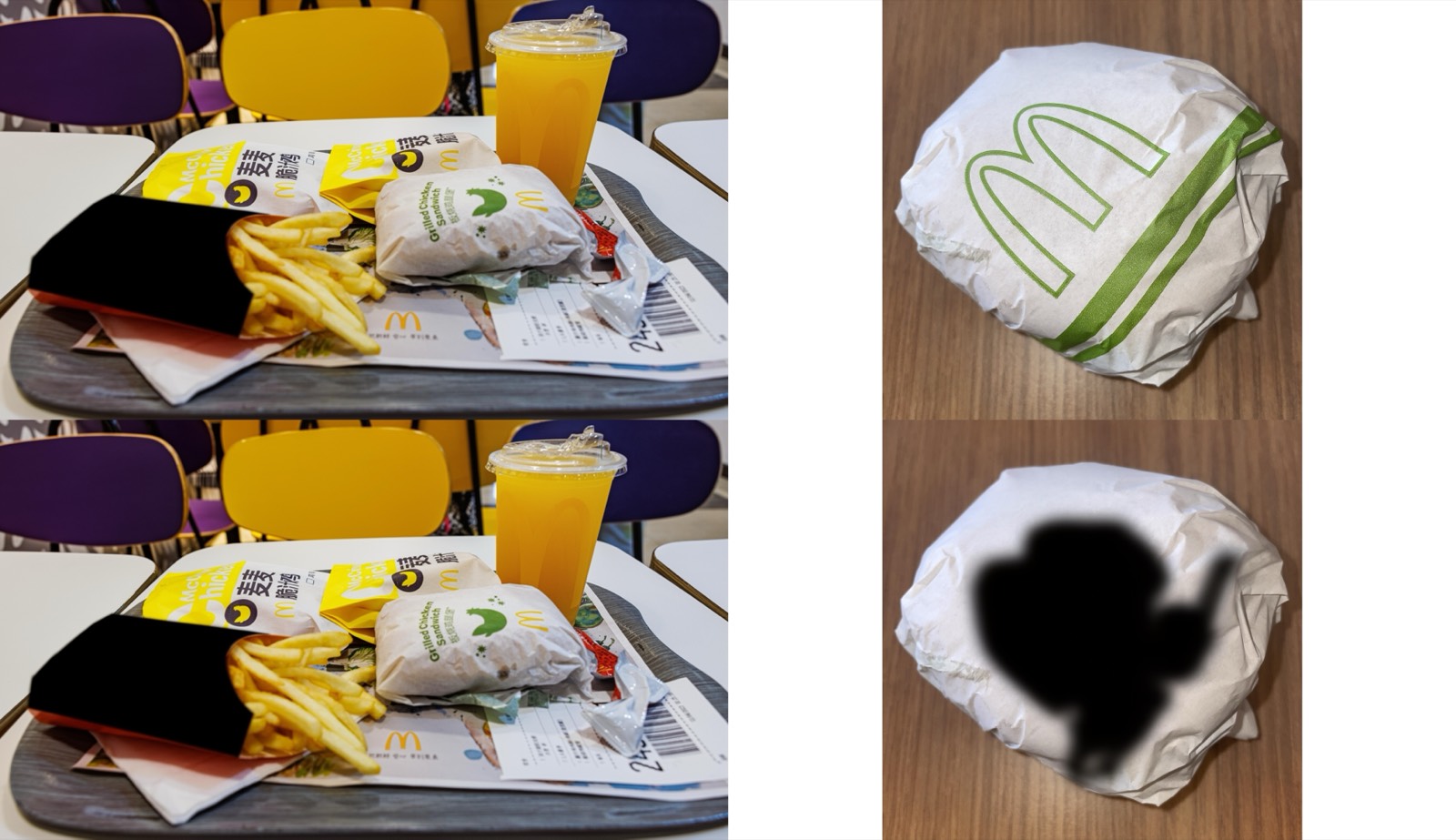} \\[1.5ex]

\texttt{SAM3}
& \includegraphics[width=0.17\textwidth, valign=m]{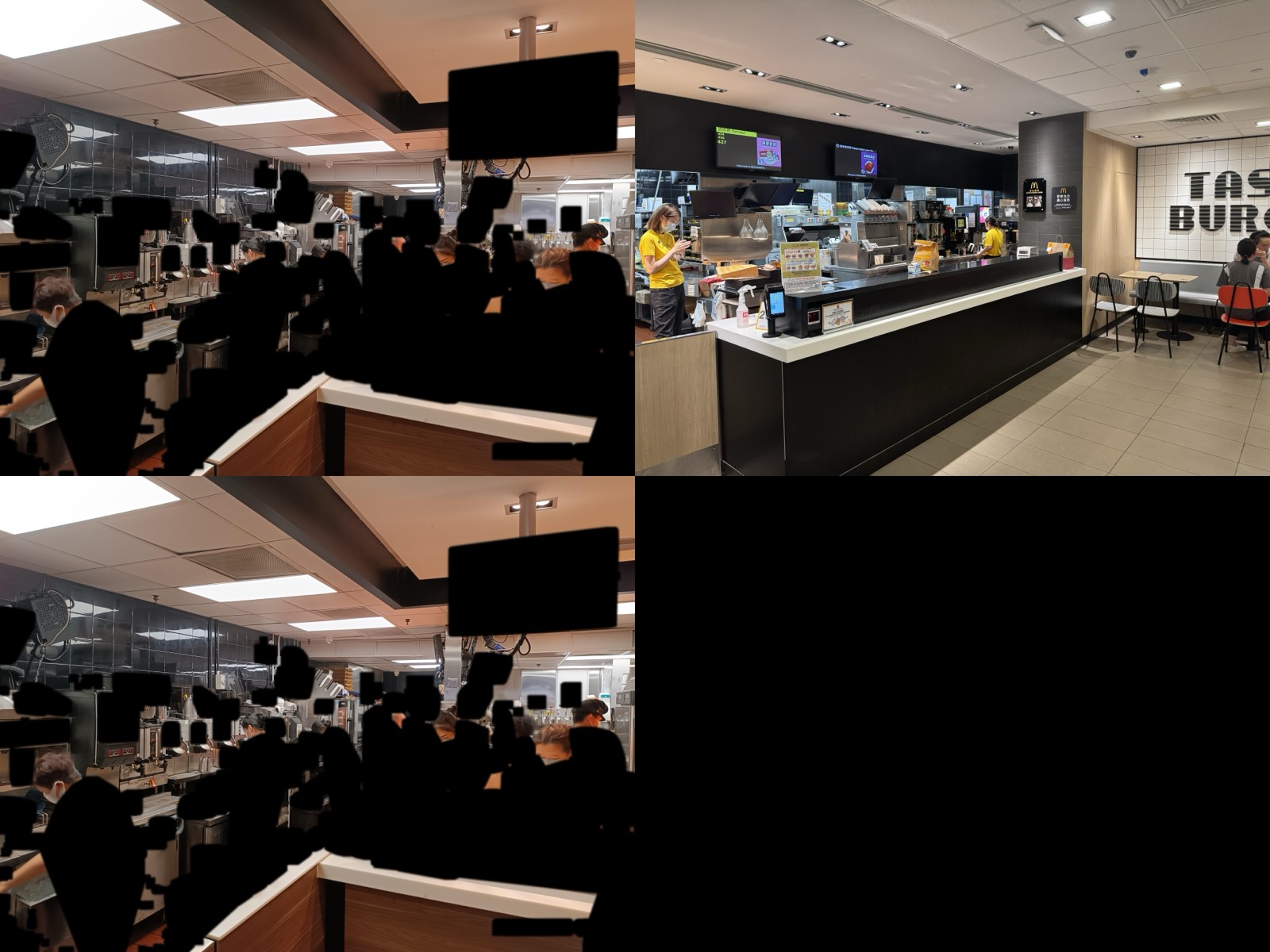}
& \includegraphics[width=0.17\textwidth, valign=m]{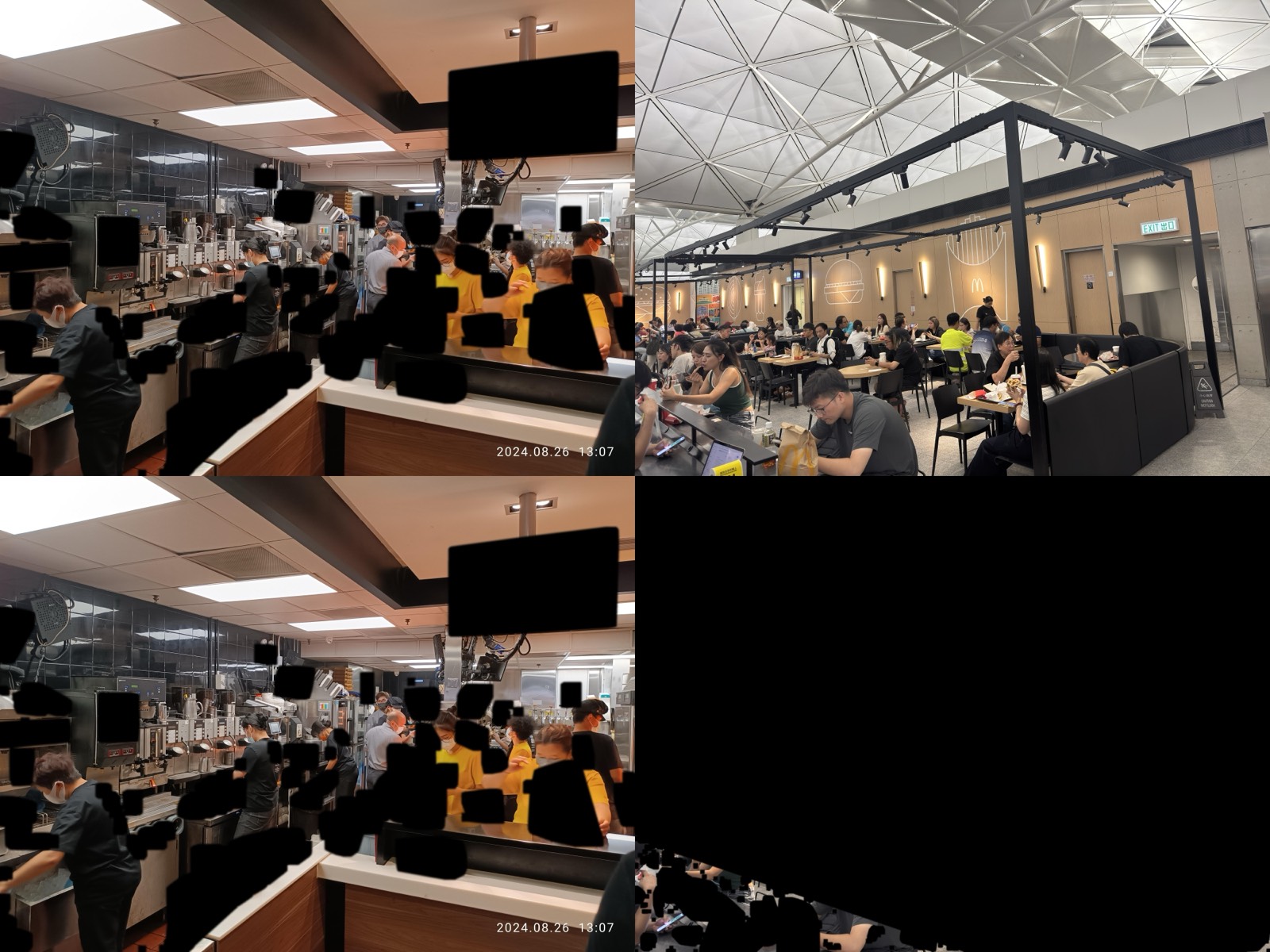}
& \includegraphics[width=0.17\textwidth, valign=m]{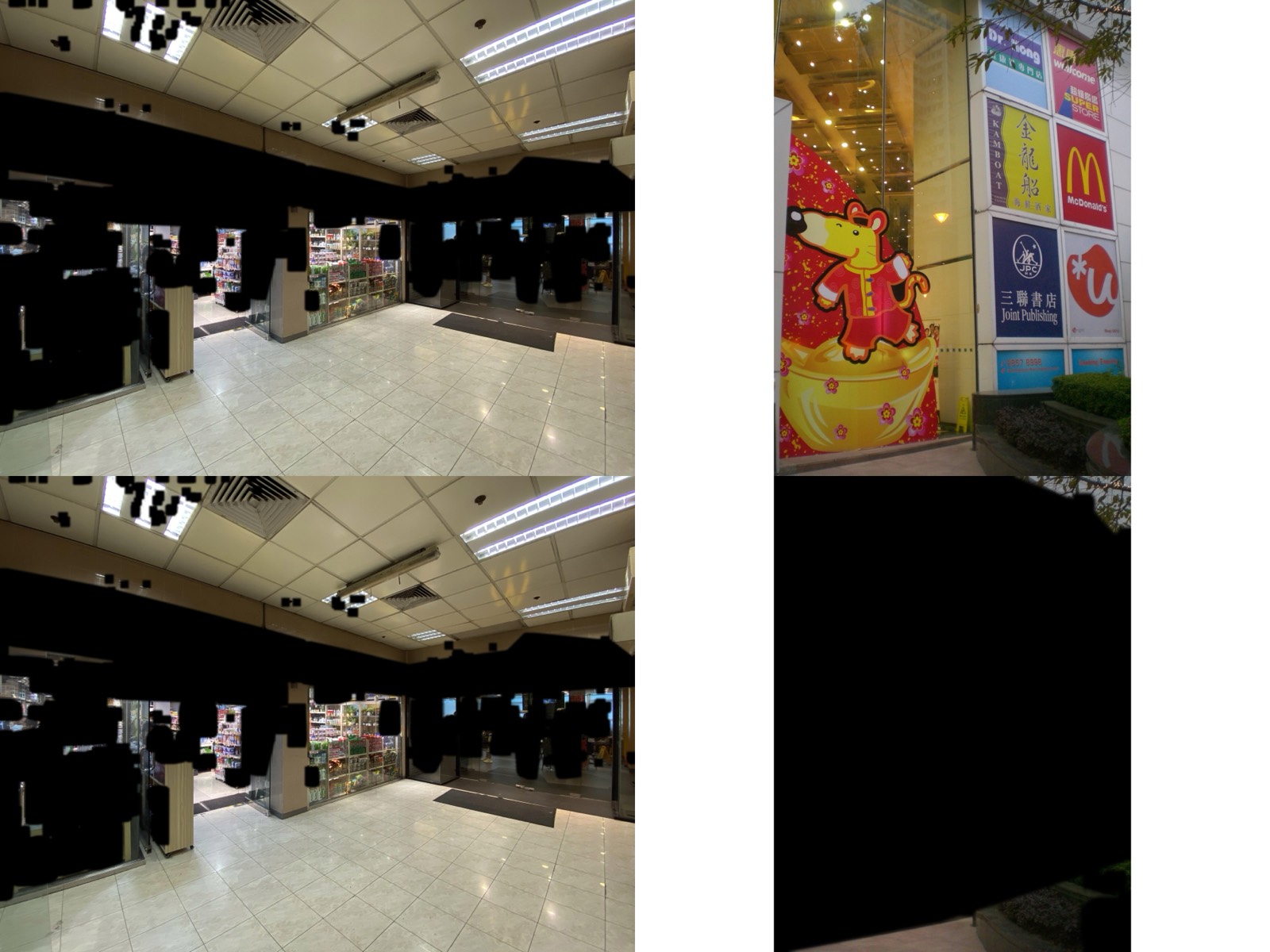}
& \includegraphics[width=0.17\textwidth, valign=m]{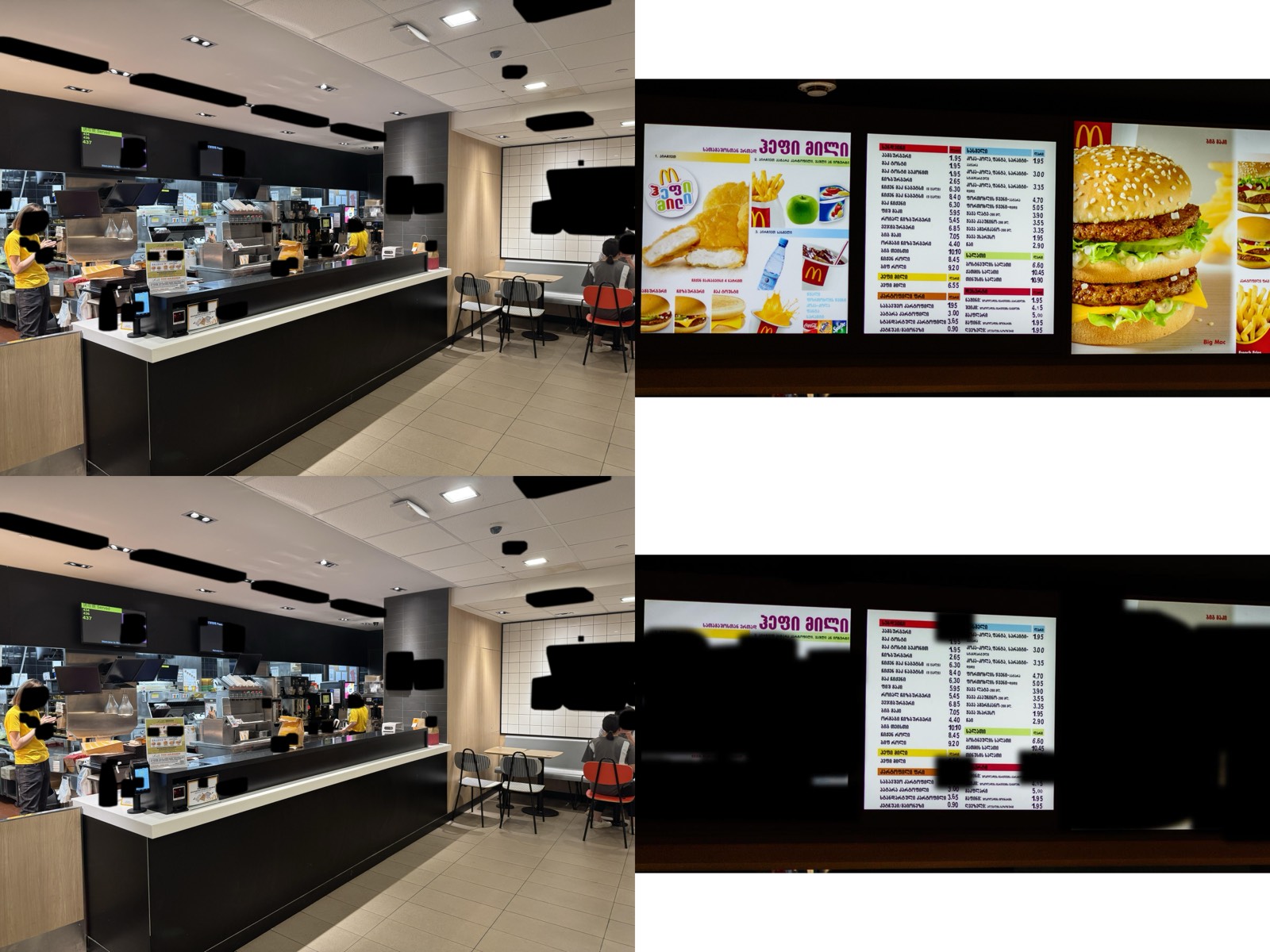}
& \\
\addlinespace[1ex] 

\bottomrule
\end{tabular}
}
\caption{Decisive witnesses for 49 images capturing the McDonald's brand and sanitizing the abstract concept \texttt{the identity of the fast food restaurant}.}
\Description{A matrix of image pairs gives the decisive witness for each McDonald's sanitization mechanism. Rows are redaction models, and columns are concept-generation models.}
\label{fig:mcdonalds_failure_compare}
\end{minipage}
}]

\clearpage
\twocolumn[{
\begin{minipage}{\textwidth}
\section{Raw Distance-Mechanism Results}
\captionsetup{type=table}
\caption{Run-level privacy resolutions for the 94 matched DiCaprio sanitizations summarized in Table~\ref{tab:dicaprio_distance_agreement}. Each three-value entry lists matched sanitization runs in order across distance mechanisms; because they are deterministic, \texttt{SAM3} configurations have only one run. Resolution magnitudes are mechanism-specific. Positive resolutions use the same two-decimal, ties-to-even rounding as Figure~\ref{fig:dicaprio_model_compare}. Exact zeros are shown as \texttt{0}; \texttt{0.00} denotes a small positive resolution rounded to zero.}
\label{tab:dicaprio_distance_raw}
\centering
\footnotesize
\setlength{\tabcolsep}{3pt}
\renewcommand{\arraystretch}{1}
\begin{tabular*}{\textwidth}{@{\extracolsep{\fill}}llccc@{}}
\toprule
\textbf{Redaction} & \textbf{Concept source} & \texttt{EVA} & \texttt{SIGLIP2} & \texttt{METACLIP2} \\
\midrule
\texttt{iGPT15} & \texttt{GPT54}   & \texttt{0.20, 0.26, 0.20} & \texttt{0.01, 0, 0}       & \texttt{0.00, 0.00, 0.05} \\
                   & \texttt{GEM31p}  & \texttt{0.26, 0.24, 0.23} & \texttt{0.03, 0, 0.00}    & \texttt{0.03, 0.02, 0.06} \\
                   & \texttt{OPS46}   & \texttt{0.24, 0.19, 0.14} & \texttt{0.03, 0.02, 0.02} & \texttt{0, 0, 0.05} \\
                   & \texttt{Manual}  & \texttt{0.04, 0.09, 0}      & \texttt{0.03, 0.10, 0}    & \texttt{0.11, 0.11, 0.05} \\
                   & \texttt{None}    & \texttt{0.04, 0.09, 0.13} & \texttt{0.07, 0.10, 0}    & \texttt{0.12, 0.11, 0.05} \\
\addlinespace[0.5ex]
\texttt{iGPT1m} & \texttt{GPT54}   & \texttt{0.06, 0.20, 0.17} & \texttt{0.05, 0.04, 0.05} & \texttt{0.07, 0.02, 0.03} \\
                   & \texttt{GEM31p}  & \texttt{0.02, 0.16, 0.13} & \texttt{0.04, 0.10, 0.05} & \texttt{0.03, 0.17, 0.04} \\
                   & \texttt{OPS46}   & \texttt{0.38, 0.37, 0.38} & \texttt{0, 0.19, 0.20}    & \texttt{0.27, 0.32, 0.24} \\
                   & \texttt{Manual}  & \texttt{0.08, 0.20, 0.12} & \texttt{0.02, 0.03, 0.09} & \texttt{0.04, 0.05, 0.04} \\
                   & \texttt{None}    & \texttt{0.24, 0.32, 0.21} & \texttt{0.05, 0.08, 0.08} & \texttt{0.09, 0.03, 0.05} \\
\addlinespace[0.5ex]
\texttt{iGEM3p} & \texttt{GPT54}   & \texttt{0.32, 0.39, 0.40} & \texttt{0.17, 0.28, 0.21} & \texttt{0.29, 0.15, 0.16} \\
                   & \texttt{GEM31p}  & \texttt{0.35, 0.32, 0.31} & \texttt{0.26, 0.15, 0.14} & \texttt{0.23, 0.05, 0.11} \\
                   & \texttt{OPS46}   & \texttt{0.33, 0.25, 0.38} & \texttt{0.28, 0.14, 0.18} & \texttt{0.18, 0.10, 0.12} \\
                   & \texttt{Manual}  & \texttt{0.49, 0.45, 0.50} & \texttt{0.35, 0.23, 0.32} & \texttt{0.11, 0.19, 0.15} \\
                   & \texttt{None}    & \texttt{0.53, 0.23, 0.18} & \texttt{0.27, 0.13, 0.15} & \texttt{0.20, 0.15, 0.30} \\
\addlinespace[0.5ex]
\texttt{iGEM31f} & \texttt{GPT54}  & \texttt{0.24, 0.34, 0.28} & \texttt{0, 0.30, 0.13}    & \texttt{0.03, 0.18, 0.03} \\
                    & \texttt{GEM31p} & \texttt{0.29, 0.23, 0.41} & \texttt{0.23, 0.03, 0.22} & \texttt{0.08, 0.04, 0.31} \\
                    & \texttt{OPS46}  & \texttt{0.35, 0.23, 0.48} & \texttt{0.22, 0.03, 0.24} & \texttt{0.13, 0.02, 0.08} \\
                    & \texttt{Manual} & \texttt{0, 0, 0.00}           & \texttt{0, 0, 0}         & \texttt{0.01, 0.03, 0.01} \\
                    & \texttt{None}   & \texttt{0.15, 0.26, 0.08} & \texttt{0.09, 0.13, 0.03} & \texttt{0.10, 0.24, 0.08} \\
\addlinespace[0.5ex]
\texttt{FLX2p} & \texttt{GPT54}    & \texttt{0.30, 0.25, 0.36} & \texttt{0.13, 0.15, 0.16} & \texttt{0.29, 0.26, 0.25} \\
                  & \texttt{GEM31p}   & \texttt{0.35, 0.28, 0.19} & \texttt{0.16, 0.10, 0.12} & \texttt{0.30, 0.27, 0.29} \\
                  & \texttt{OPS46}    & \texttt{0.22, 0.26, 0.29} & \texttt{0.15, 0.14, 0.11} & \texttt{0.24, 0.23, 0.28} \\
                  & \texttt{Manual}   & \texttt{0.22, 0.28, 0.18} & \texttt{0.16, 0.12, 0.14} & \texttt{0.29, 0.33, 0.23} \\
                  & \texttt{None}     & \texttt{0.29, 0.16, 0.38} & \texttt{0.12, 0.07, 0.17} & \texttt{0.25, 0.31, 0.25} \\
\addlinespace[0.5ex]
\texttt{FLX2d} & \texttt{GPT54}    & \texttt{0.15, 0.18, 0.11} & \texttt{0.12, 0.07, 0.05} & \texttt{0.28, 0.06, 0.16} \\
                  & \texttt{GEM31p}   & \texttt{0.11, 0.14, 0.14} & \texttt{0.08, 0.10, 0.08} & \texttt{0.17, 0.15, 0.07} \\
                  & \texttt{OPS46}    & \texttt{0.35, 0.24, 0.31} & \texttt{0.15, 0.17, 0.18} & \texttt{0.34, 0.24, 0.34} \\
                  & \texttt{Manual}   & \texttt{0.22, 0.23, 0.26} & \texttt{0.14, 0.08, 0.14} & \texttt{0.30, 0.19, 0.21} \\
                  & \texttt{None}     & \texttt{0.21, 0.28, 0.27} & \texttt{0.14, 0.15, 0.18} & \texttt{0.24, 0.22, 0.26} \\
\addlinespace[0.5ex]
\texttt{SAM3} & \texttt{GPT54}      & \texttt{0.09}                   & \texttt{0.05}                & \texttt{0.27} \\
                & \texttt{GEM31p}     & \texttt{0.11}                   & \texttt{0.02}                & \texttt{0.31} \\
                & \texttt{OPS46}      & \texttt{0.12}                   & \texttt{0.08}                & \texttt{0.32} \\
                & \texttt{Manual}     & \texttt{0}                      & \texttt{0}                   & \texttt{0.01} \\
\bottomrule
\end{tabular*}
\end{minipage}
}]

\end{document}